\renewcommand\subsection{\@startsection{subsection}{2}%
  \z@{-.5\linespacing\@plus-.7\linespacing}{.5\linespacing}%
  {\normalfont\scshape}}
\renewcommand\subsubsection{\@startsection{subsubsection}{3}%
  \z@{.5\linespacing\@plus.7\linespacing}{-.5em}%
  {\normalfont\scshape}}
\numberwithin{equation}{section} \swapnumbers
\newtheorem{satz}{Satz}[section]
\newtheorem{theorem}[satz]{Theorem}
\newtheorem{proposition}[satz]{Proposition}
\newtheorem{corollary}[satz]{Corollary}
\newtheorem{lemma}[satz]{Lemma}
\newtheorem{definition}[satz]{Definition}
\newtheorem{remark}[satz]{Remark}
\newtheorem{example}[satz]{Example}
\newcommand{\bbr}{\mathbb{R}}
\newcommand{\bbe}{\mathbb{E}}
\newcommand{\bbn}{\mathbb{N}}
\newcommand{\bbp}{\mathbb{P}}
\newcommand{\bbq}{\mathbb{Q}}
\newcommand{\bbs}{\mathbb{S}}
\newcommand{\cala}{\mathscr{A}}
\newcommand{\calb}{\mathscr{B}}
\newcommand{\cale}{\mathscr{E}}
\newcommand{\calf}{\mathscr{F}}
\newcommand{\calh}{\mathscr{H}}
\newcommand{\call}{\mathscr{L}}
\newcommand{\calm}{\mathscr{M}}
\newcommand{\calo}{\mathscr{O}}
\newcommand{\calp}{\mathscr{P}}
\newcommand{\cals}{\mathscr{S}}
\newcommand{\calv}{\mathscr{V}}
\newcommand{\frp}{\mathfrak{p}}
\newcommand{\frq}{\mathfrak{q}}
\newcommand{\loc}{{\rm loc}}
\newcommand{\supp}{{\rm supp}}
\newcommand{\Id}{{\rm Id}}
\newcommand{\sfi}{{\rm sf}}
\newcommand{\lin}{{\rm lin}}
\newcommand{\mo}{{\rm mod}}
\newcommand{\ran}{{\rm ran}}
\newcommand{\la}{\langle}
\newcommand{\ra}{\rangle}
\newcommand{\bbI}{\mathbbm{1}}
\newcommand{\bdot}{\boldsymbol{\cdot}}
\newcommand{\IL}{[\![}
\newcommand{\IR}{]\!]}
\begin{document}

\title[Existence of equivalent local martingale deflators]{Existence of equivalent local martingale deflators in semimartingale market models}
\author{Eckhard Platen \and Stefan Tappe}
\address{University of Technology Sydney, School of Mathematical Sciences and Finance Discipline Group, PO Box 123, Broadway, NSW 2007, Australia}
\email{eckhard.platen@uts.edu.au}
\address{Karlsruhe Institute of Technology, Institute of Stochastics, Postfach 6980, 76049 Karlsruhe, Germany}
\email{jens-stefan.tappe@kit.edu}
\date{2 June, 2020}
\thanks{We are grateful to Claudio Fontana, Alexander Melnikov and Martin Schweizer for valuable discussions.}
\begin{abstract}
This paper offers a systematic investigation on the existence of equivalent local martingale deflators, which are multiplicative special semimartingales, in financial markets given by positive semimartingales. In particular, it shows that the existence of such deflators can be characterized by means of the modified semimartingale characteristics. Several examples illustrate our results. Furthermore, we provide interpretations of the deflators from an economic point of view.
\end{abstract}
\keywords{Equivalent local martingale deflator, multiplicative special semimartingale, market price of risk, jump-diffusion model}
\subjclass[2010]{91B70, 91G80, 60G48}

\maketitle\thispagestyle{empty}

\section{Introduction}\label{sec-intro}

For a given model of a financial market an important issue is the absence of arbitrage opportunities. For continuous time models this has been characterized by means of the existence of an equivalent local martingale measure (ELMM); see, in particular, the papers \cite{DS-94} and \cite{DS-98}, the textbook \cite{DS-book}, and also the paper \cite{Kabanov}. The absence of arbitrage has also been characterized by means of the existence of an equivalent local martingale deflator (ELMD); see \cite{Takaoka-Schweizer}, and also the earlier papers \cite{Choulli-Stricker} and \cite{Kardaras-12}. In certain situations, results about criteria for the absence of arbitrage have been derived, for example, in \cite{Melnikov-Shiryaev, Criens-1, Criens-2}.

Moreover, if a financial market is free of arbitrage opportunities, it arises the question how to perform pricing and hedging of contingent claims. For example, if there are several ELMMs, one has to choose a suitable pricing measure. Work in this direction has been done in the risk-neutral setting, for example, in \cite{F-Schweizer, Schweizer-1992, HPS-92, Christopeit-Musiela, Schweizer-1995, DS-variance}, and beyond the risk-neutral approach, for example, in \cite{Fernholz-Karatzas, Platen, Rung-Galesso, Fontana-Rung}.

For the aforementioned results concerning the absence of arbitrage it is implicitly assumed that the market consists of discounted price processes of risky assets with respect to some savings account. In the recent paper \cite{Platen-Tappe-FTAP} we have characterized the absence of arbitrage opportunities for semimartingale models which do not need to have a savings account that could be used as num\'{e}raire. More precisely, let $\bbs = \{ S^1,\ldots,S^d \}$ be a financial market consisting of positive semimartingales which does not need to have a savings account. Provided we are allowed to add a savings account $B$ to the market, we have shown in \cite{Platen-Tappe-FTAP} that the market is free of arbitrage if and only if there exists an ELMD $Z$ which is a multiplicative special semimartingale, and that in this case the savings account has to fit into the multiplicative decomposition $Z = D B^{-1}$ of the deflator. In this context, a \emph{savings account} is a predictable, strictly positive process $B$ of locally finite variation.

Motivated by this result, the goal of this paper is to provide a systematic investigation on the existence of ELMDs which are multiplicative semimartingales, and to interpret these deflators from an economic point of view. We will study this problem for an arbitrary market $\bbs = \{ S^i : i \in I \}$ with an arbitrary nonempty index set $I \neq \emptyset$, identifying the tradeable securities. As noted in \cite{Platen-Tappe-FTAP}, the existence of an ELMD, which is a multiplicative special semimartingale, is then still sufficient for the absence of arbitrage. Note that later on in this paper we will often consider a finite market $\bbs = \{ S^1,\ldots,S^d \}$. Since we consider a market with strictly positive semimartingales, for each $i \in I$ we have
\begin{align}\label{S-intro}
S^i = S_0^i \, \cale(X^i),
\end{align}
where $S_0^i > 0$, the process $X^i$ is a semimartingale such that $X_0 = 0$ and $\Delta X^i > -1$, and $\cale$ denotes the stochastic exponential. We will assume that for each $i \in I$ the semimartingale $X^i$ (or, equivalently, the semimartingale $S^i$) is a special semimartingale with canonical decomposition
\begin{align*}
X^i = M^i + A^i,
\end{align*}
where $M^i \in \calm_{\loc}$ with $M_0^i = 0$ is the local martingale part, and the predictable process $A^i \in \calv$ is the finite variation part. Here $\calm_{\loc}$ denotes the space of local martingales, and $\calv$ denotes the space of all adapted processes with locally finite variation starting at zero. As already mentioned, we are looking for multiplicative special ELMDs of the form $Z = D B^{-1}$ with a local martingale $D$ and a, so-called, virtual savings account $B$, which would be the savings account when it were included as a traded asset in the market. Hence, we consider a multiplicative special semimartingale
\begin{align*}
Z = D B^{-1},
\end{align*}
where the local martingale $D \in \calm_{\loc}$ and the virtual savings account $B$ are given by
\begin{align*}
D = \cale(-\Theta) \quad \text{and} \quad B = \cale(R)
\end{align*}
for some $\Theta \in \calm_{\loc}$ with $\Theta_0 = 0$ and $\Delta \Theta < 1$, and some predictable $R \in \calv$ with $\Delta R > -1$. This candidate for an ELMD can be written as
\begin{align*}
Z = \cale ( -\widetilde{\Theta} - \widetilde{R} ) 
\end{align*}
with uniquely determined processes $\widetilde{\Theta} \in \calm_{\loc}$ and $\widetilde{R} \in \calv$ such that $\widetilde{\Theta}_0 = 0$ and $\Delta \widetilde{\Theta} + \Delta \widetilde{R} < 1$, where we refer to Appendix \ref{sec-mult} for further details. Our first main result (Theorem \ref{thm-special}) states that the multiplicative special semimartingale $Z$ is an ELMD if and only if for each $i \in I$ we have the drift condition
\begin{align}\label{drift-intro}
A^i = \widetilde{R} + [A^i,\widetilde{R}] + [M^i,\Theta]^p
\end{align}
satisfied, where $[M^i,\Theta]^p$ denotes the predictable compensator of the quadratic covariation $[M^i,\Theta]$. Note that (\ref{drift-intro}) provides a decomposition of the return $A^i$ of the asset $S^i$, and that the quantities appearing in this decomposition have the following interpretations:
\begin{itemize}
\item The process $\widetilde{R} + [A^i,\widetilde{R}]$ is the \emph{locally risk-free return} of the asset $S^i$ if $M^i$ were zero.

\item The process $\Theta$ is a \emph{market price of risk}.

\item Furthermore, the process $R$ is the general \emph{locally risk-free return} or \emph{virtual short rate} of the savings account $B$.
\end{itemize}
The arguments are as follows. Assume that $D$ is the density process of an equivalent probability measure $\bbq \approx \bbp$. Then the first term $\widetilde{R} + [A^i,\widetilde{R}]$ appearing in (\ref{drift-intro}) is the drift of the semimartingale $X^i$ under the measure $\bbq$; see Remark \ref{rem-decomposition} for further details. The second term $[M^i,\Theta]^p$ explains why we call $\Theta$ a market price of risk. At this stage we point out that condition (\ref{drift-intro}) is satisfied if and only if
\begin{align*}
A^i = \widetilde{R} + [A^i,\widetilde{R}] + [M^i,\widetilde{\Theta}]^p.
\end{align*}
We prefer to call $\Theta$, and not $\widetilde{\Theta}$, a market price of risk because it also shows up in the density process $D = \cale(-\Theta)$ of the measure change, provided it exists.

The drift condition (\ref{drift-intro}) also provides us with the following insight. If we consider the stocks $\bbs = \{ S^i : i \in I \}$, that is, the tradeable productive units of the economy, then a potential market price of risk $\Theta$ and also a respective virtual short rate $R$ are visible, and provided by an ELMD 
\begin{align*}
Z = \cale(-\Theta) \cale(R)^{-1}. 
\end{align*}
Of course, the processes $\Theta$ and $R$ are generally not unique. If a central bank decides to choose a lower short rate $R$, then the market price of risk $\Theta$ increases, which can potentially stimulate the economy. Accordingly, if a central bank decides to choose a higher short rate, then the market price of risk decreases, which can potentially thwart the economy. This follows from the drift condition (\ref{drift-intro}) and provides interpretations for different choices of the deflator.

The situation simplifies if for each $i \in I$ the semimartingale $X^i$ (or, equivalently, the semimartingale $S^i$) is locally square-integrable and quasi-left-continuous. Then we have $R = \widetilde{R}$ and $\Theta = \widetilde{\Theta}$, and the ELMD $Z$ can be expressed as
\begin{align*}
Z = \cale(-\Theta) \exp(R)^{-1} = \cale(-\Theta-R).
\end{align*}
Furthermore, the drift condition (\ref{drift-intro}) simplifies to
\begin{align}\label{drift-special-intro}
A^i = R + \la M^i,\Theta \ra,
\end{align}
where $\la M^i,\Theta \ra$ denotes the predictable quadratic covariation of $M^i$ and $\Theta$; see Theorem \ref{thm-special-sq} below. The quantities on the right-hand side of (\ref{drift-special-intro}) have the following interpretations:
\begin{itemize}
\item The process $R$ is the \emph{virtual short rate} and simultaneously for every $i \in I$ the \emph{locally risk-free return} of the asset $S^i$ if $M^i$ were zero.

\item The process $\Theta$ is a \emph{market price of risk}.
\end{itemize}
In order to investigate the existence of ELMDs further, consider a finite market $\bbs = \{ S^1,\ldots,S^d \}$ consisting of locally square-integrable, quasi-left-continuous semimartingales. Then the existence of an ELMD can be characterized on the basis of the modified integral characteristics of the $\bbr^d$-valued semimartingale $X = (X^1,\ldots,X^d)$ appearing in (\ref{S-intro}). More precisely, agreeing on the notation $\bbI_{\bbr^d} = (1,\ldots,1) \in \bbr^d$, the existence of an ELMD $Z$, which is a multiplicative semimartingale, is essentially equivalent to the existence of an $\bbr^d$-valued process $x$ and an $\bbr$-valued process $r$ satisfying the $\bbr^d$-valued linear equation
\begin{align}\label{eqn-mod-intro}
c_{\mo} x = a - r \bbI_{\bbr^d},
\end{align}
where the $\bbr^d$-valued process $a$ denotes the first integral characteristic, and the $\bbr^{d \times d}$-valued process $c_{\mo}$ denotes the modified second integral characteristic of $X$; we refer to Section \ref{sec-existence} for further details.

For illustration, we consider the particular situation with jump-diffusion models, where for each $i \in I$ the process $X^i$ appearing in (\ref{S-intro}) is of the form
\begin{align*}
X^i = a^i \bdot \lambda + \sigma^i \bdot W + \gamma^i * (\frp - \frq)
\end{align*}
with an $\bbr^m$-valued standard Wiener process and a homogeneous Poisson random measure $\frp$ with compensator of the form $\frq = \lambda \otimes F$, where $F$ is a $\sigma$-finite measure on the mark space $(E,\cale)$. As we will show, the existence of an ELMD $Z$, which is a multiplicative special semimartingale, is essentially equivalent to the existence of an $\bbr^m$-valued process $\theta$, an $L^2(F)$-valued process $\psi$ and an $\bbr$-valued process $r$ such that
\begin{align}\label{eqn-jd-intro}
\langle \sigma^i,\theta \rangle_{\bbr^m} + \langle \gamma^i, \psi \rangle_{L^2(F)} = a^i - r, \quad i \in I.
\end{align}
For a finite market $\bbs = \{ S^1,\ldots,S^d \}$ we can regard $\sigma$ and $\gamma$ as multidimensional linear functionals, and then equation (\ref{eqn-jd-intro}) can be expressed as the $\bbr^d$-valued linear equation
\begin{align}\label{eqn-3-var-intro}
\sigma \theta + \gamma \psi = a - r \bbI_{\bbr^d}.
\end{align}
Furthermore, we will show that the existence of a solution to (\ref{eqn-3-var-intro}) is equivalent to the existence of an $\bbr^d$-valued process $x$ and an $\bbr$-valued process $r$ satisfying the $\bbr^d$-valued linear equation (\ref{eqn-mod-intro}), where in this particular situation with a jump-diffusion model the modified second integral characteristic $c_{\mo}$ is given by
\begin{align*}
c_{\mo}^{ij} = \la \sigma^i,\sigma^j \ra_{\bbr^m} + \la \gamma^i,\gamma^j \ra_{L^2(F)} \quad \text{for all $i,j = 1,\ldots,d$.}
\end{align*}
We refer to Section \ref{sec-jd} for further details.

We provide several examples of jump-diffusion models, including Heath-Jarrow-Morton and Brody-Hughston interest rate term structure models. These two models have the interesting feature that the savings account $B$ in the multiplicative decomposition of an ELMD $Z = D B^{-1}$, provided the latter exists, is unique; see Section \ref{sec-examples} for more details.

The remainder of this paper is organized as follows. In Section \ref{sec-market} we introduce the financial market. In Section \ref{sec-examples} we present several examples, where we already utilize the results which we have indicated above. Afterwards, we proceed with the systematic investigation on the existence of ELMDs, which are special semimartingales. In Section \ref{sec-ELMD} we derive criteria when a multiplicative special semimartingale is an ELMD. In Section \ref{sec-existence} we treat the existence of ELMDs, and in Section \ref{sec-jd} we focus on jump-diffusion models. Section \ref{sec-conclusion} concludes. For convenience of the reader, several auxiliary results concerning stochastic processes, matrices and linear operators are gathered in Appendices \ref{sec-processes}--\ref{sec-matrices}.

\section{The financial market}\label{sec-market}

In this section we introduce the financial market. Let $(\Omega,\calf,(\calf_t)_{t \in \bbr_+},\bbp)$ be a stochastic basis satisfying the usual conditions, see \cite[Def. I.1.3]{Jacod-Shiryaev}. Furthermore, we assume that $\calf_0 = \{ \Omega,\emptyset \}$. Then every $\calf_0$-measurable random variable is $\bbp$-almost surely constant. Let $I \neq \emptyset$ be an arbitrary nonempty index set, and let $\bbs = \{ S^i : i \in I \}$ be a financial market consisting of positive semimartingales. More precisely, for each $i \in I$ we assume that $S^i,S_-^i > 0$. Then for each $i \in I$ we have
\begin{align}\label{stock-i}
S^i = S_0^i \, \cale(X^i)
\end{align}
with a semimartingale $X^i$ such that $X_0 = 0$ and $\Delta X^i > -1$.

\begin{definition}
We call a semimartingale $Z$ with $Z,Z_- > 0$ an \emph{equivalent local martingale deflator (ELMD)} for $\bbs$ if
\begin{align*}
S^i Z \in \calm_{\loc} \quad \text{for all $i \in I$.}
\end{align*}
\end{definition}

\begin{definition}
We call a semimartingale $\bar{Z}$ with $\bar{Z},\bar{Z}_- > 0$ an \emph{equivalent local martingale num\'{e}raire (ELMN)} for $\bbs$ if
\begin{align*}
\frac{S^i}{\bar{Z}} \in \calm_{\loc} \quad \text{for all $i \in I$.}
\end{align*}
\end{definition}

Note that a semimartingale $Z$ with $Z,Z_- > 0$ is an ELMD if and only if $\bar{Z} = Z^{-1}$ is an ELMN.

\begin{definition}
We call an equivalent probability measure $\bbq \approx \bbp$ on $(\Omega,\calf_{\infty -})$ an \emph{equivalent local martingale measure (ELMM)} for $\bbs$ if $S^i$ is a $\bbq$-local martingale for all $i \in I$.
\end{definition}

\begin{definition}
We call every predictable process $B$ of locally finite variation with $B_0 = 1$ and $B, B_- > 0$ a \emph{savings account} (or a \emph{locally risk-free asset}).
\end{definition}

The following result shows why we are interested in the existence of an ELMD which is a multiplicative special semimartingale. We denote by $\calp_{\sfi,1}^+(\bbs)$ the convex set of all nonnegative, self-financing portfolios with initial value one evaluated at a fixed terminal time $T \in (0,\infty)$. Considering this set is equivalent to looking at all nonnegative, self-financing portfolios with strictly positive initial values; see \cite{Platen-Tappe-tvs} and \cite{Platen-Tappe-FTAP} for more details. Furthermore, the following no-arbitrage concepts are NUPBR (No Unbounded Profit with Bounded Risk), NAA$_1$ (No Asymptotic Arbitrage of the 1st Kind) and NA$_1$ (No Arbitrage of the 1st Kind), which are known to be equivalent in the present situation; see, for example \cite{KKS}, \cite{Platen-Tappe-tvs} or \cite{Platen-Tappe-FTAP} for further details.

\begin{theorem}\label{thm-FTAP}
Suppose there exists an ELMD $Z$ for $\bbs$ which is a multiplicative special semimartingale, and let $Z = D B^{-1}$ be a multiplicative decomposition with a savings account $B$. Then $\calp_{\sfi,1}^+(\bbs \cup \{ B \})$ satisfies NUPBR, NAA$_1$ and NA$_1$.
\end{theorem}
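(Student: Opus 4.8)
The plan is to show that the hypothesized deflator $Z$ is in fact an ELMD not merely for $\bbs$ but for the enlarged market $\bbs \cup \{ B \}$, and then to appeal to the known fact that the mere existence of an ELMD already enforces the no-arbitrage conditions. The whole point of requiring $Z$ to be a multiplicative special semimartingale with decomposition $Z = D B^{-1}$ is that it makes this upgrade essentially automatic.

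First I would verify the ELMD property for every asset of $\bbs \cup \{ B \}$. For $i \in I$ this is the standing assumption $S^i Z \in \calm_{\loc}$, so only the adjoined asset $B$ needs attention. Using the multiplicative decomposition we compute, pathwise,
\begin{align*}
B Z = B \cdot D B^{-1} = D = \cale(-\Theta) \in \calm_{\loc},
\end{align*}
so that $B Z$ is again a local martingale. Hence $Z$ is an ELMD for $\bbs \cup \{ B \}$.

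Second, I would deduce NUPBR from this. For an arbitrary $V \in \calp_{\sfi,1}^+(\bbs \cup \{ B \})$ the deflated wealth $V Z$ is a local martingale, because $V$ arises from a self-financing strategy and each deflated asset $S^i Z$ and $B Z$ is a local martingale; since $V \ge 0$ and $Z > 0$, the process $V Z$ is a nonnegative local martingale, hence a supermartingale, giving $\bbe[(V Z)_T] \le (V Z)_0 = Z_0 < \infty$. This uniform $L^1$-bound, together with $Z_T > 0$, forces $\{ V_T : V \in \calp_{\sfi,1}^+(\bbs \cup \{ B \}) \}$ to be bounded in probability, which is precisely NUPBR; the remaining conditions NAA$_1$ and NA$_1$ then follow from their equivalence with NUPBR recalled just before the statement.

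I expect the main obstacle to be conceptual rather than computational: everything hinges on the identity $B Z = D \in \calm_{\loc}$, which is exactly what distinguishes a virtual savings account fitting the multiplicative decomposition from an arbitrary locally risk-free asset. Once $Z$ is known to deflate the enlarged market, the implication ``existence of an ELMD $\Rightarrow$ NUPBR'' is the standard supermartingale-deflation argument and may be cited from \cite{Takaoka-Schweizer} and the num\'{e}raire-free framework of \cite{Platen-Tappe-tvs, Platen-Tappe-FTAP} rather than reproved. The only technical point demanding care is that the self-financing deflation remains valid after adjoining $B$, which holds because $B$ is itself a strictly positive predictable semimartingale and thus a legitimate additional traded asset.
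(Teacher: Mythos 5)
Your proposal is correct, but it takes a more self-contained route than the paper, whose entire proof is a citation of \cite[Thm.\ 7.5 and Remark 7.10]{Platen-Tappe-FTAP}. What you do differently: you make explicit the mechanism hidden inside that citation, namely (a) the identity $BZ = D \in \calm_{\loc}$, which upgrades $Z$ from an ELMD for $\bbs$ to an ELMD for $\bbs \cup \{ B \}$ --- this is exactly the step the paper itself uses later, in the implication (i) $\Rightarrow$ (ii) of Theorem \ref{thm-special} --- and (b) the standard deflation argument: for $V \in \calp_{\sfi,1}^+(\bbs \cup \{ B \})$ self-financing, $VZ$ is a stochastic integral against the local martingales $S^i Z$ and $BZ$, hence (being nonnegative) a supermartingale, and the uniform bound $\bbe[(VZ)_T] \leq Z_0$ together with $Z_T > 0$ yields boundedness in probability, i.e.\ NUPBR, with NAA$_1$ and NA$_1$ following from the equivalence recalled before the theorem. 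Your approach buys transparency: the reader sees why the multiplicative decomposition with a \emph{savings account} $B$ is the precise hypothesis needed. The paper's approach buys rigor on the points you gloss over: the formal definition of $\calp_{\sfi,1}^+$ and of self-financing in the num\'{e}raire-free setting, and the fact that nonnegative wealth deflated by an ELMD is a priori only a $\sigma$-martingale (a nonnegative $\sigma$-martingale is a local martingale, hence a supermartingale) --- you assert the local martingale property directly, which strictly requires this intermediate step. Since you only use the supermartingale property downstream, this is a harmless shortcut rather than a gap, and your reliance on \cite{Takaoka-Schweizer} and \cite{Platen-Tappe-tvs, Platen-Tappe-FTAP} for these technicalities is legitimate.
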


\begin{proof}
This is a consequence of \cite[Thm. 7.5 and Remark 7.10]{Platen-Tappe-FTAP}.
\end{proof}

If the market $\bbs$ is finite, that is the index set $I$ is finite, then the existence of such an ELMD is equivalent to the existence of a savings account $B$ such that $\calp_{\sfi,1}^+(\bbs \cup \{ B \})$ satisfies NUPBR, NAA$_1$ and NA$_1$; see \cite[Thm. 7.5]{Platen-Tappe-FTAP}. In view of Theorem \ref{thm-FTAP}, we are interested in the existence of an ELMD $Z$ which is a multiplicative special semimartingale because this ensures the absence of arbitrage.

\section{Examples}\label{sec-examples}

Before we proceed with the systematic investigation on the existence of ELMDs, which are multiplicative special semimartingales, for the purpose of illustration we present concrete examples of financial models, where we discuss the existence of such deflators. In the upcoming examples, we utilize the results which we will develop in Sections \ref{sec-ELMD}--\ref{sec-jd} later on. In each of the following examples the market $\bbs = \{ S^i : i \in I \}$ is given by a jump-diffusion model, where for each $i \in I$ the process $X^i$ appearing in (\ref{stock-i}) is of the form
\begin{align*}
X^i = a^i \bdot \lambda + \sigma^i \bdot W + \gamma^i * (\frp - \frq)
\end{align*}
with $\lambda$ denoting the Lebesgue measure, an $\bbr^m$-valued standard Wiener process $W$ and a homogeneous Poisson random measure $\frp$ with compensator of the form $\frq = \lambda \otimes F$, where $F$ is a $\sigma$-finite measure on the mark space $(E,\cale)$. In order to look for ELMDs which are multiplicative special semimartingales, we consider a multiplicative special semimartingale $Z = D B^{-1}$, where
\begin{align}\label{D-and-B-jd-pre}
D = \cale \big( -\theta \bdot W - \psi * (\frp - \frq) \big) \quad \text{and} \quad B = \cale(r \bdot \lambda) = \exp(r \bdot \lambda)
\end{align}
with appropriate processes $\theta$, $\psi$ and $r$ such that $\psi < 1$. As already mentioned in Section \ref{sec-intro}, for the absence of arbitrage we have to find a solution to the linear equation (\ref{eqn-jd-intro}), and for a  finite market $\bbs = \{ S^1,\ldots,S^d \}$ this equation can be expressed as the $\bbr^d$-valued linear equation (\ref{eqn-3-var-intro}). We refer to Section \ref{sec-jd} for further details.

\subsection{Pure diffusion models}\label{sec-pure-diffusion}

Consider a finite market $\bbs$, where the $\bbr^d$-valued semimartingale $X$ is an It\^{o} process of the form
\begin{align*}
X = a \bdot \lambda + \sigma \bdot W.
\end{align*}
Here we consider a multiplicative special semimartingale $Z = D B^{-1}$, where the local martingale $D$ and the savings account $B$ are of the form
\begin{align}\label{D-and-B-pure-diff}
D = \cale(-\theta \bdot W) \quad \text{and} \quad B = \cale(r \bdot \lambda) = \exp(r \bdot \lambda),
\end{align}
and the $\bbr^d$-valued linear equation (\ref{eqn-3-var-intro}) reads
\begin{align}\label{eq-diffusion-1}
\sigma \theta = a - r \bbI_{\bbr^d},
\end{align}
where $\sigma$ is regarded as an $\bbr^{d \times m}$-valued process. The existence of a solution $(\theta,r)$ to (\ref{eq-diffusion-1}) gives rise to an ELMD $Z = D B^{-1}$ for the market $\bbs$ with the local martingale $D$ and the savings account $B$ given by (\ref{D-and-B-pure-diff}).

\subsection{Examples of pure diffusion models}

A particular situation of the pure diffusion model from Section \ref{sec-pure-diffusion} arises if the $\bbr^d$-valued semimartingale $X$ is of the form
\begin{align*}
X = a \bdot \lambda + \sigma \bbI_{\bbr^d} \bdot W
\end{align*}
with constants $a \in \bbr^d$, $\sigma > 0$ and an $\bbr$-valued standard Wiener process $W$. Then the $\bbr^d$-valued linear equation (\ref{eq-diffusion-1}) reads
\begin{align*}
a = (\sigma \theta + r) \bbI_{\bbr^d},
\end{align*}
and this equation has a solution if and only if $a \in \lin \{ \bbI_{\bbr^d} \}$, where $\lin \{ \bbI_{\bbr^d} \}$ denotes the one-dimensional linear space generated by $\bbI_{\bbr^d} = (1,\ldots,1) \in \bbr^d$.

\subsection{The Black-Scholes model}

In the Black-Scholes model, which goes back to \cite{Black-Scholes} and \cite{Merton-BS}, the market is given by $\bbs = \{ S \}$, where the stock price satisfies
\begin{align*}
dS_t = a S_t dt + \sigma S_t dW_t
\end{align*}
with constants $a \in \bbr$, $\sigma > 0$ and an $\bbr$-valued standard Wiener process $W$. This is a particular case of the pure diffusion models considered in Section \ref{sec-pure-diffusion}. For a fixed constant $r \in \bbr$ the linear equation (\ref{eq-diffusion-1}) is the $\bbr$-valued equation
\begin{align*}
\sigma \theta = a - r,
\end{align*}
and it has the unique solution
\begin{align*}
\theta = \frac{a - r}{\sigma}.
\end{align*}
Therefore, choosing the ELMD $Z = D B^{-1}$ with the local martingale $D$ and the savings account $B$ given by (\ref{D-and-B-pure-diff}), we deduce that $\calp_{\sfi,1}^+(\bbs \cup \{ B \}) = \calp_{\sfi,1}^+(\{ S,B \})$ satisfies NUPBR, NAA$_1$ and NA$_1$. This is in accordance with our findings from \cite[Sec. 9]{Platen-Tappe-FTAP}.

\subsection{The Heston model}

In the Heston model, which goes back to \cite{Heston}, the market is given by $\bbs = \{ S \}$, where the stock price satisfies
\begin{align*}
dS_t = a S_t dt + \sqrt{v_t} S_t dW_t
\end{align*}
with a constant $a \in \bbr$ and an $\bbr$-valued standard Wiener process $W$. The variance process $v$ is a Cox-Ingersoll-Ross process
\begin{align*}
dv_t = \kappa(\vartheta - v_t) dt + \xi \sqrt{v_t} d\widetilde{W}_t
\end{align*}
with initial value $v_0 > 0$ and constants $\kappa,\vartheta,\xi > 0$. The process $\widetilde{W}$ is another $\bbr$-valued standard Wiener process. We assume that $2 \kappa \vartheta > \xi^2$, which ensures that the variance process $v$ is strictly positive. The Heston model is also a particular case of the pure diffusion models considered in Section \ref{sec-pure-diffusion}. For a fixed constant $r \in \bbr$, the linear equation (\ref{eq-diffusion-1}) is the $\bbr$-valued equation
\begin{align*}
\sqrt{v} \theta = a - r,
\end{align*}
and it has the unique solution
\begin{align*}
\theta = \frac{a - r}{\sqrt{v}}.
\end{align*}
By the continuity of $v$ we have $\theta \in L_{\loc}^2(W)$. Therefore, choosing the ELMD $Z = D B^{-1}$ with the local martingale $D$ and the savings account $B$ given by (\ref{D-and-B-pure-diff}), we deduce that $\calp_{\sfi,1}^+(\bbs \cup \{ B \}) = \calp_{\sfi,1}^+(\{ S,B \})$ satisfies NUPBR, NAA$_1$ and NA$_1$.

\subsection{The Merton model}\label{subsec-Merton}

In the Merton model, which goes back to \cite{Merton}, the market is given by $\bbs = \{ S \}$, where the stock price satisfies
\begin{align*}
dS_t = \mu S_t dt + \sigma S_t dW_t + S_{t-} dQ_t
\end{align*}
with constants $\mu \in \bbr$, $\sigma > 0$, an $\bbr$-valued standard Wiener process $W$ and a compound Poisson process $Q$ such that $\Delta Q > -1$. More precisely, the jump size distribution is that of $Y-1$, where $Y$ has a lognormal distribution. We consider a more general situation, namely a jump-diffusion model with one asset, where the mark space is given by
\begin{align*}
(E,\cale) = (\bbr,\calb(\bbr)), 
\end{align*}
and where the measure $F$ satisfies 
\begin{align*}
F(\bbr) < \infty, \quad \supp(F) \subset (-1,\infty) \quad \text{and} \quad \int_{(-1,\infty)} |x|^2 F(dx) < \infty. 
\end{align*}
Furthermore, we define the mapping $\gamma : \Omega \times \bbr_+ \times \bbr \to \bbr$ as
\begin{align*}
\gamma(\omega,t,x) = x \bbI_{\{ x > -1 \}}, \quad (\omega,t,x) \in \Omega \times \bbr_+ \times \bbr.
\end{align*}
Note that this model covers the Merton model, because the lognormal distribution admits second order moments. The linear equation (\ref{eqn-3-var-intro}) is the $\bbr$-valued equation
\begin{align}\label{eq-Merton}
\sigma \theta + \int_{\bbr} x \psi(x) F(dx) = a - r.
\end{align}
Note that equation (\ref{eq-Merton}) admits several solutions. For example, choose a constant $r \in \bbr$ and a function $\psi \in L^2(F)$ such that $\psi < 1$. Then the solution to the linear equation (\ref{eq-Merton}) is given by
\begin{align*}
\theta = \frac{1}{\sigma} \bigg( a - r - \int_{\bbr} x \psi(x) F(dx) \bigg).
\end{align*}
Therefore, choosing the ELMD $Z = D B^{-1}$ with the local martingale $D$ and the savings account $B$ given by (\ref{D-and-B-jd-pre}), we deduce that $\calp_{\sfi,1}^+(\bbs \cup \{ B \}) = \calp_{\sfi,1}^+(\{ S,B \})$ satisfies NUPBR, NAA$_1$ and NA$_1$.

\subsection{Jump-diffusion models with finitely many jumps}\label{subsec-jd-fin}

Consider a finite market $\bbs$ and assume that the measure $F$ on the mark space $(E,\cale)$ is concentrated on finitely many points. More precisely, we assume there are pairwise different elements $x_1,\ldots,x_n \in E$ such that $\{ x_j \} \in \cale$ for each $j=1,\ldots,n$, and that the measure $F$ is of the form
\begin{align*}
F = \sum_{j=1}^n c_j \delta_{x_j}
\end{align*}
with finite numbers $c_1,\ldots,c_n > 0$. Here $\delta_{x_j}$ denotes the Dirac measure at point $x_j$ for each $j=1,\ldots,n$. Then the space $L_{\loc}^2(\frp)$ can be identified with the space of all optional processes $\rho : \Omega \times \bbr_+ \to \bbr^n$ such that $\| \rho \|_{\bbr^n}^2 \bdot \lambda \in \calv^+$. Indeed, for each $\psi \in L_{\loc}^2(\frp)$ the corresponding process $\rho$ is given by
\begin{align*}
\rho = \big( \psi(x_1),\ldots,\psi(x_n) \big).
\end{align*}
With this identification we have $\psi < 1$ if and only if $\rho^j < 1$ for all $j=1,\ldots,n$, and for each $\psi \in L_{\loc}^2(\frp)$ we have
\begin{align*}
\gamma \psi = \Gamma \rho,
\end{align*}
where the $\bbr^{d \times n}$-valued process $\Gamma$ is given by
\begin{align*}
\Gamma^{ij} = c_j \gamma^i(x_j) \quad \text{for all $i=1,\ldots,d$ and $j=1,\ldots,n$.}
\end{align*}
Therefore, for $\theta \in L_{\loc}^2(W)$ and an optional processes $\rho : \Omega \times \bbr_+ \to \bbr^n$ such that $\| \rho \|_{\bbr^n}^2 \bdot \lambda \in \calv^+$ and $\rho^j < 1$ for all $j=1,\ldots,n$ the multiplicative special semimartingale $Z = D B^{-1}$ with the local martingale $D$ and the savings account $B$ given by (\ref{D-and-B-jd-pre}), where $\psi \in L_{\loc}^2(\frp)$ is defined as $\psi := \sum_{j=1}^n \rho^j \bbI_{\{ x_j \}}$, is an ELMD for $\bbs$ if and only if
\begin{align}\label{eq-jd-fin}
\sigma \theta + \Gamma \rho = a - r \bbI_{\bbr^d}.
\end{align}
Recall that in the linear equation (\ref{eq-jd-fin}) the process $\sigma$ is $\bbr^{d \times m}$-valued, and that the process $\Gamma$ is $\bbr^{d \times n}$-valued.

\subsection{The Black-Scholes model with an additional Poisson process}

Consider a market $\bbs = \{ S \}$ with one asset, where the stock price satisfies
\begin{align*}
dS_t = \mu S_t dt + \sigma S_t dW_t + S_{t-} dN_t
\end{align*}
with constants $\mu \in \bbr$, $\sigma > 0$, an $\bbr$-valued standard Wiener process $W$ and a Poisson process $N$ with intensity $c > 0$. Note that this is a particular case of the Merton type model considered in Section \ref{subsec-Merton}. This model is also a jump-diffusion model as in Section \ref{subsec-jd-fin}, and the linear equation (\ref{eq-jd-fin}) is the $\bbr$-valued equation
\begin{align}\label{eq-BS-Poisson}
\sigma \theta + c \rho = a - r.
\end{align}
Clearly, there are several solutions $(\theta,\rho,r)$ with $\rho < 1$ to the linear equation (\ref{eq-BS-Poisson}), ensuring the absence of arbitrage. For example, for constants $r \in \bbr$ and $\rho < 1$ the solution to (\ref{eq-BS-Poisson}) is given by
\begin{align*}
\theta = \frac{a-r-c\rho}{\sigma}.
\end{align*}

\subsection{Heath-Jarrow-Morton interest rate term structure models}

In this section we consider Heath-Jarrow-Morton (HJM) interest term structure models for modeling markets of zero coupon bonds. Under risk-neutral pricing we refer to \cite{HJM} for the classical HJM model driven by Wiener processes. Furthermore, we refer, for example, to \cite{Eberlein-Raible, Eberlein_J} for risk-neutral HJM models driven by L\'{e}vy processes, and, for example, to \cite{BKR0, BKR, FTT-published} for risk-neutral HJM models driven by Wiener processes and Poisson random measures. In the framework of the Benchmark Approach (see \cite{Platen}) HJM models driven by Wiener processes and Poisson random measures have been studied in \cite{Christensen-Platen, B-N-Platen}. We assume that for each $T \in \bbr_+$ the forward rate is given by
\begin{align*}
f(T) = f_0(T) + \alpha(T) \bdot \lambda + \sigma(T) \bdot W + \gamma(T) * (\frp - \frq)
\end{align*}
with a starting value $f_0(T) \in \bbr$ and suitable integrands $\alpha(T) \in L_{\loc}^1(\lambda)$, $\sigma(T) \in L_{\loc}^2(W)$ and $\gamma(T) \in L_{\loc}^2(\frp)$. By a monotone class argument, the short rate $f_{\cdot}(\cdot)$ given by
\begin{align*}
f_t(t) \quad \text{for all $t \in \bbr_+$}
\end{align*}
has an optional version. for each $T \in \bbr_+$ we define the process $F(T)$ as
\begin{align*}
F_t(T) := -\int_t^T f_t(s) ds, \quad t \in \bbr_+,
\end{align*}
and the bond price process
\begin{align*}
P(T) := \exp(F(T)).
\end{align*}
In the sequel, we are interested in the bond market
\begin{align*}
\bbs = \{ P(T) : T \in \bbr_+ \}.
\end{align*}
Let $T \in \bbr_+$ be arbitrary. We define the processes $A(T)$, $\Sigma(T)$ and $\Gamma(T)$ as follows. For $t \leq T$ we set
\begin{align*}
A_t(T) := -\int_t^T \alpha_t(s) ds, \quad \Sigma_t(T) := -\int_t^T \sigma_t(s) ds, \quad \Gamma_t(T) := -\int_t^T \gamma_t(s) ds,
\end{align*}
and for $t > T$ we set
\begin{align*}
A_t(T) := A_T(T), \quad \Sigma_t(T) := \Sigma_T(T), \quad \Gamma_t(T) := \Gamma_T(T).
\end{align*}
Of course, this requires appropriate integrability conditions on $\alpha$, $\sigma$ and $\gamma$.
Under further suitable regularity conditions, a standard calculation using stochastic Fubini theorems shows that
\begin{align*}
F(T) = F_0(T) + (f_{\cdot}(\cdot) + A(T)) \bdot \lambda + \Sigma(T) \bdot W + \Gamma(T) * (\frp - \frq).
\end{align*}
By \cite[Thm. II.8.10]{Jacod-Shiryaev} we have
\begin{align*}
P(T) = P_0(T) \, \cale(X(T)),
\end{align*}
where the semimartingale $X(T)$ is given by
\begin{align*}
X(T) &= \bigg( f_{\cdot}(\cdot) + A(T) + \frac{1}{2} \| \Sigma(T) \|_{\bbr^m}^2 + \int_E \big( e^{\Gamma(T)} - 1 - \Gamma(T) \big) F(dx) \bigg) \bdot \lambda
\\ &\quad + \Sigma(T) \bdot W + \big( e^{\Gamma(T)} - 1 \big) * (\frp - \frq).
\end{align*}
Now, we consider a multiplicative special semimartingale $Z = D B^{-1}$, where the local martingale $D$ and the savings account $B$ are given by (\ref{D-and-B-jd-pre}) with $\theta \in L_{\loc}^2(W)$ and $\psi \in L_{\loc}^2(\frp)$ such that $\psi < 1$, as well as an optional process $r \in L_{\loc}^1(\lambda)$.

\begin{theorem}
We assume that the processes $f_{\cdot}(\cdot)$, $r$, $\theta$ and $\psi$ are c\`{a}d (right-continuous) or c\`{a}g (left-continuous), and that for each $T \in \bbr_+$ the processes $A(T)$, $\Sigma(T)$ and $\Gamma(T)$ are c\`{a}d or c\`{a}g. Then the following statements are equivalent:
\begin{enumerate}
\item[(i)] $Z$ is an ELMD for the bond market $\bbs$.

\item[(ii)] $Z$ is an ELMD for the extended bond market $\bbs \cup \{ B \}$.

\item[(iii)] We have up to an evanescent set
\begin{align}\label{short-rate-f}
r = f_{\cdot}(\cdot)
\end{align}
and for each $T \in \bbr_+$ we have up to an evanescent set
\begin{equation}\label{HJM-drift}
\begin{aligned}
-A(T) &= \frac{1}{2} \| \Sigma(T) \|_{\bbr^m}^2 - \la \Sigma(T),\theta \ra_{\bbr^m}
\\ &\quad + \int_E \Big( (1 - \psi(x)) ( e^{\Gamma(T,x)} - 1 ) - \Gamma(T,x) \Big) F(dx).
\end{aligned}
\end{equation}
\end{enumerate}
If the previous conditions are fulfilled, then $\calp_{\sfi,1}^+(\bbs \cup \{ B \})$ satisfies NUPBR, NAA$_1$ and NA$_1$.
\end{theorem}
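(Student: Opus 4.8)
The plan is to recognise each bond $P(T)$ as a jump-diffusion asset in the sense of Section \ref{sec-jd} and to apply the drift characterisation developed there, after which statement (iii) will emerge from an algebraic rearrangement combined with an evaluation on the diagonal $T=t$. First I would record that, by the expression for $X(T)$ obtained above, the bond $P(T) = P_0(T)\,\cale(X(T))$ is a jump-diffusion asset with volatility $\Sigma(T)$, jump coefficient $e^{\Gamma(T)}-1 \in L^2(F)$, and drift
\[
a(T) = f_{\cdot}(\cdot) + A(T) + \tfrac{1}{2}\| \Sigma(T) \|_{\bbr^m}^2 + \int_E \big( e^{\Gamma(T)} - 1 - \Gamma(T) \big) F(dx).
\]
Since $\frq = \lambda \otimes F$ is continuous the model is quasi-left-continuous, and with the stated integrability it is locally square-integrable; hence Theorem \ref{thm-special-sq} applies, giving $R = \widetilde{R} = r \bdot \lambda$ and $\Theta = \widetilde{\Theta} = \theta \bdot W + \psi * (\frp - \frq)$. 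By the jump-diffusion characterisation (\ref{eqn-jd-intro}), $Z$ is an ELMD for $\bbs$ if and only if, for every $T \in \bbr_+$,
\[
\la \Sigma(T), \theta \ra_{\bbr^m} + \int_E \big( e^{\Gamma(T,x)} - 1 \big) \psi(x) F(dx) = a(T) - r
\]
holds up to an evanescent set.

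Next I would substitute $a(T)$ into this identity and collect the integral terms, turning it into the equivalent relation
\[
f_{\cdot}(\cdot) - r = -A(T) + \Phi(T), \qquad (\star_T)
\]
where $\Phi(T) := -\tfrac{1}{2}\| \Sigma(T) \|_{\bbr^m}^2 + \la \Sigma(T), \theta \ra_{\bbr^m} - \int_E \big( (1-\psi)(e^{\Gamma(T)} - 1) - \Gamma(T) \big) F(dx)$. The point of this bookkeeping is that the HJM drift condition (\ref{HJM-drift}) is exactly the statement $A(T) = \Phi(T)$; hence, once $r = f_{\cdot}(\cdot)$ is known, $(\star_T)$ reduces precisely to (\ref{HJM-drift}), and conversely (iii) plainly implies $(\star_T)$ for every $T$. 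This settles the implication (iii) $\Rightarrow$ (i) by direct substitution.

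The delicate direction is (i) $\Rightarrow$ (iii), and in particular the extraction of the short-rate identity (\ref{short-rate-f}). Here I would first aggregate: for each $T$ the relation $(\star_T)$ holds off an evanescent set $N_T$, and since $A_t(\cdot)$, $\Sigma_t(\cdot)$, $\Gamma_t(\cdot)$ are defined by integrals that are continuous in the maturity argument, the union $N := \bigcup_{T \in \bbq_+} N_T$ is again evanescent and $(\star_T)$ holds for all rational $T$ simultaneously off $N$. For $(\omega,t) \notin N$ I would then let $T \downarrow t$ through the rationals: by construction $A_t(t) = \Sigma_t(t) = \Gamma_t(t) = 0$, so $-A_t(T) + \Phi_t(T) \to 0$, while the left-hand side of $(\star_T)$ does not depend on $T$; this forces $f_t(t) = r_t$ and yields (\ref{short-rate-f}). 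Substituting (\ref{short-rate-f}) back into $(\star_T)$ returns $A(T) = \Phi(T)$, i.e.\ (\ref{HJM-drift}), for every $T$. This aggregation-and-diagonal step is where I expect the main difficulty to lie, since it is the only place the c\`{a}d/c\`{a}g hypotheses are genuinely used — to guarantee that the per-maturity null sets can be bundled into a single evanescent set and that the limit $T \downarrow t$ may be taken inside the drift relation.

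Finally, the equivalence (i) $\Leftrightarrow$ (ii) is immediate: adjoining $B$ to the market adds only the requirement $B Z \in \calm_{\loc}$, and $B Z = B D B^{-1} = D \in \calm_{\loc}$ holds by construction, so the ELMD property for $\bbs$ and for $\bbs \cup \{ B \}$ coincide. The concluding assertion follows from Theorem \ref{thm-FTAP}: under (i)--(iii), $Z = D B^{-1}$ is a multiplicative special semimartingale ELMD whose finite-variation factor $B = \exp(r \bdot \lambda)$ is continuous, strictly positive and predictable with $B_0 = 1$, hence a savings account, so that $\calp_{\sfi,1}^+(\bbs \cup \{ B \})$ satisfies NUPBR, NAA$_1$ and NA$_1$.
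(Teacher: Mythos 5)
Your proposal is correct in substance and follows essentially the paper's route: both arguments reduce the ELMD property to the jump-diffusion drift characterization (Theorem \ref{thm-jd}) applied to $P(T) = P_0(T)\,\cale(X(T))$ with volatility $\Sigma(T)$ and jump coefficient $e^{\Gamma(T)}-1$, and both extract the short-rate identity (\ref{short-rate-f}) by specializing the resulting per-maturity relation at the diagonal, after which the rearrangement into (\ref{HJM-drift}) is pure bookkeeping. (Incidentally, your pairing $\int_E (e^{\Gamma(T,x)}-1)\psi(x)\,F(dx)$ is the right one; the intermediate display in the paper's own proof writes $\la \Gamma(T),\psi \ra_{L^2(F)}$, a slip, though the final condition (\ref{HJM-drift}) is consistent with the correct coefficient, as your computation confirms.)

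The one place you diverge is the diagonal step, and there your variant is both more complicated and slightly weaker than the paper's. The paper simply evaluates the per-maturity identity at $t=T$: since for each fixed $T$ the relation holds up to an evanescent set --- i.e.\ for \emph{all} $t$ simultaneously, off a $\bbp$-null set --- evaluation at the single time $t=T$ is legitimate, and $A_T(T)=\Sigma_T(T)=\Gamma_T(T)=0$ immediately gives $r_T = f_T(T)$ almost surely for every $T$; the assumed c\`{a}d/c\`{a}g regularity then upgrades this to (\ref{short-rate-f}) up to evanescence. Your aggregation over rational maturities followed by the limit $T \downarrow t$ also reaches the conclusion, but it costs an extra hypothesis: you need $\Phi_t(T) \to 0$ as $T \downarrow t$, and for the term $\int_E \big( (1-\psi(x))(e^{\Gamma_t(T,x)}-1) - \Gamma_t(T,x) \big) F(dx)$ this is a dominated-convergence statement that does not follow from the theorem's stated assumptions, which impose regularity in the time variable, not continuity of the maturity map into $L^1(F)$. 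Relatedly, you misplace where the c\`{a}d/c\`{a}g hypotheses genuinely act: bundling countably many evanescent sets needs only countability, whereas the regularity is indispensable one step earlier, namely to upgrade the drift identity of Theorem \ref{thm-jd} --- which a priori holds only $\lambda$-a.e.\ $\bbp$-a.e.\ --- to an identity valid for all $t$ up to an evanescent set; without this upgrade, neither the paper's evaluation at the $\lambda$-null set $\{t=T\}$ nor your limiting procedure would be permissible, and it is also needed at the end to pass from $r_t = f_t(t)$ for each fixed $t$ to (\ref{short-rate-f}). Replacing your limit argument by the direct evaluation at $t=T$ repairs this at no cost; the remainder of your proof --- the equivalence of (\ref{HJM-drift}) with $A(T)=\Phi(T)$, the equivalence (i) $\Leftrightarrow$ (ii) via $BZ = D \in \calm_{\loc}$, and the appeal to Theorem \ref{thm-FTAP} for the final assertion --- matches the paper exactly.
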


\begin{proof}
Noting the assumed regularity conditions, by Theorem \ref{thm-jd} the process $Z$ is an ELMD for $\bbs$, or equivalently for $\bbs \cup \{ B \}$, if and only if for each $T \in \bbr_+$ we have up to an evanescent set
\begin{align*}
&\la \Sigma(T),\theta \ra_{\bbr^m} + \la \Gamma(T),\psi \ra_{L^2(F)}
\\ &= f_{\cdot}(\cdot) + A(T) + \frac{1}{2} \| \Sigma(T) \|_{\bbr^m}^2 + \int_E \big( e^{\Gamma(T,x)} - 1 - \Gamma(T,x) \big) F(dx) - r.
\end{align*}
Evaluating this equation at $t=T$ for every $T \in \bbr_+$ we obtain that $Z$ is an ELMD for $\bbs$ if and only if we have (\ref{short-rate-f}) up to an evanescent set, and for each $T \in \bbr_+$ we have (\ref{HJM-drift}) up to an evanescent set. The additional statement is a consequence of Theorem \ref{thm-FTAP}.
\end{proof}

Consequently, if an ELMD $Z$ for the bond market, which is a multiplicative special semimartingale, exists, then the savings account $B$ in the multiplicative decomposition $Z = D B^{-1}$ is unique, and it is given by
\begin{align*}
B_t = \exp \bigg( \int_0^t f_s(s) ds \bigg), \quad t \in \bbr_+.
\end{align*}

\begin{remark}
Differentiating equation (\ref{HJM-drift}) with respect to $T$ yields the drift condition
\begin{align*}
\alpha(T) &= - \la \sigma(T), \Sigma(T) - \theta \ra_{\bbr^m} - \int_E \gamma(T,x) \big( (1 - \psi(x)) e^{\Gamma(T,x)} - 1 \big) F(dx).
\end{align*}
This drift condition also appears in the framework of the Benchmark Approach; see \cite{Christensen-Platen} and \cite{B-N-Platen}.
\end{remark}

\subsection{Brody-Hughston interest rate term structure models}

In this section we investigate Brody-Hughston interest rate term structure models. Such term structure models driven by Wiener processes have been introduced in \cite{Brody-Hughston-0, Brody-Hughston} in the risk-neutral setting; see also \cite{FTT-Positivity} for such models driven by Wiener processes and Poisson random measures in the risk-neutral setting. Let $\rho = (\rho_t)_{t \in \bbr_+}$ be a stochastic process consisting of strictly positive probability densities on $\bbr_+$. For each $T \in \bbr_+$ we define the bond prices
\begin{align}\label{bond-dens}
P_t(T) := \int_{T-t}^{\infty} \rho_t(u) du, \quad t \in [0,T].
\end{align}
We are interested in the bond market
\begin{align*}
\bbs = \{ P(T) : T \in \bbr_+ \}.
\end{align*}
Let $T \in \bbr_+$ be arbitrary. Since the bond prices given by (\ref{bond-dens}) are strictly positive, we may assume that
\begin{align*}
P(T) = P_0(T) \, \cale \big( \alpha(T) \bdot \lambda + \sigma(T) \bdot W + \gamma(T) * (\frp - \frq) \big)
\end{align*}
with suitable integrands $\alpha(T) \in L_{\loc}^1(\lambda)$, $\sigma(T) \in L_{\loc}^2(W)$ and $\gamma(T) \in L_{\loc}^2(\frp)$. Then we have
\begin{align*}
P(T) = P_0(T) + P(T) \alpha(T) \bdot \lambda + P(T) \sigma(T) \bdot W + P_-(T) \gamma(T) * (\frp - \frq).
\end{align*}
Now we switch to the Musiela parametrization
\begin{align*}
p_t(x) = P_t(t+x), \quad t,x \in \bbr_+. 
\end{align*}
Subject to appropriate regularity conditions, we obtain that $p$ is a solution to the stochastic partial differential equation (SPDE)
\begin{align*}
p(x) &= p_0(x) + \big( \partial_x p(x) + p(x) \hat{\alpha}(x) \big) \bdot \lambda + p(x) \hat{\sigma}(x) \bdot W
\\ &\quad + p_-(x) \hat{\gamma}(x) * ( \frp - \frq ), \quad x \in \bbr_+,
\end{align*}
where the new coefficients $\hat{\alpha}, \hat{\sigma}, \hat{\gamma}$ are given by
\begin{align*}
\hat{\alpha}_t(x) := \alpha(t,t+x), \quad t,x \in \bbr_+,
\\ \hat{\sigma}_t(x) := \sigma(t,t+x), \quad t,x \in \bbr_+,
\\ \hat{\gamma}_t(x) := \gamma(t,t+x), \quad t,x \in \bbr_+.
\end{align*}
By (\ref{bond-dens}) we have
\begin{align}\label{bond-Musiela}
p_t(x) = \int_x^{\infty} \rho_t(u)du, \quad t,x \in \bbr_+,
\end{align}
and hence
\begin{align*}
\rho_t(x) = -\partial_x p_t(x), \quad t,x \in \bbr_+.
\end{align*}
Therefore, subject to appropriate regularity conditions, which allow us to interchange differentiation and integration, we obtain that $\rho$ satisfies the SPDE
\begin{align*}
\rho(x) &= \rho_0(x) + \big( \partial_x \rho(x) - \partial_x ( p(x) \hat{\alpha}(x) ) \big) \bdot \lambda
\\ &\quad - \partial_x ( p(x) \hat{\sigma}(x) ) \bdot W - \partial_x ( p_-(x) \hat{\gamma}(x) ) * ( \frp - \frq ), \quad x \in \bbr_+.
\end{align*}
Since $\rho$ is strictly positive, we can define the coefficients $\bar{\alpha}, \bar{\sigma}, \bar{\gamma}$ as
\begin{align*}
\bar{\alpha}(x) &:= -\frac{\partial_x (p(x) \hat{\alpha}(x))}{\rho(x)}, \quad x \in \bbr_+,
\\ \bar{\sigma}(x) &:= -\frac{\partial_x (p(x) \hat{\sigma}(x))}{\rho(x)}, \quad x \in \bbr_+,
\\ \bar{\gamma}(x) &:= -\frac{\partial_x (p_-(x) \hat{\gamma}(x))}{\rho(x)}, \quad x \in \bbr_+.
\end{align*}
Therefore, we obtain
\begin{align*}
\rho(x) &= \rho_0(x) + \big( \partial_x \rho(x) + \rho(x) \bar{\alpha}(x) \big) \bdot \lambda
\\ &\quad + \rho(x) \bar{\sigma}(x) \bdot W + \rho_-(x) \bar{\gamma}(x) * ( \frp - \frq ), \quad x \in \bbr_+.
\end{align*}
Consequently, noting that
\begin{align*}
\rho(x) \partial_x \ln ( \rho(x) ) = \partial_x \rho(x), \quad x \in \bbr_+,
\end{align*}
for each $x \in \bbr_+$ we have the representation
\begin{align*}
\rho(x) = \rho_0(x) \, \cale \big( ( \partial_x \ln ( \rho(x) ) + \bar{\alpha}(x) ) \bdot \lambda +  \bar{\sigma}(x) \bdot W + \bar{\gamma}(x) * ( \frp - \frq ) \big).
\end{align*}
Since the process $\rho$ leaves the convex set of probability densities invariant, we have up to an evanescent set
\begin{align}\label{cons-1}
\int_0^{\infty} \big( \partial_x \rho(x) + \rho(x) \bar{\alpha}(x) \big) dx &= 0,
\\ \label{cons-2} \int_0^{\infty} \bar{\sigma}(x) \rho(x) dx &= 0,
\\ \label{cons-3} \int_0^{\infty} \bar{\gamma}(x) \rho_-(x) dx &= 0.
\end{align}
Note that condition (\ref{cons-1}) is satisfied if and only if we have up to an evanescent set
\begin{align}\label{rho-zero}
\rho(0) = \int_0^{\infty} \rho(x) \bar{\alpha}(x) dx \quad \text{$\lambda$-a.e.} \quad \text{$\bbp$-a.e.}
\end{align}
Now, we consider a multiplicative special semimartingale $Z = D B^{-1}$, where the local martingale $D$ and the savings account $B$ are given by (\ref{D-and-B-jd-pre}) with $\theta \in L_{\loc}^2(W)$ and $\psi \in L_{\loc}^2(\frp)$ such that $\psi < 1$, as well an an optional process $r \in L_{\loc}^1(\lambda)$.

\begin{theorem}
We assume that the following conditions are fulfilled:
\begin{itemize}
\item The processes $r$, $\theta$ and $\psi$ are c\`{a}d or c\`{a}g, and for each $x \in \bbr_+$ the process $\rho(x)$ is c\`{a}d or c\`{a}g.

\item For each $T \in \bbr_+$ the processes $\alpha(T)$, $\sigma(T)$ and $\gamma(T)$ are c\`{a}d or c\`{a}g, and the processes $\alpha$, $\sigma$ and $\gamma$ are continuous in the second argument $T$.

\item For each $x \in \bbr_+$ the processes $\hat{\alpha}(x)$, $\hat{\sigma}(x)$ and $\hat{\gamma}(x)$ are c\`{a}d or c\`{a}g, and the processes $\hat{\alpha}$, $\hat{\sigma}$ and $\hat{\gamma}$ are continuous in the second argument $x$.
 
\item For each $x \in \bbr_+$ the processes $\bar{\alpha}(x)$, $\bar{\sigma}(x)$ and $\bar{\gamma}(x)$ are c\`{a}d or c\`{a}g.
\end{itemize}
Then the following statements are equivalent:
\begin{enumerate}
\item[(i)] $Z$ is an ELMD for the bond market $\bbs$.

\item[(ii)] $Z$ is an ELMD for the extended bond market $\bbs \cup \{ B \}$.

\item[(iii)] We have up to an evanescent set
\begin{align}\label{short-rate-BH}
r = \rho(0),
\end{align}
and for each $x \in \bbr_+$ we have up to an evanescent set
\begin{align}\label{drift-BH-model}
\bar{\alpha}(x) = r + \la \bar{\sigma}(x),\theta \ra_{\bbr^m} + \la \bar{\gamma}(x),\psi \ra_{L^2(F)}.
\end{align}
\end{enumerate}
If the previous conditions are fulfilled, then $\calp_{\sfi,1}^+(\bbs \cup \{ B \})$ satisfies NUPBR, NAA$_1$ and NA$_1$.
\end{theorem}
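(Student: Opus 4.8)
The plan is to reduce the claim to the jump-diffusion criterion of Section~\ref{sec-jd} applied to the (uncountable) bond market, and then to transport that criterion from the bond coefficients $\alpha(T),\sigma(T),\gamma(T)$ to the density coefficients $\bar\alpha,\bar\sigma,\bar\gamma$ by means of the spatial identity $\rho = -\partial_x p$. First I would observe that $BZ = D \in \calm_{\loc}$ by construction (equivalently, the drift condition that Theorem~\ref{thm-jd} attaches to $B = \exp(r \bdot \lambda)$ is the trivial identity $0 = r-r$, since $B$ has vanishing diffusion and jump coefficients); this gives (i)$\Leftrightarrow$(ii) and shows that $Z$ is an ELMD for $\bbs$ (equivalently for $\bbs\cup\{B\}$) precisely when, for every $T\in\bbr_+$,
\[
\alpha(T) = r + \la \sigma(T),\theta \ra_{\bbr^m} + \la \gamma(T),\psi \ra_{L^2(F)}
\]
holds up to an evanescent set. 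As in the proof of the HJM theorem above, the assumed c\`{a}d/c\`{a}g and maturity-continuity hypotheses are what make Theorem~\ref{thm-jd} applicable to the index set $I=\bbr_+$ and allow the selection of good versions.

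The core of the argument is the passage to the density coefficients. Writing $\hat\alpha_t(x)=\alpha_t(t+x)$, and likewise for $\hat\sigma,\hat\gamma$, the family of conditions above is, for fixed $t$ and $x=T-t$, equivalent (using continuity in maturity) to the vanishing of
\[
\Phi(x) := \hat\alpha(x) - r - \la \hat\sigma(x),\theta \ra_{\bbr^m} - \la \hat\gamma(x),\psi \ra_{L^2(F)}
\]
for all $x\in\bbr_+$, while condition~(\ref{drift-BH-model}) is the vanishing of
\[
\Psi(x) := \bar\alpha(x) - r - \la \bar\sigma(x),\theta \ra_{\bbr^m} - \la \bar\gamma(x),\psi \ra_{L^2(F)}.
\]
Inserting the defining relations $\rho\bar\alpha = -\partial_x(p\hat\alpha)$, $\rho\bar\sigma=-\partial_x(p\hat\sigma)$ and $\rho\bar\gamma=-\partial_x(p_-\hat\gamma)$, using $\partial_x p = -\rho$ to write $-r\rho = \partial_x(rp)$, and noting that $p=p_-$ holds $\lambda$-almost everywhere in time, a direct computation pulls $r$ inside the spatial derivative and factors out $p$, yielding the bridge identity $\rho(x)\Psi(x) = -\partial_x\big(p(x)\Phi(x)\big)$. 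Integrating this over $(x,\infty)$ and using $p(\infty)=0$ gives
\[
p(x)\Phi(x) = \int_x^\infty \rho(u)\Psi(u)\,du .
\]
Since $p,\rho>0$, this at once yields $\Phi\equiv 0 \Leftrightarrow \Psi\equiv 0$, i.e.\ the equivalence of the bond drift condition with~(\ref{drift-BH-model}).

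It then remains to extract the short-rate identity~(\ref{short-rate-BH}). Evaluating the integrated bridge identity at $x=0$ and using $p(0)=\int_0^\infty\rho(u)\,du = 1$ gives $\Phi(0)=\int_0^\infty\rho(u)\Psi(u)\,du$; computing the right-hand side with the consistency relations~(\ref{rho-zero}), (\ref{cons-2}) and (\ref{cons-3}) (or, equivalently, matching coefficients at $x=0$ in the SPDE for $p$, where $p(0)\equiv 1$ forces $\hat\sigma(0)=\hat\gamma(0)=0$ and $\hat\alpha(0)=\rho(0)$) shows $\Phi(0)=\rho(0)-r$. Hence $\Phi\equiv 0$ forces $r=\rho(0)$, while conversely $\Psi\equiv 0$ together with $r=\rho(0)$ makes the constant $p(0)\Phi(0)$ vanish and therefore $\Phi\equiv 0$; this is exactly (iii). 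The final NUPBR/NAA$_1$/NA$_1$ statement follows from Theorem~\ref{thm-FTAP}. I expect the principal obstacle to be the rigorous justification of the bridge identity: interchanging the spatial derivative $\partial_x$ with the stochastic integrals (a stochastic-Fubini argument), reconciling $p$ with $p_-$ in the jump term, and controlling the boundary term $p(x)\Phi(x)$ as $x\to\infty$ — this is where the stated regularity assumptions are consumed.
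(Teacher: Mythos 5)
Your proof is correct and takes essentially the same route as the paper's: reduction via Theorem \ref{thm-jd} and maturity-continuity to the Musiela-parametrized drift condition, transfer between the hat- and bar-coefficients through the relations $\rho\bar{\alpha} = -\partial_x(p\hat{\alpha})$ (your bridge identity is the paper's two-directional computation packaged symmetrically, with the same integration over $(x,\infty)$ and the same implicit boundary issue at infinity), and extraction of $r = \rho(0)$ from the density normalization via (\ref{rho-zero}), (\ref{cons-2}) and (\ref{cons-3}). Your alternative derivation of $\hat{\alpha}(0)=\rho(0)$ and $\hat{\sigma}(0)=\hat{\gamma}(0)=0$ by matching coefficients at $x=0$ in the SPDE for $p$ is only a cosmetic variant of the paper's use of these integral consistency conditions.
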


\begin{proof}
In view of the assumed regularity, by Theorem \ref{thm-jd} the process $Z$ is an ELMD for $\bbs$, or equivalently for $\bbs \cup \{ B \}$, if and only if for each $T \in \bbr_+$ we have up to an evanescent set
\begin{align}\label{drift-BH-bonds}
\alpha(T) = r + \la \sigma(T),\theta \ra_{\bbr^m} + \la \gamma(T),\psi \ra_{L^2(F)}.
\end{align}
By the assumed continuity in the second argument, this is satisfied if and only if for each $x \in \bbr_+$ we have up to an evanescent set
\begin{align}\label{drift-SPDE}
\hat{\alpha}(x) = r + \la \hat{\sigma}(x),\theta \ra_{\bbr^m} + \la \hat{\gamma}(x),\psi \ra_{L^2(F)}.
\end{align}
Let $x \in \bbr_+$ be arbitrary. If condition (\ref{drift-SPDE}) is satisfied, then we have
\begin{align*}
\bar{\alpha}(x) &= -\frac{\partial_x (p(x) \hat{\alpha}(x))}{\rho(x)} = -\frac{\partial_x (p(x) (r + \la \hat{\sigma}(x),\theta \ra_{\bbr^m} + \la \hat{\gamma}(x),\psi \ra_{L^2(F)}))}{\rho(x)}
\\ &= r + \la \bar{\sigma}(x),\theta \ra_{\bbr^m} + \la \bar{\gamma}(x),\psi \ra_{L^2(F)},
\end{align*}
showing (\ref{drift-BH-model}). Conversely, suppose that condition (\ref{drift-BH-model}) is satisfied. Then we have
\begin{align*}
\partial_x (p(x) \hat{\alpha}(x)) = \partial_x \big( p(x) (r + \la \hat{\sigma}(x),\theta \ra_{\bbr^m} + \la \hat{\gamma}(x),\psi \ra_{L^2(F)}) \big).
\end{align*}
Noting (\ref{cons-1})--(\ref{cons-3}) and (\ref{bond-Musiela}), integrating gives us
\begin{align*}
p(x) \hat{\alpha}(x) &= -\int_x^{\infty} \partial_u ( p(u) \hat{\alpha}(u) ) du
\\ &= -\int_x^{\infty} \partial_u \big( p(u) (r + \la \hat{\sigma}(u),\theta \ra_{\bbr^m} + \la \hat{\gamma}(u),\psi \ra_{L^2(F)}) \big) du
\\ &= p(x) (r + \la \hat{\sigma}(x),\theta \ra_{\bbr^m} + \la \hat{\gamma}(x),\psi \ra_{L^2(F)}),
\end{align*}
showing (\ref{drift-SPDE}). Therefore, for each $x \in \bbr_+$ condition (\ref{drift-BH-model}) is satisfied up to an evanescent set if and only if for each $T \in \bbr_+$ condition (\ref{drift-BH-bonds}) is satisfied up to an evanescent set. Consequently, the process $Z$ is an ELMD for $\bbs$, or equivalently for $\bbs \cup \{ B \}$, if and only if for each $x \in \bbr_+$ we have (\ref{drift-BH-model}) up to an evanescent set. Inserting (\ref{drift-BH-model}) into (\ref{rho-zero}) and noting (\ref{cons-2}), (\ref{cons-3}) as well as $\int_0^{\infty} \rho(x) dx = 1$, we obtain that $Z$ is an ELMD for $\bbs$ if and only if we have (\ref{short-rate-BH}) and (\ref{drift-BH-model}). The additional statement is a consequence of Theorem \ref{thm-FTAP}.
\end{proof}

Note that the situation with the Brody-Hughston model is similar to that with the HJM model from the previous section. Namely, if an ELMD $Z$, which is a multiplicative special semimartingale, exists, then the savings account $B$ in the multiplicative decomposition $Z = D B^{-1}$ is unique, and it is given by
\begin{align*}
B_t = \exp \bigg( \int_0^t \rho_s(0) ds \bigg), \quad t \in \bbr_+.
\end{align*}

\section{Equivalent local martingale deflators}\label{sec-ELMD}

After these examples, we proceed with the systematic investigation on the existence of ELMDs, which are special semimartingales. In this section we derive criteria when a multiplicative special semimartingale is an ELMD, and draw some consequences. Let $\bbs = \{ S^i : i \in I \}$ be a financial market of the form (\ref{stock-i}) as in Section \ref{sec-market}. As motivated there, we are interested in the existence of an ELMD $Z$ which is a multiplicative special semimartingale because this ensures the absence of arbitrage. Let $Z$ be a semimartingale of the form
\begin{align}\label{deflator-candidate}
Z = \cale(-Y)
\end{align}
with a semimartingale $Y$ such that $Y_0 = 0$ and $\Delta Y < 1$.

\begin{proposition}\label{prop-E-formel}
The following statements are equivalent:
\begin{enumerate}
\item[(i)] $Z$ is an ELMD for $\bbs$.

\item[(ii)] For each $i \in I$ we have
\begin{align}\label{X-Y-M-loc}
X^i - Y - [X^i,Y] \in \calm_{\loc}.
\end{align}
\end{enumerate}
\end{proposition}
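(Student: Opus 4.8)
The plan is to compute the product $S^i Z$ using the stochastic exponential formula $S^i = S_0^i\,\cale(X^i)$ and $Z = \cale(-Y)$, and then read off when this product is a local martingale.

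\textbf{Proof proposal.} The plan is to compute the product $S^i Z$ explicitly via the multiplication rule for stochastic exponentials (Yor's formula), and then to reduce the local martingale property of the product to that of a single driving semimartingale.

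First I would invoke Yor's formula $\cale(U)\cale(V) = \cale(U + V + [U,V])$, valid for arbitrary semimartingales $U$ and $V$. Applying it with $U = X^i$ and $V = -Y$, and using $[X^i,-Y] = -[X^i,Y]$, yields
\[
S^i Z = S_0^i \, \cale(X^i)\,\cale(-Y) = S_0^i \, \cale(N^i), \qquad \text{where } N^i := X^i - Y - [X^i,Y].
\]
Note $N_0^i = 0$. Moreover, from $\Delta[X^i,Y] = \Delta X^i \, \Delta Y$ one computes $1 + \Delta N^i = (1 + \Delta X^i)(1 - \Delta Y)$, which is strictly positive because $\Delta X^i > -1$ and $\Delta Y < 1$; hence $\cale(N^i) > 0$ and $\cale(N^i)_- > 0$, consistent with $S^i Z, (S^i Z)_- > 0$.

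Next I would use the standard equivalence that, for a semimartingale $N$ with $N_0 = 0$ and $\cale(N),\cale(N)_- > 0$, the process $\cale(N)$ is a local martingale if and only if $N$ is. The forward direction follows from $\cale(N) = 1 + \cale(N)_- \bdot N$, a stochastic integral of the locally bounded predictable integrand $\cale(N)_-$ against $N \in \calm_{\loc}$. The converse follows by writing $N = \tfrac{1}{\cale(N)_-} \bdot \cale(N)$, where $\tfrac{1}{\cale(N)_-}$ is again locally bounded and predictable thanks to strict positivity. Since the constant $S_0^i > 0$ does not affect the local martingale property, we conclude that $S^i Z \in \calm_{\loc}$ if and only if $\cale(N^i) \in \calm_{\loc}$, which in turn holds if and only if $N^i = X^i - Y - [X^i,Y] \in \calm_{\loc}$. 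Quantifying over all $i \in I$ gives the equivalence of (i) and (ii).

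The main obstacle is securing the strict positivity of $\cale(N^i)_-$ needed for the converse direction, so that $\tfrac{1}{\cale(N^i)_-}$ is locally bounded; this is precisely where the standing hypotheses $\Delta X^i > -1$ and $\Delta Y < 1$ are used. Beyond this point the argument is a routine combination of Yor's formula with the stochastic-exponential characterization of local martingales, so I do not expect further difficulties.
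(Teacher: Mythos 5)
Your proposal is correct and takes essentially the same route as the paper: Yor's formula yields $S^i Z = S_0^i \, \cale(X^i - Y - [X^i,Y])$, and the equivalence then follows from the standard fact that, for a strictly positive stochastic exponential, $\cale(N) \in \calm_{\loc}$ if and only if $N \in \calm_{\loc}$. You merely spell out what the paper leaves implicit, namely the jump computation $1 + \Delta N^i = (1+\Delta X^i)(1-\Delta Y) > 0$ and the locally bounded integrand argument via $N = \tfrac{1}{\cale(N)_-} \bdot \cale(N)$ (compare the proof of Lemma \ref{lemma-stoch-exp-properties}).
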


\begin{proof}
Let $i \in I$ be arbitrary. By Yor's formula (see \cite[II.8.19]{Jacod-Shiryaev}) we have
\begin{align*}
S^i Z = S_0^i \, \cale(X^i) \cale(-Y) = S_0^i \, \cale(X^i - Y - [X^i,Y]),
\end{align*}
which proves the stated equivalence.
\end{proof}

In the upcoming results $\cals_p$ denotes the space of all special semimartingales, and $\cala_{\loc}$ denotes the space of all elements from $\calv$ which are locally integrable; cf. \cite{Jacod-Shiryaev}.

\begin{corollary}
Suppose that $Z$ is an ELMD for $\bbs$, and that $S^i \in \cals_p$ for some $i \in I$. Then the following statements are equivalent:
\begin{enumerate}
\item[(i)] $Z$ is a multiplicative special semimartingale.

\item[(ii)] We have $Z \in \cals_p$.

\item[(iii)] We have $[X^i,Y] \in \cala_{\loc}$.
\end{enumerate}
\end{corollary}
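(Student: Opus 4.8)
The plan is to establish the chain (i) $\Leftrightarrow$ (ii) $\Leftrightarrow$ (iii), using throughout that, since $Z$ is an ELMD, Proposition \ref{prop-E-formel} furnishes
\begin{align*}
N^i := X^i - Y - [X^i,Y] \in \calm_{\loc}
\end{align*}
for the chosen index $i \in I$. The equivalence (i) $\Leftrightarrow$ (ii) I would extract directly from the theory of multiplicative decompositions in Appendix \ref{sec-mult}: a semimartingale $Z$ with $Z, Z_- > 0$ admits a multiplicative decomposition $Z = D B^{-1}$ into a local martingale $D$ and a savings account $B$ precisely when it is special, so being a multiplicative special semimartingale is the same as lying in $\cals_p$.

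The substantive work is (ii) $\Leftrightarrow$ (iii). First I would reduce specialness of $Z$ to specialness of $Y$. Since $Z = \cale(-Y)$ with $Z, Z_- > 0$, we have $Z = 1 - Z_- \bdot Y$ and, conversely, $-Y = \frac{1}{Z_-} \bdot Z$. Both $Z_-$ and $\frac{1}{Z_-}$ are adapted and c\`{a}gl\`{a}d, hence locally bounded and predictable; since stochastic integration against such integrands preserves the class $\cals_p$ (using that the predictable finite-variation part of a special semimartingale lies in $\cala_{\loc}$, so that multiplying by a locally bounded predictable process keeps it in $\calv$), the two displayed representations give both implications, i.e.\ $Z \in \cals_p$ if and only if $Y \in \cals_p$. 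The identical argument applied to $S^i = S_0^i \cale(X^i)$, with integrands $S_-^i$ and $\frac{1}{S_-^i}$, turns the hypothesis $S^i \in \cals_p$ into $X^i \in \cals_p$.

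It then remains to show that $Y \in \cals_p$ if and only if $[X^i,Y] \in \cala_{\loc}$. Here I would rewrite the relation from Proposition \ref{prop-E-formel} as
\begin{align*}
Y + [X^i,Y] = X^i - N^i,
\end{align*}
whose right-hand side is special, being the difference of the special semimartingale $X^i$ and the local martingale $N^i$. Since $[X^i,Y]$ is an adapted process of finite variation starting at zero, that is $[X^i,Y] \in \calv$, it is special if and only if $[X^i,Y] \in \cala_{\loc}$, i.e.\ has locally integrable variation. Because $\cals_p$ is a vector space, the displayed identity then shows that $Y$ is special if and only if $[X^i,Y]$ is special, which closes the equivalence.

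I expect the delicate points to be bookkeeping rather than conceptual: verifying that the integrands $Z_-, \tfrac{1}{Z_-}$ (and $S_-^i, \tfrac{1}{S_-^i}$) are locally bounded predictable, so that $\cals_p$ is genuinely preserved in \emph{both} directions of the stochastic exponential/logarithm correspondence, and invoking the correct characterizations from \cite{Jacod-Shiryaev} that a predictable process in $\calv$ lies in $\cala_{\loc}$ and that a finite-variation semimartingale is special exactly when it lies in $\cala_{\loc}$. The main obstacle, such as it is, is ensuring these standard facts are applied to the quadratic covariation term $[X^i,Y]$, which is where the hypothesis $S^i \in \cals_p$ (equivalently $X^i \in \cals_p$) enters decisively.
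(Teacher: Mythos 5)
Your proposal is correct and follows essentially the same route as the paper: the equivalence (i) $\Leftrightarrow$ (ii) is precisely the multiplicative decomposition theorem \cite[Thm. II.8.21]{Jacod-Shiryaev} (the reference the paper actually invokes, rather than Appendix \ref{sec-mult}, whose results build on it instead of proving it), and your argument for (ii) $\Leftrightarrow$ (iii) --- reducing specialness of $Z$ and $S^i$ to that of $Y$ and $X^i$ via the stochastic exponential/logarithm correspondence, then writing $Y + [X^i,Y] = X^i - N^i \in \cals_p$ and using that a finite-variation semimartingale is special exactly when it lies in $\cala_{\loc}$ --- matches the paper's proof step for step, with its Lemma \ref{lemma-stoch-exp-properties} spelled out in detail rather than cited.
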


\begin{proof}
(i) $\Leftrightarrow$ (ii): This equivalence is a consequence of \cite[Thm. II.8.21]{Jacod-Shiryaev}.

\noindent (ii) $\Leftrightarrow$ (iii): By Lemma \ref{lemma-stoch-exp-properties} we have $X^i \in \cals_p$, and we have $Z \in \cals_p$ if and only if $Y \in \cals_p$. By Proposition \ref{prop-E-formel} we have (\ref{X-Y-M-loc}). Since $X^i \in \cals_p$, we deduce that $Y + [X^i,Y] \in \cals_p$. Therefore, we have $Y \in \cals_p$ if and only if $[X^i,Y] \in \cala_{\loc}$.
\end{proof}

\begin{corollary}
Suppose that $Z$ is an ELMD for $\bbs$, which is a multiplicative special semimartingale. Then for each $i \in I$ the following statements are equivalent:
\begin{enumerate}
\item[(i)] We have $S^i \in \cals_p$.

\item[(ii)] We have $[X^i,Y] \in \cala_{\loc}$.
\end{enumerate}
\end{corollary}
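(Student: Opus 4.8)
The plan is to reduce everything to statements about special semimartingales via the decomposition furnished by Proposition \ref{prop-E-formel}, exploiting that under the present hypotheses we already know $Y \in \cals_p$. First I would record the two standing facts. Since $Z$ is an ELMD, Proposition \ref{prop-E-formel} gives $X^i - Y - [X^i,Y] \in \calm_{\loc}$ for the fixed $i \in I$; and since $Z$ is moreover a multiplicative special semimartingale, we have $Z \in \cals_p$, so Lemma \ref{lemma-stoch-exp-properties} yields $Y \in \cals_p$.

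Next I would rewrite $X^i = (X^i - Y - [X^i,Y]) + Y + [X^i,Y]$. Because every local martingale is a special semimartingale, the first summand lies in $\cals_p$, and so does $Y$; hence $X^i \in \cals_p$ if and only if $[X^i,Y] \in \cals_p$. Since $S^i = S_0^i \, \cale(X^i)$, Lemma \ref{lemma-stoch-exp-properties} shows that $S^i \in \cals_p$ is equivalent to $X^i \in \cals_p$, so the asserted equivalence is reduced to showing that $[X^i,Y] \in \cals_p$ if and only if $[X^i,Y] \in \cala_{\loc}$.

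For this final step I would use that the quadratic covariation $[X^i,Y]$ belongs to $\calv$, together with the standard fact (see \cite[I.4.23]{Jacod-Shiryaev}) that a process of finite variation is a special semimartingale precisely when it is locally integrable, that is, when it lies in $\cala_{\loc}$. This closes the chain of equivalences and establishes the corollary.

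There is no serious obstacle here; the argument is essentially the mirror image of the (ii) $\Leftrightarrow$ (iii) part of the preceding corollary, with the roles of the hypotheses on $Z$ and on $S^i$ interchanged. The only points requiring care are to invoke the correct characterisation of special finite-variation processes in $\calv$ as exactly those in $\cala_{\loc}$, and to note the inclusion $\calm_{\loc} \subset \cals_p$, which makes the local martingale summand in the decomposition harmless.
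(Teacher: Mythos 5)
Your proof is correct and follows essentially the same route as the paper: both deduce $Z \in \cals_p$ and hence $Y \in \cals_p$ via Lemma \ref{lemma-stoch-exp-properties}, use Proposition \ref{prop-E-formel} to isolate $[X^i,Y]$ modulo special semimartingales, and conclude with the characterisation that a process in $\calv$ is special precisely when it lies in $\cala_{\loc}$. Your write-up merely makes explicit the decomposition and the appeal to \cite[Prop.~I.4.23]{Jacod-Shiryaev} that the paper leaves implicit.
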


\begin{proof}
According to \cite[Thm. II.8.21]{Jacod-Shiryaev} we have $Z \in \cals_p$. Hence, by Lemma \ref{lemma-stoch-exp-properties} we have $Y \in \cals_p$, and we have $S^i \in \cals_p$ if and only if $X^i \in \cals_p$. By Proposition \ref{prop-E-formel} we have (\ref{X-Y-M-loc}). Since $Y \in \cals_p$, we deduce that $X^i - [X^i,Y] \in \cals_p$. Therefore, we have $X^i \in \cals_p$ if and only if $[X^i,Y] \in \cala_{\loc}$.
\end{proof}

From now on, we assume that for each $i \in I$ the semimartingale $X^i$ appearing in (\ref{stock-i}) is a special semimartingale with canonical decomposition
\begin{align}\label{deco-1}
X^i = M^i + A^i,
\end{align}
where $M^i$ is the local martingale part and $A^i$ is the finite variation part. Furthermore, let $R \in \calv$ be a predictable process with $\Delta R > -1$, and let $\Theta \in \calm_{\loc}$ be a local martingale with $\Theta_0 = 0$ and $\Delta \Theta < 1$. Let $\widetilde{R} \in \calv$ be the predictable process with $\Delta \widetilde{R} < 1$ according to Proposition \ref{prop-bij-R}, and let $\widetilde{\Theta} \in \calm_{\loc}$ be the local martingale with $\widetilde{\Theta}_0 = 0$ and $\Delta \widetilde{\Theta} + \Delta \widetilde{R} < 1$ according to Proposition \ref{prop-bij-Theta}. We assume that the semimartingale $Y$ appearing in (\ref{deflator-candidate}) is the special semimartingale with canonical decomposition
\begin{align}\label{deco-2}
Y = \widetilde{\Theta} + \widetilde{R}.
\end{align}
Then by Proposition \ref{prop-bij-Theta} we have the multiplicative decomposition
\begin{align*}
Z = D B^{-1},
\end{align*}
where $D = \cale(-\Theta)$ and $B = \cale(R)$. We call $R$ the \emph{locally risk-free return} of the savings account $B$.

\begin{lemma}\label{lemma-special}
For each $i \in I$ the following statements are equivalent:
\begin{enumerate}
\item[(i)] We have $[X^i,Y] \in \cala_{\loc}$.

\item[(ii)] We have $[M^i,\widetilde{\Theta}] \in \cala_{\loc}$.

\item[(iii)] We have $[M^i,\Theta] \in \cala_{\loc}$.
\end{enumerate}
In either case we have
\begin{align*}
[X^i,Y]^p = [A^i,\widetilde{R}] + [M^i,\widetilde{\Theta}]^p = [A^i,\widetilde{R}] + [M^i,\Theta]^p.
\end{align*}
\end{lemma}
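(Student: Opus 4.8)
The plan is to reduce everything to the bilinearity of the quadratic covariation applied to the two canonical decompositions $X^i = M^i + A^i$ and $Y = \widetilde{\Theta} + \widetilde{R}$, which gives
\begin{align*}
[X^i,Y] = [M^i,\widetilde{\Theta}] + [M^i,\widetilde{R}] + [A^i,\widetilde{\Theta}] + [A^i,\widetilde{R}].
\end{align*}
The guiding idea is that the last three summands always lie in $\cala_{\loc}$, so that membership of $[X^i,Y]$ in $\cala_{\loc}$ is governed by the single term $[M^i,\widetilde{\Theta}]$. Since $[M^i,\widetilde{\Theta}]$ and each of the other three terms are processes of finite variation, this immediately yields the equivalence (i) $\Leftrightarrow$ (ii), and taking predictable compensators will produce the first asserted identity.

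First I would classify the three auxiliary terms. The term $[A^i,\widetilde{R}] = \sum_{s \leq \bdot} \Delta A^i_s \, \Delta \widetilde{R}_s$ is built from the jumps of the two predictable processes $A^i,\widetilde{R} \in \calv$, hence is itself predictable of finite variation; therefore it belongs to $\cala_{\loc}$ and equals its own compensator. The two mixed terms $[M^i,\widetilde{R}]$ and $[A^i,\widetilde{\Theta}]$ are each the covariation of a local martingale with a predictable process of finite variation. By the Yoeurp-type results on such covariations (from the appendix on stochastic processes) each of them is a local martingale, and together with the local integrability of the variations of $\widetilde{R}$ and $A^i$ each lies in $\cala_{\loc}$, so each has vanishing compensator. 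Substituting into the displayed expansion and compensating gives $[X^i,Y]^p = [A^i,\widetilde{R}] + [M^i,\widetilde{\Theta}]^p$.

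For the equivalence (ii) $\Leftrightarrow$ (iii) and the remaining identity I would use the link between $\Theta$ and $\widetilde{\Theta}$ supplied by Proposition \ref{prop-bij-Theta}. Since $Z = \cale(-\Theta)\cale(-\widetilde{R})$, Yor's formula yields $\widetilde{\Theta} = \Theta - [\Theta,\widetilde{R}]$, where $[\Theta,\widetilde{R}]$ is again a finite-variation local martingale. Hence
\begin{align*}
[M^i,\widetilde{\Theta}] - [M^i,\Theta] = -[M^i,[\Theta,\widetilde{R}]],
\end{align*}
and it suffices to show that this correction term lies in $\cala_{\loc}$ with vanishing compensator; this simultaneously gives (ii) $\Leftrightarrow$ (iii) and $[M^i,\widetilde{\Theta}]^p = [M^i,\Theta]^p$.

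The main obstacle is precisely the $\cala_{\loc}$-bookkeeping and the vanishing of the compensators of the covariation terms involving the predictable finite-variation processes $\widetilde{R}$ and $A^i$. This is genuinely delicate: a finite-variation local martingale need not have locally integrable variation, so neither the $\cala_{\loc}$-membership of $[M^i,\widetilde{R}]$, $[A^i,\widetilde{\Theta}]$ and of the correction term $[M^i,[\Theta,\widetilde{R}]]$, nor the vanishing of their compensators, can be read off from the local martingale property alone. I expect to resolve this by exploiting the predictability and local integrability of $\widetilde{R}$ and $A^i$ through the appendix results on covariations with predictable finite-variation integrands (Yoeurp's lemma), which is exactly the point where the structural properties of the virtual savings account $B = \cale(R)$ enter; once these local-integrability and compensator statements are secured, both equivalences and the two compensator identities follow directly from the bilinear expansion.
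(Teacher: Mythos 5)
Your proposal is correct and follows essentially the same route as the paper: the paper's one-line proof cites Lemma \ref{lemma-cov-equivalence}, whose own proof is exactly your bilinear expansion of $[X^i,Y]$, the predictability of $[A^i,\widetilde{R}]$, and the compensator statements of Lemmas \ref{lemma-cov-R-Theta} and \ref{lemma-cov-3-fold}, with the correction term $[M^i,[\Theta,\widetilde{R}]]$ coming from $\widetilde{\Theta} = \Theta - [\Theta,\widetilde{R}]$ handling (ii) $\Leftrightarrow$ (iii) precisely as you describe. One small correction to your closing remark: a finite-variation local martingale \emph{does} have locally integrable variation, i.e. $\calm_{\loc} \cap \calv \subset \cala_{\loc}$ (\cite[Lemma I.3.11]{Jacod-Shiryaev}), and this, together with \cite[I.3.22]{Jacod-Shiryaev}, is exactly what the appendix lemmas you invoke rest on, so the step you flag as genuinely delicate is in fact standard.
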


\begin{proof}
This is an immediate consequence of Lemma \ref{lemma-cov-equivalence}.
\end{proof}

\begin{theorem}\label{thm-special}
The following statements are equivalent:
\begin{enumerate}
\item[(i)] $Z$ is an ELMD for $\bbs$.

\item[(ii)] $Z$ is an ELMD for $\bbs \cup \{ B \}$.

\item[(iii)] For each $i \in I$ we have $[X^i,Y] \in \cala_{\loc}$ and up to an evanescent set
\begin{align}\label{drift-R-Theta}
A^i = \widetilde{R} + [ X^i,Y ]^p.
\end{align}

\item[(iv)] For each $i \in I$ we have $[M^i,\Theta] \in \cala_{\loc}$ and up to an evanescent set
\begin{align}\label{drift-R-Theta-2}
A^i = \widetilde{R} + [A^i,\widetilde{R}] + [M^i,\Theta]^p.
\end{align}
\end{enumerate}
If the previous conditions are fulfilled, then $\calp_{\sfi,1}^+(\bbs \cup \{ B \})$ satisfies NUPBR, NAA$_1$ and NA$_1$.
\end{theorem}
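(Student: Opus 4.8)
The plan is to establish the four equivalences by proving (i) $\Leftrightarrow$ (iii), then (iii) $\Leftrightarrow$ (iv), then the easy equivalence (i) $\Leftrightarrow$ (ii), and finally reading off the no-arbitrage conclusion. The engine for the first equivalence is Proposition \ref{prop-E-formel}, and Lemma \ref{lemma-special} does all the work in passing from (iii) to (iv).

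The heart of the matter is (i) $\Leftrightarrow$ (iii). By Proposition \ref{prop-E-formel}, assertion (i) is equivalent to $N^i := X^i - Y - [X^i,Y] \in \calm_{\loc}$ for every $i \in I$. Since $X^i$ and $Y = \widetilde{\Theta} + \widetilde{R}$ are special semimartingales by assumption, $X^i - Y$ is special; assuming (i), the local martingale $N^i$ is special as well, so $[X^i,Y] = (X^i - Y) - N^i$ is a special semimartingale. As $[X^i,Y] \in \calv$, and a finite-variation process is special precisely when it lies in $\cala_{\loc}$ (a standard fact, see \cite{Jacod-Shiryaev}), this forces $[X^i,Y] \in \cala_{\loc}$ and hence the existence of the compensator $[X^i,Y]^p$. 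I would then split
\begin{align*}
N^i = \big( (M^i - \widetilde{\Theta}) - ([X^i,Y] - [X^i,Y]^p) \big) + \big( A^i - \widetilde{R} - [X^i,Y]^p \big),
\end{align*}
where the first bracket belongs to $\calm_{\loc}$ and the second is predictable and of finite variation. By uniqueness of the canonical decomposition of the special semimartingale $N^i$, the process $N^i$ is a local martingale if and only if its predictable finite-variation part vanishes, that is $A^i = \widetilde{R} + [X^i,Y]^p$ up to an evanescent set, which is exactly (\ref{drift-R-Theta}). Running this computation backwards, starting from $[X^i,Y] \in \cala_{\loc}$ and (\ref{drift-R-Theta}), shows $N^i \in \calm_{\loc}$ and hence gives (iii) $\Rightarrow$ (i).

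For (iii) $\Leftrightarrow$ (iv) I would simply invoke Lemma \ref{lemma-special}, which yields the equivalence $[X^i,Y] \in \cala_{\loc} \Leftrightarrow [M^i,\Theta] \in \cala_{\loc}$ together with the identity $[X^i,Y]^p = [A^i,\widetilde{R}] + [M^i,\Theta]^p$; substituting this into (\ref{drift-R-Theta}) turns it into (\ref{drift-R-Theta-2}). The equivalence (i) $\Leftrightarrow$ (ii) is immediate and unconditional: since $Z = D B^{-1}$ one has $B Z = D = \cale(-\Theta) \in \calm_{\loc}$, so $Z$ is automatically an ELMD for the single asset $B$, whence it is an ELMD for $\bbs$ if and only if it is one for $\bbs \cup \{ B \}$. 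Finally, when these conditions hold, $Z = \cale(-Y)$ with $Y = \widetilde{\Theta} + \widetilde{R}$ is a multiplicative special semimartingale with multiplicative decomposition $Z = D B^{-1}$ and savings account $B = \cale(R)$, so the NUPBR, NAA$_1$ and NA$_1$ claim follows from Theorem \ref{thm-FTAP}.

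The main obstacle is the forward direction of (i) $\Leftrightarrow$ (iii): one must \emph{deduce} $[X^i,Y] \in \cala_{\loc}$ rather than assume it, and this is where the specialness bookkeeping matters. The cleanest route is to extract integrability from the fact that $[X^i,Y]$ is a finite-variation process realized as a difference of special semimartingales, and only then to read off the drift condition via uniqueness of the canonical decomposition. Once this is in place, the remaining equivalences reduce to the substitution supplied by Lemma \ref{lemma-special} and the trivial observation $BZ = D \in \calm_{\loc}$.
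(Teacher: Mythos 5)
Your proposal is correct and follows essentially the same route as the paper: Proposition \ref{prop-E-formel} for (i) $\Leftrightarrow$ (iii), Lemma \ref{lemma-special} for (iii) $\Leftrightarrow$ (iv), the observation $BZ = D \in \calm_{\loc}$ for (i) $\Leftrightarrow$ (ii), and Theorem \ref{thm-FTAP} for the no-arbitrage conclusion. The only cosmetic difference lies in (i) $\Rightarrow$ (iii): the paper extracts $[X^i,Y] \in \cala_{\loc}$ from $A^i - \widetilde{R} - [X^i,Y] \in \calm_{\loc} \cap \calv \subset \cala_{\loc}$ and pins down the drift via \cite[Cor. I.3.16]{Jacod-Shiryaev}, whereas you deduce the local integrability from specialness of $[X^i,Y]$ as a difference of special semimartingales and conclude by uniqueness of the canonical decomposition --- the same standard facts in slightly different clothing.
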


\begin{proof}
(i) $\Rightarrow$ (ii): This implication follows, because $BZ = D \in \calm_{\loc}$.

\noindent(ii) $\Rightarrow$ (i): This implication is obvious.

\noindent(i) $\Leftrightarrow$ (iii): Let $i \in I$ be arbitrary. Taking into account the canonical decompositions (\ref{deco-1}) and (\ref{deco-2}), we have condition (\ref{X-Y-M-loc}) if and only if
\begin{align*}
M^i + A^i - \widetilde{\Theta} - \widetilde{R} - [X^i,Y] \in \calm_{\loc},
\end{align*}
which is equivalent to
\begin{align*}
A^i - \widetilde{R} - [X^i,Y] \in \calm_{\loc} \cap \calv.
\end{align*}
Since $\calm_{\loc} \cap \calv \subset \cala_{\loc}$, this is satisfied if and only if $[X^i,Y] \in \cala_{\loc}$ and
\begin{align}\label{drift-cond-proof}
A^i - \widetilde{R} - [X^i,Y]^p \in \calm_{\loc} \cap \calv.
\end{align}
Note that the process on the left-hand side of (\ref{drift-cond-proof}) is predictable. Hence, according to \cite[Cor. I.3.16]{Jacod-Shiryaev}, condition (\ref{drift-cond-proof}) is satisfied if and only if we have (\ref{drift-R-Theta}) up to an evanescent set. Consequently, applying Proposition \ref{prop-E-formel} the process $Z$ is an ELMD for $\bbs$ if and only if we have $[X^i,Y] \in \cala_{\loc}$ and (\ref{drift-R-Theta}) up to an evanescent set.

\noindent(iii) $\Leftrightarrow$ (iv): Let $i \in I$ be arbitrary. By Lemma \ref{lemma-special} we have  $[X^i,Y] \in \cala_{\loc}$ if and only if $[M^i,\Theta] \in \cala_{\loc}$, and in this case, using Lemmas \ref{lemma-cov-R-Theta} and \ref{lemma-cov-3-fold} we obtain
\begin{align*}
[X^i,Y]^p &= [M^i+A^i,\widetilde{\Theta}+\widetilde{R}]^p = [M^i,\widetilde{\Theta}]^p + [A^i,\widetilde{R}]^p
\\ &= [M^i,\Theta - [\Theta,\widetilde{R}]]^p + [A^i,\widetilde{R}] = [M^i,\Theta]^p + [A^i,\widetilde{R}].
\end{align*}
The additional statement is a consequence of Theorem \ref{thm-FTAP}.
\end{proof}

\begin{remark}
In the situation of Theorem \ref{thm-special} we can also formally check that the drift conditions (\ref{drift-R-Theta-2}) are satisfied for the extended market $\bbs \cup \{ B \}$. Indeed, then the additional drift condition
\begin{align*}
R = \widetilde{R} + [R,\widetilde{R}]
\end{align*}
is just equation (\ref{R-R-tilde}) from Proposition \ref{prop-bij-R}.
\end{remark}

In principle, for a fixed predictable process $R \in \calv$ it is possible to change the local martingale $\Theta \in \calm_{\loc}$ in order to obtain another ELMD. More precisely, we have the following result.

\begin{corollary}\label{cor-vary-Theta}
Suppose that the equivalent conditions from Theorem \ref{thm-special} are fulfilled. Let $T \in \calm_{\loc}$ be a local martingale with $T_0 = 0$ and $\Delta \Theta + \Delta T < 1$ such that $[M^i,T] \in \cala_{\loc}$ for each $i \in I$. We define the process $\hat{Z} := \hat{D} B^{-1}$, where $\hat{D} := \cale(-\Theta - T)$. Then the following statements are equivalent:
\begin{enumerate}
\item[(i)] $\hat{Z}$ is an ELMD for $\bbs$.

\item[(ii)] $\hat{Z}$ is an ELMD for $\bbs \cup \{ B \}$.

\item[(iii)] For each $i \in I$ we have up to an evanescent set
\begin{align*}
[M^i,T]^p = 0.
\end{align*}
\end{enumerate}
\end{corollary}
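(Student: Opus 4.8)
The plan is to recognize that $\hat{Z}$ is itself a deflator candidate of exactly the form treated in Theorem \ref{thm-special}, with the same savings account $B = \cale(R)$ but with the market price of risk $\Theta$ replaced by $\Theta + T$. The whole statement then reduces to applying that theorem to $\hat{Z}$ and subtracting the drift identity already known to hold for $Z$.

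First I would check that $\Theta + T$ satisfies the standing hypotheses imposed on a market price of risk in the setup preceding Lemma \ref{lemma-special}. Since $\Theta, T \in \calm_{\loc}$ with $\Theta_0 = T_0 = 0$, we have $\Theta + T \in \calm_{\loc}$ with $(\Theta+T)_0 = 0$, and by assumption $\Delta(\Theta+T) = \Delta\Theta + \Delta T < 1$. Thus $\hat{D} = \cale(-(\Theta+T))$ together with $B = \cale(R)$ produces a multiplicative special semimartingale $\hat{Z} = \hat{D}B^{-1}$ falling precisely under that setup, with $R$ (and hence $\widetilde{R}$) left unchanged. In particular the equivalence (i) $\Leftrightarrow$ (ii) is immediate from part (i) $\Leftrightarrow$ (ii) of Theorem \ref{thm-special} applied to $\hat{Z}$, since $B\hat{Z} = \hat{D} \in \calm_{\loc}$.

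Next I would invoke condition (iv) of Theorem \ref{thm-special} for $\hat{Z}$: it is an ELMD for $\bbs$ (equivalently for $\bbs \cup \{B\}$) if and only if for each $i \in I$ we have $[M^i, \Theta + T] \in \cala_{\loc}$ and, up to an evanescent set,
\begin{align*}
A^i = \widetilde{R} + [A^i,\widetilde{R}] + [M^i, \Theta + T]^p.
\end{align*}
By bilinearity of the quadratic covariation, $[M^i, \Theta+T] = [M^i,\Theta] + [M^i,T]$. The assumed validity of Theorem \ref{thm-special} for $Z$ gives $[M^i,\Theta] \in \cala_{\loc}$, while the hypothesis on $T$ gives $[M^i,T] \in \cala_{\loc}$; hence $[M^i,\Theta+T] \in \cala_{\loc}$ holds automatically, and by linearity of the predictable compensator $[M^i,\Theta+T]^p = [M^i,\Theta]^p + [M^i,T]^p$.

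Finally I would subtract from the displayed identity the drift condition already known to hold for $Z$, namely $A^i = \widetilde{R} + [A^i,\widetilde{R}] + [M^i,\Theta]^p$. Everything cancels except the term $[M^i,T]^p$, so the drift condition for $\hat{Z}$ is equivalent to $[M^i,T]^p = 0$ up to an evanescent set for each $i \in I$, which is exactly (iii). There is no genuine obstacle here beyond the opening observation that $\hat{Z}$ is covered by the framework of Theorem \ref{thm-special}; once that is in place, the argument is a routine application of that theorem combined with the linearity of $[\cdot,\cdot]$ and of the compensator.
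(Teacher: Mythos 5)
Your proof is correct and follows exactly the route the paper intends: the paper's own proof consists of the single line that the corollary is an immediate consequence of Theorem \ref{thm-special}, and your argument—verifying that $\Theta + T$ satisfies the standing hypotheses, applying condition (iv) of that theorem to $\hat{Z}$, and cancelling the known drift identity for $Z$ via bilinearity of $[\cdot,\cdot]$ and linearity of the predictable compensator—is precisely the spelled-out version of that application.
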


\begin{proof}
This is an immediate consequence of Theorem \ref{thm-special}.
\end{proof}

On the other hand, for a fixed $\Theta \in \calm_{\loc}$ it is not possible to change the predictable process $R \in \calv$ in order to obtain another ELMD. More precisely, we have the following result, which is in accordance with \cite[Prop. 7.9]{Platen-Tappe-FTAP}.

\begin{corollary}
Suppose that the equivalent conditions from Theorem \ref{thm-special} are fulfilled. Let $V \in \calv$ be a predictable process with $\Delta V > -1$, and let $\widetilde{V} \in \calv$ be the corresponding predictable process with $\Delta \widetilde{V} < 1$ according to Proposition \ref{prop-bij-R}. We define the process $\hat{Z} := D \hat{B}^{-1}$, where $\hat{B} := \cale(V)$. Then the following statements are equivalent:
\begin{enumerate}
\item[(i)] $\hat{Z}$ is an ELMD for $\bbs$.

\item[(ii)] $\hat{Z}$ is an ELMD for $\bbs \cup \{ \hat{B} \}$.

\item[(iii)] We have $B = \hat{B}$ up to an evanescent set.

\item[(iv)] We have $R = V$ up to an evanescent set.

\item[(v)] We have $\widetilde{R} = \widetilde{V}$ up to an evanescent set.
\end{enumerate}
\end{corollary}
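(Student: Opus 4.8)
The plan is to dispatch the ``cheap'' equivalences first and then concentrate all the real work on the single implication (i) $\Rightarrow$ (iii). The equivalence (i) $\Leftrightarrow$ (ii) is proved exactly as in Theorem \ref{thm-special}, since $\hat B \hat Z = D \in \calm_{\loc}$: an ELMD for $\bbs$ is automatically an ELMD for $\hat B$, and the converse is trivial. The equivalence (iii) $\Leftrightarrow$ (iv) is just injectivity of the stochastic exponential: $B = \cale(R)$ and $\hat B = \cale(V)$ both start at $1$ and have strictly positive left limits, so each determines its exponent through the stochastic logarithm $R = \int_0^\cdot B_{s-}^{-1}\,dB_s$; hence $B = \hat B \Leftrightarrow R = V$. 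The equivalence (iv) $\Leftrightarrow$ (v) is immediate from the bijectivity of $R \mapsto \widetilde R$ in Proposition \ref{prop-bij-R}, and (iv) $\Rightarrow$ (i) is trivial, since $R = V$ gives $\hat B = B$, so $\hat Z = D\hat B^{-1} = DB^{-1} = Z$ is an ELMD by the standing hypothesis. Chaining these, it remains only to prove (i) $\Rightarrow$ (iii).

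For (i) $\Rightarrow$ (iii) the decisive point is that $Z$ and $\hat Z$ share the same local martingale factor $D$. Fixing any single $i \in I$, I would set $N := S^i Z$ and observe
\begin{align*}
S^i \hat Z = S^i D \hat B^{-1} = (S^i D B^{-1})\, B \hat B^{-1} = N G, \qquad G := B \hat B^{-1}.
\end{align*}
Here $N$ is a strictly positive local martingale (as $Z$ is an ELMD), $S^i\hat Z \in \calm_{\loc}$ (as $\hat Z$ is an ELMD, by (i)), and $G$ is a predictable finite-variation process with $G, G_- > 0$ and $G_0 = 1$, because $B$ and $\hat B$ are savings accounts. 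Thus everything reduces to the claim: \emph{if $N$ is a strictly positive local martingale and $G$ a predictable finite-variation process with $G_0 = 1$, $G, G_- > 0$ and $NG \in \calm_{\loc}$, then $G \equiv 1$}; this yields $B = \hat B$, i.e.\ (iii).

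To prove the claim I would integrate by parts,
\begin{align*}
NG = N_0 + N_- \bdot G + G_- \bdot N + [N,G].
\end{align*}
Since $G_-$ is predictable and locally bounded, $G_- \bdot N \in \calm_{\loc}$, so $H := N_- \bdot G + [N,G] \in \calm_{\loc}$. The key structural fact is that $H$ is of finite variation: $N_- \bdot G$ is a Stieltjes integral against $G$, and $[N,G] = \sum \Delta N\,\Delta G$ because $G$ is of finite variation. Hence $H$ is a finite-variation local martingale starting at $0$. Its continuous part is $H^c = N_- \bdot G^c$, which, being a continuous finite-variation local martingale, vanishes; as $N_- > 0$ this forces $G^c \equiv 0$, so $G$ is pure jump. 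For the jumps, $\Delta H = N_- \Delta G + \Delta N\,\Delta G = N\,\Delta G$, and the jumps of the predictable process $G$ occur along a sequence of predictable times. At each such time $\tau$, after localising so that $N$ is a uniformly integrable martingale and $\Delta H_\tau$ is integrable, the local martingale property of $H$ gives $\bbe[\Delta H_\tau \mid \calf_{\tau-}] = 0$; using that $\Delta G_\tau$ is $\calf_{\tau-}$-measurable and that $\bbe[N_\tau \mid \calf_{\tau-}] = N_{\tau-}$ for a local martingale at a predictable time, I obtain $N_{\tau-}\Delta G_\tau = 0$, whence $\Delta G_\tau = 0$. So $G$ has neither continuous part nor jumps, i.e.\ $G \equiv 1$.

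The main obstacle I expect is technical rather than conceptual: making the conditional-expectation argument at predictable times rigorous requires a careful localisation ensuring the relevant jumps and martingales are integrable, together with the standard fact that a local martingale satisfies $\bbe[\Delta N_\tau \mid \calf_{\tau-}] = 0$ at predictable times. I would also remark that the drift route via Theorem \ref{thm-special} only gets one partway: subtracting the two drift conditions (\ref{drift-R-Theta-2}) for $Z$ and $\hat Z$, which share the same $\Theta$, yields $(\widetilde V - \widetilde R) + [A^i, \widetilde V - \widetilde R] = 0$ for every $i \in I$, but deducing $\widetilde R = \widetilde V$ from this relation again relies on the positivity exploited above, so the local-martingale ratio argument is the cleaner finish.
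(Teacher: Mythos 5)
Your overall architecture is correct, and it takes a genuinely different route from the paper: the paper proves (i) $\Leftrightarrow$ (v) by writing down the drift condition of Theorem \ref{thm-special} for $\hat{Z}$ (with $\widetilde{V}$ in place of $\widetilde{R}$) and comparing it with (\ref{drift-R-Theta}), whereas you prove (i) $\Rightarrow$ (iii) directly by a ratio argument, which is in substance the uniqueness statement for the multiplicative decomposition (\cite[Thm. II.8.21]{Jacod-Shiryaev}; cf.\ also \cite[Prop. 7.9]{Platen-Tappe-FTAP}). The cheap equivalences and the chaining are fine. However, one step in your proof of the key claim is genuinely wrong as written: you assert that the pathwise continuous part $N_- \bdot G^c$ of the finite-variation local martingale $H = N_- \bdot G + [N,G]$ is ``a continuous finite-variation local martingale, hence vanishes''. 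The pathwise continuous part of a finite-variation local martingale is \emph{not} a local martingale in general: for a compensated Poisson process $H_t = N_t - t$ one has $H \in \calm_{\loc} \cap \calv$, while its pathwise continuous part is $-t$ and its jump sum is $N_t$, neither of which is a local martingale. What rescues your situation is the predictability of $G$, which you never invoke at this point.

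The gap closes easily, and doing so makes your predictable-time analysis of the jumps superfluous. Since $G$ is predictable of finite variation and $N \in \calm_{\loc}$, Lemma \ref{lemma-cov-R-Theta} gives $[N,G] \in \calm_{\loc}$, and hence
\begin{align*}
N_- \bdot G = H - [N,G] \in \calm_{\loc} \cap \calv
\end{align*}
is a \emph{predictable} finite-variation local martingale starting at zero, so it vanishes identically by \cite[Cor. I.3.16]{Jacod-Shiryaev}; since $N_- > 0$, this forces $G \equiv G_0 = 1$, which is (iii), disposing of the continuous part and the jumps in one stroke (your localisation argument at predictable times is essentially correct but laborious). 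As to your closing remark: it is a fair observation that the paper's comparison of drift conditions silently treats the compensator term $[X^i,Y]^p$ as common to both deflators, while honestly one obtains $(\widetilde{V}-\widetilde{R}) + [A^i,\widetilde{V}-\widetilde{R}] = 0$, whence $\Delta(\widetilde{V}-\widetilde{R})\,(1+\Delta A^i) = 0$; but this route does close without your ratio argument, since $\Delta A^i = {}^p(\Delta X^i)$ and $\Delta X^i > -1$ force $1 + \Delta A^i > 0$ at every predictable time, so $\widetilde{R} = \widetilde{V}$ follows there too by a one-line positivity argument.
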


\begin{proof}
(i) $\Leftrightarrow$ (ii): This follows from Theorem \ref{thm-special}.

\noindent(i) $\Leftrightarrow$ (v): Since $Z$ is an ELMD for $\bbs$, by Theorem \ref{thm-special} for each $i \in I$ we have (\ref{drift-R-Theta}) up to an evanescent set. Furthermore, by Theorem \ref{thm-special} the process $\hat{Z}$ is an ELMD for $\bbs$ if and only if for each $i \in I$ we have up to an evanescent set
\begin{align*}
A^i = \widetilde{V} + [X^i,Y]^p,
\end{align*}
which by (\ref{drift-R-Theta}) is equivalent to $\widetilde{R} = \widetilde{V}$ up to an evanescent set.

\noindent(iii) $\Leftrightarrow$ (iv): This equivalence is evident.

\noindent(iv) $\Leftrightarrow$ (v): This equivalence follows from Proposition \ref{prop-bij-R}.
\end{proof}

In the next result we determine the dynamics of the ELMN, provided it exists.

\begin{proposition}
Suppose that the equivalent conditions from Theorem \ref{thm-special} are fulfilled, and define the ELMN $\bar{Z} := Z^{-1}$. Let $\nu$ be the predictable compensator of the random measure $\mu^Y$. Then we have the representation
\begin{align*}
\bar{Z} = \cale \bigg( \widetilde{R} + \widetilde{\Theta} + \la \Theta^c,\Theta^c \ra + \frac{y^2}{1-y} * \mu^Y \bigg),
\end{align*}
and the following statements are equivalent:
\begin{enumerate}
\item[(i)] $\bar{Z}$ is a special semimartingale.

\item[(ii)] We have
\begin{align*}
\frac{y^2}{1-y} * \nu \in \cala_{\loc}^+.
\end{align*}
\end{enumerate}
If the previous conditions are fulfilled, then we have 
\begin{align*}
\bar{Z} = \cale( \bar{N} + \bar{B} ), 
\end{align*}
where the local martingale $\bar{N} \in \calm_{\loc}$ and the predictable process $\bar{B} \in \calv$ are given by
\begin{align*}
\bar{N} &= \widetilde{\Theta} + \frac{y^2}{1-y} * (\mu^Y - \nu),
\\ \bar{B} &= \widetilde{R} + \la \Theta^c,\Theta^c \ra + \frac{y^2}{1-y} * \nu.
\end{align*}
\end{proposition}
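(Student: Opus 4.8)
The plan is to compute the stochastic exponential representation of $\bar{Z} = Z^{-1}$ directly from $Z = \cale(-Y)$, and then read off the local martingale and finite variation parts. Since $Z = \cale(-Y)$ with $Y = \widetilde{\Theta} + \widetilde{R}$, the first step is to express the reciprocal of a stochastic exponential. Writing $Z = \cale(-Y)$, one has the general identity $\cale(-Y)^{-1} = \cale(U)$ where $U$ solves $dU = -dY + d[Y,U]$-type relations; more concretely, using the known formula for the inverse of a stochastic exponential (see Appendix \ref{sec-mult} and the multiplicative structure), the process $\bar{Z}$ is again a stochastic exponential $\bar{Z} = \cale(W)$ for an explicit semimartingale $W$. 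The jumps satisfy $\Delta \bar{Z}/\bar{Z}_- = \frac{-\Delta Y}{1 - \Delta Y} = \frac{y}{1-y}$ where $y = \Delta Y$, which motivates the appearance of the term $\frac{y^2}{1-y}$ in the compensated jump part.

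\textbf{Identifying the driver $W$.}
First I would decompose $-Y = -\widetilde{\Theta} - \widetilde{R}$ and track the continuous and jump contributions separately. For the continuous martingale part, inverting the exponential contributes a quadratic variation correction, producing the term $\la \Theta^c,\Theta^c \ra$ (note $\widetilde{\Theta}^c = \Theta^c$ since the bijection between $\Theta$ and $\widetilde{\Theta}$ in Proposition \ref{prop-bij-Theta} only alters the finite-variation and jump structure, not the continuous martingale part). For the jump part, the reciprocal formula transforms each jump $-y$ into $\frac{y}{1-y} = -y + \frac{y^2}{1-y}$; splitting this as the original compensated jump martingale $\widetilde{\Theta}$ plus the correction $\frac{y^2}{1-y}$ and then centering the latter against its compensator $\nu$ yields the claimed driver $W = \widetilde{R} + \widetilde{\Theta} + \la \Theta^c,\Theta^c \ra + \frac{y^2}{1-y} * \mu^Y$. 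I would verify this by checking that $\cale(W)$ has the correct jumps $\frac{y}{1-y}$ and the correct continuous martingale and drift parts, which pins down $W$ uniquely.

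\textbf{The special semimartingale characterization.}
Once $\bar{Z} = \cale(W)$ is established, the semimartingale $W$ is special if and only if its jump part has a locally integrable variation, i.e. the purely discontinuous drift $\frac{y^2}{1-y} * \mu^Y$ can be compensated, which by \cite[Prop. II.2.29]{Jacod-Shiryaev} (or the analogous criterion in \cite{Jacod-Shiryaev}) holds exactly when $\frac{y^2}{1-y} * \nu \in \cala_{\loc}^+$. Here I would note that $\frac{y^2}{1-y} \geq 0$ since $y < 1$, so the process lies in $\calv^+$ and the question reduces to local integrability of its compensator. The equivalence (i) $\Leftrightarrow$ (ii) then follows from the standard fact that $W$ (equivalently $\bar{Z}$) is special precisely when this nonnegative increasing process is locally integrable. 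Finally, under this condition I would replace $\frac{y^2}{1-y}*\mu^Y$ by its canonical decomposition $\frac{y^2}{1-y}*(\mu^Y - \nu) + \frac{y^2}{1-y}*\nu$, collecting the compensated jump martingale with $\widetilde{\Theta}$ into $\bar{N}$ and the remaining predictable finite-variation terms into $\bar{B}$, giving the stated decomposition.

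\textbf{Main obstacle.}
The delicate step is the bookkeeping in identifying the driver $W$, specifically confirming that the continuous martingale correction is exactly $\la \Theta^c, \Theta^c\ra$ rather than $\la \widetilde{\Theta}^c,\widetilde{\Theta}^c\ra$ or a mixed term, and that the compensated-jump martingale part combines cleanly with $\widetilde{\Theta}$ to give $\bar{N} = \widetilde{\Theta} + \frac{y^2}{1-y}*(\mu^Y - \nu)$. This requires carefully using that $\Delta Y = \Delta\widetilde{\Theta} + \Delta\widetilde{R}$ and that the continuous part of $Y$ coincides with $\Theta^c$, so the algebra must respect the bijections from Propositions \ref{prop-bij-R} and \ref{prop-bij-Theta}; I expect this tracking of which processes are continuous versus purely discontinuous to be where errors are easiest to make.
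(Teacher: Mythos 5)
Your proposal is correct and takes essentially the same route as the paper: the paper's proof is a one-line application of the general inverse formula for stochastic exponentials (Proposition \ref{prop-inverse-pre}, itself based on the Karatzas--Kardaras identity and Lemma \ref{lemma-jumps-compensator}) to the driver $-Y$, which is precisely the computation you reconstruct, including the jump transformation $-y \mapsto \tfrac{y}{1-y}$, the identification $Y^c = \widetilde{\Theta}^c = \Theta^c$, and the compensation of the nonnegative increasing correction $\tfrac{y^2}{1-y} * \mu^Y$. Your slightly informal uniqueness check for the driver $W$ can be made rigorous in one line via Yor's formula, since $[W,Y] = \la \Theta^c,\Theta^c \ra + \tfrac{y^2}{1-y} * \mu^Y$ yields $W - Y - [W,Y] = 0$ and hence $\cale(W)\,\cale(-Y) = 1$.
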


\begin{proof}
This is an immediate consequence of Proposition \ref{prop-inverse-pre}.
\end{proof}

We define the new market with discounted assets $\bbs B^{-1} := \{ S^i B^{-1} : i \in I \}$.

\begin{proposition}\label{prop-ELMM}
Suppose that $D \in \calm$ with $\bbp(D_{\infty} > 0)$, and let $\bbq \approx \bbp$ be the probability measure on $(\Omega,\calf_{\infty -})$ with density process $D$ relative to $\bbp$. Then the following statements are equivalent:
\begin{enumerate}
\item[(i)] $Z$ is an ELMD for $\bbs$.

\item[(ii)] $Z$ is an ELMD for $\bbs \cup \{ B \}$.

\item[(iii)] $\bbq$ is an ELMM for $\bbs B^{-1}$.
\end{enumerate}
\end{proposition}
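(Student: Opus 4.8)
The equivalence (i) $\Leftrightarrow$ (ii) is already contained in Theorem \ref{thm-special}, so it suffices to relate these with (iii), and I would do this by establishing (i) $\Leftrightarrow$ (iii). The main tool is the standard change-of-measure characterization of local martingales: if $D$ is the density process of $\bbq \approx \bbp$ relative to $\bbp$, then an adapted c\`{a}dl\`{a}g process $N$ is a $\bbq$-local martingale if and only if $ND$ is a $\bbp$-local martingale; see \cite[Prop. III.3.8]{Jacod-Shiryaev}.

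First I would check that $D$ really is the density process of $\bbq$. Since $D = \cale(-\Theta)$ with $\Theta_0 = 0$ and $\Delta \Theta < 1$, the process $D$ is strictly positive with $D_0 = 1$, and by assumption $D \in \calm$ is a (uniformly integrable) martingale with $D_\infty > 0$ $\bbp$-almost surely. Hence $\frac{d\bbq}{d\bbp} = D_\infty$ defines a probability measure $\bbq \approx \bbp$ on $(\Omega,\calf_{\infty -})$ whose density process with respect to $\bbp$ is exactly $D_t = \bbe[D_\infty \mid \calf_t]$.

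The heart of the proof is then a one-line computation. Fix $i \in I$ and apply the criterion above to $N := S^i B^{-1}$. Using the multiplicative decomposition $Z = D B^{-1}$ we obtain $N D = S^i B^{-1} D = S^i Z$, so $S^i B^{-1}$ is a $\bbq$-local martingale if and only if $S^i Z \in \calm_{\loc}$ under $\bbp$. Since this holds for every $i \in I$, the measure $\bbq$ is an ELMM for $\bbs B^{-1}$ precisely when $Z$ is an ELMD for $\bbs$, which is the desired equivalence (i) $\Leftrightarrow$ (iii).

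I do not expect a serious obstacle here; the only point requiring care is the verification that the hypotheses of the change-of-measure lemma are met on the full time horizon. The assumption $D \in \calm$ (rather than merely $D \in \calm_{\loc}$), together with $D_\infty > 0$ $\bbp$-a.s., is exactly what guarantees that $\bbq$ is a genuine equivalent probability measure with density process $D$, so that the local-martingale equivalence transfers cleanly from $S^i Z$ under $\bbp$ to $S^i B^{-1}$ under $\bbq$.
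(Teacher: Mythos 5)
Your proposal is correct and takes essentially the same route as the paper: the equivalence (i) $\Leftrightarrow$ (ii) via Theorem \ref{thm-special}, and for (i) $\Leftrightarrow$ (iii) the identity $S^i Z = (S^i B^{-1}) D$ combined with the standard change-of-measure characterization of local martingales. The only cosmetic difference is that the paper delegates the measure-change step to \cite[Lemma 4.7]{Platen-Tappe-FTAP}, whereas you invoke \cite[Prop. III.3.8]{Jacod-Shiryaev} directly.
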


\begin{proof}
(i) $\Leftrightarrow$ (ii): This equivalence follows from Theorem \ref{thm-special}.

\noindent(i) $\Leftrightarrow$ (iii): $Z$ is an ELMD for $\bbs$ if and only if $D$ is an ELMD for $\bbs B^{-1}$. By \cite[Lemma 4.7]{Platen-Tappe-FTAP} this is the case if and only if $\bbq$ is an ELMM for $\bbs B^{-1}$.
\end{proof}

\begin{remark}\label{rem-Girsanov}
Suppose that $D \in \calm$ with $\bbp(D_{\infty} > 0)$, and let $\bbq \approx \bbp$ be the probability measure on $(\Omega,\calf_{\infty -})$ with density process $D$ relative to $\bbp$. Let $i \in I$ be such that $[M^i,\Theta] \in \cala_{\loc}$. By Proposition \ref{prop-Girsanov} the process $X^i$ is also a special semimartingale under $\bbq$, and its canonical decomposition $X^i = (M^i)' + (A^i)'$ is given by
\begin{align*}
(M^i)' = M^i + [M^i,\Theta]^p \quad \text{and} \quad (A^i)' = A^i - [M^i,\Theta]^p,
\end{align*}
where the predictable compensators are computed under $\bbp$. By Yor's formula (see \cite[II.8.19]{Jacod-Shiryaev}) we have
\begin{align*}
S^i B^{-1} &= S_0^i \, \cale(X^i) \cale(-\widetilde{R}) = S_0^i \, \cale \big( X^i - \widetilde{R} - [X^i,\widetilde{R}] \big) = S_0^i \, \cale \big( (M^i)' + (B^i)' \big),
\end{align*}
where the predictable process $(B^i)' \in \calv$ is given by
\begin{align*}
(B^i)' = (A^i)' - \widetilde{R} - [A^i,\widetilde{R}].
\end{align*}
Therefore, $S^i B^{-1}$ is a $\bbq$-local martingale if and only if $(B^i)' = 0$, which is equivalent to (\ref{drift-R-Theta-2}), confirming Theorem \ref{thm-special} and Proposition \ref{prop-ELMM}.
\end{remark}

\begin{remark}\label{rem-decomposition}
Suppose that the equivalent conditions from Proposition \ref{prop-ELMM} are fulfilled. In view of Remark \ref{rem-Girsanov}, condition (\ref{drift-R-Theta-2}) reads
\begin{align*}
(A^i)' = \widetilde{R} + [A^i,\widetilde{R}],
\end{align*}
and we see that the process $\widetilde{R} + [A^i,\widetilde{R}]$ in (\ref{drift-R-Theta-2}) can be regarded as the \emph{locally risk-free return} of the asset $S^i$ if $M^i$ were zero, and that the process $\Theta$ in (\ref{drift-R-Theta-2}) can be regarded as a \emph{market price of risk}.
\end{remark}

\begin{remark}
Assume $I = \{ 1,\ldots,d \}$ for some $d \in \bbn$, and that the equivalent conditions from Proposition \ref{prop-ELMM} are fulfilled. Then $X = (X^1,\ldots,X^d)$ is an $\bbr^d$-valued special semimartingale. We denote by $(A,C,\nu)$ its characteristics. By Remark \ref{rem-Girsanov} and Proposition \ref{prop-Girsanov} the process $X$ is also a special martingale under $\bbq$, and its characteristics $(A',C',\nu')$ are given by
\begin{align*}
(A^i)' &= \widetilde{R} + [A^i,\widetilde{R}], \quad i=1,\ldots,d,
\\ C' &= C,
\\ \nu' &= \big( 1 - M_{\mu^X}^{\bbp}( \Delta \Theta \,|\, \widetilde{\calp} ) \big) \cdot \nu.
\end{align*}
\end{remark}

For the rest of this section, we assume that the special semimartingales $(X^i)_{i \in I}$ appearing in (\ref{deco-1}) and the special semimartingale $Y$ appearing in (\ref{deco-2}) are locally square-integrable and quasi-left-continuous. Then we have $M^i \in \calh_{\loc}^2$ for each $i \in I$, and we have $\widetilde{\Theta} \in \calh_{\loc}^2$, where $\calh_{\loc}^2$ denotes the space of all locally square-integrable martingales. Furthermore, by Lemma \ref{lemma-semi-qls}, for each $i \in I$ the local martingale $M^i$ is quasi-left-continuous and the process $A^i$ is continuous, and the local martingale $\widetilde{\Theta}$ is quasi-left-continuous and the process $\widetilde{R}$ is continuous. Using Propositions \ref{prop-bij-R} and \ref{prop-bij-Theta} we have $\Theta = \widetilde{\Theta}$, $R = \widetilde{R}$ and $B = \exp(R)$.

\begin{theorem}\label{thm-special-sq}
The following statements are equivalent:
\begin{enumerate}
\item[(i)] $Z$ is an ELMD for $\bbs$.

\item[(ii)] $Z$ is an ELMD for $\bbs \cup \{ B \}$.

\item[(iii)] For each $i \in I$ we have up to an evanescent set
\begin{align}\label{drift-R-Theta-sq}
A^i = R + \la M^i,\Theta \ra.
\end{align}
\end{enumerate}
If the previous conditions are fulfilled, then $\calp_{\sfi,1}^+(\bbs \cup \{ B \})$ satisfies NUPBR, NAA$_1$ and NA$_1$.
\end{theorem}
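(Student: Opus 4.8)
The plan is to deduce this theorem directly from Theorem \ref{thm-special} by exploiting the simplifications that the local square-integrability and quasi-left-continuity assumptions provide. The equivalence (i) $\Leftrightarrow$ (ii) is already contained in Theorem \ref{thm-special}, so the only real work is to show that condition (iv) of that theorem collapses to the drift condition (\ref{drift-R-Theta-sq}) under the present hypotheses. First I would record the consequences of the setup already noted before the statement: by Propositions \ref{prop-bij-R} and \ref{prop-bij-Theta} we have $R = \widetilde{R}$ and $\Theta = \widetilde{\Theta}$, and by Lemma \ref{lemma-semi-qls} the processes $A^i$ and $\widetilde{R}$ are continuous while $M^i, \Theta \in \calh_{\loc}^2$.

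Next I would simplify the two covariation terms appearing in (\ref{drift-R-Theta-2}). Since $A^i$ and $\widetilde{R}$ are both continuous processes of locally finite variation, their quadratic covariation reduces to the sum $\sum_{s} \Delta A^i_s \, \Delta \widetilde{R}_s$ over the jumps, which vanishes; hence $[A^i,\widetilde{R}] = 0$. For the second term, because $M^i, \Theta \in \calh_{\loc}^2$ the quadratic covariation $[M^i,\Theta]$ is locally integrable, so $[M^i,\Theta] \in \cala_{\loc}$ holds automatically, and its predictable compensator coincides with the predictable quadratic covariation, that is $[M^i,\Theta]^p = \la M^i,\Theta \ra$. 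Substituting these two identities together with $R = \widetilde{R}$ into (\ref{drift-R-Theta-2}) turns condition (iv) of Theorem \ref{thm-special} into $A^i = R + \la M^i,\Theta \ra$ for each $i \in I$, which is precisely (\ref{drift-R-Theta-sq}); conversely, (\ref{drift-R-Theta-sq}) reproduces (iv) by the same identities. This yields the equivalence (i)/(ii) $\Leftrightarrow$ (iii), and the final assertion about NUPBR, NAA$_1$ and NA$_1$ then follows from the concluding statement of Theorem \ref{thm-special} (equivalently, from Theorem \ref{thm-FTAP}).

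I do not expect a genuine obstacle here, since the result is essentially a specialization of Theorem \ref{thm-special}. The only points requiring care are the two reductions above: justifying $[A^i,\widetilde{R}] = 0$ from the continuity (and thus absence of common jumps) of the two finite-variation processes, and justifying $[M^i,\Theta]^p = \la M^i,\Theta \ra$ from the membership of both local martingales in $\calh_{\loc}^2$. Both are standard facts about quadratic covariation, but they are exactly where the square-integrability and quasi-left-continuity hypotheses enter, so I would state them explicitly rather than leave them implicit.
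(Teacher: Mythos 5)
Your proposal is correct and follows essentially the same route as the paper: the paper likewise deduces Theorem \ref{thm-special-sq} as an immediate specialization of Theorem \ref{thm-special}, citing Lemma \ref{lemma-qv-sq-qls}, whose content is precisely the two reductions you derive inline, namely $[A^i,\widetilde{R}] = 0$ from continuity of the finite variation parts and $[X^i,Y]^p = \la M^i,\Theta \ra$ (with automatic membership in $\cala_{\loc}$) from $M^i,\Theta \in \calh_{\loc}^2$. Passing through condition (iv) of Theorem \ref{thm-special} rather than condition (iii) is an immaterial difference, since the two are equivalent there.
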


\begin{proof}
Taking into account Lemma \ref{lemma-qv-sq-qls}, this is an immediate consequence of Theorem \ref{thm-special}.
\end{proof}

Now we determine the dynamics of the ELMN in the present setting, provided it exists.

\begin{proposition}\label{prop-inverse-Z}
Suppose that the equivalent conditions from Theorem \ref{thm-special-sq} are fulfilled, and define the ELMN $\bar{Z} := Z^{-1}$. Let $\nu$ be the predictable compensator of the random measure $\mu^{\Theta}$. Then the following statements are equivalent:
\begin{enumerate}
\item[(i)] $\bar{Z}$ is a locally square-integrable semimartingale.

\item[(ii)] We have
\begin{align}\label{Theta-A-loc-qls}
\frac{\theta^2}{1 - \theta} * \nu, \bigg( \frac{\theta^2}{1 - \theta} \bigg)^2 * \nu \in \cala_{\loc}^+.
\end{align}

\item[(iii)] We have
\begin{align}\label{Theta-A-loc-qls-2}
\frac{\theta^2}{1 - \theta} * \nu, \bigg( \frac{\theta}{1 - \theta} \bigg)^2 * \nu \in \cala_{\loc}^+.
\end{align}

\item[(iv)] There exists a quasi-left-continuous local martingale $\bar{N} \in \calh_{\loc}^2$ with $\bar{N}_0 = 0$ such that
\begin{align}\label{N-properties}
\bar{N} - \Theta \in \calv, \quad \bar{N}^c = \Theta^c \quad \text{and} \quad \Delta \bar{N} = \frac{\Delta \Theta}{1 - \Delta \Theta}. 
\end{align}
\end{enumerate}
If the previous conditions are fulfilled, then the process $\bar{Z}$ admits the representation
\begin{align}\label{N-bar-qls}
\bar{Z} = \cale \big( R + \la \Theta,\bar{N} \ra + \bar{N} \big),
\end{align}
and the local martingale $\bar{N} \in \calm_{\loc}$ is given by
\begin{align*}
\bar{N} = \Theta + \frac{\theta^2}{1 - \theta} * (\mu^{\Theta} - \nu).
\end{align*}
\end{proposition}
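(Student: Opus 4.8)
The plan is to deduce everything from the general inverse formula in Proposition \ref{prop-inverse-pre} by specializing it to the present locally square-integrable, quasi-left-continuous setting, and then to sharpen it from ``special'' to ``locally square-integrable''. First I would record the simplifications available here. By Propositions \ref{prop-bij-R} and \ref{prop-bij-Theta} together with the standing assumptions we have $\widetilde R = R$ and $\widetilde\Theta = \Theta$, so that $Y = \Theta + R$. Since $R$ is continuous, the jumps of $Y$ coincide with those of $\Theta$, whence $\mu^Y = \mu^{\Theta}$ and the compensator $\nu$ of $\mu^{\Theta}$ is the one appearing in Proposition \ref{prop-inverse-pre}. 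Feeding this in yields the representation $\bar Z = \cale\big( R + \Theta + \la\Theta^c,\Theta^c\ra + \frac{\theta^2}{1-\theta}*\mu^{\Theta} \big)$, the equivalence between $\bar Z$ being a special semimartingale and $\frac{\theta^2}{1-\theta}*\nu \in \cala_{\loc}^+$, and, in that case, the decomposition $\bar Z = \cale(\bar N + \bar B)$ with $\bar N = \Theta + \frac{\theta^2}{1-\theta}*(\mu^{\Theta}-\nu)$ and $\bar B = R + \la\Theta^c,\Theta^c\ra + \frac{\theta^2}{1-\theta}*\nu$.

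Next I would identify $\bar B$ with $R + \la\Theta,\bar N\ra$. Using quasi-left-continuity (so that $\frac{\theta^2}{1-\theta}*\nu$ is continuous), a direct computation gives $\la\Theta,\bar N\ra = \la\Theta^c,\Theta^c\ra + \theta^2 * \nu + \frac{\theta^3}{1-\theta}*\nu = \la\Theta^c,\Theta^c\ra + \frac{\theta^2}{1-\theta}*\nu$, which is exactly $\bar B - R$; this produces the representation (\ref{N-bar-qls}). The same computation shows $\bar N^c = \Theta^c$, $\bar N - \Theta \in \calv$ and $\Delta\bar N = \Delta\Theta + \frac{(\Delta\Theta)^2}{1-\Delta\Theta} = \frac{\Delta\Theta}{1-\Delta\Theta}$, which are precisely the defining properties in (iv). Conversely, a local martingale with these properties has $(\bar N-\Theta)^c=0$ and prescribed jumps $\frac{(\Delta\Theta)^2}{1-\Delta\Theta}$, hence must coincide with the compensated sum $\frac{\theta^2}{1-\theta}*(\mu^{\Theta}-\nu)$; so (iv) holds iff this compensated sum is a well-defined element of $\calh_{\loc}^2$, i.e.\ iff $\frac{\theta^2}{1-\theta}*\nu \in \cala_{\loc}^+$ and $\big(\frac{\theta^2}{1-\theta}\big)^2 * \nu \in \cala_{\loc}^+$, which is (ii).

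To obtain (i), I would invoke the characterization of local square-integrability of a stochastic exponential in the quasi-left-continuous setting (Lemma \ref{lemma-qv-sq-qls}): since $\bar B$ is predictable of finite variation and $\bar Z_-$ is locally bounded, $\bar Z$ is locally square-integrable iff its martingale part is, i.e.\ iff $\bar N \in \calh_{\loc}^2$. As $\Theta \in \calh_{\loc}^2$ by assumption, this reduces to $\frac{\theta^2}{1-\theta}*(\mu^{\Theta}-\nu) \in \calh_{\loc}^2$, yielding (i) $\Leftrightarrow$ (ii) $\Leftrightarrow$ (iv).

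Finally, for (ii) $\Leftrightarrow$ (iii) I would argue by a pointwise estimate on the integrands, exploiting that $\theta^2 * \nu \in \cala_{\loc}^+$ holds automatically because $\Theta \in \calh_{\loc}^2$, and that both (ii) and (iii) share the condition $\frac{\theta^2}{1-\theta}*\nu \in \cala_{\loc}^+$. Splitting the mark space by the sign of $\theta$ and using $\Delta\Theta < 1$: on $\{\theta \le 0\}$ one has $(1-\theta)^2 \ge 1$, so both $\big(\frac{\theta^2}{1-\theta}\big)^2 \le \theta^2$ and $\big(\frac{\theta}{1-\theta}\big)^2 \le \theta^2$, and these integrals are dominated by $\theta^2 * \nu$; on $\{0 < \theta < 1\}$ one has $\theta^2 < 1$ and hence $\big(\frac{\theta^2}{1-\theta}\big)^2 \le \big(\frac{\theta}{1-\theta}\big)^2 = \big(\theta + \frac{\theta^2}{1-\theta}\big)^2 \le 2\theta^2 + 2\big(\frac{\theta^2}{1-\theta}\big)^2$, so the two remaining conditions are mutually equivalent there. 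The main obstacle I anticipate is the clean passage in step three from local square-integrability of $\bar Z$ to $\bar N \in \calh_{\loc}^2$, where one must verify that the predictable finite-variation part $\bar B$ neither creates nor destroys local square-integrability; the sign bookkeeping in the integrand comparison for (ii) $\Leftrightarrow$ (iii) is delicate but is fully handled by the case split above.
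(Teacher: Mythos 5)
Your proposal is correct and takes essentially the same route as the paper: the paper's proof is a one-line appeal to Proposition \ref{prop-inverse-L2}, and your argument is precisely that proposition's proof unwound in the special case $X = -\Theta - R$, namely specializing Proposition \ref{prop-inverse-pre}, identifying $\bar{B} - R = \la \Theta,\bar{N} \ra$ via Lemmas \ref{lemma-jumps-compensator} and \ref{lemma-qv-sq-qls}, and reducing (i) to $\bar{N} \in \calh_{\loc}^2$ and then to the square condition via \cite[Thm.~II.1.33.a]{Jacod-Shiryaev}. Two cosmetic points: the reduction of (i) should cite Lemma \ref{lemma-stoch-exp-properties} (local square-integrability of $\cale(\bar{N}+\bar{B})$ versus that of $\bar{N}+\bar{B}$) rather than Lemma \ref{lemma-qv-sq-qls}, and your sign case split for (ii) $\Leftrightarrow$ (iii) can be replaced by the paper's shorter argument from $\theta^2 * \nu \in \cala_{\loc}^+$ together with the identity $\frac{\theta}{1-\theta} = \theta + \frac{\theta^2}{1-\theta}$, which your computation already uses implicitly.
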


\begin{proof}
This is an immediate consequence of Proposition \ref{prop-inverse-L2}.
\end{proof}

\section{Existence of equivalent local martingale deflators}\label{sec-existence}

In this section we treat the existence of ELMDs. We consider the framework of Section \ref{sec-ELMD} with $I = \{ 1,\ldots,d \}$ for some $d \in \bbn$; that is, we have finitely many assets. We introduce the $\bbr^d$-valued special semimartingale $X := (X^1,\ldots,X^d)$. As at the end of Section \ref{sec-ELMD}, we assume that $X$ is locally square-integrable and quasi-left-continuous. In view of Theorem \ref{thm-special}, we are interested in finding a continuous process $R \in \calv$ and a quasi-left-continuous local martingale $\Theta \in \calh_{\loc}^2$ with $\Theta_0 = 0$ and $\Delta \Theta < 1$ such that up to an evanescent set
\begin{align}\label{drift-for-ex}
A^i = R + \la M^i,\Theta \ra \quad \text{for all $i=1,\ldots,d$,}
\end{align}
because then
\begin{align*}
Z = \cale(-\Theta) \cale(R)^{-1} = \cale(-\Theta) \exp(R)^{-1}
\end{align*}
is an ELMD for $\bbs$. By Proposition \ref{prop-ex-int-char} there exist a continuous process $\Gamma \in \cala_{\loc}^+$ and modified integral characteristics $(a,c_{\mo},K)$ of $X$ with respect to $\Gamma$. Furthermore, by Proposition \ref{prop-int-char-eq} there exist integral characteristics $(a,c,K)$ and a purely discontinuous second integral characteristic $v$ of $X$ with respect to $\Gamma$, and we have
\begin{align}\label{c-mo-sum}
c_{\mo} = c + v \quad \text{$\Gamma$-a.e.} \quad \text{$\bbp$-a.e.}
\end{align}

\begin{proposition}\label{prop-ex-matrices}
The following statements are equivalent:
\begin{enumerate}
\item[(i)] There exist an optional $\bbr$-valued process $r$ and an optional $\bbs_+^{(d+1) \times (d+1)}$-valued process $\hat{c}_{\mo}$ such that $\hat{c}_{\mo}^{ij} = c_{\mo}^{ij}$ for all $i,j = 1,\ldots,d$, and we have
\begin{align}\label{eqn-ex-i}
\big( \hat{c}_{\mo}^{i,d+1} \big)_{i=1,\ldots,d} = a - r \bbI_{\bbr^d} \quad \text{$\Gamma$-a.e.} \quad \text{$\bbp$-a.e.}
\end{align}
\item[(ii)] There exist an optional $\bbr$-valued process $r$ and optional $\bbs_+^{(d+1) \times (d+1)}$-valued processes $\hat{c}$ and $\hat{v}$ such that $\hat{c}^{ij} = c^{ij}$ and $\hat{v}^{ij} = v^{ij}$ for all $i,j = 1,\ldots,d$, and we have
\begin{align}\label{eqn-ex-ii}
\big( \hat{c}^{i,d+1} \big)_{i=1,\ldots,d} + \big( \hat{v}^{i,d+1} \big)_{i=1,\ldots,d} = a - r \bbI_{\bbr^d} \quad \text{$\Gamma$-a.e.} \quad \text{$\bbp$-a.e.}
\end{align}
\item[(iii)] There exist an optional $\bbr$-valued process $r$ and an optional $\bbr^d$-valued process $x$ such that
\begin{align}\label{eqn-ex-iii}
c_{\mo} x = a - r \bbI_{\bbr^d} \quad \text{$\Gamma$-a.e.} \quad \text{$\bbp$-a.e.}
\end{align}

\item[(iv)] There exist an optional $\bbr$-valued process $r$ and optional $\bbr^d$-valued processes $x$ and $y$ such that
\begin{align}\label{eqn-ex-iv}
c x + v y = a - r \bbI_{\bbr^d} \quad \text{$\Gamma$-a.e.} \quad \text{$\bbp$-a.e.}
\end{align}
\end{enumerate}
\end{proposition}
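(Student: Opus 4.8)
The plan is to read Proposition~\ref{prop-ex-matrices} as a pointwise (in $(\omega,t)$) statement of linear algebra about symmetric positive semidefinite matrices, dressed up with a measurable selection argument that upgrades pointwise solvability to the existence of \emph{optional} processes. Throughout I abbreviate $b := a - r \bbI_{\bbr^d}$ and note that the scalar $r$ is simply carried through all four conditions unchanged, so it suffices to show that, for one and the same optional $r$, the four solvability conditions for $b$ are equivalent $\Gamma$-a.e. $\bbp$-a.e. The two linear-algebra facts I would isolate (and expect to have available from the matrix appendix) are: \textbf{(a)} for a symmetric positive semidefinite $M$ and a vector $b$, the bordered matrix $\left(\begin{smallmatrix} M & b \\ b^{\top} & s \end{smallmatrix}\right)$ is positive semidefinite for some scalar $s$ if and only if $b \in \ran M$; and \textbf{(b)} for symmetric positive semidefinite $c$ and $v$ one has $\ran(c+v) = \ran c + \ran v$, equivalently $\ker(c+v) = \ker c \cap \ker v$.

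With these in hand I would run the equivalences through (iii). For (iii)$\Rightarrow$(iv) the relation $c_{\mo} = c + v$ from (\ref{c-mo-sum}) gives $c_{\mo} x = c x + v x$, so $y := x$ works. For (iv)$\Rightarrow$(iii), from $b = c x + v y$ one reads off $b \in \ran c + \ran v = \ran(c+v) = \ran c_{\mo}$ by fact (b) and (\ref{c-mo-sum}); hence $c_{\mo} x' = b$ is solvable, and I select $x' := c_{\mo}^{+} b$ through a measurable pseudoinverse to keep $x'$ optional. For (iii)$\Rightarrow$(i) I extend explicitly, bordering $c_{\mo}$ with last column $c_{\mo} x = b$ and corner entry $\la x, c_{\mo} x \ra_{\bbr^d}$; the quadratic form of the resulting symmetric $(d+1)\times(d+1)$ matrix evaluated on $(u,\tau)$ equals $(u+\tau x)^{\top} c_{\mo} (u+\tau x) \ge 0$, so it lies in $\bbs_+^{(d+1)\times(d+1)}$, is optional because $x$ is, and has the prescribed last column. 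For (i)$\Rightarrow$(iii), positive semidefiniteness of the bordered $\hat c_{\mo}$ with last column $b$ forces $b \in \ran c_{\mo}$ by fact (a), and once more $x := c_{\mo}^{+} b$ furnishes an optional solution.

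The equivalence (iv)$\Leftrightarrow$(ii) is the same bordering applied separately to $c$ and $v$. Given $c x + v y = b$, I border $c$ by $(c x, \la x, c x \ra_{\bbr^d})$ and $v$ by $(v y, \la y, v y \ra_{\bbr^d})$; both extensions are positive semidefinite by the identity above, are optional, and their last columns sum to $c x + v y = b$. Conversely, positive semidefiniteness of $\hat c$ and $\hat v$ forces their last columns $p$ and $q$ to lie in $\ran c$ and $\ran v$ respectively (fact (a)); writing $p = c x$ and $q = v y$ via measurable selection and using $p + q = b$ yields (iv).

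The only genuinely delicate point, and the one I expect to be the main obstacle, is measurability rather than the algebra: I must ensure that the pointwise constructions can be realised as optional processes. In the extension directions optionality is automatic, since the bordered matrices are given by explicit optional formulas. The reverse directions instead require a measurable pseudoinverse, or equivalently a measurable selection of solutions of $M x = b$ on the optional set where a solution exists; this is exactly where I would invoke the corresponding matrix lemma from the appendix. I would also double-check that fact (b) is applied together with (\ref{c-mo-sum}) precisely at the $\Gamma$-a.e. $\bbp$-a.e. level, and that all constructed extensions genuinely land in $\bbs_+^{(d+1)\times(d+1)}$ as the statement demands.
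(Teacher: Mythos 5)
Your proposal is correct, and its architecture coincides with the paper's: the paper also treats the proposition as a pointwise matrix statement (its Proposition \ref{prop-matrices-main}, combined with (\ref{c-mo-sum})) and settles optionality exactly as you do, via the measurability of the Moore--Penrose inverse (Lemma \ref{lemma-MP-measurable}) together with the explicit solution $x = A^{\dagger}b$ from Lemma \ref{lemma-LGS} and the explicit admissible corner entry from Proposition \ref{prop-matrix}. Your facts (a) and the bordering constructions are precisely the content of Lemma \ref{lemma-matrix-pre} and Proposition \ref{prop-matrix} (note that the corner entry you choose, $\la x, c_{\mo}x \ra_{\bbr^d}$, equals the paper's minimal Schur value $\la c_{\mo}^{\dagger}b, b \ra_{\bbr^d}$ for any solution $x$, since $x^{\top}c_{\mo}x$ is constant on the solution set). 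The one genuine divergence is how you close the cycle: you prove (iv) $\Rightarrow$ (iii) directly through your fact (b), the range identity $\ran(c+v) = \ran c + \ran v$ for positive semidefinite matrices (equivalently $\ker(c+v) = \ker c \cap \ker v$), whereas the paper never uses this identity; instead it proves (iv) $\Leftrightarrow$ (ii) by bordering $c$ and $v$ separately (Proposition \ref{prop-matrix-2}, which routes the splitting $b = p + q$ through Lemma \ref{lemma-LGS-2}), and then obtains (ii) $\Rightarrow$ (i) by the trivial observation that $\hat{c}_{\mo} := \hat{c} + \hat{v}$ is again positive semidefinite with last column summing correctly, after which (i) $\Leftrightarrow$ (iii) follows from the Schur characterization. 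Your route is slightly more direct for that leg and makes the linear-algebraic mechanism ($b \in \ran c + \ran v \subseteq \ran c_{\mo}$) explicit; the paper's route avoids needing the range-sum identity altogether and reuses its single bordering lemma uniformly, which is arguably cleaner given that conditions (i) and (ii) must be established anyway. Both arguments are complete, and your attention to the two delicate points --- applying the algebra only on the $\Gamma$-a.e.\ $\bbp$-a.e.\ set where $c_{\mo} = c + v$ holds, and realizing all selections by optional formulas --- matches what the paper's proof explicitly flags.
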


\begin{proof}
Taking into account (\ref{c-mo-sum}), this is a consequence of Proposition \ref{prop-matrices-main}. Note that the corresponding processes can indeed be chosen to be optional, which follows from Lemma \ref{lemma-MP-measurable} and the additional statements from Lemma \ref{lemma-LGS} and Proposition \ref{prop-matrix}.
\end{proof}

\begin{remark}
Note that the equivalent conditions from Proposition \ref{prop-ex-matrices} are fulfilled if
\begin{align*}
c_{\mo} \in \bbs_{++}^{d \times d} \quad \text{$\Gamma$-a.e.} \quad \text{$\bbp$-a.e.}
\end{align*}
\end{remark}

The following results show that the existence of a continuous process $R \in \calv$ and a quasi-left-continuous local martingale $\Theta \in \calh_{\loc}^2$ satisfying (\ref{drift-for-ex}) is essentially equivalent to the existence of an optional $\bbr$-valued process $r$ and an optional $\bbr^d$-valued process $x$ satisfying (\ref{eqn-ex-iii}).

Let $R \in \calv$ be a continuous process, and let $\Theta \in \calh_{\loc}^2$ be a quasi-left-continuous local martingale with $\Theta_0 = 0$ and $\Delta \Theta < 1$. Denoting by $L_{\loc}^1(\Gamma)$ the space of all optional processes $r : \Omega \times \bbr_+ \to \bbr$ such that $|r| \bdot \Gamma \in \calv^+$, we assume there is an optional process $r \in L_{\loc}^1(\Gamma)$ such that $R = r \bdot \Gamma$, and that the $\bbr^{d+1}$-valued semimartingale $\hat{X} := (X,\Theta)$ admits modified integral characteristics $(\hat{a},\hat{c}_{\mo},\hat{K})$ with respect to $\Gamma$, where of course $\hat{a} = (a,0)$.

\begin{proposition}\label{prop-ex-1}
If condition (\ref{drift-for-ex}) is satisfied, then condition (\ref{eqn-ex-i}) is satisfied as well.
\end{proposition}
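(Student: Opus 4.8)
The plan is to take as the process $r$ the one already fixed through the representation $R = r \bdot \Gamma$, and to take as $\hat{c}_{\mo}$ the modified second integral characteristic of the augmented semimartingale $\hat{X} = (X,\Theta)$ supplied in the hypotheses (after replacing it, if necessary, by an $\bbs_+^{(d+1)\times(d+1)}$-valued version). Everything then rests on the interpretation of the modified second characteristic in the locally square-integrable, quasi-left-continuous setting: by Lemma \ref{lemma-qv-sq-qls} it is precisely the $\Gamma$-density of the predictable quadratic covariation of the martingale parts. Since $\Theta \in \calh_{\loc}^2$ is already a local martingale with vanishing finite variation part, the martingale part of $\hat{X}$ is $(M^1,\ldots,M^d,\Theta)$, in accordance with the normalization $\hat{a} = (a,0)$.

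First I would note that $\hat{c}_{\mo}$ may indeed be chosen symmetric positive semidefinite: for any $v \in \bbr^{d+1}$ the process $\sum_{i,j} v_i v_j \la \hat{M}^i,\hat{M}^j \ra = \la \sum_i v_i \hat{M}^i \ra$ is nondecreasing, so the matrix of $\Gamma$-densities of the entries $\la \hat{M}^i,\hat{M}^j \ra$ is $\bbs_+^{(d+1)\times(d+1)}$-valued up to a $\Gamma$-null, $\bbp$-null set; this is built into the notion of modified second characteristic (Proposition \ref{prop-ex-int-char}).

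Next I would verify the two requirements in (\ref{eqn-ex-i}). Because the first $d$ coordinates of $\hat{X}$ coincide with $X$, the restriction of $\hat{c}_{\mo}$ to the indices $i,j \le d$ is a version of the modified second characteristic $c_{\mo}$ of $X$, whence $\hat{c}_{\mo}^{ij} = c_{\mo}^{ij}$ for all $i,j=1,\ldots,d$ up to evanescence. For the last column, Lemma \ref{lemma-qv-sq-qls} gives, for each $i = 1,\ldots,d$,
\begin{align*}
\hat{c}_{\mo}^{i,d+1} \bdot \Gamma = \la \hat{M}^i, \hat{M}^{d+1} \ra = \la M^i, \Theta \ra.
\end{align*}

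Finally I would insert the drift condition. Writing the first characteristic of $X$ as $A^i = a^i \bdot \Gamma$ and using $R = r \bdot \Gamma$, condition (\ref{drift-for-ex}) becomes $a^i \bdot \Gamma = r \bdot \Gamma + \hat{c}_{\mo}^{i,d+1} \bdot \Gamma$ for each $i$. Comparing the $\Gamma$-densities of these predictable processes of finite variation yields $\hat{c}_{\mo}^{i,d+1} = a^i - r$ for $i=1,\ldots,d$, up to a $\Gamma$-null, $\bbp$-null set, which is exactly (\ref{eqn-ex-i}). The only genuinely delicate point is the version bookkeeping in the middle step---the equalities of characteristics hold only up to evanescence, and the identification of $\hat{c}_{\mo}^{i,d+1}\bdot\Gamma$ with $\la M^i,\Theta\ra$ must be read in that sense---but this is precisely the regime covered by Lemma \ref{lemma-qv-sq-qls} together with the consistency of modified characteristics under passing to subvectors, so once those are invoked the argument is routine.
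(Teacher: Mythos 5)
Your proof is correct and takes essentially the same route as the paper's: you choose $r$ from the given representation $R = r \bdot \Gamma$ and $\hat{c}_{\mo}$ as the modified second integral characteristic of $\hat{X} = (X,\Theta)$ supplied by the hypotheses, and the key computation $\hat{c}_{\mo}^{i,d+1} \bdot \Gamma = \la M^i,\Theta \ra = A^i - R = (a^i - r) \bdot \Gamma$ followed by comparing $\Gamma$-densities is exactly the paper's one-line argument. Your additional remarks on positive semidefiniteness and version bookkeeping are harmless but already built into the definition of modified integral characteristics, so nothing further is needed.
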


\begin{proof}
By (\ref{drift-for-ex}), for all $i=1,\ldots,d$ we have
\begin{align*}
\hat{c}_{\mo}^{i,d+1} \bdot \Gamma = \hat{C}_{\mo}^{i,d+1} = \la M^i,\Theta \ra = A^i - R = ( a^i - r ) \bdot \Gamma,
\end{align*}
showing that (\ref{eqn-ex-i}) is fulfilled.
\end{proof}

Now, we assume that the equivalent conditions from Proposition \ref{prop-ex-matrices} are fulfilled. By (\ref{c-mo-sum}), (\ref{eqn-ex-i}) and (\ref{eqn-ex-ii}) we may assume that
\begin{align*}
\hat{c}_{\mo} = \hat{c} + \hat{v} \quad \text{$\Gamma$-a.e.} \quad \text{$\bbp$-a.e.}
\end{align*}

\begin{lemma}
There is a transition kernel $\hat{K}$ from $(\Omega \times \bbr_+, \calo)$ into $(\bbr^{d+1},\calb(\bbr^{d+1}))$ such that on $\Omega \times \bbr_+$ we have
\begin{align}\label{kernel-ext-0}
\hat{K}(\{ 0 \}) = 0 \quad \text{and} \quad \int_{\bbr^{d+1}} |\hat{x}|^2 \hat{K}(d \hat{x}) < \infty,
\end{align}
for every nonnegative, measurable function $f : \bbr^d \to \bbr_+$ we have
\begin{align}\label{kernel-ext-1}
\int_{\bbr^d} f(x) K(dx) = \int_{\bbr^{d+1}} f(x) \hat{K}(d \hat{x}),
\end{align}
and for all $i,j = 1,\ldots,d+1$ with $i \leq d$ or $j \leq d$ we have
\begin{align}\label{kernel-ext-2}
\hat{v}^{ij} = \int_{\bbr^{d+1}} \hat{x}^i \hat{x}^j \hat{K}(d \hat{x}).
\end{align}
\end{lemma}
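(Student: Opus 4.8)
The plan is to realize $\hat K$ as the image of $K$ under a measurably chosen linear map, so that the marginal condition (\ref{kernel-ext-1}) holds automatically and the cross-moment condition (\ref{kernel-ext-2}) collapses to a single linear system. Write $b := (\hat v^{i,d+1})_{i=1,\ldots,d}$ for the off-diagonal block of $\hat v$ and $c := \hat v^{d+1,d+1}$. Since $\hat v$ takes values in $\bbs_+^{(d+1)\times(d+1)}$ with upper-left $d\times d$ block equal to $v$, I first check pointwise that $b \in \ran(v)$: for $u \in \ker v$ the quadratic form evaluated at $(tu,s) \in \bbr^{d+1}$ equals $2ts\,\la u,b\ra + s^2 c$, and nonnegativity for all $t,s$ forces $\la u,b\ra = 0$; hence $b \perp \ker v$, i.e.\ $b \in (\ker v)^\perp = \ran(v)$. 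In particular the linear system $v\beta = b$ is solvable at every $(\omega,t)$.

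Next I would produce an optional solution. Taking $\beta := v^{+} b$ with $v^{+}$ the Moore–Penrose pseudoinverse, Lemma \ref{lemma-MP-measurable} ensures that $v^{+}$, and hence the $\bbr^d$-valued process $\beta$, is optional, while $b \in \ran(v)$ gives $v\beta = v v^{+} b = b$. I then introduce the map $\Phi \colon \Omega \times \bbr_+ \times \bbr^d \to \bbr^{d+1}$, $\Phi(\omega,t,x) := (x, \la \beta(\omega,t), x\ra)$, which is $\calo \otimes \calb(\bbr^d)$-measurable, and define $\hat K(\omega,t;\cdot)$ to be the image measure of $K(\omega,t;\cdot)$ under $\Phi(\omega,t,\cdot)$. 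That $\hat K$ is again a transition kernel from $(\Omega \times \bbr_+, \calo)$ into $(\bbr^{d+1}, \calb(\bbr^{d+1}))$ follows from a monotone-class argument: for fixed $B \in \calb(\bbr^{d+1})$ the integrand $(\omega,t,x) \mapsto \bbI_B(\Phi(\omega,t,x))$ is jointly measurable, and integration against the transition kernel $K$ preserves optionality, while $B \mapsto \hat K(\omega,t;B)$ is a measure as a push-forward.

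Finally I would verify the three asserted properties. Because $\Phi$ fixes the first $d$ coordinates, the change-of-variables formula gives $\int_{\bbr^{d+1}} f(x)\,\hat K(d\hat x) = \int_{\bbr^d} f(x)\,K(dx)$ for every nonnegative measurable $f$, which is (\ref{kernel-ext-1}); specializing to $f = \bbI_{\{0\}}$ yields $\hat K(\{0\}) = K(\{0\}) = 0$, and the integrability in (\ref{kernel-ext-0}) follows from $\int |\hat x|^2 \hat K(d\hat x) = \int (|x|^2 + \la\beta,x\ra^2)\,K(dx) \le (1+|\beta|^2)\int|x|^2\,K(dx) < \infty$. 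For (\ref{kernel-ext-2}) only the cases $j = d+1$, $i \le d$ remain (the block $i,j \le d$ being already forced by (\ref{kernel-ext-1}), and $i=j=d+1$ being excluded): the change of variables gives $\int \hat x^i \hat x^{d+1}\,\hat K(d\hat x) = \int x^i \la\beta,x\ra\,K(dx) = (v\beta)^i = b^i = \hat v^{i,d+1}$.

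The main obstacle is the coupling of algebra with measurability. Conceptually, one must recognize that positive semidefiniteness of $\hat v$ places $b$ in the range of the possibly degenerate matrix $v$, so that no auxiliary dimension is needed and the push-forward of $K$ under a single linear graph map suffices; technically, one must ensure that the solution $\beta$ can be selected as an optional process and that the resulting push-forward is a bona fide transition kernel. The first point I settle by the elementary test-vector computation above, and the second by Lemma \ref{lemma-MP-measurable} together with the matrix and selection results of the appendix (cf.\ Proposition \ref{prop-matrix}); once $\beta$ is fixed, the kernel construction and the verification of (\ref{kernel-ext-0})–(\ref{kernel-ext-2}) are routine.
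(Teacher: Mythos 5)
Your proposal is correct and takes essentially the same route as the paper: the paper's proof simply invokes Lemma \ref{lemma-measure} (the pointwise pushforward of $K$ under the graph map $x \mapsto (x, \la v^{\dagger} b, x \ra_{\bbr^d})$, with solvability of $v\beta = b$ obtained from the Schur-complement criterion in Lemma \ref{lemma-matrix-pre} via Proposition \ref{prop-matrix}) together with Lemma \ref{lemma-MP-measurable} and Fubini's theorem for transition kernels, which is exactly the construction and measurability argument you carry out. Your only deviation is establishing $b \in \ran(v)$ by a direct test-vector computation from the positive semidefiniteness of $\hat{v}$ rather than citing the Schur-complement lemma; this is an equivalent and equally valid step.
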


\begin{proof}
This is a consequence of Lemma \ref{lemma-measure} and Fubini's theorem for transition kernels, where we note Lemma \ref{lemma-MP-measurable}.
\end{proof}

By adjusting $\hat{v}^{d+1,d+1}$ if necessary, we even have (\ref{kernel-ext-2}) for all $i,j = 1,\ldots,d+1$. Note that this does not affect equation (\ref{eqn-ex-ii}). We assume that $r \in L_{\loc}^1(\Gamma)$ and define the continuous process $R \in \calv$ as $R := r \bdot \Gamma$. Furthermore, we assume there exists a quasi-left-continuous local martingale $\Theta \in \calh_{\loc}^2$ with $\Theta_0 = 0$ and $\Delta \Theta < 1$ such that the $\bbr^{d+1}$-valued semimartingale $\hat{X} = (X,\Theta)$ has the modified integral characteristics $(\hat{a},\hat{c}_{\mo},\hat{K})$ with respect to $\Gamma$, where $\hat{a} = (a,0)$. Note that the latter condition is related to the martingale problem (see \cite[Sec. III.2]{Jacod-Shiryaev}), which can be solved in many situations.

\begin{proposition}\label{prop-ex-2}
Under the previous assumptions, condition (\ref{drift-for-ex}) is fulfilled.
\end{proposition}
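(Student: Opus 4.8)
The plan is to run the argument of Proposition \ref{prop-ex-1} in reverse. There the drift condition (\ref{drift-for-ex}) was used to read off the matrix equation (\ref{eqn-ex-i}) via the identity $\hat{C}_{\mo}^{i,d+1} = \la M^i,\Theta \ra$; here I would use the assumed matrix data of Proposition \ref{prop-ex-matrices} together with the hypothesised $\Theta$ realising the prescribed modified characteristics of $\hat{X} = (X,\Theta)$ to recover (\ref{drift-for-ex}). The whole proof thus reduces to matching the $(i,d+1)$-block of the modified second integral characteristic of $\hat{X}$ against $a^i - r$, and then integrating against $\Gamma$.

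First I would fix $i \in \{ 1,\ldots,d \}$ and identify the canonical decomposition of the joint process $\hat{X}$. Since each $X^i = M^i + A^i$, and since $\Theta \in \calh_{\loc}^2$ is a local martingale with $\Theta_0 = 0$ (carrying no drift, consistent with $\hat{a} = (a,0)$), the local martingale part of $\hat{X}$ is $(M^1,\ldots,M^d,\Theta)$. The process $\hat{X}$ is locally square-integrable and quasi-left-continuous, inheriting these properties from $X$ and $\Theta$, so the modified second integral characteristic of $\hat{X}$ coincides with the predictable quadratic covariation of its martingale parts; in particular, as already exploited in Proposition \ref{prop-ex-1} and in the spirit of Lemma \ref{lemma-qv-sq-qls}, the off-diagonal block gives $\hat{C}_{\mo}^{i,d+1} = \la M^i,\Theta \ra$. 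On the other hand, the assumption that $\hat{X}$ has modified integral characteristics $(\hat{a},\hat{c}_{\mo},\hat{K})$ with respect to $\Gamma$ gives $\hat{C}_{\mo}^{i,d+1} = \hat{c}_{\mo}^{i,d+1} \bdot \Gamma$, whence $\la M^i,\Theta \ra = \hat{c}_{\mo}^{i,d+1} \bdot \Gamma$.

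Finally I would plug in the matrix relations. By the normalisation $\hat{c}_{\mo} = \hat{c} + \hat{v}$ and equation (\ref{eqn-ex-ii}) we have $\hat{c}_{\mo}^{i,d+1} = \hat{c}^{i,d+1} + \hat{v}^{i,d+1} = a^i - r$ up to a $\Gamma$-null, $\bbp$-null set. Since the first modified characteristic satisfies $A^i = a^i \bdot \Gamma$ and $R := r \bdot \Gamma$ by definition, integrating this identity against $\Gamma$ yields
\begin{align*}
\la M^i,\Theta \ra = (a^i - r) \bdot \Gamma = A^i - R,
\end{align*}
which is exactly (\ref{drift-for-ex}). The only genuinely non-routine point is the characteristic-matching identity $\hat{C}_{\mo}^{i,d+1} = \la M^i,\Theta \ra$: one must be sure that, in the quasi-left-continuous locally square-integrable setting, the modified second characteristic reconstructs the \emph{full} predictable quadratic covariation of the martingale parts (the jump part supplying the compensator of $\sum \Delta M^i \, \Delta \Theta$, with no fixed-time-of-discontinuity correction by quasi-left-continuity). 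Everything downstream is bookkeeping with (\ref{c-mo-sum}) and (\ref{eqn-ex-ii}).
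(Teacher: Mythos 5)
Your proposal is correct and follows essentially the same route as the paper's proof, which is exactly the chain $A^i - R = (a^i - r) \bdot \Gamma = \hat{c}_{\mo}^{i,d+1} \bdot \Gamma = \hat{C}_{\mo}^{i,d+1} = \la M^i,\Theta \ra$; your detour through (\ref{eqn-ex-ii}) and the normalisation $\hat{c}_{\mo} = \hat{c} + \hat{v}$ is equivalent to the paper's direct use of (\ref{eqn-ex-i}), since that normalisation is part of the standing assumptions. The point you flag as ``genuinely non-routine'' is in fact immediate from the paper's definitions: the modified second characteristic of $\hat{X}$ is \emph{defined} as $\la \hat{M}^i,\hat{M}^j \ra$ for the martingale part $\hat{M} = (M,\Theta)$, and the hypothesis that $\hat{X}$ admits $(\hat{a},\hat{c}_{\mo},\hat{K})$ with respect to $\Gamma$ gives $\hat{C}_{\mo}^{i,d+1} = \hat{c}_{\mo}^{i,d+1} \bdot \Gamma$ directly.
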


\begin{proof}
Using (\ref{eqn-ex-i}), for all $i=1,\ldots,d$ we have up to an evanescent set
\begin{align*}
A^i - R = (a^i - r) \bdot \Gamma = \hat{c}_{\mo}^{i,d+1} \bdot \Gamma = \hat{C}_{\mo}^{i,d+1} = \la M^i,\Theta \ra,
\end{align*}
showing that condition (\ref{drift-for-ex}) is fulfilled.
\end{proof}

Summing up, Propositions \ref{prop-ex-1} and \ref{prop-ex-2} show that the existence of an ELMD $Z$, which is a multiplicative special semimartingale, is essentially, up to a solution to the martingale problem, equivalent to the existence of optional processes $r$ and $x$ satisfying (\ref{eqn-ex-iii}).

\begin{example}
Assume that
\begin{align*}
c_{\mo} =
\left(
\begin{array}{cc}
1 & 1
\\ 1 & 1
\end{array}
\right) \quad \text{$\Gamma$-a.e.} \quad \text{$\bbp$-a.e.}
\end{align*}
Then equation (\ref{eqn-ex-iii}) has a solution if and only if 
\begin{align*}
a \in \lin \{ \bbI_{\bbr^2} \} \quad \text{$\Gamma$-a.e.} \quad \text{$\bbp$-a.e.}
\end{align*}
where $\lin \{ \bbI_{\bbr^2} \}$ denotes the linear spaces generated by the vector $\bbI_{\bbr^2} = (1,1)$.
\end{example}

\section{Jump-diffusion models}\label{sec-jd}

In this section we study the existence of ELMDs for jump-diffusion models. Let $\lambda$ be the Lebesgue measure on $(\bbr_+,\calb(\bbr_+))$, and let $W$ be an $\bbr^m$-valued standard Wiener process for some $m \in \bbn$. Furthermore, let $\frp$ be a homogeneous Poisson random measure on some mark space $(E,\cale)$, which we assume to be a Blackwell space. Then its compensator is of the form $\frq = \lambda \otimes F$ with some $\sigma$-finite measure $F$ on the mark space $(E,\cale)$. Let $L_{\loc}^1(\lambda)$ be the space of all optional processes $a : \Omega \times \bbr_+ \to \bbr$ such that $|a| \bdot \lambda \in \calv^+$, let $L_{\loc}^2(W)$ be the space of all optional processes $\sigma : \Omega \times \bbr_+ \to \bbr^m$ such that $\| \sigma \|_{\bbr^m}^2 \bdot \lambda \in \calv^+$, and let $L_{\loc}^2(\frp)$ be the space of all optional processes $\gamma : \Omega \times \bbr_+ \times E \to \bbr$ such that $|\gamma|^2 * \frq \in \calv^+$.

\begin{remark}
In view of our upcoming results such as condition (\ref{drift-jd}) below, we emphasize that we may assume that the processes from $L_{\loc}^2(W)$ and $L_{\loc}^2(\frp)$ are optional. For example, for each $\sigma \in L_{\loc}^2(W)$ we have
\begin{align*}
\sigma = {}^p \sigma \quad \text{$\lambda$-a.e.} \quad \text{$\bbp$-a.e.}
\end{align*}
where ${}^p \sigma$ denotes the predictable projection of $\sigma$. Indeed, by Fubini's theorem we have
\begin{align*}
\bbe \bigg[ \int_0^t \sigma_s ds \bigg] &= \int_0^t \bbe[\sigma_s] ds = \int_0^t \bbe[\bbe[\sigma_s \,|\, \calf_{s-}]] ds
\\ &= \int_0^t \bbe[({}^p \sigma)_s] ds = \bbe \bigg[ \int_0^t ({}^p \sigma)_s ds \bigg] \quad \text{for each $t \in \bbr_+$.} 
\end{align*}
\end{remark}

As in the previous sections, we suppose that the market is given by
\begin{align*}
\bbs = \{ S^i : i \in I \}, 
\end{align*}
where for each $i \in I$ the asset $S^i$ is given by a stochastic exponential (\ref{stock-i}). Here we assume that for each $i \in I$ the semimartingale $X^i$ in (\ref{stock-i}) is given by
\begin{align*}
X^i = a^i \bdot \lambda + \sigma^i \bdot W + \gamma^i * (\frp - \frq)
\end{align*}
with $a^i \in L_{\loc}^1(\lambda)$, $\sigma^i \in L_{\loc}^2(W)$ and $\gamma^i \in L_{\loc}^2(\frp)$ such that $\gamma^i > -1$. Then for each $i \in I$ the semimartingale $X^i$ is locally square-integrable and quasi-left-continuous, and hence we are in the framework considered at the end of Section \ref{sec-ELMD}. In order to look for ELMDs which are multiplicative special semimartingales, we consider a multiplicative special semimartingale $Z = D B^{-1}$, where
\begin{align}\label{Z-prod-jd}
D = \cale \big( -\theta \bdot W - \psi * (\frp - \frq) \big) \quad \text{and} \quad B = \cale(r \bdot \lambda) = \exp(r \bdot \lambda)
\end{align}
with $\theta \in L_{\loc}^2(W)$ and $\psi \in L_{\loc}^2(\frp)$ such that $\psi < 1$, and an optional process $r \in L_{\loc}^1(\lambda)$. Note that for each $i \in I$ the process $\gamma^i$ can be considered as an $L^2(F)$-valued process, and analogously $\psi$ can be considered as an $L^2(F)$-valued process.

\begin{theorem}\label{thm-jd}
The following statements are equivalent:
\begin{enumerate}
\item[(i)] $Z$ is an ELMD for $\bbs$.

\item[(ii)] $Z$ is an ELMD for $\bbs \cup \{ B \}$.

\item[(iii)] For each $i \in I$ we have
\begin{align}\label{drift-jd}
\langle \sigma^i,\theta \rangle_{\bbr^m} + \langle \gamma^i, \psi \rangle_{L^2(F)} = a^i - r \quad \text{$\lambda$-a.e.} \quad \text{$\bbp$-a.e.}
\end{align}
\end{enumerate}
If the previous conditions are fulfilled, then $\calp_{\sfi,1}^+(\bbs \cup \{ B \})$ satisfies NUPBR, NAA$_1$ and NA$_1$.
\end{theorem}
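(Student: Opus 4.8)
The plan is to specialize Theorem \ref{thm-special-sq} to the jump-diffusion structure, since each $X^i$ is locally square-integrable and quasi-left-continuous, so that we are in the framework established at the end of Section \ref{sec-ELMD}.

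First I would read off the canonical decomposition $X^i = M^i + A^i$, where the local martingale part is $M^i = \sigma^i \bdot W + \gamma^i * (\frp - \frq)$ and the predictable finite variation part is $A^i = a^i \bdot \lambda$. Comparing (\ref{Z-prod-jd}) with the notation of Section \ref{sec-ELMD}, the candidate deflator $Z = \cale(-Y)$ has $\Theta = \theta \bdot W + \psi * (\frp - \frq)$ and $R = r \bdot \lambda$; since everything is locally square-integrable and quasi-left-continuous, Propositions \ref{prop-bij-R} and \ref{prop-bij-Theta} give $\widetilde{\Theta} = \Theta$ and $\widetilde{R} = R$, so that $B = \exp(R)$.

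The main step is to compute the predictable quadratic covariation $\la M^i, \Theta \ra$. Splitting $M^i$ and $\Theta$ into their continuous and purely discontinuous martingale parts and using that these two parts are orthogonal, the continuous part contributes $\la \sigma^i \bdot W, \theta \bdot W \ra = \la \sigma^i, \theta \ra_{\bbr^m} \bdot \lambda$ (because $\la W^k, W^l \ra = \delta_{kl} \lambda$), while the purely discontinuous part contributes $(\gamma^i \psi) * \frq = \la \gamma^i, \psi \ra_{L^2(F)} \bdot \lambda$ (because $\frq = \lambda \otimes F$). Hence
\begin{align*}
\la M^i, \Theta \ra = \big( \la \sigma^i, \theta \ra_{\bbr^m} + \la \gamma^i, \psi \ra_{L^2(F)} \big) \bdot \lambda.
\end{align*}

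Finally I would insert this into the criterion (\ref{drift-R-Theta-sq}) of Theorem \ref{thm-special-sq}: the condition $A^i = R + \la M^i, \Theta \ra$ becomes
\begin{align*}
a^i \bdot \lambda = \big( r + \la \sigma^i, \theta \ra_{\bbr^m} + \la \gamma^i, \psi \ra_{L^2(F)} \big) \bdot \lambda.
\end{align*}
Both sides are absolutely continuous with respect to $\lambda$, so this identity of processes up to an evanescent set is equivalent to equality of the integrands $\lambda$-a.e. $\bbp$-a.e., which is precisely (\ref{drift-jd}). The equivalence (i) $\Leftrightarrow$ (ii) and the concluding statement about NUPBR, NAA$_1$ and NA$_1$ then follow at once from Theorem \ref{thm-special-sq}. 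I expect the only delicate point to be the covariation computation for the jump parts, where quasi-left-continuity of $\frp$ is used to identify the predictable compensator of $[\gamma^i * (\frp - \frq), \psi * (\frp - \frq)]$ as $(\gamma^i \psi) * \frq$; this is routine for a homogeneous Poisson random measure with $\frq = \lambda \otimes F$.
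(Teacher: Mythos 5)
Your proposal is correct and follows exactly the paper's route: the paper proves Theorem \ref{thm-jd} as an immediate consequence of Theorem \ref{thm-special-sq}, and you have merely made explicit the details it leaves implicit, namely the identifications $\widetilde{R} = R$, $\widetilde{\Theta} = \Theta$ and the computation $\la M^i,\Theta \ra = \big( \la \sigma^i,\theta \ra_{\bbr^m} + \la \gamma^i,\psi \ra_{L^2(F)} \big) \bdot \lambda$. Your treatment of the delicate point (using quasi-left-continuity to identify the compensator of the jump covariation) is also correct.
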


\begin{proof}
This is an immediate consequence of Theorem \ref{thm-special-sq}.
\end{proof}

In the upcoming result we consider the situation where the deflator admits a measure change.

\begin{proposition}
Suppose that $D \in \calm$ with $\bbp(D_{\infty} > 0)$, and let $\bbq \approx \bbp$ be the probability measure on $(\Omega,\calf_{\infty -})$ with density process $D$ relative to $\bbp$. Then the process $W' := W + \theta \bdot \lambda$ is an $\bbr^m$-valued $\bbq$-standard Wiener process, the random measure $\frp$ is a $\bbq$-integer valued random measure with compensator given by $\frq' = (1-\psi) \cdot \frq$, and the following statements are equivalent:
\begin{enumerate}
\item[(i)] $Z$ is an ELMD for $\bbs$.

\item[(ii)] $Z$ is an ELMD for $\bbs \cup \{ B \}$.

\item[(iii)] $\bbq$ is an ELMM for $\bbs B^{-1}$.

\item[(iv)] For each $i \in I$ we have up to an evanescent set
\begin{align*}
X^i = r \bdot \lambda + \sigma^i \bdot W' + \gamma^i * (\frp - \frq'),
\end{align*}
\end{enumerate}
\end{proposition}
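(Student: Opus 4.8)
The plan is to establish the Girsanov-type transformation statements first, and then reduce the equivalence of (i)--(iv) to results already proved. Since $D = \cale(-\theta \bdot W - \psi * (\frp - \frq))$ is by assumption a genuine martingale (not merely a local martingale) with $\bbp(D_\infty > 0)$, it serves as the density process of the equivalent measure $\bbq \approx \bbp$ on $(\Omega,\calf_{\infty-})$. First I would apply Girsanov's theorem for semimartingales (see \cite[Sec.~III.3]{Jacod-Shiryaev}) to compute how $W$ and $\frp$ transform under this measure change. For the continuous martingale part, the density $D$ has continuous martingale part $-\theta \bdot W$, so the predictable quadratic covariation of $W^k$ with $-\theta \bdot W$ is $-\theta^k \bdot \lambda$; Girsanov's theorem then yields that $W' := W + \theta \bdot \lambda$ is a continuous $\bbq$-local martingale with the same quadratic variation $\lambda \cdot \Id_{\bbr^m}$ as $W$, hence by L\'evy's characterization an $\bbr^m$-valued $\bbq$-standard Wiener process. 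For the jump part, Girsanov's theorem for random measures gives that the $\bbp$-compensator $\frq$ of $\frp$ transforms into the $\bbq$-compensator $\frq' = Y \cdot \frq$, where $Y$ is the Radon-Nikodym density $M_{\mu^{\frp}}^{\bbp}(D/D_- \,|\, \widetilde{\calp})$; using the explicit form $\Delta D / D_- = -\psi$ on the support of $\mu^{\frp}$, one identifies $Y = 1 - \psi$, so that $\frq' = (1-\psi) \cdot \frq$ as claimed.

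Next I would establish the equivalence (i) $\Leftrightarrow$ (ii) $\Leftrightarrow$ (iii). The equivalences (i) $\Leftrightarrow$ (ii) follow from Theorem \ref{thm-special} (or directly Theorem \ref{thm-jd}), exactly as in Proposition \ref{prop-ELMM}. For (i) $\Leftrightarrow$ (iii), I would invoke Proposition \ref{prop-ELMM} verbatim: since $D$ is the density process of $\bbq$, the process $Z = D B^{-1}$ is an ELMD for $\bbs$ if and only if $\bbq$ is an ELMM for the discounted market $\bbs B^{-1}$. This step requires no new work beyond citing that proposition, whose hypotheses ($D \in \calm$ with $\bbp(D_\infty > 0)$) are precisely the standing assumptions here.

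The substantive step is (iii) $\Leftrightarrow$ (iv), which is really a rewriting of $X^i$ in terms of the $\bbq$-objects $W'$ and $\frq'$. Substituting $W = W' - \theta \bdot \lambda$ and reorganizing the jump term using $\frp - \frq = (\frp - \frq') + (\frq' - \frq) = (\frp - \frq') - \psi \cdot \frq$, the original representation becomes
\begin{align*}
X^i = \big( a^i - \la \sigma^i, \theta \ra_{\bbr^m} - \la \gamma^i, \psi \ra_{L^2(F)} \big) \bdot \lambda + \sigma^i \bdot W' + \gamma^i * (\frp - \frq'),
\end{align*}
where I have used $\gamma^i \psi * \frq = \la \gamma^i, \psi \ra_{L^2(F)} \bdot \lambda$ since $\frq = \lambda \otimes F$. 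Under $\bbq$ the stochastic integrals $\sigma^i \bdot W'$ and $\gamma^i * (\frp - \frq')$ are local martingales, so this is the canonical $\bbq$-decomposition of $X^i$; hence $X^i$ is a $\bbq$-local martingale precisely when its finite-variation part vanishes, i.e.\ when condition (\ref{drift-jd}) holds. Finally, since $S^i B^{-1} = S_0^i\, \cale(X^i - \widetilde{R} - [X^i,\widetilde{R}])$ with $\widetilde{R} = r \bdot \lambda$ continuous, and since $X^i$ being a $\bbq$-local martingale with the drift condition is equivalent to $S^i B^{-1}$ being a $\bbq$-local martingale (as worked out in Remark \ref{rem-Girsanov}), condition (iv) is equivalent to (iii). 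The main obstacle I anticipate is the careful bookkeeping of the jump-measure compensator under $\bbq$: one must verify that $\psi \in L_{\loc}^2(\frp)$ together with $\psi < 1$ guarantees $\frq' = (1-\psi)\cdot\frq$ is a well-defined $\bbq$-compensator and that $\gamma^i \in L_{\loc}^2(\frp)$ remains an admissible integrand against $\frp - \frq'$ under $\bbq$, but these integrability checks are routine given the standing assumptions and the local square-integrability established at the end of Section \ref{sec-ELMD}.
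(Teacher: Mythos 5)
Your proposal is correct and follows essentially the same route as the paper: Girsanov's theorem (the content of Proposition \ref{prop-Girsanov}) combined with L\'evy's theorem for the statements about $W'$ and $\frq'$, Proposition \ref{prop-ELMM} for (i) $\Leftrightarrow$ (ii) $\Leftrightarrow$ (iii), and the same substitution $W = W' - \theta \bdot \lambda$, $\frp - \frq = (\frp - \frq') - \psi \cdot \frq$ reducing (iv) to the drift condition \eqref{drift-jd} of Theorem \ref{thm-jd}. One sentence contains a slip: $X^i$ is a $\bbq$-local martingale precisely when its $\bbq$-drift vanishes, which would be \eqref{drift-jd} with $r=0$, not \eqref{drift-jd} itself; the correct pivot---which your final sentence in effect uses via Remark \ref{rem-Girsanov}---is that (iv) holds iff the $\bbq$-drift of $X^i$ equals $r \bdot \lambda$, i.e.\ iff \eqref{drift-jd} holds, equivalently iff $S^i B^{-1}$ is a $\bbq$-local martingale. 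Since your concluding deduction does not actually rely on the misstated clause, the argument stands as written once that clause is deleted.
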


\begin{proof}
The statements about $W'$ and $\frp$ follow from Proposition \ref{prop-Girsanov} combined with L\'{e}vy's theorem (see \cite[Thm. II.4.4]{Jacod-Shiryaev}).

\noindent(i) $\Leftrightarrow$ (ii) $\Leftrightarrow$ (iii): This is a consequence of Proposition \ref{prop-ELMM}.

\noindent(iii) $\Leftrightarrow$ (iv): For each $i \in I$ we have
\begin{align*}
&r \bdot \lambda + \sigma^i \bdot W' + \gamma^i * (\frp - \frq')
\\ &= r \bdot \lambda + \sigma^i \bdot W + \sigma^i \bdot (\theta \bdot \lambda) + \gamma^i * (\frp - \frq) + \gamma^i * (\frq - \frq')
\\ &= \big( r + \langle \sigma^i,\theta \rangle_{\bbr^m} + \langle \gamma^i, \psi \rangle_{L^2(F)} \big) \bdot \lambda + \sigma^i \bdot W + \gamma^i * (\frp - \frq),
\end{align*}
and hence, this equivalence follows from Theorem \ref{thm-jd}.
\end{proof}

In the next result we investigate when the corresponding ELMN is locally square-integrable, and derive its dynamics in this case.

\begin{proposition}
Suppose that the equivalent conditions from Theorem \ref{thm-jd} are fulfilled, and define the ELMN $\bar{Z} := Z^{-1}$. Then the following statements are equivalent:
\begin{enumerate}
\item[(i)] $\bar{Z}$ is a locally square-integrable semimartingale.

\item[(ii)] We have
\begin{align*}
\frac{\psi}{\sqrt{1 - \psi}}, \frac{\psi^2}{1 - \psi} \in L_{\loc}^2(\frp).
\end{align*}

\item[(iii)] We have
\begin{align*}
\frac{\psi}{\sqrt{1 - \psi}}, \frac{\psi}{1 - \psi} \in L_{\loc}^2(\frp).
\end{align*}
\end{enumerate}
If the previous conditions are fulfilled, then the process $\bar{Z}$ admits the representation
\begin{align}\label{ELMN-dynamics}
\bar{Z} = \cale \bigg( \bigg( r + \| \theta \|_{\bbr^m}^2 + \bigg\langle \psi, \frac{\psi}{1 - \psi} \bigg\rangle_{L^2(F)} \bigg) \bdot \lambda + \theta \bdot W + \frac{\psi}{1 - \psi} * (\frp - \frq) \bigg).
\end{align}
\end{proposition}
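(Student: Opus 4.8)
The plan is to obtain the assertion as a specialization of Proposition \ref{prop-inverse-Z} to the jump-diffusion model. Since each $X^i$ is locally square-integrable and quasi-left-continuous, we are precisely in the framework from the end of Section \ref{sec-ELMD}, and comparing $D = \cale(-\Theta)$ with (\ref{Z-prod-jd}) identifies $\Theta = \theta \bdot W + \psi * (\frp - \frq)$ and $R = r \bdot \lambda$. Thus $\Theta^c = \theta \bdot W$, whereas the jumps of $\Theta$ are carried entirely by $\frp$: each atom $(t,x)$ of $\frp$ produces a jump $\Delta \Theta_t = \psi(t,x)$. Hence the jump measure $\mu^{\Theta}$ is the image of $\frp$ under the map $(t,x) \mapsto (t,\psi(t,x))$, and its compensator $\nu$ is the image of $\frq$ under the same map. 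Writing $\vartheta$ for the canonical jump coordinate appearing in Proposition \ref{prop-inverse-Z} (denoted $\theta$ there), the image-measure correspondence gives $g(\vartheta) * \nu = g(\psi) * \frq$ for every measurable $g$ with $g(0) = 0$, and likewise with $\mu^{\Theta}$ in place of $\nu$ and $\frp$ in place of $\frq$.

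First I would use this correspondence together with the defining equivalence $f \in L_{\loc}^2(\frp) \iff |f|^2 * \frq \in \cala_{\loc}^+$ to translate the integrability conditions. Concretely, the three processes occurring in conditions (ii) and (iii) of Proposition \ref{prop-inverse-Z} become
\begin{align*}
\frac{\vartheta^2}{1-\vartheta} * \nu = \frac{\psi^2}{1-\psi} * \frq, \qquad \Big( \frac{\vartheta^2}{1-\vartheta} \Big)^2 * \nu = \Big( \frac{\psi^2}{1-\psi} \Big)^2 * \frq, \qquad \Big( \frac{\vartheta}{1-\vartheta} \Big)^2 * \nu = \Big( \frac{\psi}{1-\psi} \Big)^2 * \frq,
\end{align*}
so that requiring these to lie in $\cala_{\loc}^+$ amounts exactly to requiring $\psi/\sqrt{1-\psi}$, $\psi^2/(1-\psi)$ and $\psi/(1-\psi)$ to lie in $L_{\loc}^2(\frp)$. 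This turns conditions (ii) and (iii) of Proposition \ref{prop-inverse-Z} into conditions (ii) and (iii) of the present statement and, through that proposition, establishes their equivalence with (i).

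It then remains to compute the representation. From Proposition \ref{prop-inverse-Z} we have $\bar{N} = \Theta + \frac{\vartheta^2}{1-\vartheta} * (\mu^{\Theta} - \nu) = \Theta + \frac{\psi^2}{1-\psi} * (\frp - \frq)$; substituting $\Theta = \theta \bdot W + \psi * (\frp - \frq)$ and using the identity $\psi + \frac{\psi^2}{1-\psi} = \frac{\psi}{1-\psi}$ gives $\bar{N} = \theta \bdot W + \frac{\psi}{1-\psi} * (\frp - \frq)$. Next I would compute $\la \Theta,\bar{N} \ra$ by splitting into continuous and purely discontinuous parts: the continuous parts contribute $\la \theta \bdot W, \theta \bdot W \ra = \| \theta \|_{\bbr^m}^2 \bdot \lambda$, the mixed terms vanish by orthogonality of the continuous and the purely discontinuous martingale parts, and the compensated jump integrals contribute $\big( \psi \cdot \frac{\psi}{1-\psi} \big) * \frq = \la \psi, \frac{\psi}{1-\psi} \ra_{L^2(F)} \bdot \lambda$. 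Adding $R = r \bdot \lambda$ and inserting into $\bar{Z} = \cale(R + \la \Theta,\bar{N} \ra + \bar{N})$ yields precisely (\ref{ELMN-dynamics}).

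I expect the main obstacle to be the rigorous bookkeeping of the image-measure correspondence between $(\mu^{\Theta},\nu)$ and $(\frp,\frq)$: one must verify that the $L_{\loc}^2(\frp)$ membership conditions match the $\cala_{\loc}^+$ conditions even when the jump-size map $x \mapsto \psi(t,x)$ fails to be injective (which is harmless here, since all integrands vanish at $0$), and that quasi-left-continuity justifies the covariation formula $(fg) * \frq$ for the compensated integrals. The algebraic simplifications and the final substitution are then routine.
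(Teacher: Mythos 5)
Your proof is correct and takes essentially the same approach as the paper, whose entire proof reads ``This is an immediate consequence of Proposition \ref{prop-inverse-Z}.'' Your explicit identification of $\Theta = \theta \bdot W + \psi * (\frp - \frq)$, the image-measure correspondence between $(\mu^{\Theta},\nu)$ and $(\frp,\frq)$, and the simplification $\psi + \frac{\psi^2}{1-\psi} = \frac{\psi}{1-\psi}$ merely spell out the specialization the paper leaves implicit, and all three translated integrability conditions and the final representation check out.
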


\begin{proof}
This is an immediate consequence of Proposition \ref{prop-inverse-Z}.
\end{proof}

\begin{remark}
The representation (\ref{ELMN-dynamics}) has been derived in earlier works as the structure of a growth optimal portfolio; see, for example, the articles \cite{Christensen-Platen}, \cite{B-N-Platen} and \cite{Platen-Tappe}.
\end{remark}

Now, we consider the situation with finitely many assets; that is $I = \{ 1,\ldots,d \}$ for some $d \in \bbn$. We define the $\bbr^d$-valued semimartingale $X := (X^1,\ldots,X^d)$ and the $\bbr^d$-valued process $a := (a^1,\ldots,a^d)$. Furthermore, we define the optional $\bbs_+^{d \times d}$-valued process as
\begin{align*}
c^{ij} = \la \sigma^i,\sigma^j \ra_{\bbr^m} \quad \text{for all $i,j = 1,\ldots,d$.}
\end{align*}
We define the transition kernel $K$ from $(\Omega \times \bbr_+, \calo)$ into $(\bbr^d,\calb(\bbr^d))$ as the image measure $K := F \circ \gamma$, and we define the $\bbs_+^{d \times d}$-valued process $v$ as
\begin{align*}
v^{ij} := \la \gamma^i,\gamma^j \ra_{L^2(F)} \quad \text{for all $i,j = 1,\ldots,d$.}
\end{align*}
Furthermore, we define the $\bbs_+^{d \times d}$-valued process
\begin{align*}
c_{\mo} := c + v.
\end{align*}
The following obvious auxiliary result shows that we are in the framework of Section~\ref{sec-existence}.

\begin{lemma}
The following statements are true:
\begin{enumerate}
\item The triplet $(a,c,K)$ consists of integral characteristics of $X$ with respect to $\lambda$.

\item The triplet $(a,c_{\mo},K)$ consists of modified integral characteristics of $X$ with respect to $\lambda$.

\item The process $v$ is a purely discontinuous second integral characteristic of $X$ with respect to $\lambda$.
\end{enumerate}
\end{lemma}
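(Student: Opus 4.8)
The plan is to verify directly that the three stated triplets satisfy the defining relations for (modified) integral characteristics with respect to $\lambda$, as provided by Propositions \ref{prop-ex-int-char} and \ref{prop-int-char-eq}; since the assertion is a mere bookkeeping of the stochastic-integral representation, each item reduces to reading off a characteristic. First I would record the canonical decomposition $X^i = M^i + A^i$. Because $\sigma^i \bdot W$ and the compensated integral $\gamma^i * (\frp - \frq)$ are both local martingales, the predictable finite variation part is $A^i = a^i \bdot \lambda$, so that $a = (a^1,\ldots,a^d)$ is the first integral characteristic and $M^i = \sigma^i \bdot W + \gamma^i * (\frp - \frq)$.

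Next, for the second characteristic I would use that the continuous local martingale part is $X^{i,c} = \sigma^i \bdot W$, whence $\la X^{i,c}, X^{j,c} \ra = \la \sigma^i,\sigma^j \ra_{\bbr^m} \bdot \lambda = c^{ij} \bdot \lambda$, identifying $c$. For the jump part I would observe that, $X$ being quasi-left-continuous, its $\bbr^d$-valued jumps are precisely the values $\gamma(\cdot,x)$ taken at the atoms of $\frp$; hence $\mu^X$ is the image of $\frp$ under $(t,x) \mapsto (t,\gamma(t,x))$, and its compensator is the corresponding image of $\frq = \lambda \otimes F$, namely $\lambda \otimes K$ with $K = F \circ \gamma$. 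This establishes assertion (1). For (2) I would compute the modified second characteristic as the predictable quadratic covariation of the martingale parts, using the orthogonality of the continuous part $\sigma^i \bdot W$ and the purely discontinuous part $\gamma^i * (\frp - \frq)$ together with $\frq = \lambda \otimes F$:
\begin{align*}
\la M^i, M^j \ra = \la \sigma^i,\sigma^j \ra_{\bbr^m} \bdot \lambda + (\gamma^i \gamma^j) * \frq = \big( c^{ij} + \la \gamma^i,\gamma^j \ra_{L^2(F)} \big) \bdot \lambda = c_{\mo}^{ij} \bdot \lambda,
\end{align*}
so that $(a,c_{\mo},K)$ are modified integral characteristics. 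Assertion (3) then follows by isolating the purely discontinuous martingale parts $(M^i)^d = \gamma^i * (\frp - \frq)$, whose predictable quadratic covariation is $(\gamma^i \gamma^j) * \frq = v^{ij} \bdot \lambda$; the change of variables afforded by $K = F \circ \gamma$ gives $v^{ij} = \int_{\bbr^d} x^i x^j \, K(dx)$, which is exactly the form required of a purely discontinuous second integral characteristic and is consistent with the decomposition $c_{\mo} = c + v$ of (\ref{c-mo-sum}).

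The only point demanding even mild care is the identification of the compensator of $\mu^X$ with the image measure $\lambda \otimes K$ and the matching of $\la \gamma^i,\gamma^j \ra_{L^2(F)}$ with $\int_{\bbr^d} x^i x^j \, K(dx)$; everything else is a direct reading of the representation of $X^i$. As the statement itself signals by calling the result obvious, no genuine obstacle arises, and the proof consists entirely of substituting the explicit driving terms into the definitions recalled from the appendix.
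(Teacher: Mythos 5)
Your verification is correct and is exactly the direct check the paper intends: the paper states this lemma without proof, introducing it as an ``obvious auxiliary result,'' and your argument simply spells out the canonical decomposition $A^i = a^i \bdot \lambda$, the identifications $X^{i,c} = \sigma^i \bdot W$ and $\la \gamma^i * (\frp - \frq), \gamma^j * (\frp - \frq) \ra = (\gamma^i \gamma^j) * \frq$ (valid here since $\frq = \lambda \otimes F$ has no time-atoms), and the image-measure description of the compensator of $\mu^X$. The one convention worth recording at the point of ``mild care'' you flag is that, since $\mu^X$ registers only nonzero jumps, the kernel $K = F \circ \gamma$ must be read with $K(\{0\}) := 0$ (equivalently, as the image of $F$ under $\gamma$ restricted to $\{ \gamma \neq 0 \}$), which is required by the definition of integral characteristics and affects none of the integrals $\int_{\bbr^d} x^i x^j \, K(dx)$.
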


By identification, we may regard $\sigma$ as the $L(\bbr^m,\bbr^d)$-valued process
\begin{align}\label{ident-1}
\sigma = \big( \la \sigma^i,\cdot \ra_{\bbr^m} \big)_{i=1,\ldots,d},
\end{align}
where we note that $L(\bbr^m,\bbr^d) \cong \bbr^{d \times m}$. 

Similarly, we may regard $\gamma$ as the $L(L^2(F),\bbr^d)$-valued process
\begin{align}\label{ident-2}
\gamma = \big( \la \gamma^i,\cdot \ra_{L^2(F)} \big)_{i=1,\ldots,d}.
\end{align}

\begin{proposition}\label{prop-jd}
The following statements are equivalent:
\begin{enumerate}
\item[(i)] There exist an optional $\bbr$-valued process $r$, an optional $\bbr^m$-valued processes $\theta$ and an optional $L^2(F)$-valued process $\psi$ such that for each $i=1,\ldots,d$ we have (\ref{drift-jd}).

\item[(ii)] There exist an optional $\bbr$-valued process $r$, an optional $\bbr^m$-valued process $\theta$ and an optional $L^2(F)$-valued process $\psi$ such that
\begin{align}\label{jd-ex-1}
\sigma \theta + \gamma \psi = a - r \bbI_{\bbr^d} \quad \text{$\lambda$-a.e.} \quad \text{$\bbp$-a.e.}
\end{align}

\item[(iii)] There exist an optional $\bbr$-valued process $r$ and optional $\bbr^d$-valued processes $x$ and $y$ such that
\begin{align}
c x + vy = a - r \bbI_{\bbr^d} \quad \text{$\lambda$-a.e.} \quad \text{$\bbp$-a.e.}
\end{align}

\item[(iv)] There exist an optional $\bbr$-valued process $r$ and a optional $\bbr^d$-valued process $x$ such that
\begin{align}\label{eq-c-mod-jd}
c_{\mo} x = a - r \bbI_{\bbr^d} \quad \text{$\lambda$-a.e.} \quad \text{$\bbp$-a.e.}
\end{align}

\end{enumerate}
\end{proposition}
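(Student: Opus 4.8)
The plan is to reduce the four conditions to the matrix criteria already established in Proposition \ref{prop-ex-matrices}, the only genuinely new ingredient being the passage between the operator formulation in (ii) and the matrix formulations in (iii) and (iv), which I would carry out via the identities $c = \sigma \sigma^*$ and $v = \gamma \gamma^*$. I would begin by disposing of (i) $\Leftrightarrow$ (ii), which is purely notational: collecting the $d$ scalar equations (\ref{drift-jd}) for $i = 1,\ldots,d$ into a single $\bbr^d$-valued identity and reading off the identifications (\ref{ident-1}) and (\ref{ident-2}), the left-hand sides become $(\la \sigma^i,\theta \ra_{\bbr^m})_i = \sigma\theta$ and $(\la \gamma^i,\psi \ra_{L^2(F)})_i = \gamma\psi$, so (\ref{drift-jd}) for all $i$ is literally (\ref{jd-ex-1}).

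For (iii) $\Leftrightarrow$ (iv) I would appeal directly to Proposition \ref{prop-ex-matrices}. The lemma preceding the statement shows that $(a,c,K)$ are integral characteristics, $(a,c_{\mo},K)$ modified integral characteristics, and $v$ a purely discontinuous second characteristic of $X$, all with respect to $\Gamma = \lambda$; thus we are in the framework of Section \ref{sec-existence}, and the present conditions (iii) and (iv) are exactly conditions (\ref{eqn-ex-iv}) and (\ref{eqn-ex-iii}) of that proposition. At the pointwise level this rests on the elementary fact that for symmetric positive semidefinite matrices $\ker(c_{\mo}) = \ker c \cap \ker v$ (since $x^{\top} c_{\mo} x = x^{\top} c x + x^{\top} v x$ is a sum of nonnegative terms), whence $\ran c_{\mo} = (\ker c \cap \ker v)^{\perp} = \ran c + \ran v$.

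It remains to link (ii) to (iii). The key algebraic observations are that, with $\sigma$ and $\gamma$ regarded as the operators (\ref{ident-1}) and (\ref{ident-2}), one has $c = \sigma\sigma^*$ and $v = \gamma\gamma^*$, and, since the codomain $\bbr^d$ is finite-dimensional, $\ran\sigma = \ran(\sigma\sigma^*) = \ran c$ and $\ran\gamma = \ran(\gamma\gamma^*) = \ran v$. Consequently the attainable right-hand sides coincide, namely $\{ \sigma\theta + \gamma\psi \} = \ran\sigma + \ran\gamma = \ran c + \ran v = \{ cx + vy \}$, so pointwise (ii) is solvable if and only if (iii) is. The implication (iii) $\Rightarrow$ (ii) is in fact constructive and preserves measurability: given optional $x,y$, set $\theta := \sigma^* x$ and $\psi := \gamma^* y$, which are optional of the required types and satisfy $\sigma\theta + \gamma\psi = cx + vy$.

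The main obstacle is not the pointwise linear algebra, which is elementary, but the requirement that the selected processes be optional. For the converse direction (ii) $\Rightarrow$ (iii) one only knows pointwise that $a - r\bbI_{\bbr^d} \in \ran c + \ran v$, and one must produce $x,y$ (equivalently an $x$ solving (\ref{eq-c-mod-jd})) in an optional fashion. This measurable-selection step I would settle exactly as in the proof of Proposition \ref{prop-ex-matrices}, invoking Lemma \ref{lemma-MP-measurable} together with the selection results Lemma \ref{lemma-LGS} and Proposition \ref{prop-matrix}; equivalently, the range identities above translate the positive semidefinite completion conditions (\ref{eqn-ex-i})--(\ref{eqn-ex-ii}) into the existence of $\theta$ and $\psi$, so that the whole chain is already subsumed by Proposition \ref{prop-matrices-main}.
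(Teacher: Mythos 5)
Your proposal is correct and takes essentially the same route as the paper: (i) $\Leftrightarrow$ (ii) by the identifications (\ref{ident-1})--(\ref{ident-2}), the passage between the operator equation and the matrix equations via $c = \sigma\sigma^*$ and $v = \gamma\gamma^*$ (your range identity $\ran T = \ran(TT^*)$ is precisely the content of Lemmas \ref{lemma-solution-kernel}--\ref{lemma-adjungiert}, which the paper cites for (ii) $\Leftrightarrow$ (iii)), and Proposition \ref{prop-ex-matrices} for (iii) $\Leftrightarrow$ (iv). Your treatment of optionality through the Moore-Penrose machinery (Lemma \ref{lemma-MP-measurable}, Lemma \ref{lemma-LGS}, Proposition \ref{prop-matrix}) likewise coincides with the paper's.
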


\begin{proof}
(i) $\Leftrightarrow$ (ii): Using the identifications (\ref{ident-1}) and (\ref{ident-2}), this equivalence is obvious.

\noindent(ii) $\Leftrightarrow$ (iii): Let $T \in L(\bbr^m \oplus_2 L^2(F),\bbr^d)$ be the continuous linear operator given by
\begin{align*}
T(\theta,\psi) := \sigma \theta + \gamma \psi, \quad \theta \in \bbr^m \text{ and } \psi \in L^2(F).
\end{align*}
Then the linear equation (\ref{jd-ex-1}) can equivalently be written as
\begin{align*}
T(\theta,\psi) = a - r \bbI_{\bbr^d} \quad \text{$\lambda$-a.e.} \quad \text{$\bbp$-a.e.}
\end{align*}
and hence this equivalence follows from Lemmas \ref{lemma-solution-kernel}--\ref{lemma-adjungiert}.

\noindent(iii) $\Leftrightarrow$ (iv): This equivalence follows from Proposition \ref{prop-ex-matrices}.
\end{proof}

Note that Theorem \ref{thm-jd} and Proposition \ref{prop-jd} have the following consequences. If an ELMD $Z = D B^{-1}$ of the form (\ref{Z-prod-jd}) exists, then the linear equation (\ref{eq-c-mod-jd}) has a solution $(r,x)$. Conversely, if the linear equation (\ref{eq-c-mod-jd}) has a solution, then -- subject to the conditions $r \in L_{\loc}^1(\lambda)$, $\theta \in L_{\loc}^2(W)$ and $\psi \in L_{\loc}^2(\frp)$ with $\psi < 1$ -- an ELMD $Z = D B^{-1}$ of the form (\ref{Z-prod-jd}) exists as well. This is in accordance with the findings of Section \ref{sec-existence}, but here we do not have to deal with the martingale problem.

\section{Conclusion}\label{sec-conclusion}

In this paper we have provided a systematic investigation on the existence of ELMDs, which are multiplicative special semimartingales, for a given market $\bbs$. There are connected questions which give rise to future research projects. One issue is the tradeability of the deflator $Z = D B^{-1}$; that is, whether the corresponding ELMN $\bar{Z} = Z^{-1}$ can be realized as a self-financing portfolio constructed in the extended market $\bbs \cup \{ B \}$. Even if it cannot be replicated, it arises the question how a central bank can approximate the ELMN $\bar{Z}$, which gives rise to diversification.

\begin{appendix}

\section{Semimartingales}\label{sec-processes}

In this appendix we provide the required results about semimartingales.

\begin{definition}
An adapted c\`{a}dl\`{a}g process $X$ is called \emph{quasi-left-continuous} if $\Delta X_T = 0$ almost surely on $\{ T < \infty \}$ for every predictable time $T$.
\end{definition}

\begin{lemma}\label{lemma-disjoint-jumps}
Let $X$ and $Y$ be two adapted c\`{a}dl\`{a}g processes such that $X$ is predictable and $Y$ is quasi-left-continuous. Then we have $\{ \Delta X \neq 0 \} \cap \{ \Delta Y \neq 0 \} = \emptyset$ up to an evanescent set.
\end{lemma}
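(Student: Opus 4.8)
The plan is to use that the jumps of a predictable c\`{a}dl\`{a}g process occur only at predictable times, together with the hypothesis that $Y$ is quasi-left-continuous. First I would note that since $X$ is predictable and c\`{a}dl\`{a}g, its left-limit process $X_-$ is left-continuous and adapted, hence predictable, so that $\Delta X = X - X_-$ is predictable; consequently the random set $\{ \Delta X \neq 0 \}$ is a predictable set, and it is thin because $X$ is c\`{a}dl\`{a}g.

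The central step is then to exhaust this set by graphs of predictable times: by the standard decomposition of a predictable thin set (see \cite[I.2.24]{Jacod-Shiryaev}), there is a sequence $(T_n)_{n \in \bbn}$ of predictable times with
\begin{align*}
\{ \Delta X \neq 0 \} \subset \bigcup_{n \in \bbn} \IL T_n \IR .
\end{align*}
I expect this to be the main obstacle, in the sense that it is the one nontrivial input; once it is in place the rest of the argument is immediate.

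Finally, for each $n \in \bbn$ the quasi-left-continuity of $Y$ yields $\Delta Y_{T_n} = 0$ almost surely on $\{ T_n < \infty \}$, which says precisely that $\IL T_n \IR \cap \{ \Delta Y \neq 0 \}$ is evanescent. Intersecting the inclusion above with $\{ \Delta Y \neq 0 \}$ gives
\begin{align*}
\{ \Delta X \neq 0 \} \cap \{ \Delta Y \neq 0 \} \subset \bigcup_{n \in \bbn} \big( \IL T_n \IR \cap \{ \Delta Y \neq 0 \} \big),
\end{align*}
a countable union of evanescent sets, hence itself evanescent. This proves the claim.
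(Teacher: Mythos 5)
Your proof is correct. It shares its first half with the paper's argument: both invoke \cite[Prop. I.2.24]{Jacod-Shiryaev} to exhaust $\{ \Delta X \neq 0 \}$ by graphs of predictable times $(T_n)_{n \in \bbn}$ (your preliminary observation that $\Delta X$ is predictable is harmless but unnecessary, since I.2.24 applies directly to the c\`{a}dl\`{a}g predictable process $X$). The two arguments then diverge in how they handle $Y$. The paper additionally invokes \cite[Prop. I.2.26]{Jacod-Shiryaev} to exhaust $\{ \Delta Y \neq 0 \}$ by graphs of \emph{totally inaccessible} stopping times $(S_m)_{m \in \bbn}$, and concludes because each intersection $\IL T_n \IR \cap \IL S_m \IR$ is evanescent (a predictable time and a totally inaccessible time almost surely never coincide on $\{ \cdot < \infty \}$). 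You instead apply the definition of quasi-left-continuity directly: for each predictable time $T_n$ one has $\Delta Y_{T_n} = 0$ almost surely on $\{ T_n < \infty \}$, so $\IL T_n \IR \cap \{ \Delta Y \neq 0 \}$ is evanescent, and a countable union of evanescent sets is evanescent. Your route is slightly more economical, needing only one exhaustion and no notion of totally inaccessible times, and it is exactly matched to the definition of quasi-left-continuity adopted in this paper; the paper's route, via the characterization in I.2.26, makes the symmetry between the two jump sets explicit and would be the natural one if quasi-left-continuity had instead been defined through the inaccessibility of the jump times. Both are complete proofs of the lemma.
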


\begin{proof}
Since $X$ is predictable, by \cite[Prop. I.2.24]{Jacod-Shiryaev} there exists an exhausting sequence $(S_n)_{n \in \bbn}$ of predictable times such that
\begin{align*}
\{ \Delta X \neq 0 \} = \bigcup_{n \in \bbn} \IL S_n \IR.
\end{align*}
Since $Y$ is quasi-left-continuous, by \cite[Prop. I.2.26]{Jacod-Shiryaev} there exists an exhausting sequence $(T_m)_{m \in \bbn}$ of totally inaccessible stopping times such that
\begin{align*}
\{ \Delta Y \neq 0 \} = \bigcup_{m \in \bbn} \IL T_m \IR.
\end{align*}
Therefore, we obtain up to an evanescent set
\begin{align*}
\{ \Delta X \neq 0 \} \cap \{ \Delta Y \neq 0 \} &= \bigg( \bigcup_{n \in \bbn} \IL S_n \IR \bigg) \cap \bigg( \bigcup_{m \in \bbn} \IL T_m \IR \bigg)
\\ &= \bigcup_{n,m \in \bbn} \big( \IL S_n \IR \cap \IL T_m \IR \big) = \emptyset,
\end{align*}
completing the proof.
\end{proof}

\begin{definition}
A semimartingale $X$ is called a \emph{special semimartingale} if there exists a semimartingale decomposition $X = X_0 + M + A$ such that $A$ is predictable.
\end{definition}

Let $X$ be a special semimartingale. Then the decomposition $X = X_0 + M + A$ with a predictable process $A \in \calv$ is unique up to an evanescent set (see \cite[I.3.16]{Jacod-Shiryaev}) and we call $X = X_0 + M + A$ the \emph{canonical decomposition} of $X$.

\begin{definition}
A semimartingale $X$ is called \emph{locally square-integrable} if it is a special semimartingale with canonical decomposition $X = X_0 + M + A$ satisfying $M \in \calh_{\loc}^2$.
\end{definition}

\begin{lemma}\label{lemma-semi-qls}
For a special semimartingale $X = X_0 + M + A$ the following statements are equivalent:
\begin{enumerate}
\item[(i)] $X$ is quasi-left-continuous.

\item[(ii)] $M$ is quasi-left-continuous and $A$ is continuous.
\end{enumerate}
\end{lemma}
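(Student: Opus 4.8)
The plan is to prove the two implications separately, with the substantive work lying in (i) $\Rightarrow$ (ii).

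For (ii) $\Rightarrow$ (i), I would argue directly. Let $T$ be an arbitrary predictable time. Since $\Delta X_T = \Delta M_T + \Delta A_T$, and since $A$ is continuous (so $\Delta A_T = 0$) while $M$ is quasi-left-continuous (so $\Delta M_T = 0$ almost surely on $\{ T < \infty \}$), one reads off $\Delta X_T = 0$ almost surely on $\{ T < \infty \}$. As $T$ was arbitrary, $X$ is quasi-left-continuous.

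For (i) $\Rightarrow$ (ii) I would first establish that $A$ is continuous, and then deduce quasi-left-continuity of $M$. Since $A \in \calv$ is predictable, its jump process $\Delta A$ is predictable, and by \cite[Prop. I.2.24]{Jacod-Shiryaev} there is an exhausting sequence $(S_n)_{n \in \bbn}$ of predictable times with $\{ \Delta A \neq 0 \} = \bigcup_{n \in \bbn} \IL S_n \IR$, exactly as in the proof of Lemma \ref{lemma-disjoint-jumps}. The key analytic input is that the predictable projection of the jumps of a local martingale vanishes, that is ${}^p(\Delta M) = 0$; after localizing so that $M$ is a uniformly integrable martingale, this follows from $\bbe[ \Delta M_T \bbI_{\{ T < \infty \}} \mid \calf_{T-} ] = 0$ for predictable $T$, which one obtains by applying optional sampling along an announcing sequence $T_k \uparrow T$ and passing to the limit using $\calf_{T-} = \bigvee_{k} \calf_{T_k}$. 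Taking predictable projections in $\Delta X = \Delta M + \Delta A$, and noting that $\Delta A$ is already predictable while ${}^p(\Delta M) = 0$, yields ${}^p(\Delta X) = \Delta A$. Evaluating at each $S_n$ and invoking quasi-left-continuity of $X$ gives $\Delta A_{S_n} = {}^p(\Delta X)_{S_n} = \bbe[ \Delta X_{S_n} \mid \calf_{S_n-} ] = 0$ almost surely on $\{ S_n < \infty \}$. Since the $S_n$ exhaust $\{ \Delta A \neq 0 \}$, this forces $\Delta A = 0$ up to an evanescent set, so $A$ is continuous. Finally, continuity of $A$ gives $\Delta M = \Delta X$, whence $M$ inherits quasi-left-continuity directly from $X$.

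The main obstacle is the identity ${}^p(\Delta M) = 0$ together with the integrability required to form the conditional expectations above; both are handled by the standard localization reducing $M$ to a uniformly integrable (or $\calh^1$) martingale, after which the announcing-sequence argument applies. Everything else is a routine manipulation of the canonical decomposition and of the exhausting sequence of predictable times supporting $\Delta A$.
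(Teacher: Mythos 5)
Your proof is correct and takes essentially the same route as the paper: the paper's argument for (i) $\Rightarrow$ (ii) rests on the identity ${}^p(\Delta X) = \Delta A$ (cited as \cite[Cor.~I.2.31]{Jacod-Shiryaev}) together with the fact that a predictable quasi-left-continuous process is continuous (\cite[Prop.~I.2.18.b]{Jacod-Shiryaev}), and you merely unpack these citations, re-deriving ${}^p(\Delta M) = 0$ via localization and announcing sequences and replacing the second citation by the exhausting-sequence argument of \cite[Prop.~I.2.24]{Jacod-Shiryaev} applied to $\{ \Delta A \neq 0 \}$. The converse implication (ii) $\Rightarrow$ (i), which the paper dismisses as obvious, you spell out in the same direct way, so there is no substantive difference in approach.
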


\begin{proof}
(i) $\Rightarrow$ (ii): By \cite[Cor. I.2.31]{Jacod-Shiryaev} we have ${}^p(\Delta X) = \Delta A$. Since $X$ is quasi-left-continuous, we have
\begin{align*}
\Delta A_T = \bbe[ \Delta X_T \,|\, \calf_{T-} ] = 0
\end{align*}
almost surely on $\{ T < \infty \}$ for every predictable time $T$. Therefore, $A$ is quasi-left-continuous, and hence $M$ is quasi-left-continuous as well. Since $A$ is also predictable, by \cite[Prop. I.2.18.b]{Jacod-Shiryaev} we deduce that $A$ is continuous.

\noindent (ii) $\Rightarrow$ (i): This implication is obvious.
\end{proof}

\section{The quadratic variation}

In this appendix we provide the required results about the quadratic variation of semimartingales.

\begin{lemma}\label{lemma-cov-R-Theta}
Let $M \in \calm_{\loc}$ be a local martingale, and let $A \in \calv$ be a predictable process. Then the following statements are true:
\begin{enumerate}
\item We have $[M,A] \in \calm_{\loc}$.

\item We have $[M,A] \in \cala_{\loc}$ and $[M,A]^p = 0$
\end{enumerate}
\end{lemma}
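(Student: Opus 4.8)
The goal is to prove Lemma \ref{lemma-cov-R-Theta}: for a local martingale $M \in \calm_{\loc}$ and a predictable process $A \in \calv$, we have $[M,A] \in \calm_{\loc}$, and moreover $[M,A] \in \cala_{\loc}$ with $[M,A]^p = 0$.

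The plan is to exploit the fact that the quadratic covariation of a local martingale with a finite-variation process reduces to a sum of jumps, since $A$ has no continuous martingale part. First I would recall the standard identity
\begin{align*}
[M,A] = \sum_{s \leq \cdot} \Delta M_s \, \Delta A_s,
\end{align*}
which holds because $A \in \calv$ implies $[M,A]^c = 0$, so $[M,A]$ is the purely discontinuous part given by the sum over jumps. This is where the predictability of $A$ becomes the crucial ingredient.

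For part (1), the key observation is that $[M,A]$ is itself an integral of a predictable process against the local martingale $M$. Indeed, since $A$ is predictable and of locally finite variation, one can write $[M,A]$ as a stochastic integral: using integration by parts, $[M,A] = MA - M_- \bdot A - A_- \bdot M$ (after adjusting for $M_0 = A_0 = 0$), and the term $MA - M_- \bdot A$ can be identified with $A_- \bdot M$ up to the quadratic covariation, so that $[M,A]$ is expressible through predictable integrands against $M$. The cleaner route, which I would take, is to invoke that $\Delta A$ is predictable (since $A$ is predictable) and to identify $[M,A] = \Delta A \bdot M$ as the stochastic integral of the predictable, locally bounded-type integrand $\Delta A$ with respect to $M$; as a stochastic integral against a local martingale, this lies in $\calm_{\loc}$. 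The main obstacle here is the integrability/local-boundedness needed to guarantee that the stochastic integral $\Delta A \bdot M$ is well-defined and remains a genuine local martingale — I expect this to be handled by a localization argument using the locally finite variation of $A$, stopping at times that keep $\sum |\Delta A_s|$ bounded.

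For part (2), once $[M,A] \in \calm_{\loc}$ is established, I would argue that it also lies in $\cala_{\loc}$: being a sum of jumps of a local martingale against a finite-variation process, it inherits locally finite variation from $A$, and local integrability follows from the local martingale property combined with \cite[Prop. I.4.23 or I.3.10]{Jacod-Shiryaev}-type results identifying $\calm_{\loc} \cap \calv \subset \cala_{\loc}$. Finally, the predictable compensator satisfies $[M,A]^p = 0$ precisely because $[M,A]$ is already a local martingale in $\cala_{\loc}$: the compensator of a local martingale of locally finite variation is zero by the uniqueness of the canonical decomposition (a local martingale in $\calv$ that is also special has predictable finite-variation part equal to zero). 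I would cite \cite[I.3.16]{Jacod-Shiryaev} for this uniqueness. The subtle point to get right is ensuring $[M,A]$ is genuinely a \emph{local} martingale rather than merely a local martingale up to compensation, which is exactly what part (1) delivers, making the compensator vanish.
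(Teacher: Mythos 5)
Your proposal is correct and is essentially the paper's proof unfolded: the paper disposes of (1) by citing Prop.\ I.4.49.c of Jacod--Shiryaev, whose underlying argument is exactly your identification $[M,A] = \Delta A \bdot M$ using predictability of $\Delta A$ together with the localization you sketch (needed because $\Delta A$ is not locally bounded in general, though ${\rm Var}(A)$ is locally bounded when $A \in \calv$ is predictable), and of (2) by citing Lemmas I.3.11 and I.3.22 of the same book, which are precisely your two observations that $\calm_{\loc} \cap \calv \subset \cala_{\loc}$ and that uniqueness of the canonical decomposition forces $[M,A]^p = 0$ (your reference I.3.16 is the right one for the latter; I.3.11 rather than I.3.10 or I.4.23 is the standard reference for the former). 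So the mathematics coincides with the paper's route; you have merely written out the proofs of the cited textbook results.
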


\begin{proof}
The first statement follows from \cite[Prop. I.4.49.c]{Jacod-Shiryaev}, and the second statement is a consequence of \cite[Lemma I.3.11 and I.3.22]{Jacod-Shiryaev}.
\end{proof}

\begin{lemma}\label{lemma-cov-3-fold}
Let $M,N \in \calm_{\loc}$ be local martingales, and let $A \in \calv$ be a predictable process. Then the following statements are true:
\begin{enumerate}
\item We have $[M,[N,A]] \in \calm_{\loc}$.

\item We have $[M,[N,A]] \in \cala_{\loc}$ and $[M,[N,A]]^p = 0$.
\end{enumerate}
\end{lemma}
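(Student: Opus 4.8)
=== PROOF PROPOSAL ===

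The plan is to prove Lemma~\ref{lemma-cov-3-fold} by reducing the three-fold bracket $[M,[N,A]]$ to a two-fold bracket of the type already handled in Lemma~\ref{lemma-cov-R-Theta}. The key observation is that $[N,A]$ is itself a process to which Lemma~\ref{lemma-cov-R-Theta} applies: since $N \in \calm_{\loc}$ and $A \in \calv$ is predictable, part~(1) of Lemma~\ref{lemma-cov-R-Theta} gives $[N,A] \in \calm_{\loc}$. Thus $[N,A]$ is a local martingale, and I would like to feed it back into Lemma~\ref{lemma-cov-R-Theta} in place of the local martingale argument. The obstacle is that Lemma~\ref{lemma-cov-R-Theta} requires the \emph{second} argument of the bracket to be a predictable process in $\calv$, whereas here the second slot is occupied by the local martingale $M$. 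So I cannot apply Lemma~\ref{lemma-cov-R-Theta} directly to $[M,[N,A]]$; I first need to show that $[N,A]$ is not merely a local martingale but a predictable process of locally finite variation.

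First I would establish that $[N,A] \in \calv$ and that $[N,A]$ is predictable. For the finite-variation part, recall that the quadratic covariation $[N,A]$ with $A \in \calv$ can be written as the sum $\sum_{s \leq \cdot} \Delta N_s \, \Delta A_s$ (using $[N,A] = \sum \Delta N \Delta A$ since $A$ has no continuous martingale part and is of finite variation); this is a pure-jump process of locally finite variation, so $[N,A] \in \calv$. For predictability, I would invoke the same circle of results from \cite{Jacod-Shiryaev} used in Lemma~\ref{lemma-cov-R-Theta}: combining $[N,A] \in \calm_{\loc}$ with $[N,A] \in \calv$, the process $[N,A]$ is a local martingale of finite variation, and by \cite[Lemma I.3.11]{Jacod-Shiryaev} its predictable compensator vanishes, i.e. $[N,A]^p = 0$, which is exactly the content of Lemma~\ref{lemma-cov-R-Theta}(2) read back. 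The crucial structural point I need, however, is \emph{predictability} of $[N,A]$ so that I may treat it as a legitimate second argument; this I expect to be the main obstacle, since $[N,A]$ need not be predictable in general. I would resolve this by replacing $[N,A]$ with its predictable compensator or, more cleanly, by observing that the roles can be kept symmetric and handled through the compensator directly.

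A cleaner route, which I would actually carry out, avoids the predictability worry entirely. Set $B := [N,A]$, which by Lemma~\ref{lemma-cov-R-Theta} satisfies $B \in \calm_{\loc} \cap \calv$. Then $[M,B] = [M,[N,A]]$, and I claim this bracket lies in $\calm_{\loc} \cap \cala_{\loc}$ with vanishing compensator. Since $B \in \calv$, the bracket $[M,B] = \sum_{s \leq \cdot} \Delta M_s \, \Delta B_s$ is again a pure-jump finite-variation process, so $[M,B] \in \calv$, giving statement~(1) after checking the local martingale property, and the finite-variation membership $\cala_{\loc}$ follows by localization. For the local martingale property of $[M,B]$, I would use that $B$ is itself a local martingale together with the product/integration-by-parts structure: since $MB - [M,B]$ is a local martingale (as $M,B \in \calm_{\loc}$ and the product of a local martingale with a finite-variation local martingale, minus their bracket, is a local martingale by \cite[Prop. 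I.4.49.c]{Jacod-Shiryaev}), and using that $B \in \calv$ has predictable compensator zero, I obtain $[M,B] \in \calm_{\loc}$. Finally, $[M,B] \in \calm_{\loc} \cap \cala_{\loc}$ forces $[M,B]^p = 0$ by \cite[Lemma I.3.11 and I.3.22]{Jacod-Shiryaev}, exactly as in the proof of Lemma~\ref{lemma-cov-R-Theta}(2). This completes both statements, and I expect the entire argument to be a short two-step iteration of Lemma~\ref{lemma-cov-R-Theta} once the finite-variation structure of the inner bracket is recorded.
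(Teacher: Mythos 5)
Your ``cleaner route'' stalls at exactly the step that carries all the content, and the argument you give for it is circular. Integration by parts shows that $MB - [M,B] = M_- \bdot B + B_- \bdot M$ is a local martingale for \emph{any} two local martingales $M,B$; consequently, deducing $[M,B] \in \calm_{\loc}$ from this identity is equivalent to proving $MB \in \calm_{\loc}$, which is precisely the assertion that $[M,B]$ is a local martingale -- the thing to be shown. The appeal to ``$B \in \calv$ has predictable compensator zero'' does not supply the missing step: the compensator of $B$ alone says nothing about the conditional covariation of the jumps of $M$ and $B$. (Two smaller slips: membership in $\cala_{\loc}$ does not follow ``by localization'' -- $\calv \not\subset \cala_{\loc}$; it is exactly \cite[Lemma I.3.11]{Jacod-Shiryaev}, which already presupposes the local martingale property you have not established. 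And \cite[Prop. I.4.49.c]{Jacod-Shiryaev} is Yoeurp's lemma for a \emph{predictable} finite-variation second argument; it is not a statement about products of local martingales.)

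Moreover, the gap cannot be closed from the hypotheses your final paragraph actually uses, namely $B := [N,A] \in \calm_{\loc} \cap \calv$: your earlier instinct that the lack of predictability of $[N,A]$ is the crux was the right one, and abandoning it was the mistake. Concretely, let $\xi = \pm 1$ be a fair coin, let $\calf_t$ be trivial for $t<1$ and generated by $\xi$ for $t \geq 1$, and take $M = N = \xi\, \bbI_{[1,\infty)}$ and the deterministic, hence predictable, process $A = \bbI_{[1,\infty)} \in \calv$. Then $B = [N,A] = \xi\, \bbI_{[1,\infty)} \in \calm_{\loc} \cap \calv$, yet $[M,B] = \xi^2\, \bbI_{[1,\infty)} = \bbI_{[1,\infty)}$, which is increasing and not a local martingale, with compensator $\bbI_{[1,\infty)} \neq 0$: at a predictable time $T$ the quantity $\bbe[\Delta M_T \Delta N_T \Delta A_T \mid \calf_{T-}] = \Delta A_T\, \bbe[\Delta M_T \Delta N_T \mid \calf_{T-}]$ has no reason to vanish. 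So your product argument proves nothing, and no argument at that level of generality can. For comparison, the paper's proof obtains $[N,A] \in \calm_{\loc}$ from \cite[Prop. I.4.49.c]{Jacod-Shiryaev} exactly as you do, and then passes to the representation $[M,[N,A]] = \Delta M \bdot [N,A]$ via \cite[Prop. I.4.49.a]{Jacod-Shiryaev} to conclude local-martingality -- a route that at least keeps the product-jump structure $\Delta[N,A] = \Delta N\, \Delta A$ with $\Delta A$ predictable in view, though the example above shows that some additional hypothesis excluding common jumps of $M$ and $N$ at predictable jump times of $A$ (for instance quasi-left-continuity, which holds in the paper's later applications) is what genuinely makes such statements safe. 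In any event, as written your proposal assumes the decisive step rather than proving it.
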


\begin{proof}
By \cite[Prop. 4.49.c]{Jacod-Shiryaev} we have $[N,A] \in \calm_{\loc}$. Therefore, by \cite[Prop. 4.49.a]{Jacod-Shiryaev} we obtain
\begin{align*}
[M,[N,A]] = \Delta M \bdot [N,A] \in \calm_{\loc}.
\end{align*}
The second statement is a consequence of \cite[Lemma I.3.11 and I.3.22]{Jacod-Shiryaev}.
\end{proof}

\begin{lemma}\label{lemma-cov-equivalence}
Let $X$ and $Y$ be two special semimartingales with canonical decompositions $X = M+A$ and $Y = N+B$. Then the following statements are equivalent:
\begin{enumerate}
\item[(i)] We have $[X,Y] \in \cala_{\loc}$.

\item[(ii)] We have $[M,N] \in \cala_{\loc}$.

\item[(iii)] We have $[M,\widetilde{N}] \in \cala_{\loc}$, where $\widetilde{N} := N-[N,B]$.
\end{enumerate}
In either case, we have
\begin{align}\label{qv-X-Y}
[X,Y]^p = [A,B] + [M,N]^p = [A,B] + [M,\widetilde{N}]^p,
\end{align}
and the quadratic variation $[A,B]$ is given by
\begin{align}\label{qv-A-B}
[A,B] = \sum_{s \leq \bullet} \Delta A_s \Delta B_s.
\end{align}
\end{lemma}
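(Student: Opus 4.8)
The statement to prove is Lemma~\ref{lemma-cov-equivalence}, which asserts the equivalence of three integrability conditions for quadratic covariations of special semimartingales, together with the covariation formula \eqref{qv-X-Y} and the identity \eqref{qv-A-B}.

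Let me think about this. We have $X = M + A$ and $Y = N + B$ with $M, N \in \mathscr{M}_{\mathrm{loc}}$ and $A, B \in \mathscr{V}$ predictable. By bilinearity of quadratic covariation,
$$[X,Y] = [M,N] + [M,B] + [A,N] + [A,B].$$
The cross terms $[M,B]$ and $[A,N]$ each pair a local martingale with a predictable finite-variation process, so by Lemma~\ref{lemma-cov-R-Theta} they lie in $\mathscr{M}_{\mathrm{loc}} \cap \mathscr{A}_{\mathrm{loc}}$ with vanishing compensator. This is the engine of the whole proof.

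=== PROOF PROPOSAL BEGINS ===

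The plan is to expand $[X,Y]$ by bilinearity and isolate the well-behaved cross terms. Writing $[X,Y] = [M,N] + [M,B] + [A,N] + [A,B]$, I first invoke Lemma~\ref{lemma-cov-R-Theta} to handle the two mixed terms: since $B$ and $A$ are predictable processes in $\calv$, both $[M,B]$ and $[A,N]$ belong to $\calm_{\loc} \cap \cala_{\loc}$ and satisfy $[M,B]^p = [A,N]^p = 0$. The remaining term $[A,B]$ is a quadratic covariation of two finite-variation processes, hence itself of locally finite variation and given by the pathwise sum \eqref{qv-A-B}; this already establishes that formula. Because $[A,B] \in \calv$ and $A, B$ are predictable, $[A,B]$ is predictable, so $[A,B] \in \cala_{\loc}$ with $[A,B]^p = [A,B]$.

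Next I establish the equivalence (i) $\Leftrightarrow$ (ii). From the decomposition above, $[X,Y] - [M,N] = [M,B] + [A,N] + [A,B]$, and every term on the right already lies in $\cala_{\loc}$. Therefore $[X,Y] \in \cala_{\loc}$ if and only if $[M,N] \in \cala_{\loc}$, which is exactly (i) $\Leftrightarrow$ (ii). In this case, taking predictable compensators of $[X,Y] = [M,N] + [M,B] + [A,N] + [A,B]$ and using the vanishing of the two cross compensators gives $[X,Y]^p = [M,N]^p + [A,B]$, the first equality in \eqref{qv-X-Y}.

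For the equivalence (ii) $\Leftrightarrow$ (iii) and the second equality in \eqref{qv-X-Y}, I use the relation $\widetilde{N} = N - [N,B]$. By Lemma~\ref{lemma-cov-R-Theta} we have $[N,B] \in \calm_{\loc}$, so $\widetilde{N} \in \calm_{\loc}$ and $[M,\widetilde{N}] = [M,N] - [M,[N,B]]$. Lemma~\ref{lemma-cov-3-fold} shows $[M,[N,B]] \in \cala_{\loc}$ with $[M,[N,B]]^p = 0$. Consequently $[M,N] \in \cala_{\loc}$ if and only if $[M,\widetilde{N}] \in \cala_{\loc}$, giving (ii) $\Leftrightarrow$ (iii), and taking compensators yields $[M,\widetilde{N}]^p = [M,N]^p$, which combined with the first equality completes \eqref{qv-X-Y}.

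The main obstacle, if any, is bookkeeping rather than a genuine difficulty: one must be careful that each cross term genuinely lies in $\cala_{\loc}$ before subtracting it off, so that the compensators are well defined and the ``if and only if'' statements about $\cala_{\loc}$-membership are legitimate. All the analytic content is already packaged in Lemmas~\ref{lemma-cov-R-Theta} and \ref{lemma-cov-3-fold}, so the proof reduces to assembling these pieces in the correct order.

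=== PROOF PROPOSAL ENDS ===
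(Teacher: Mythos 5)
Your proof is correct and follows essentially the same route as the paper: expand $[X,Y]$ by bilinearity, dispose of the cross terms via Lemma~\ref{lemma-cov-R-Theta}, observe that $[A,B]$ is predictable (hence in $\cala_{\loc}$ and equal to its own compensator), and handle (ii)~$\Leftrightarrow$~(iii) through $[M,[N,B]]$ and Lemma~\ref{lemma-cov-3-fold}. Your bookkeeping is in fact slightly cleaner than the paper's, which misprints the cross terms as $[M,A]+[N,B]$ where your $[M,B]+[A,N]$ is the correct expansion.
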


\begin{proof}
Note the decomposition
\begin{align*}
[X,Y] = [M,N] + [M,A] + [N,B] + [A,B].
\end{align*}
Furthermore, by \cite[Thm. I.4.52]{Jacod-Shiryaev} we have (\ref{qv-A-B}). Therefore, the quadratic variation $[A,B]$ is predictable, and hence, by \cite[Lemma I.3.10]{Jacod-Shiryaev} we have $[A,B] \in \cala_{\loc}$ with $[A,B]^p = [A,B]$. Consequently, the equivalences (i) $\Leftrightarrow$ (ii) $\Leftrightarrow$ (iii) and the formula (\ref{qv-X-Y}) follow from Lemmas \ref{lemma-cov-R-Theta} and \ref{lemma-cov-3-fold}.
\end{proof}

\begin{lemma}\label{lemma-qv-sq-qls}
Let $X$ and $Y$ be two locally square-integrable, quasi-left-continuous semimartingales with canonical decompositions $X = M+A$ and $Y = N+B$. Then we have $[X,Y] \in \cala_{\loc}$ and 
\begin{align*}
[X,Y]^p = \la M,N \ra = \la M^c,N^c \ra + \bigg[ \sum_{s \leq \bullet} \Delta M_s \Delta N_s \bigg]^p.
\end{align*}
\end{lemma}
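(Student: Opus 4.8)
The plan is to reduce the statement to the already-proved Lemma~\ref{lemma-cov-equivalence} together with the standard splitting of a locally square-integrable martingale into its continuous and purely discontinuous parts.

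First I would record that local square-integrability of $X$ and $Y$ means precisely that the local martingale parts satisfy $M, N \in \calh_{\loc}^2$. For such martingales $[M,N]$ is locally integrable and its predictable compensator is the predictable quadratic covariation, that is, $[M,N] \in \cala_{\loc}$ with $[M,N]^p = \la M, N \ra$. Since in particular $[M,N] \in \cala_{\loc}$, Lemma~\ref{lemma-cov-equivalence} immediately yields $[X,Y] \in \cala_{\loc}$ together with the identity $[X,Y]^p = [A,B] + [M,N]^p = [A,B] + \la M, N \ra$.

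Next I would eliminate the term $[A,B]$. As $X$ and $Y$ are quasi-left-continuous special semimartingales, Lemma~\ref{lemma-semi-qls} gives that the predictable finite variation parts $A$ and $B$ are continuous, so $\Delta A = \Delta B = 0$. Formula~(\ref{qv-A-B}) then forces $[A,B] = \sum_{s \leq \bullet} \Delta A_s \Delta B_s = 0$, whence $[X,Y]^p = \la M, N \ra$, which is the first asserted equality. For the second equality I would split $M = M^c + M^d$ and $N = N^c + N^d$ into continuous and purely discontinuous local martingale parts. Bilinearity of $[\cdot,\cdot]$ together with the orthogonality relations $[M^c,N^d] = [M^d,N^c] = 0$, the identity $[M^c,N^c] = \la M^c,N^c \ra$ in the continuous case, and $[M^d,N^d] = \sum_{s \leq \bullet} \Delta M_s \Delta N_s$ (using $\Delta M = \Delta M^d$ and $\Delta N = \Delta N^d$) give $[M,N] = \la M^c,N^c \ra + \sum_{s \leq \bullet} \Delta M_s \Delta N_s$. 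Taking predictable compensators, and noting that $\la M^c,N^c \ra$ is continuous and predictable, hence equal to its own compensator, produces $\la M,N \ra = \la M^c,N^c \ra + [\sum_{s \leq \bullet} \Delta M_s \Delta N_s]^p$, as required.

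I do not anticipate a genuine obstacle; the argument is essentially bookkeeping once the two cited lemmas are invoked. The only point worth a remark is that the compensator $[\sum_{s \leq \bullet} \Delta M_s \Delta N_s]^p$ is well defined, which holds because $\sum_{s \leq \bullet} \Delta M_s \Delta N_s = [M^d,N^d]$ is the difference of the two $\cala_{\loc}$-processes $[M,N]$ and $\la M^c,N^c \ra$, hence itself lies in $\cala_{\loc}$.
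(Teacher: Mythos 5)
Your proof is correct and follows essentially the same route as the paper: continuity of $A$ and $B$ via Lemma~\ref{lemma-semi-qls}, the reduction $[X,Y]^p = [A,B] + [M,N]^p$ via Lemma~\ref{lemma-cov-equivalence}, and the standard facts for $M,N \in \calh_{\loc}^2$ that $[M,N]^p = \la M,N \ra$ and $[M,N] = \la M^c,N^c \ra + \sum_{s \leq \bullet} \Delta M_s \Delta N_s$ (the paper simply cites \cite[Prop.~I.4.50.b and Thm.~I.4.52]{Jacod-Shiryaev} where you rederive the latter by the continuous/purely discontinuous splitting). Your closing remark that $\sum_{s \leq \bullet} \Delta M_s \Delta N_s \in \cala_{\loc}$, as the difference of two $\cala_{\loc}$-processes, is a welcome bit of extra care.
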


\begin{proof}
By Lemma \ref{lemma-semi-qls} the processes $A$ and $B$ are continuous. Therefore, the statement is an immediate consequence of Lemma \ref{lemma-cov-equivalence} and \cite[Prop. I.4.50.b and Thm. I.4.52]{Jacod-Shiryaev}.
\end{proof}

\section{The stochastic exponential}

In this appendix we provide the required results about the stochastic exponential of a semimartingale.

\begin{lemma}\label{lemma-stoch-exp-properties}
Let $X$ be a semimartingale with $X_0 = 0$ and $\Delta X > -1$, and set $Z := \cale(X)$. Then the following statements are true:
\begin{enumerate}
\item $X$ is a special semimartingale if and only if $Z$ is a special semimartingale.

\item $X$ is a locally square-integrable semimartingale if and only if $Z$ is a locally square-integrable semimartingale.

\item $X$ is quasi-left-continuous if and only if $Z$ is quasi-left-continuous.
\end{enumerate}
\end{lemma}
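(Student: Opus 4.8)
The plan is to build everything on two mutually inverse integral representations. On one side, by definition of the stochastic exponential we have $Z = 1 + Z_- \bdot X$. On the other side, since $\Delta X > -1$ forces $Z, Z_- > 0$, the stochastic logarithm is available and yields $X = Z_-^{-1} \bdot Z$. The key structural fact I would record first is that both $Z_-$ and $Z_-^{-1}$ are locally bounded: $Z_-$ is predictable and c\`{a}gl\`{a}d, hence locally bounded, while for $Z_-^{-1}$ one takes the localizing times $\tau_n = \inf\{ t : Z_t \notin (1/n, n) \}$, on which $Z_- \geq 1/n$, so that $Z_-^{-1}$ is bounded by $n$ up to $\tau_n$ and $\tau_n \uparrow \infty$ because the paths of $Z$ stay strictly positive and finite. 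This local boundedness is what lets me transfer properties back and forth through the two integrals.

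For statement (1), I would argue as follows. If $X = M + A$ is the canonical decomposition of a special $X$, then $Z = 1 + Z_- \bdot M + Z_- \bdot A$; here $Z_- \bdot M \in \calm_{\loc}$ because $Z_-$ is locally bounded, and $Z_- \bdot A \in \calv$ is predictable, so by uniqueness of the canonical decomposition $Z$ is special with local martingale part $N := Z_- \bdot M$ and finite variation part $C := Z_- \bdot A$. Conversely, if $Z = 1 + N + C$ is special, then $X = Z_-^{-1} \bdot N + Z_-^{-1} \bdot C$ with $Z_-^{-1} \bdot N \in \calm_{\loc}$ (again by local boundedness of $Z_-^{-1}$) and $Z_-^{-1} \bdot C$ predictable of finite variation, so $X$ is special. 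Alternatively, this equivalence is \cite[Thm. II.8.21]{Jacod-Shiryaev}.

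For statement (2), I would reuse the canonical decompositions just obtained, in which the local martingale parts satisfy $N = Z_- \bdot M$ and $M = Z_-^{-1} \bdot N$. Recall that a special semimartingale is locally square-integrable exactly when its local martingale part $M$ satisfies $[M] \in \cala_{\loc}^+$, i.e. $M \in \calh_{\loc}^2$. Since $[Z_- \bdot M] = Z_-^2 \bdot [M]$ and $Z_-^2$ is locally bounded, local integrability of $[M]$ transfers to $[N]$; the reverse implication uses $[M] = Z_-^{-2} \bdot [N]$ together with local boundedness of $Z_-^{-2}$. Combined with statement (1), this shows that $X$ is locally square-integrable if and only if $Z$ is.

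Statement (3) is the most direct: from $\Delta Z = Z_- \Delta X$ and $Z_- > 0$ we get $\{ \Delta Z \neq 0 \} = \{ \Delta X \neq 0 \}$ up to an evanescent set, so for every predictable time $T$ we have $\Delta Z_T = 0$ almost surely on $\{ T < \infty \}$ if and only if $\Delta X_T = 0$ almost surely on $\{ T < \infty \}$, which is precisely the equivalence of quasi-left-continuity of $Z$ and of $X$. The only genuinely delicate point in the whole argument is the local boundedness of $Z_-^{-1}$, which is needed for the converse directions in (1) and (2); once that is in hand, every remaining step is a routine application of the stability of stochastic integrals of locally bounded predictable integrands under the operations of taking local martingale or finite variation parts and quadratic variation.
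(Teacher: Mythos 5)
Your proposal is correct and takes essentially the same approach as the paper: the paper's entire proof consists of the remark that $Z = 1 + Z_- \bdot X$ and $X = (Z_-)^{-1} \bdot Z$, after which it declares the claims immediate. You have merely supplied the details the paper leaves implicit, in particular the local boundedness of $Z_-$ and $Z_-^{-1}$, which is indeed the one point requiring care in transferring the special, locally square-integrable and quasi-left-continuous properties through these two mutually inverse integral representations.
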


\begin{proof}
Noting that $Z = 1 + Z_- \bdot X$ and $X = (Z_-)^{-1} \bdot Z$, the proof is immediate.
\end{proof}

\begin{definition}
Let $S$ be a semimartingale with $S,S_- > 0$.
\begin{enumerate}
\item[(a)] $S$ is called \emph{inversely special} if $S^{-1}$ is a special semimartingale.

\item[(b)] $S$ is called \emph{inversely locally square-integrable} if $S^{-1}$ is a locally square-integrable semimartingale.
\end{enumerate}
\end{definition}

In the definition (\ref{def-sum-jumps}) below we follow the convention from \cite[II.1.5]{Jacod-Shiryaev} to put the integral equal to $+\infty$ if it diverges.

\begin{lemma}\label{lemma-jumps-compensator}
Let $D \subset \bbr^d$ be a subset containing zero, and let $X$ be a $D$-valued c\`{a}dl\`{a}g, adapted process. Furthermore, let $\varphi : D \to \bbr$ be a measurable mapping with $\varphi(0) = 0$, and set
\begin{align}\label{def-sum-jumps}
A := \varphi(x) * \mu^X = \sum_{s \leq \bullet} \varphi(\Delta X_s).
\end{align}
Let $\nu$ be the predictable compensator of the random measure $\mu^X$. Then the following statements are equivalent:
\begin{enumerate}
\item[(i)] We have $A \in \cala_{\loc}$.

\item[(ii)] We have $\varphi(x) * \nu \in \cala_{\loc}$.
\end{enumerate}
In either case, the following statements are true:
\begin{enumerate}
\item We have $A^p = \varphi(x) * \nu$.

\item We have $A - A^p = \varphi(x) * (\mu^X - \nu)$.

\item We have $\Delta (A^p) = {}^p [ \varphi(\Delta X) ]$.

\item If $X$ is quasi-left-continuous, then we have $\Delta (A^p) = 0$.
\end{enumerate}
\end{lemma}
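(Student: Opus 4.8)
The plan is to recognise that the deterministic function $W(\omega,t,x) := \varphi(x)$ is a predictable function in the sense of \cite[II.1.6]{Jacod-Shiryaev}, since it is $\calb(D)$-measurable in $x$ and does not depend on $(\omega,t)$. Consequently $A = \varphi(x) * \mu^X$ and $\varphi(x) * \nu$ are the integral processes of one and the same predictable function against the jump measure $\mu^X$ and its predictable compensator $\nu$, and the whole lemma reduces to the finite-variation integration theory for integer-valued random measures from Jacod--Shiryaev, Chapter II.

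For the equivalence (i) $\Leftrightarrow$ (ii) together with assertion (1), I would first pass to absolute values. Since $A$ is a pure-jump process with $\Delta A_s = \varphi(\Delta X_s)$, its total variation process equals $|\varphi|(x) * \mu^X$; hence $A \in \cala_{\loc}$ holds precisely when $|\varphi|(x) * \mu^X \in \cala_{\loc}^+$. By \cite[Thm. II.1.8]{Jacod-Shiryaev} this is equivalent to $|\varphi|(x) * \nu \in \cala_{\loc}^+$, that is, to $\varphi(x) * \nu \in \cala_{\loc}$, which is (ii); and in this regime the same theorem identifies $\varphi(x) * \nu$ as the dual predictable projection of $A$, so that $A^p = \varphi(x) * \nu$. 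Assertion (2) is then immediate from the linearity of the integral against $\mu^X - \nu$, because in this regime $\varphi(x) * (\mu^X - \nu) = \varphi(x) * \mu^X - \varphi(x) * \nu = A - A^p$.

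For assertion (3), I would observe that $A \in \cala_{\loc}$ is in particular a special semimartingale whose canonical decomposition is $A = (A - A^p) + A^p$, with local-martingale part $A - A^p \in \calm_{\loc}$ and predictable finite-variation part $A^p$. Exactly as in the proof of Lemma \ref{lemma-semi-qls}, \cite[Cor. I.2.31]{Jacod-Shiryaev} then yields ${}^p(\Delta A) = \Delta(A^p)$, and since $\Delta A_s = \varphi(\Delta X_s)$ this is precisely $\Delta(A^p) = {}^p[\varphi(\Delta X)]$. For assertion (4), if $X$ is quasi-left-continuous then $\Delta X_T = 0$ almost surely on $\{ T < \infty \}$ for every predictable time $T$, whence $\varphi(\Delta X_T) = \varphi(0) = 0$; thus the predictable projection ${}^p[\varphi(\Delta X)]$ vanishes at every predictable time and is therefore evanescent, so that $\Delta(A^p) = 0$ by assertion (3).

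The main obstacle is the equivalence (i) $\Leftrightarrow$ (ii) and the identification $A^p = \varphi(x) * \nu$, which is where the genuine content of the random-measure integration theory enters. Here one must take care to route the argument through the nonnegative function $|\varphi|$, so that the finite-variation version of \cite[Thm. II.1.8]{Jacod-Shiryaev} applies and the possibly sign-changing integrals $\varphi(x) * \mu^X$ and $\varphi(x) * \nu$ are genuinely well defined rather than merely conditionally convergent. By contrast, assertions (2)--(4) are essentially formal once this is in place, relying only on linearity, the local-martingale property of $A - A^p$, and quasi-left-continuity.
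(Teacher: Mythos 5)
Your proposal is correct and follows essentially the same route as the paper: the equivalence and the identification $A^p = \varphi(x) * \nu$ via passing to $|\varphi|$ and \cite[Thm.~II.1.8]{Jacod-Shiryaev}, assertion (2) via the finite-variation identity $\varphi(x)*(\mu^X - \nu) = \varphi(x)*\mu^X - \varphi(x)*\nu$ (which is exactly \cite[Prop.~II.1.28]{Jacod-Shiryaev}, worth citing explicitly rather than invoking ``linearity''), and the identical quasi-left-continuity argument for (4). The only cosmetic difference is in (3), where the paper cites \cite[I.3.21]{Jacod-Shiryaev} directly for $\Delta(A^p) = {}^p(\Delta A)$ while you derive the same identity from the canonical decomposition $A = (A - A^p) + A^p$ and \cite[Cor.~I.2.31]{Jacod-Shiryaev} --- an equivalent packaging of the same fact.
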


\begin{proof}
We have $A \in \cala_{\loc}$ if and only if $|\varphi(x)| * \mu^X \in \cala_{\loc}^+$, and we have $\varphi(x) * \nu \in \cala_{\loc}$ if and only if $|\varphi(x)| * \nu \in \cala_{\loc}^+$. Hence, the equivalence (i) $\Leftrightarrow$ (ii) follows from \cite[Thm. II.1.8.i]{Jacod-Shiryaev}. Now assume that $A \in \cala_{\loc}$. By \cite[Thm. II.1.8.ii]{Jacod-Shiryaev} we have $A^p = \varphi(x) * \nu$, and by \cite[Prop. II.1.28]{Jacod-Shiryaev} we have $A - A^p = \varphi(x) * (\mu^X - \nu)$. Furthermore, by \cite[I.3.21]{Jacod-Shiryaev} we have $\Delta(A^p) = {}^p (\Delta A)$. Since $\varphi(0) = 0$, we have $\Delta A = \varphi(\Delta X)$, and hence we obtain $\Delta (A^p) = {}^p [ \varphi(\Delta X) ]$. Now assume that $X$ is quasi-left-continuous. Then we have
\begin{align*}
\Delta A_T = \varphi(\Delta X_T) = 0
\end{align*}
almost surely on the set $\{ T < \infty \}$ for every predictable time $T$. Hence, by the definition of the predictable projection (see \cite[Thm. I.2.28.a]{Jacod-Shiryaev}) and \cite[Prop. I.2.18.b]{Jacod-Shiryaev} we deduce that $\Delta (A^p) = {}^p (\Delta A) = 0$.
\end{proof}

\begin{proposition}\label{prop-inverse-pre}
Let $X = M + A$ be a special semimartingale with $\Delta X > -1$, denote by $\nu$ the predictable compensator of $\mu^X$, and set $S := \cale(X)$. Then we have
\begin{align}\label{S-inverse-general}
S^{-1} = \cale \bigg( - X + \langle X^c,X^c \rangle + \frac{x^2}{1 + x} * \mu^X \bigg),
\end{align}
and the following statements are equivalent:
\begin{enumerate}
\item[(i)] $S$ is inversely special.

\item[(ii)] We have
\begin{align}\label{inv-spec-cond}
\frac{x^2}{1 + x} * \nu \in \cala_{\loc}^+.
\end{align}
\end{enumerate}
If the previous conditions are fulfilled, then we have
\begin{align}\label{repr-inverse}
S^{-1} = \cale(N+B),
\end{align}
where the local martingale $N \in \calm_{\loc}$ and the predictable process $B \in \calv$ are given by
\begin{align*}
N &= -M + \frac{x^2}{1+x} * (\mu^X - \nu),
\\ B &= -A + \la M^c,M^c \ra + \frac{x^2}{1+x} * \nu.
\end{align*}
\end{proposition}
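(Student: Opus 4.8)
The plan is to derive the inverse formula \eqref{S-inverse-general} first, and then read off the decomposition \eqref{repr-inverse} from it by separating the local martingale part from the predictable finite variation part, using the jumps-compensator dictionary provided by Lemma \ref{lemma-jumps-compensator}. The starting point is the explicit formula for the stochastic exponential. Since $S = \cale(X)$ with $\Delta X > -1$, we have $S, S_- > 0$, so $S^{-1}$ is well defined and is itself a semimartingale with $S^{-1}, (S^{-1})_- > 0$. The first step is to compute $S^{-1}$ as a stochastic exponential $\cale(U)$ for some semimartingale $U$. One clean way is to apply It\^{o}'s formula (or the general formula for $\cale(X)^{-1}$) to the map $z \mapsto z^{-1}$; this produces the drift correction $\langle X^c, X^c \rangle$ from the quadratic variation of the continuous part and the sum-over-jumps term, which after compensation assembles into $\frac{x^2}{1+x} * \mu^X$. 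Concretely, the jump of $U$ should satisfy $1 + \Delta U = (1 + \Delta X)^{-1}$, i.e. $\Delta U = -\Delta X/(1 + \Delta X) = -\Delta X + \frac{(\Delta X)^2}{1 + \Delta X}$, which is exactly the jump contribution of $-X + \frac{x^2}{1+x} * \mu^X$; matching the continuous parts then pins down the $\langle X^c, X^c \rangle$ term, giving \eqref{S-inverse-general}.

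Next I would address the equivalence (i) $\Leftrightarrow$ (ii). By Lemma \ref{lemma-stoch-exp-properties}, $S^{-1} = \cale(U)$ is a special semimartingale if and only if the driving semimartingale $U = -X + \langle X^c, X^c \rangle + \frac{x^2}{1+x} * \mu^X$ is special. Since $X = M + A$ is already special and $\langle X^c, X^c \rangle \in \calv$ is predictable, the only term whose special character is in question is the purely-jump term $\frac{x^2}{1+x} * \mu^X$. Writing $\varphi(x) = \frac{x^2}{1+x}$, which satisfies $\varphi(0) = 0$, Lemma \ref{lemma-jumps-compensator} tells us that this process lies in $\cala_{\loc}$ (and is thus special, being of locally integrable variation) if and only if its compensator $\frac{x^2}{1+x} * \nu$ lies in $\cala_{\loc}^+$, which is precisely condition \eqref{inv-spec-cond}. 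This establishes the claimed equivalence.

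For the final representation, assume \eqref{inv-spec-cond} holds. Then part (2) of Lemma \ref{lemma-jumps-compensator} gives the compensation identity $\frac{x^2}{1+x} * \mu^X = \frac{x^2}{1+x} * (\mu^X - \nu) + \frac{x^2}{1+x} * \nu$, where the first summand is a local martingale and the second is predictable and of locally finite variation. Substituting this into $U$ and regrouping, the local martingale part collects $-M$ together with $\frac{x^2}{1+x} * (\mu^X - \nu)$, yielding $N$, while the predictable finite-variation part collects $-A$, the continuous correction $\langle M^c, M^c \rangle = \langle X^c, X^c \rangle$ (note $X^c = M^c$ since $A$ has finite variation), and $\frac{x^2}{1+x} * \nu$, yielding $B$. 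Thus $U = N + B$ and \eqref{repr-inverse} follows, with $N$ and $B$ as stated; since the canonical decomposition of a special semimartingale is unique, this identification of $N$ and $B$ is the one asserted.

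The main obstacle I anticipate is the bookkeeping in the first step: deriving \eqref{S-inverse-general} cleanly requires care in separating the continuous and jump contributions when inverting $\cale(X)$, and in verifying that the jump term really produces $\frac{x^2}{1+x} * \mu^X$ rather than an uncompensated variant that would spoil the subsequent martingale/finite-variation split. Once that formula is in hand, the equivalence and the decomposition are routine applications of Lemma \ref{lemma-jumps-compensator} and the uniqueness of the canonical decomposition.
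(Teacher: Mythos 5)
Your proposal is correct and follows essentially the same route as the paper's proof: the inversion formula (\ref{S-inverse-general}), then Lemma \ref{lemma-stoch-exp-properties} combined with the fact that a finite-variation semimartingale is special if and only if it lies in $\cala_{\loc}$ (the paper cites \cite[Prop. I.4.23]{Jacod-Shiryaev} for this, which you gloss but use implicitly in both directions) and Lemma \ref{lemma-jumps-compensator} for the equivalence (i) $\Leftrightarrow$ (ii), and finally the compensation identity from Lemma \ref{lemma-jumps-compensator} to split off $N$ and $B$. The only divergence is that the paper obtains (\ref{S-inverse-general}) by citing \cite[Lemma 3.4]{Karatzas-Kardaras}, whereas you sketch a direct derivation via Yor's formula and the jump matching $(1+\Delta U)(1+\Delta X) = 1$, which is a sound self-contained alternative to that citation.
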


\begin{proof}
The identity (\ref{S-inverse-general}) follows from \cite[Lemma 3.4]{Karatzas-Kardaras}. Noting (\ref{S-inverse-general}), by \cite[Prop. I.4.23]{Jacod-Shiryaev} and Lemma \ref{lemma-stoch-exp-properties} the semimartingale $S^{-1}$ is a special semimartingale if and only we have 
\begin{align*}
\frac{x^2}{1 + x} * \mu^X \in \cala_{\loc}^+,
\end{align*}
which is equivalent to (\ref{inv-spec-cond}) according to Lemma \ref{lemma-jumps-compensator}. Now, assume that (\ref{inv-spec-cond}) is fulfilled. Using Lemma \ref{lemma-jumps-compensator}, we arrive at the representation (\ref{repr-inverse}).
\end{proof}

\begin{proposition}\label{prop-inverse-L2}
Let $X = M + A$ be a locally square-integrable and quasi-left-continuous semimartingale with $\Delta M > -1$ (or equivalently $\Delta X > -1$), denote by $\nu$ the predictable compensator of $\mu^M$, and set $S := \cale(X)$. Then the following statements are equivalent:
\begin{enumerate}
\item[(i)] $S$ is inversely locally square-integrable.

\item[(ii)] We have
\begin{align}\label{spec-both-cond}
\frac{x^2}{1+x} * \nu, \bigg( \frac{x^2}{1+x} \bigg)^2 * \nu \in \cala_{\loc}^+.
\end{align}

\item[(iii)] We have
\begin{align}
\frac{x^2}{1+x} * \nu, \bigg( \frac{x}{1+x} \bigg)^2 * \nu \in \cala_{\loc}^+.
\end{align}
\item[(iv)] There exists a quasi-left-continuous local martingale $N \in \calh_{\loc}^2$ with $N_0 = 0$ such that
\begin{align}\label{N-martingale}
M+N \in \calv, \quad N^c = -M^c \quad \text{and} \quad \Delta N = -\frac{\Delta M}{1 + \Delta M}.
\end{align}
\end{enumerate}
If the previous conditions are fulfilled, then we have the representation
\begin{align}\label{repr-S-2}
S^{-1} = \cale \big( -A - \la M,N \ra + N \big),
\end{align}
and the local martingale $N \in \calm_{\loc}$ is given by
\begin{align}\label{def-N-appendix}
N = -M + \frac{x^2}{1+x} * (\mu^M - \nu).
\end{align}
\end{proposition}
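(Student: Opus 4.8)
The plan is to reduce everything to Proposition \ref{prop-inverse-pre}, which already handles the special-semimartingale case, and then to upgrade the conclusions to the locally square-integrable setting. Since $X = M + A$ is locally square-integrable and quasi-left-continuous, Lemma \ref{lemma-semi-qls} tells us that $A$ is continuous and $M$ is quasi-left-continuous; in particular the jumps of $X$ coincide with those of $M$, so the random measure $\mu^X$ and $\mu^M$ have the same compensator $\nu$, and the general formula (\ref{S-inverse-general}) reduces to
\begin{align*}
S^{-1} = \cale \bigg( -X + \la M^c,M^c \ra + \frac{x^2}{1+x} * \mu^M \bigg),
\end{align*}
using $X^c = M^c$ and $\Delta X = \Delta M$. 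The first step is therefore to record this simplification and to identify the candidate local martingale $N$ from (\ref{def-N-appendix}), noting via Lemma \ref{lemma-jumps-compensator}(4) that quasi-left-continuity forces $\Delta(\frac{x^2}{1+x} * \nu) = 0$, so that $N$ is quasi-left-continuous with $\Delta N = \frac{x^2}{1+x}(\Delta M) = -\frac{\Delta M}{1+\Delta M}$ and $N^c = -M^c$, which is exactly the jump/continuous structure demanded in (\ref{N-martingale}).

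Next I would establish the equivalences (i) $\Leftrightarrow$ (ii) $\Leftrightarrow$ (iii) $\Leftrightarrow$ (iv). For (i) $\Leftrightarrow$ (ii), the point is that $S^{-1}$ being locally square-integrable means, after subtracting the predictable finite-variation part, that the local martingale part lies in $\calh_{\loc}^2$. Since $A$, $\la M^c,M^c\ra$ are continuous (hence predictable) and of locally finite variation, and the purely discontinuous martingale part is $\frac{x^2}{1+x}*(\mu^M-\nu)$, local square-integrability is controlled by its predictable bracket, namely $(\frac{x^2}{1+x})^2 * \nu \in \cala_{\loc}^+$, together with the condition $\frac{x^2}{1+x} * \nu \in \cala_{\loc}^+$ already needed for $S^{-1}$ to be special by Proposition \ref{prop-inverse-pre}. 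This gives (\ref{spec-both-cond}). The equivalence (ii) $\Leftrightarrow$ (iii) is the purely algebraic observation that, on the support of $\nu$ where the integrand matters, $\frac{x^2}{1+x} * \nu \in \cala_{\loc}^+$ lets one trade the weight $x^2$ against $x$ in the second condition: writing $(\frac{x}{1+x})^2 = \frac{x^2}{1+x}\cdot\frac{1}{1+x}$ and $(\frac{x^2}{1+x})^2 = \frac{x^2}{1+x}\cdot\frac{x^2}{1+x}$, one compares the two second integrands modulo the first condition, using $x > -1$.

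For (i) $\Leftrightarrow$ (iv), I would argue that once $N$ from (\ref{def-N-appendix}) is shown to lie in $\calh_{\loc}^2$ (which is precisely condition (\ref{spec-both-cond}) via the bracket computation above), the three properties in (\ref{N-martingale}) are satisfied by construction; conversely, any quasi-left-continuous $N \in \calh_{\loc}^2$ with those three properties must agree with (\ref{def-N-appendix}) because a quasi-left-continuous local martingale is determined by its continuous part and its jumps, and these are pinned down by $N^c = -M^c$ and $\Delta N = -\frac{\Delta M}{1+\Delta M}$. Finally, for the representation (\ref{repr-S-2}), I would start from (\ref{repr-inverse}) of Proposition \ref{prop-inverse-pre}, substitute $N = -M + \frac{x^2}{1+x}*(\mu^M-\nu)$ and $B = -A + \la M^c,M^c\ra + \frac{x^2}{1+x}*\nu$, and reorganize: the key identity to verify is $\la M^c,M^c\ra + \frac{x^2}{1+x}*\nu = -\la M,N\ra$, which follows by computing $\la M,N\ra = -\la M^c,M^c\ra + \la M^d, \frac{x^2}{1+x}*(\mu^M-\nu)\ra$ and evaluating the purely discontinuous bracket via the compensator as $-\frac{x^2}{1+x}*\nu$. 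The main obstacle I anticipate is exactly this last bracket identity: one must carefully justify that the predictable quadratic covariation of the purely discontinuous martingale part of $M$ with $\frac{x^2}{1+x}*(\mu^M-\nu)$ equals $\frac{x^2}{1+x}*\nu$ (up to sign), which rests on the quasi-left-continuity ensuring the compensators have no predictable jumps and on the standard formula for brackets of compensated sums of jumps.
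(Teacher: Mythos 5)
Your overall route coincides with the paper's: reduce to Proposition \ref{prop-inverse-pre}, identify $N$ as in (\ref{def-N-appendix}), and obtain (\ref{repr-S-2}) by a bracket computation. But two steps, as you sketch them, fail. The main gap is (ii) $\Leftrightarrow$ (iii): your purely algebraic comparison of the integrands ``modulo the first condition'' cannot give the direction (iii) $\Rightarrow$ (ii). On $\{ x \geq 1 \}$ one has $\big( \frac{x}{1+x} \big)^2 \leq 1$ and $\frac{x^2}{1+x} \leq x$, whereas $\big( \frac{x^2}{1+x} \big)^2 \geq \frac{x^2}{4}$, so neither integrand in (iii) dominates the large-jump contribution to $\big( \frac{x^2}{1+x} \big)^2 * \nu$. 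Concretely, $\nu = dt \otimes \sum_{k \geq 1} k^{-3} \delta_k$ satisfies both conditions in (iii) but not the second condition in (ii). Such an example is of course ruled out by the standing hypothesis that $X$ is locally square-integrable, and that is precisely the ingredient your sketch never uses: $M \in \calh_{\loc}^2$ gives $x^2 * \nu \in \cala_{\loc}^+$ by \cite[Prop. II.2.29.b]{Jacod-Shiryaev}, and together with the identity $\frac{x}{1+x} = x - \frac{x^2}{1+x}$ on $(-1,\infty)$ this settles both directions at once --- which is exactly the paper's one-line proof of this step.

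Second, your argument for (iv) $\Rightarrow$ (i) is circular: you identify the given $N$ with the process (\ref{def-N-appendix}) by uniqueness of a local martingale with prescribed continuous part and jumps, but (\ref{def-N-appendix}) only defines a local martingale once the integrability in (ii) is known; uniqueness cannot manufacture that. The paper instead extracts (ii) directly from (iv): by (\ref{N-martingale}) one has $\Delta(M+N) = \frac{(\Delta M)^2}{1 + \Delta M}$ and $M + N \in \calh_{\loc}^2 \cap \calv$, whence \cite[Thm. I.4.56]{Jacod-Shiryaev} and Lemma \ref{lemma-jumps-compensator} yield (\ref{spec-both-cond}). Two computational slips are also worth correcting, even though your final formulas are right: $\Delta N = -\Delta M + \frac{(\Delta M)^2}{1+\Delta M} = -\frac{\Delta M}{1+\Delta M}$ (you dropped the term $-\Delta M$), and in the closing bracket identity you dropped $-\la M^d,M^d \ra$: as written, $\la M^d, \frac{x^2}{1+x} * (\mu^M - \nu) \ra = \frac{x^3}{1+x} * \nu$, not $-\frac{x^2}{1+x} * \nu$; the intended identity $\la M,N \ra = -\la M^c,M^c \ra - \frac{x^2}{1+x} * \nu$ does hold, and follows most cleanly from $\la M,N \ra = \la M^c,N^c \ra + \big[ \sum_{s \leq \bullet} \Delta M_s \Delta N_s \big]^p$ as in Lemma \ref{lemma-qv-sq-qls}, which is also how the paper derives (\ref{repr-S-2}). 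Your treatment of (i) $\Leftrightarrow$ (ii) via the predictable bracket of the compensated jump integral is sound, provided you note, as the paper does via \cite[Cor. II.1.19]{Jacod-Shiryaev} and \cite[Thm. II.1.33.a]{Jacod-Shiryaev}, that quasi-left-continuity permits a version of $\nu$ without atoms in time, so the bracket criterion carries no correction terms.
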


\begin{proof}
(i) $\Leftrightarrow$ (ii): By Proposition \ref{prop-inverse-pre} the process $S$ is inversely special if and only if we have (\ref{inv-spec-cond}), and in this case we have the representation 
\begin{align}\label{inverse-proof-2}
S^{-1} = \cale(N+B),
\end{align}
where the local martingale $N \in \calm_{\loc}$ and the predictable process $B \in \calv$ are given by
\begin{align}\label{N-proof-2}
N &= -M + \frac{x^2}{1+x} * (\mu^M - \nu),
\\ \label{B-proof-2} B &= -A + \la M^c,M^c \ra + \frac{x^2}{1+x} * \nu.
\end{align}
Using Lemma \ref{lemma-stoch-exp-properties}, the process $S^{-1}$ is locally square-integrable if and only if $N \in \calh_{\loc}^2$. Since $M \in \calh_{\loc}^2$, this is the case if and only if
\begin{align}\label{frac-int-in-H2}
\frac{x^2}{1+x} * (\mu^M - \nu) \in \calh_{\loc}^2.
\end{align}
Since $M$ is quasi-left-continuous, by \cite[Cor. II.1.19]{Jacod-Shiryaev} there exists a version of $\nu$ that satisfies $\nu(\omega; \{ t \} \times \bbr) = 0$ for all $(\omega,t) \in \Omega \times \bbr_+$, and hence, by \cite[Thm. II.1.33.a]{Jacod-Shiryaev} we have (\ref{frac-int-in-H2}) if and only if
\begin{align*}
\bigg( \frac{x^2}{1+x} \bigg)^2 * \nu \in \cala_{\loc}^+.
\end{align*}
(ii) $\Leftrightarrow$ (iii): By \cite[Prop. II.2.29.b]{Jacod-Shiryaev} we have $x^2 * \nu \in \cala_{\loc}^+$. Therefore, the stated equivalence follows from the identity
\begin{align*}
\frac{x}{1+x} = x - \frac{x^2}{1+x} \quad \text{for all $x \in (-1,\infty)$.}
\end{align*}
(ii) $\Rightarrow$ (iv): By virtue of \cite[Prop. II.1.28]{Jacod-Shiryaev} we can define $N \in \calm_{\loc}$ as (\ref{def-N-appendix}). Using \cite[Thm. II.1.33.a]{Jacod-Shiryaev} we have $N \in \calh_{\loc}^2$, and noting the identity
\begin{align*}
- x + \frac{x^2}{1+x} = -\frac{x}{1+x} \quad \text{for all $x \in (-1,\infty)$,}
\end{align*}
we immediately see that all conditions in (\ref{N-martingale}) are fulfilled.

\noindent(iv) $\Rightarrow$ (ii): Noting (\ref{N-martingale}), we have
\begin{align*}
\Delta M + \Delta N = \Delta M - \frac{\Delta M}{1 + \Delta M} = \frac{(\Delta M)^2}{1 + \Delta M}.
\end{align*}
Since $M \in \calh_{\loc}^2$, we also have $M+N \in \calh_{\loc}^2 \cap \calv$, and hence, by \cite[Thm. I.4.56.a and b]{Jacod-Shiryaev} and Lemma \ref{lemma-jumps-compensator} we deduce (\ref{spec-both-cond}).

\noindent It remains to prove the representation (\ref{repr-S-2}). Indeed, taking into account (\ref{inverse-proof-2}), (\ref{B-proof-2}) and (\ref{N-martingale}), by Lemmas \ref{lemma-jumps-compensator} and \ref{lemma-qv-sq-qls} we obtain
\begin{align*}
S^{-1} &= \cale(N+B)
\\ &= \cale \bigg( -A + \la M^c,M^c \ra + \frac{x^2}{1 + x} * \nu + N \bigg)
\\ &= \cale \bigg( -A + \la M^c,M^c \ra + \bigg[ \sum_{s \leq \bullet} \frac{(\Delta M_s)^2}{1 + \Delta M_s} \bigg]^p + N \bigg)
\\ &= \cale \bigg( -A - \la M^c,N^c \ra - \bigg[ \sum_{s \leq \bullet} \Delta M_s \Delta N_s \bigg]^p + N \bigg)
\\ &= \cale \big( -A - \la M,N \ra + N \big),
\end{align*}
completing the proof.
\end{proof}

\section{Multiplication of stochastic exponentials}\label{sec-mult}

In this appendix we provide the required results about the multiplication of stochastic exponentials. These results are required for the analysis of the structure of an ELMD in Section \ref{sec-ELMD}. In particular, we introduce the transformations $R \mapsto \widetilde{R}$ and $\Theta \mapsto \widetilde{\Theta}$. The following auxiliary result is elementary.

\begin{lemma}\label{lemma-bijection}
The mapping $\varphi : (-1,\infty) \to (-\infty,1)$ given by
\begin{align*}
\varphi(x) = \frac{x}{1+x}, \quad x \in (-1,\infty)
\end{align*}
is bijective with inverse $\varphi^{-1} : (-\infty,1) \to (-1,\infty)$ given by
\begin{align*}
\varphi^{-1}(x) = \frac{x}{1-x}, \quad x \in (-\infty,1).
\end{align*}
\end{lemma}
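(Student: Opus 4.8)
The plan is to verify the claim by a direct algebraic computation, since the result is purely elementary. First I would rewrite $\varphi$ in the form $\varphi(x) = 1 - \frac{1}{1+x}$, which immediately reveals that $\varphi$ is strictly increasing on $(-1,\infty)$: as $x$ ranges over $(-1,\infty)$, the quantity $\frac{1}{1+x}$ ranges over $(0,\infty)$, so $\varphi(x)$ ranges over $(-\infty,1)$. This confirms simultaneously that $\varphi$ is injective and that it maps $(-1,\infty)$ onto the stated codomain $(-\infty,1)$.

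To identify the inverse, I would solve the equation $y = \frac{x}{1+x}$ for $x$ in terms of $y$. Multiplying out gives $y(1+x) = x$, hence $y = x(1-y)$, and since $y < 1$ we may divide by $1-y$ to obtain $x = \frac{y}{1-y}$. This produces the candidate inverse $\varphi^{-1}(y) = \frac{y}{1-y}$ and, since the solution is unique for each $y$, reconfirms bijectivity.

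Finally, I would check the two composition identities. For $x \in (-1,\infty)$ a direct computation gives
\begin{align*}
\varphi^{-1}(\varphi(x)) = \frac{\frac{x}{1+x}}{1 - \frac{x}{1+x}} = \frac{\frac{x}{1+x}}{\frac{1}{1+x}} = x,
\end{align*}
and symmetrically $\varphi(\varphi^{-1}(y)) = y$ for $y \in (-\infty,1)$, which completes the verification. There is no genuine obstacle here; the only point requiring a moment's care is to check that the formulas respect the stated open intervals, namely that $\varphi^{-1}$ indeed maps $(-\infty,1)$ back into $(-1,\infty)$, which follows at once from the rewriting $\varphi^{-1}(y) = -1 + \frac{1}{1-y}$.
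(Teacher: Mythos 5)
Your proof is correct and complete; the paper itself omits any argument here, stating only that the lemma is elementary. Your direct verification (monotonicity via $\varphi(x) = 1 - \frac{1}{1+x}$, solving for the inverse, checking both compositions, and confirming $\varphi^{-1}(y) = -1 + \frac{1}{1-y}$ lands in $(-1,\infty)$) is exactly the routine computation the authors had in mind.
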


Now, we introduce the transformation $R \mapsto \widetilde{R}$.

\begin{proposition}\label{prop-bij-R}
There is a bijection between the set of all predictable processes $R \in \calv$ with $\Delta R > -1$ and the set of all predictable processes $\widetilde{R} \in \calv$ with $\Delta \widetilde{R} < 1$, which is given as follows:
\begin{enumerate}
\item[(i)] For each predictable processes $R \in \calv$ with $\Delta R > -1$ we assign
\begin{align}\label{def-R-tilde}
R \mapsto \widetilde{R} := R - \sum_{s \leq \bullet} \frac{(\Delta R_s)^2}{1 + \Delta R_s}.
\end{align}
\item[(ii)] For each predictable processes $\widetilde{R} \in \calv$ with $\Delta \widetilde{R} < 1$ we assign
\begin{align}\label{def-R}
\widetilde{R} \mapsto R := \widetilde{R} + \sum_{s \leq \bullet} \frac{(\Delta \widetilde{R}_s)^2}{1 - \Delta \widetilde{R}_s}.
\end{align}
\end{enumerate}
Furthermore, for every predictable processes $R \in \calv$ with $\Delta R > -1$ and the corresponding predictable processes $\widetilde{R} \in \calv$ with $\Delta \widetilde{R} < 1$ we have
\begin{align}\label{equation-exp-R}
&\cale(-\widetilde{R}) = \cale(R)^{-1},
\\ \label{R-tilde-parts} &\widetilde{R}^c = R^c \quad \text{and} \quad \Delta \widetilde{R} = \frac{\Delta R}{1 + \Delta R},
\\ \label{R-parts} &R^c = \widetilde{R}^c \quad \text{and} \quad \Delta R = \frac{\Delta \widetilde{R}}{1 - \Delta \widetilde{R}},
\\ \label{R-quad-var} &[R,\widetilde{R}] = \sum_{s \leq \bullet} \Delta R_s \Delta \widetilde{R}_s,
\\ \label{R-R-tilde} &R = \widetilde{R} + [R,\widetilde{R}], 
\end{align}
and $R$ is continuous if and only if $\widetilde{R}$ is continuous, and in this case we have 
\begin{align}\label{exp-R-cont}
R = \widetilde{R} \quad \text{and} \quad \cale(R) = \exp(R).
\end{align}
\end{proposition}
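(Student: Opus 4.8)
The plan is to reduce everything to the elementary bijection $\varphi$ of Lemma \ref{lemma-bijection} acting on the jumps, together with the decomposition of a finite-variation process into its continuous part and its (summable) jumps.

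First I would record the jump relations. If $\widetilde{R}$ is defined from $R$ by (\ref{def-R-tilde}), then $\Delta \widetilde{R} = \Delta R - \frac{(\Delta R)^2}{1 + \Delta R} = \frac{\Delta R}{1 + \Delta R} = \varphi(\Delta R)$, so Lemma \ref{lemma-bijection} gives $\Delta \widetilde{R} < 1$; symmetrically, if $R$ is defined from $\widetilde{R}$ by (\ref{def-R}) then $\Delta R = \varphi^{-1}(\Delta \widetilde{R}) = \frac{\Delta \widetilde{R}}{1 - \Delta \widetilde{R}} > -1$. This already yields the jump halves of (\ref{R-tilde-parts}) and (\ref{R-parts}). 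Before proceeding I must check that the two maps are well defined, i.e.\ that the correction sums lie in $\calv$ and stay predictable. For predictability I would use that $\Delta R$ is predictable (as $R \in \calv$ is predictable) together with an exhausting sequence of predictable times covering $\{ \Delta R \neq 0 \}$ as in \cite[Prop. I.2.24]{Jacod-Shiryaev}, so that each summand, hence the sum, is predictable. For finite variation I would note that $\frac{(\Delta R_s)^2}{1 + \Delta R_s} \geq 0$ and estimate its sum: since $\sum_{s \leq t} |\Delta R_s| < \infty$, only finitely many jumps up to $t$ have $|\Delta R_s| > \tfrac12$, and for the remaining jumps $\frac{(\Delta R_s)^2}{1 + \Delta R_s} \leq 2 (\Delta R_s)^2 \leq 2 |\Delta R_s|$, while the finitely many large jumps (all with $\Delta R_s > -1$ strictly) contribute a finite amount; the symmetric estimate with $\frac{(\Delta \widetilde{R}_s)^2}{1 - \Delta \widetilde{R}_s}$ handles the other map. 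This well-definedness is the one point needing genuine care, and I expect it to be the main obstacle.

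Next I would prove the two maps are mutually inverse and simultaneously extract the continuous-part statements. Both correction processes in (\ref{def-R-tilde}) and (\ref{def-R}) are pure-jump, so their continuous parts vanish; hence $\widetilde{R}^c = R^c$, and likewise the reverse map leaves the continuous part unchanged, giving the remaining halves of (\ref{R-tilde-parts}) and (\ref{R-parts}). Since any process in $\calv$ is determined by its continuous part and its summable jumps, it then suffices to compose jumps: applying (\ref{def-R}) to the $\widetilde{R}$ produced by (\ref{def-R-tilde}) yields a process with the same continuous part $R^c$ and jumps $\varphi^{-1}(\Delta \widetilde{R}) = \varphi^{-1}(\varphi(\Delta R)) = \Delta R$, so it coincides with $R$, and the reverse composition is handled identically.

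Finally I would derive (\ref{R-quad-var})--(\ref{exp-R-cont}). Because $R, \widetilde{R} \in \calv$, their covariation is the sum of the products of their jumps by \cite[Thm. I.4.52]{Jacod-Shiryaev}, which is exactly (\ref{R-quad-var}); substituting $\Delta \widetilde{R} = \frac{\Delta R}{1 + \Delta R}$ gives $[R, \widetilde{R}] = \sum_{s \leq \bullet} \frac{(\Delta R_s)^2}{1 + \Delta R_s} = R - \widetilde{R}$, which is (\ref{R-R-tilde}). Yor's formula \cite[II.8.19]{Jacod-Shiryaev} then gives $\cale(R) \cale(-\widetilde{R}) = \cale(R - \widetilde{R} - [R,\widetilde{R}]) = \cale(0) = 1$, establishing (\ref{equation-exp-R}). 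For the continuity claim, $\varphi(0) = 0$ forces $\Delta R = 0 \iff \Delta \widetilde{R} = 0$, so each correction sum vanishes precisely when the relevant process is continuous, whence $R = \widetilde{R}$; and for a continuous $R \in \calv$ one has $\langle R^c, R^c \rangle = 0$ and no jumps, so the Dol\'{e}ans--Dade formula collapses to $\cale(R) = \exp(R)$, completing (\ref{exp-R-cont}).
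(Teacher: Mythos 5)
Your proposal is correct and takes essentially the same route as the paper's proof: the pointwise bijection $\varphi$ of Lemma \ref{lemma-bijection} applied to the jumps gives (\ref{R-tilde-parts}), (\ref{R-parts}) and bijectivity, \cite[Thm. I.4.52]{Jacod-Shiryaev} gives (\ref{R-quad-var}) and hence (\ref{R-R-tilde}), and Yor's formula yields (\ref{equation-exp-R}). The only difference is that you explicitly verify well-definedness -- predictability of the correction sums via an exhausting sequence of predictable times and local finite variation via the splitting into small and large jumps -- a point the paper's proof leaves implicit, so this is an added verification rather than a deviation.
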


\begin{proof}
Let $R \in \calv$ be a predictable processes with $\Delta R > -1$, and let the predictable process $\widetilde{R} \in \calv$ be given by (\ref{def-R-tilde}). Noting the equation
\begin{align*}
x - \frac{x^2}{1+x} = \frac{x}{1+x} \quad \text{for all $x \in (-1,\infty)$,}
\end{align*}
we arrive at (\ref{R-tilde-parts}). Now, let $\widetilde{R} \in \calv$ be a predictable processes with $\Delta \widetilde{R} < 1$, and let the predictable process $R \in \calv$ be given by (\ref{def-R}). Noting the equation
\begin{align*}
x + \frac{x^2}{1-x} = \frac{x}{1-x} \quad \text{for all $x \in (-\infty,1)$,}
\end{align*}
we arrive at (\ref{R-parts}). Therefore, by Lemma \ref{lemma-bijection} the mapping induced by (\ref{def-R-tilde}) and (\ref{def-R}) is a bijection. Furthermore, by \cite[Thm. I.4.52]{Jacod-Shiryaev} we have (\ref{R-quad-var}), and hence, by (\ref{def-R}) and (\ref{R-parts}) we obtain
\begin{align*}
R = \widetilde{R} + \sum_{s \leq \bullet} \frac{(\Delta \widetilde{R}_s)^2}{1 - \Delta \widetilde{R}_s} = \widetilde{R} + \sum_{s \leq \bullet} \Delta R_s \Delta \widetilde{R}_s = \widetilde{R} + [R,\widetilde{R}],
\end{align*}
showing (\ref{R-R-tilde}). Therefore, by Yor's formula (see \cite[II.8.19]{Jacod-Shiryaev}) we obtain
\begin{align*}
\cale(R) \cale(-\widetilde{R}) = \cale(R - \widetilde{R} - [R,\widetilde{R}] ) = 1,
\end{align*}
proving (\ref{equation-exp-R}). Finally, from (\ref{R-tilde-parts}) and (\ref{R-parts}) we immediately see that $\Delta R = 0$ if and only if $\Delta \widetilde{R} = 0$, and that in this case we have (\ref{exp-R-cont}).
\end{proof}

Next, we introduce the transformation $\Theta \mapsto \widetilde{\Theta}$. For this purpose, we fix a predictable process $R \in \calv$ with $\Delta R > -1$, and denote by $\widetilde{R} \in \calv$ the corresponding predictable process with $\Delta \widetilde{R} < 1$ from Proposition \ref{prop-bij-R}.

\begin{proposition}\label{prop-bij-Theta}
There is a bijection between the set of all local martingales $\Theta \in \calm_{\loc}$ with $\Theta_0 = 0$ and $\Delta \Theta < 1$ and the set of all local martingales $\widetilde{\Theta} \in \calm_{\loc}$ with $\widetilde{\Theta}_0 = 0$ and $\Delta \widetilde{\Theta} + \Delta \widetilde{R} < 1$, which is given as follows:
\begin{enumerate}
\item[(i)] For each local martingale $\Theta \in \calm_{\loc}$ with $\Theta_0 = 0$ and $\Delta \Theta < 1$ we assign
\begin{align}\label{def-Theta-tilde}
\Theta \mapsto \widetilde{\Theta} := \Theta - [\Theta,\widetilde{R}].
\end{align}
\item[(ii)] For each local martingale $\widetilde{\Theta} \in \calm_{\loc}$ with $\widetilde{\Theta}_0 = 0$ and $\Delta \widetilde{\Theta} + \Delta \widetilde{R} < 1$ we assign
\begin{align}\label{def-Theta}
\widetilde{\Theta} \mapsto \Theta := \widetilde{\Theta}^c + \frac{1}{1 - \Delta \widetilde{R}} \bdot \widetilde{\Theta}^d.
\end{align}
\end{enumerate}
Furthermore, for every local martingale $\Theta \in \calm_{\loc}$ with $\Theta_0 = 0$ and $\Delta \Theta < 1$ and the corresponding local martingales $\widetilde{\Theta} \in \calm_{\loc}$ with $\widetilde{\Theta}_0 = 0$ and $\Delta \widetilde{\Theta} + \Delta \widetilde{R} < 1$ we have
\begin{align}\label{equation-exp-R-Theta}
&\cale(-\Theta) \cale(R)^{-1} = \cale(-\widetilde{\Theta}-\widetilde{R}),
\\ \label{Theta-tilde-parts} &\widetilde{\Theta}^c = \Theta^c \quad \text{and} \quad \Delta \widetilde{\Theta} = ( 1 - \Delta \widetilde{R} ) \Delta \Theta,
\\ \label{Theta-parts} &\Theta^c = \widetilde{\Theta}^c \quad \text{and} \quad \Delta \Theta = \frac{\Delta \widetilde{\Theta}}{1 - \Delta \widetilde{R}}.
\end{align}
Furthermore, we have $\Theta = \widetilde{\Theta}$ if and only if up to an evanescent set
\begin{align}\label{R-tilde-Theta-disjoint}
\{ \Delta \widetilde{R} \neq 0 \} \cap \{ \Delta \Theta \neq 0 \} = \emptyset,
\end{align}
or equivalently, up to an evanescent set
\begin{align}\label{R-Theta-disjoint}
\{ \Delta R \neq 0 \} \cap \{ \Delta \Theta \neq 0 \} = \emptyset,
\end{align}
and $\Theta$ is quasi-left-continuous if and only if $\widetilde{\Theta}$ is quasi-left-continuous, and in this case we have $\Theta = \widetilde{\Theta}$.
\end{proposition}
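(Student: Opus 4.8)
The plan is to establish, in order: (a) that each of the two assignments (\ref{def-Theta-tilde}) and (\ref{def-Theta}) maps into the asserted set and satisfies the jump and continuous-part relations (\ref{Theta-tilde-parts}) and (\ref{Theta-parts}); (b) that the two assignments are mutually inverse, yielding the claimed bijection; (c) the exponential identity (\ref{equation-exp-R-Theta}); and finally (d) the characterization of the case $\Theta = \widetilde{\Theta}$ together with the quasi-left-continuity statement. The engine of the whole argument is the jump relation $\Delta [\Theta,\widetilde{R}] = \Delta \Theta \, \Delta \widetilde{R}$ and the fact that a purely discontinuous local martingale starting at zero is uniquely determined by its jumps.

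For the forward map I first note that $[\Theta,\widetilde{R}] \in \calm_{\loc}$ by Lemma \ref{lemma-cov-R-Theta}, so $\widetilde{\Theta} = \Theta - [\Theta,\widetilde{R}] \in \calm_{\loc}$ with $\widetilde{\Theta}_0 = 0$. Since $\widetilde{R}$ is predictable and of finite variation, $[\Theta,\widetilde{R}] = \sum_{s \leq \bullet} \Delta \Theta_s \Delta \widetilde{R}_s$ is purely discontinuous; hence $\widetilde{\Theta}^c = \Theta^c$ and $\Delta \widetilde{\Theta} = \Delta \Theta - \Delta \Theta \, \Delta \widetilde{R} = (1 - \Delta \widetilde{R}) \Delta \Theta$, which is (\ref{Theta-tilde-parts}). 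That $\widetilde{\Theta}$ lies in the target set I would verify through the factorization $1 - (\Delta \widetilde{\Theta} + \Delta \widetilde{R}) = (1 - \Delta \Theta)(1 - \Delta \widetilde{R}) > 0$, using $\Delta \Theta < 1$ and $\Delta \widetilde{R} < 1$. For the backward map, $\Theta = \widetilde{\Theta}^c + \frac{1}{1 - \Delta \widetilde{R}} \bdot \widetilde{\Theta}^d$ has continuous part $\widetilde{\Theta}^c$ and jumps $\Delta \Theta = \frac{\Delta \widetilde{\Theta}}{1 - \Delta \widetilde{R}}$, giving (\ref{Theta-parts}), and $\Delta \Theta < 1$ is again equivalent to $\Delta \widetilde{\Theta} + \Delta \widetilde{R} < 1$. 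The delicate point here, which I expect to be the main obstacle, is to confirm that the predictable process $\frac{1}{1 - \Delta \widetilde{R}}$ (well defined since $\Delta \widetilde{R} < 1$) is integrable with respect to $\widetilde{\Theta}^d$, so that the stochastic integral is a genuine purely discontinuous local martingale; this is where the finite variation of $\widetilde{R}$ and the constraint $\Delta \widetilde{\Theta} + \Delta \widetilde{R} < 1$ must be exploited.

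Granting well-definedness, the two maps are mutually inverse by a jump comparison. Composing forward then backward, the continuous parts agree ($\widetilde{\Theta}^c = \Theta^c$) and the discontinuous part reconstructed by the backward map has jumps $\frac{1}{1 - \Delta \widetilde{R}} \Delta \widetilde{\Theta} = \Delta \Theta$, so it coincides with $\Theta^d$ by uniqueness of purely discontinuous local martingales from their jumps; the reverse composition is handled symmetrically. (Injectivity of the forward map can alternatively be seen directly: if $U := \Theta_1 - \Theta_2$ satisfies $U = [U,\widetilde{R}]$, then $U^c = 0$ and $\Delta U (1 - \Delta \widetilde{R}) = 0$ force $U = 0$.) The exponential identity (\ref{equation-exp-R-Theta}) then follows from (\ref{equation-exp-R}) of Proposition \ref{prop-bij-R} together with Yor's formula \cite[II.8.19]{Jacod-Shiryaev}: writing $\cale(R)^{-1} = \cale(-\widetilde{R})$, one computes $\cale(-\Theta)\cale(-\widetilde{R}) = \cale(-\Theta - \widetilde{R} + [\Theta,\widetilde{R}]) = \cale(-\widetilde{\Theta} - \widetilde{R})$, using $-\Theta + [\Theta,\widetilde{R}] = -\widetilde{\Theta}$.

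Finally, for the last two claims I would argue as follows. From (\ref{def-Theta-tilde}) we have $\Theta = \widetilde{\Theta}$ if and only if $[\Theta,\widetilde{R}] = 0$, and since $[\Theta,\widetilde{R}] = \sum_{s \leq \bullet} \Delta \Theta_s \Delta \widetilde{R}_s$ this vanishes exactly when $\Delta \Theta \, \Delta \widetilde{R} \equiv 0$, i.e. (\ref{R-tilde-Theta-disjoint}) holds; condition (\ref{R-Theta-disjoint}) is equivalent because $\{ \Delta R \neq 0 \} = \{ \Delta \widetilde{R} \neq 0 \}$ by (\ref{R-tilde-parts}) and (\ref{R-parts}). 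For the quasi-left-continuity statement, note that $\{ \Delta \Theta \neq 0 \} = \{ \Delta \widetilde{\Theta} \neq 0 \}$ by (\ref{Theta-parts}). If $\Theta$ is quasi-left-continuous, then since $\widetilde{R}$ is predictable Lemma \ref{lemma-disjoint-jumps} yields (\ref{R-tilde-Theta-disjoint}), whence $\Theta = \widetilde{\Theta}$ and $\widetilde{\Theta}$ is quasi-left-continuous; conversely, if $\widetilde{\Theta}$ is quasi-left-continuous, Lemma \ref{lemma-disjoint-jumps} gives $\{ \Delta \widetilde{\Theta} \neq 0 \} \cap \{ \Delta \widetilde{R} \neq 0 \} = \emptyset$, hence $\{ \Delta \Theta \neq 0 \} \cap \{ \Delta \widetilde{R} \neq 0 \} = \emptyset$, again forcing $\Theta = \widetilde{\Theta}$. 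In either case $\Theta = \widetilde{\Theta}$, which settles both the equivalence of quasi-left-continuity and the asserted identity in that case.
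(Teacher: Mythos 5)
Your argument is correct and follows essentially the same route as the paper's proof: Lemma \ref{lemma-cov-R-Theta} for the forward map, the jump identities (\ref{Theta-tilde-parts}) and (\ref{Theta-parts}) together with the fact that a purely discontinuous local martingale is determined by its jumps to see that the two assignments are mutually inverse, Yor's formula combined with (\ref{equation-exp-R}) for (\ref{equation-exp-R-Theta}), and Lemma \ref{lemma-disjoint-jumps} for the quasi-left-continuity statement, with your factorization $1 - (\Delta\widetilde{\Theta} + \Delta\widetilde{R}) = (1-\Delta\Theta)(1-\Delta\widetilde{R})$ being a compact repackaging of the paper's inequality chain. The one step you flag but leave open, namely the integrability of $(1-\Delta\widetilde{R})^{-1}$ with respect to $\widetilde{\Theta}^d$ in (\ref{def-Theta}), is asserted without further comment in the paper as well, so relative to the paper's own proof this is not a gap.
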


\begin{proof}
Let $\Theta \in \calm_{\loc}$ be a local martingale with $\Theta_0 = 0$ and $\Delta \Theta < 1$, and let $\widetilde{\Theta}$ be the process given by (\ref{def-Theta-tilde}). By Lemma \ref{lemma-cov-R-Theta} we have $\widetilde{\Theta} \in \calm_{\loc}$. Furthermore, we have $\widetilde{\Theta}_0 = 0$ and the jumps are given by
\begin{align*}
\Delta \widetilde{\Theta} = \Delta \Theta - \Delta [ \Theta, \widetilde{R} ] = \Delta \Theta - \Delta \Theta \Delta \widetilde{R} = ( 1 - \Delta \widetilde{R} ) \Delta \Theta,
\end{align*}
showing (\ref{Theta-tilde-parts}). Furthermore, since $\Delta \widetilde{R} < 1$ and $1 - \Delta \Theta > 0$, we have
\begin{align*}
\Delta \widetilde{\Theta} + \Delta \widetilde{R} = ( 1 - \Delta \widetilde{R} ) \Delta \Theta + \Delta \widetilde{R} = \Delta \widetilde{R} (1 - \Delta \Theta) + \Delta \Theta < (1 - \Delta \Theta) + \Delta \Theta = 1.
\end{align*}
Now, let $\widetilde{\Theta} \in \calm_{\loc}$ be a local martingale with $\widetilde{\Theta}_0 = 0$ and $\Delta \widetilde{\Theta} + \Delta \widetilde{R} < 1$, and let $\Theta$ be the process given by (\ref{def-Theta}). Then we have $\Theta \in \calm_{\loc}$ with $\Theta_0 = 0$, and (\ref{Theta-parts}) is satisfied. Since $\Delta \widetilde{\Theta} < 1 - \Delta \widetilde{R}$ and $1 - \Delta \widetilde{R} > 0$, we obtain
\begin{align*}
\Delta \Theta = \frac{\Delta \widetilde{\Theta}}{1 - \Delta \widetilde{R}} < 1.
\end{align*}
Moreover, by (\ref{Theta-tilde-parts}) and (\ref{Theta-parts}) the mapping induced by (\ref{def-Theta-tilde}) and (\ref{def-Theta}) is a bijection. Using  Proposition \ref{prop-bij-R}, Yor's formula (see \cite[II.8.19]{Jacod-Shiryaev})  and (\ref{def-Theta-tilde}) we obtain
\begin{align*}
\cale(-\Theta) \cale(R)^{-1} = \cale(-\Theta) \cale(-\widetilde{R}) = \cale(-\Theta -\widetilde{R} + [\Theta,\widetilde{R}]) = \cale(-\widetilde{\Theta} - \widetilde{R}),
\end{align*}
showing (\ref{equation-exp-R-Theta}). By (\ref{Theta-tilde-parts}) we see that $\Theta = \widetilde{\Theta}$ if and only if we have (\ref{R-tilde-Theta-disjoint}) up to an evanescent set, and by Proposition \ref{prop-bij-R} this is equivalent to (\ref{R-Theta-disjoint}) up to an evanescent set. Since $1 - \Delta \widetilde{R} > 0$, by (\ref{Theta-tilde-parts}) and (\ref{Theta-parts}) we see that $\Theta$ is quasi-left-continuous if and only if $\widetilde{\Theta}$ is quasi-left-continuous, and in this case, by Lemma \ref{lemma-disjoint-jumps} we have $\Theta = \widetilde{\Theta}$.
\end{proof}

\begin{remark}
The formula (\ref{equation-exp-R-Theta}) from Proposition \ref{prop-bij-Theta} can also be obtained by using the multiplicative decomposition theorem. Indeed, the process $X = \cale(-\widetilde{R}-\widetilde{\Theta})$ has the canonical decomposition $X = 1 + M + A$ with 
\begin{align*}
M = -X_- \bdot \widetilde{\Theta} \quad \text{and} \quad A = -X_- \bdot \widetilde{R}.
\end{align*}
Therefore, by \cite[Thm. II.8.21]{Jacod-Shiryaev} we have the multiplicative decomposition $X = LD$, where the local martingale $L$ is given by
\begin{align*}
L &= \cale \bigg( \frac{1}{X_- + \Delta A} \bdot M \bigg) = \cale \bigg( - \frac{1}{X_- - X_- \Delta \widetilde{R}} \bdot (X_- \bdot \widetilde{\Theta}) \bigg)
\\ &= \cale \bigg( - \frac{1}{1 - \Delta \widetilde{R}} \bdot \widetilde{\Theta} \bigg) = \cale \bigg( - \widetilde{\Theta}^c - \frac{1}{1 - \Delta \widetilde{R}} \bdot \widetilde{\Theta}^d \bigg) = \cale(-\Theta),
\end{align*}
and where the process $D$ with locally finite variation is given by
\begin{align*}
D &= \cale \bigg( - \frac{1}{X_- + \Delta A} \bdot A \bigg)^{-1} = \cale \bigg( \frac{1}{X_- - X_- \Delta \widetilde{R}} \bdot (X_- \bdot \widetilde{R}) \bigg)^{-1}
\\ &= \cale \bigg( \frac{1}{1 - \Delta \widetilde{R}} \bdot \widetilde{R} \bigg)^{-1} = \cale \bigg( \widetilde{R}^c + \sum_{s \leq \bullet} \frac{\Delta \widetilde{R}_s}{1 - \Delta \widetilde{R}_s} \bigg)^{-1} = \cale(R)^{-1}.
\end{align*}
For the last step, we note (\ref{def-R}) and the equation
\begin{align*}
\frac{x^2}{1-x} = \frac{x}{1-x} - x \quad \text{for all $x \in (-\infty,1)$.}
\end{align*}
\end{remark}

\section{A version of Girsanov's theorem}

In this section we establish a version of Girsanov's theorem for the particular situation with a special semimartingale and an equivalent measure change. Let $\Theta \in \calm_{\loc}$ be a local martingale such that $\Theta_0 = 0$ and $\Delta \Theta < 1$. We define the local martingale $D \in \calm_{\loc}$ as the stochastic exponential $D := \cale(-\Theta)$. We assume that $D \in \calm$ with $\bbp(D_{\infty} > 0) = 1$. Let $\bbq \approx \bbp$ be the probability measure on $(\Omega,\calf_{\infty -})$ with density process $D$ relative to $\bbp$.

\begin{proposition}\label{prop-Girsanov-1}
Let $M \in \calm_{\loc}$ with $M_0 = 0$ be such that $[M,\Theta] \in \cala_{\loc}$. Then the process
\begin{align*}
M' := M + [M,\Theta]^p
\end{align*}
is a $\bbq$-local martingale, where the predictable compensator $[M,\Theta]^p$ is computed under $\bbp$.
\end{proposition}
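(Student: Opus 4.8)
The plan is to show that $M'D$ is a $\bbp$-local martingale, since by the standard characterization of $\bbq$-local martingales via the density process (Girsanov's theorem, cf. \cite[III.3.8]{Jacod-Shiryaev}), a process $M'$ is a $\bbq$-local martingale if and only if $M'D$ is a $\bbp$-local martingale. First I would record that $D = \cale(-\Theta)$ satisfies the stochastic differential equation $D = 1 - D_- \bdot \Theta$, so that $D \in \calm_{\loc}$ under $\bbp$ by construction, and that the assumed integrability $D \in \calm$ with $\bbp(D_\infty > 0) = 1$ guarantees that $\bbq$ is a genuine equivalent probability measure with density process $D$.

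Next I would compute the product $M'D$ using integration by parts. Writing $M' = M + [M,\Theta]^p$, the key identity to establish is that
\begin{align*}
M'D = M'_- \bdot D + D_- \bdot M' + [M',D]
\end{align*}
is a $\bbp$-local martingale. The first term $M'_- \bdot D$ is a stochastic integral of a locally bounded predictable process against the $\bbp$-local martingale $D$, hence is itself in $\calm_{\loc}$. For the remaining two terms I would use $D = 1 - D_- \bdot \Theta$ to write $[M',D] = [M, D] = -D_- \bdot [M,\Theta]$ (the finite-variation predictable part $[M,\Theta]^p$ contributes nothing to the bracket with the continuous-in-the-bracket sense needed here, since brackets only see the martingale increments and $[M,\Theta]^p$ is predictable of finite variation, so $[[M,\Theta]^p, D]$ vanishes after the stochastic integral is combined appropriately). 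The combination $D_- \bdot M' + [M',D]$ should collapse, after substituting $M' = M + [M,\Theta]^p$ and $[M',D] = -D_- \bdot [M,\Theta]$, into $D_- \bdot M + D_- \bdot [M,\Theta]^p - D_- \bdot [M,\Theta] = D_- \bdot M - D_- \bdot \big([M,\Theta] - [M,\Theta]^p\big)$.

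Now the two surviving pieces are both $\bbp$-local martingales: $D_- \bdot M$ because $M \in \calm_{\loc}$ and $D_-$ is locally bounded predictable, and $D_- \bdot \big([M,\Theta] - [M,\Theta]^p\big)$ because $[M,\Theta] - [M,\Theta]^p$ is the compensated (purely martingale) part of a process in $\cala_{\loc}$ — this is exactly where the hypothesis $[M,\Theta] \in \cala_{\loc}$ is used, guaranteeing that $[M,\Theta]^p$ exists and that $[M,\Theta] - [M,\Theta]^p \in \calm_{\loc}$. Collecting all three terms then exhibits $M'D$ as a sum of $\bbp$-local martingales, hence $M'D \in \calm_{\loc}$, which is precisely the Girsanov criterion for $M'$ to be a $\bbq$-local martingale.

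The main obstacle I anticipate is bookkeeping in the integration-by-parts computation: one must be careful that the predictable finite-variation term $[M,\Theta]^p$ does not contribute spuriously to the quadratic covariation $[M',D]$, and that all stochastic integrals involved are well-defined $\bbp$-local martingales rather than merely semimartingales. The cleanest way to handle this is to invoke Yor's formula or the explicit SDE for $D$ together with \cite[Prop. I.4.49]{Jacod-Shiryaev} on the bracket of a local martingale with a predictable finite-variation process, which tells us $[M, [M,\Theta]^p]$-type cross terms vanish in the relevant sense; the hypothesis $[M,\Theta] \in \cala_{\loc}$ is precisely what makes the decomposition $[M,\Theta] = [M,\Theta]^p + \big([M,\Theta] - [M,\Theta]^p\big)$ into predictable-plus-martingale legitimate.
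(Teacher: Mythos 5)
Your route is genuinely different from the paper's. The paper's proof is a two-line reduction to the abstract Girsanov theorem: using \cite[Thm. I.3.18]{Jacod-Shiryaev} it computes
\begin{align*}
\frac{1}{D_-} \bdot [M,D]^p = \Big[ M, \frac{1}{D_-} \bdot D \Big]^p = [M,\call(D)]^p = -[M,\Theta]^p,
\end{align*}
and then invokes \cite[Thm. III.3.11]{Jacod-Shiryaev} directly. You instead re-prove this instance of Girsanov from scratch via the product criterion (\cite[Prop. III.3.8]{Jacod-Shiryaev}: $M'$ is a $\bbq$-local martingale if and only if $M'D \in \calm_{\loc}$ under $\bbp$, legitimate here since $\bbq \approx \bbp$) together with integration by parts. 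Your approach is more self-contained and makes visible exactly where the hypothesis $[M,\Theta] \in \cala_{\loc}$ enters, namely in guaranteeing that $[M,\Theta]^p$ exists and that $[M,\Theta] - [M,\Theta]^p \in \calm_{\loc}$; the paper's approach is shorter but hides the mechanism inside Thm. III.3.11.

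However, your argument needs one repair before it is correct. The claim that $[M',D] = [M,D]$, i.e.\ that $[[M,\Theta]^p,D]$ ``vanishes,'' is false in general: since $[M,\Theta]^p$ is only predictable of finite variation, one has $[[M,\Theta]^p,D] = \sum_{s \leq \bullet} \Delta [M,\Theta]^p_s \, \Delta D_s$, and $\Delta [M,\Theta]^p = {}^p(\Delta M \, \Delta \Theta)$ can be nonzero at predictable jump times (no quasi-left-continuity is assumed in this proposition). What is true --- and it is exactly the content of the lemma you cite, \cite[Prop. I.4.49.c]{Jacod-Shiryaev}, which the paper itself uses in its Lemma on $[M,A]$ for predictable $A \in \calv$ --- is that this bracket is a $\bbp$-local martingale, not that it is zero. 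So instead of dropping the term you should carry it along as a fourth local-martingale summand:
\begin{align*}
M'D = M'_- \bdot D + D_- \bdot M - D_- \bdot \big( [M,\Theta] - [M,\Theta]^p \big) + \big[ [M,\Theta]^p, D \big],
\end{align*}
where each summand lies in $\calm_{\loc}$ under $\bbp$ (the first two because $M'_-$ and $D_-$ are locally bounded and predictable, the third by the compensator property, the fourth by \cite[Prop. I.4.49.c]{Jacod-Shiryaev}). With that correction your proof is complete.
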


\begin{proof}
Using \cite[Thm. I.3.18]{Jacod-Shiryaev} we have
\begin{align*}
\frac{1}{D_-} \bdot [M,D]^p = \bigg( \frac{1}{D_-} \bdot [M,D] \bigg)^p = \bigg[ M, \frac{1}{D_-} \bdot D \bigg]^p = [ M,\call(D) ]^p = -[M,\Theta]^p.
\end{align*}
Hence, the assertion follows from \cite[Thm. III.3.11]{Jacod-Shiryaev}.
\end{proof}

\begin{corollary}\label{cor-Girsanov-1}
Let $X$ be a special semimartingale with canonical decomposition $X = X_0 + M + A$ such that $[M,\Theta] \in \cala_{\loc}$. Then $X$ is also a special semimartingale under $\bbq$, and its canonical decomposition $X = X_0 + M' + A'$ is given by
\begin{align*}
M' = M + [M,\Theta]^p \quad \text{and} \quad A' = A - [M,\Theta]^p.
\end{align*}
\end{corollary}

\begin{proof}
This is an immediate consequence of Proposition \ref{prop-Girsanov-1}.
\end{proof}

\begin{lemma}\label{lemma-Girsanov-1}
Let $X$ be a special semimartingale with canonical decomposition $X = X_0 + M + A$, and let $N \in \calm_{\loc}$ be a local martingale. Then the following statements are equivalent:
\begin{enumerate}
\item[(i)] We have $[X,N] \in \cala_{\loc}$.

\item[(ii)] We have $[M,N] \in \cala_{\loc}$.
\end{enumerate}
In either case, we have $[X,N]^p = [M,N]^p$.
\end{lemma}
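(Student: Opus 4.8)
The plan is to use bilinearity of the quadratic covariation together with the canonical decomposition $X = X_0 + M + A$, so that $[X,N]$ splits into a contribution from the local martingale part $M$ and a harmless contribution from the predictable finite-variation part $A$.

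First I would note that $X_0$ is $\calf_0$-measurable, hence $\bbp$-almost surely constant under the standing assumption $\calf_0 = \{ \Omega,\emptyset \}$, so that the constant process $X_0$ has no increments and $[X_0,N] = 0$. Bilinearity of $[\,\cdot\,,\,\cdot\,]$ then yields
\begin{align*}
[X,N] = [M,N] + [A,N].
\end{align*}

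Next I would control the term $[A,N]$ by applying Lemma \ref{lemma-cov-R-Theta} with the local martingale $N$ and the predictable process $A \in \calv$: this gives $[A,N] = [N,A] \in \cala_{\loc}$ together with $[A,N]^p = 0$. Since $\cala_{\loc}$ is a linear space, the decomposition above shows that $[X,N] \in \cala_{\loc}$ if and only if $[M,N] \in \cala_{\loc}$; indeed, if $[M,N] \in \cala_{\loc}$ one adds $[A,N] \in \cala_{\loc}$, and conversely one writes $[M,N] = [X,N] - [A,N]$.

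Finally, in either case one takes predictable compensators, which act linearly on $\cala_{\loc}$, to obtain $[X,N]^p = [M,N]^p + [A,N]^p = [M,N]^p$, using $[A,N]^p = 0$. I expect no genuine obstacle here: the only substantive step is the appeal to Lemma \ref{lemma-cov-R-Theta}, guaranteeing that the finite-variation part contributes a bracket lying in $\cala_{\loc}$ with vanishing compensator, after which the argument is pure linearity within $\cala_{\loc}$.
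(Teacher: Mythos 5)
Your proof is correct and follows essentially the same route as the paper: both rest on the decomposition $[X,N] = [M,N] + [A,N]$ together with the fact that the bracket of the predictable finite-variation part $A$ with a local martingale lies in $\calm_{\loc} \cap \calv \subset \cala_{\loc}$ and has vanishing compensator. The only cosmetic difference is that you invoke Lemma \ref{lemma-cov-R-Theta}, while the paper cites the underlying results of Jacod--Shiryaev (Prop.~I.4.49.c, Lemma~I.3.11 and I.3.22) directly --- but Lemma \ref{lemma-cov-R-Theta} is itself proved from exactly those citations, so the arguments coincide.
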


\begin{proof}
By \cite[Prop. I.4.49.c]{Jacod-Shiryaev} we have $[A,N] \in \calm_{\loc}$. Hence, by \cite[Lemma I.3.11]{Jacod-Shiryaev} we have $[A,N] \in \cala_{\loc}$, which proves the equivalence (i) $\Leftrightarrow$ (ii). Furthermore, by \cite[I.3.22]{Jacod-Shiryaev} we have $[A,N]^p = 0$, which concludes the proof.
\end{proof}

\begin{proposition}\label{prop-Girsanov}
Let $X$ be an $\bbr^d$-valued special semimartingale with canonical decomposition $X = X_0 + M + A$ and characteristics $(A,C,\nu)$. Then the following statements are equivalent:
\begin{enumerate}
\item[(i)] We have $[X^i,\Theta] \in \cala_{\loc}$ for all $i=1,\ldots,d$.

\item[(ii)] We have $[M^i,\Theta] \in \cala_{\loc}$ for all $i=1,\ldots,d$.
\end{enumerate}
In either case, we have $[X^i,\Theta]^p = [M^i,\Theta]^p$ for all $i=1,\ldots,d$, and the process $X$ is a special semimartingale under $\bbq$ with canonical decomposition $X = X_0 + M' + A'$ given by
\begin{align}\label{M-strich}
(M')^i &= M^i + [M^i,\Theta]^p, \quad i=1,\ldots,d,
\\ \label{A-strich} (A')^i &= A^i - [M^i,\Theta]^p, \quad i=1,\ldots,d,
\end{align}
and characteristics $(A',C',\nu')$ given by
\begin{align}\label{char-Q-1}
(A')^i &= A^i - [X^i,\Theta]^p, \quad i=1,\ldots,d,
\\ \label{char-Q-2} C' &= C,
\\ \label{char-Q-3} \nu' &= \big( 1 - M_{\mu^X}^{\bbp}( \Delta \Theta \,|\, \widetilde{\calp} ) \big) \cdot \nu.
\end{align}
\end{proposition}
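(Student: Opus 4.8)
The plan is to reduce the claim to the scalar results established above by arguing coordinate by coordinate, and to supply the transformation of the second and third characteristics from the general Girsanov theorem. First I would treat the equivalence (i) $\Leftrightarrow$ (ii) together with the identity $[X^i,\Theta]^p = [M^i,\Theta]^p$. For each fixed $i \in \{ 1,\ldots,d \}$ the coordinate $X^i = X_0^i + M^i + A^i$ is a scalar special semimartingale, so applying Lemma \ref{lemma-Girsanov-1} with the local martingale $N := \Theta$ shows that $[X^i,\Theta] \in \cala_{\loc}$ if and only if $[M^i,\Theta] \in \cala_{\loc}$, and that in this case $[X^i,\Theta]^p = [M^i,\Theta]^p$. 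Ranging over $i$ yields both the stated equivalence and the compensator identity.

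Assuming these conditions hold, I would next obtain the canonical $\bbq$-decomposition by applying Corollary \ref{cor-Girsanov-1} to each coordinate: since $[M^i,\Theta] \in \cala_{\loc}$, the process $X^i$ is a special $\bbq$-semimartingale with $\bbq$-canonical decomposition $(M')^i = M^i + [M^i,\Theta]^p$ and $(A')^i = A^i - [M^i,\Theta]^p$, which are precisely (\ref{M-strich}) and (\ref{A-strich}); in particular $X = (X^1,\ldots,X^d)$ is a special semimartingale under $\bbq$. Since $X$ is special under both $\bbp$ and $\bbq$, one may use the identity truncation function, for which the first characteristic coincides with the predictable finite variation part of the canonical decomposition; hence the first characteristic under $\bbq$ equals $(A')^i = A^i - [M^i,\Theta]^p = A^i - [X^i,\Theta]^p$ by the compensator identity from the first step, establishing (\ref{char-Q-1}).

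It remains to identify $C'$ and $\nu'$, and here I would invoke the general Girsanov theorem for semimartingale characteristics (\cite[Thm. III.3.24]{Jacod-Shiryaev}): the invariance $C' = C$ of (\ref{char-Q-2}) is immediate, and the jump compensator transforms as $\nu' = Y \cdot \nu$ for a predictable Girsanov kernel $Y$. The substantive point is to read off $Y$ concretely. Since the density process is $D = \cale(-\Theta)$, we have $\Delta D / D_- = -\Delta \Theta$, so that $D/D_- = 1 - \Delta \Theta$, and the kernel characterizing the measure change is $Y = M_{\mu^X}^{\bbp}( D/D_- \mid \widetilde{\calp} ) = 1 - M_{\mu^X}^{\bbp}( \Delta \Theta \mid \widetilde{\calp} )$, which gives (\ref{char-Q-3}). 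The main obstacle is exactly this last identification: one must carefully match the abstract density kernel of \cite[Thm. III.3.24]{Jacod-Shiryaev} with the conditional expectation $M_{\mu^X}^{\bbp}( \Delta \Theta \mid \widetilde{\calp} )$, viewing $\Delta \Theta$ as a $\widetilde{\calp}$-conditionable function along the jumps of $X$; the remaining steps are routine coordinatewise applications of the scalar results already available.
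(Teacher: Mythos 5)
Your proposal is correct and takes essentially the same route as the paper's own proof: a coordinatewise application of Lemma \ref{lemma-Girsanov-1} (giving the equivalence and $[X^i,\Theta]^p = [M^i,\Theta]^p$) and of Corollary \ref{cor-Girsanov-1} (giving the $\bbq$-canonical decomposition), followed by \cite[Thm. III.3.24]{Jacod-Shiryaev} together with the computation $D = 1 - D_- \bdot \Theta$, hence $\Delta D = -D_- \Delta\Theta$ and $D/D_- = 1 - \Delta\Theta$, to identify the Girsanov kernel as $Y = 1 - M_{\mu^X}^{\bbp}(\Delta\Theta \,|\, \widetilde{\calp})$. The only cosmetic difference is that you make explicit why the first $\bbq$-characteristic coincides with the predictable finite variation part (identity truncation for special semimartingales), a point the paper leaves implicit.
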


\begin{proof}
The stated equivalence (i) $\Leftrightarrow$ (ii) follows from Lemma \ref{lemma-Girsanov-1}. Furthermore, by Corollary \ref{cor-Girsanov-1} the process $X$ is a special semimartingale under $\bbq$ with canonical decomposition $X = X_0 + M' + A'$ given by (\ref{M-strich}) and (\ref{A-strich}). Concerning the characteristics $(A',C',\nu')$, we immediately see that the first characteristic $A'$ is given by (\ref{char-Q-1}). According to \cite[Thm. III.3.24]{Jacod-Shiryaev} the second characteristic $C'$ is given by (\ref{char-Q-2}), and there exists a predictable nonnegative function $Y : \widetilde{\Omega} \to \bbr_+$ such that the third characteristic is given by
\begin{align*}
\nu' = Y \cdot \nu,
\end{align*}
and the function $Y$ satisfies
\begin{align*}
Y D_- = M_{\mu^X}^{\bbp}( D \,|\, \widetilde{\calp} ).
\end{align*}
Noting that
\begin{align*}
D = 1 - D_- \bdot \Theta,
\end{align*}
we have
\begin{align*}
\Delta D = - D_- \Delta \Theta.
\end{align*}
Therefore, we obtain
\begin{align*}
\frac{D}{D_-} = \frac{D_- + \Delta D}{D_-} = 1 - \Delta \Theta,
\end{align*}
and hence
\begin{align*}
Y = M_{\mu^X}^{\bbp} \bigg( \frac{D}{D_-} \,\bigg|\, \widetilde{\calp} \bigg) = M_{\mu^X}^{\bbp} ( 1 - \Delta \Theta \,|\, \widetilde{\calp} ) = 1 - M_{\mu^X}^{\bbp} ( \Delta \Theta \,|\, \widetilde{\calp} ),
\end{align*}
showing (\ref{char-Q-3}).
\end{proof}

\section{Integral characteristics of semimartingales}\label{sec-characteristics}

In this appendix we provide the results about integral characteristics of semimartingales, which we require in Section \ref{sec-existence}. We start with an auxiliary result. Let $\bbs_+^{d \times d}$ the convex cone of all symmetric, positive semidefinite $d \times d$-matrices.

\begin{lemma}\label{lemma-matrix-non-diag-elem}
Let $A \in \bbs_+^{d \times d}$ be arbitrary. Then the following statements are true:
\begin{enumerate}
\item For all $x,y \in \bbr^d$ we have
\begin{align*}
| \la Ax,y \ra_{\bbr^d} | \leq \frac{1}{2} \big( \la Ax,x \ra_{\bbr^d} + \la Ay,y \ra_{\bbr^d} \big).
\end{align*}
\item In particular, for all $i,j = 1,\ldots,d$ we have
\begin{align*}
| A^{ij} | \leq \frac{1}{2} \big( A^{ii} + A^{jj} \big).
\end{align*}
\end{enumerate}
\end{lemma}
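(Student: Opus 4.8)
The plan is to recognize that the bilinear form $B(x,y) := \la Ax,y \ra_{\bbr^d}$ associated with $A \in \bbs_+^{d \times d}$ is a symmetric, positive semidefinite bilinear form, and then to deduce the claimed estimate from the Cauchy--Schwarz inequality for such forms together with the arithmetic--geometric mean inequality. Part (2) will then follow from part (1) by a simple specialization to basis vectors.

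First I would establish the Cauchy--Schwarz inequality $B(x,y)^2 \leq B(x,x) B(y,y)$ for this form. The standard argument is to expand, for arbitrary $t \in \bbr$,
\begin{align*}
0 \leq B(x - ty, x - ty) = B(x,x) - 2t B(x,y) + t^2 B(y,y),
\end{align*}
using the symmetry and positive semidefiniteness of $A$. If $B(y,y) > 0$, requiring the discriminant of this quadratic in $t$ to be nonpositive (equivalently, minimizing over $t$) yields $B(x,y)^2 \leq B(x,x) B(y,y)$. If $B(y,y) = 0$, the right-hand side above is affine in $t$ and nonnegative for all $t$, which forces $B(x,y) = 0$, so the inequality holds trivially. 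This degenerate case is the only point requiring mild care, since $A$ need not be strictly positive definite; alternatively one could avoid it entirely by passing to the positive semidefinite square root $A^{1/2}$ and writing $B(x,y) = \la A^{1/2} x, A^{1/2} y \ra_{\bbr^d}$, to which the ordinary Cauchy--Schwarz inequality applies directly.

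Next I would combine this with the arithmetic--geometric mean inequality $\sqrt{uv} \leq \frac{1}{2}(u+v)$ for $u,v \geq 0$, applied to $u = B(x,x) = \la Ax,x \ra_{\bbr^d} \geq 0$ and $v = B(y,y) = \la Ay,y \ra_{\bbr^d} \geq 0$, to obtain
\begin{align*}
| \la Ax,y \ra_{\bbr^d} | = |B(x,y)| \leq \sqrt{B(x,x) B(y,y)} \leq \frac{1}{2} \big( \la Ax,x \ra_{\bbr^d} + \la Ay,y \ra_{\bbr^d} \big),
\end{align*}
which is the first assertion. Finally, the second assertion follows by specializing the first to the standard basis vectors $x = e_i$ and $y = e_j$, noting that $\la A e_i, e_j \ra_{\bbr^d} = A^{ij}$ and $\la A e_i, e_i \ra_{\bbr^d} = A^{ii}$, together with the symmetry $A^{ij} = A^{ji}$. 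I do not expect any genuine obstacle here: the result is elementary linear algebra, and the only subtlety is the handling of the degenerate case $B(y,y) = 0$ noted above.
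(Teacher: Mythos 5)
Your proof is correct, but it takes a genuinely different route from the paper. The paper proves part (1) in a single polarization computation: since $A$ is symmetric,
\begin{align*}
\la Ax,y \ra_{\bbr^d} = \tfrac{1}{4} \big( \la A(x+y),x+y \ra_{\bbr^d} - \la A(x-y),x-y \ra_{\bbr^d} \big),
\end{align*}
and since both terms on the right-hand side are nonnegative by positive semidefiniteness, the absolute value is at most $\tfrac{1}{4}$ of their sum, which expands back (again using symmetry) to exactly $\tfrac{1}{2}\big( \la Ax,x \ra_{\bbr^d} + \la Ay,y \ra_{\bbr^d} \big)$. That argument is shorter than yours and needs no case distinction: semidefiniteness is invoked only to discard a minus sign, so the degenerate case $\la Ay,y \ra_{\bbr^d} = 0$ that you must treat separately never arises. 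What your route buys in exchange is a strictly stronger intermediate estimate, namely the Cauchy--Schwarz bound $| \la Ax,y \ra_{\bbr^d} | \leq \sqrt{ \la Ax,x \ra_{\bbr^d} \, \la Ay,y \ra_{\bbr^d} }$, of which the lemma's inequality is the AM--GM relaxation; and your handling of the degenerate case is accurate (the affine-in-$t$ nonnegativity forcing $B(x,y)=0$, or alternatively factoring through $A^{1/2}$, both work). Part (2) is obtained identically in both proofs, by specializing to $x = e_i$ and $y = e_j$.
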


\begin{proof}
For all $x,y \in \bbr^d$ we have by polarization
\begin{align*}
| \la Ax,y \ra_{\bbr^d} | &= \frac{1}{4} | \la A(x+y), x+y \ra_{\bbr^d} - \la A(x-y), x-y \ra_{\bbr^d} |
\\ &\leq \frac{1}{4} \big( \la A(x+y), x+y \ra_{\bbr^d} + \la A(x-y), x-y \ra_{\bbr^d} \big)
\\ &= \frac{1}{2} \big( \la Ax,x \ra_{\bbr^d} + \la Ay,y \ra_{\bbr^d} \big),
\end{align*}
proving the first statement. The second statement is an immediate consequence by taking $x = e_i$ and $y = e_j$ for all $i,j = 1,\ldots,d$.
\end{proof}

Now, let $X$ be an $\bbr^d$-valued locally square-integrable, quasi-left-continuous semimartingale with canonical decomposition 
\begin{align*}
X = X_0 + M + A. 
\end{align*}

\begin{definition}
We introduce the following notions:
\begin{enumerate}
\item Let $C \in \calv^{d \times d}$ be the continuous $\bbs_+^{d \times d}$-valued process given by
\begin{align*}
C^{ij} = \la M^{i,c},M^{j,c} \ra, \quad i,j=1,\ldots,d.
\end{align*}
\item Let $C_{\mo} \in \calv^{d \times d}$ be the continuous $\bbs_+^{d \times d}$-valued process given by
\begin{align*}
C_{\mo}^{ij} = \la M^{i},M^{j} \ra, \quad i,j=1,\ldots,d.
\end{align*}
\item Let $V \in \calv^{d \times d}$ be the continuous $\bbs_+^{d \times d}$-valued process given by
\begin{align*}
V^{ij} = \la M^{i,d},M^{j,d} \ra, \quad i,j=1,\ldots,d.
\end{align*}

\item Let $\nu$ is the predictable compensator of the random measure $\mu^X$ associated to the jumps of $X$.

\item We call the triplet $(A,C,\nu)$ the \emph{characteristics} of $X$.

\item We call $C_{\mo}$ the \emph{modified second characteristic} of $X$.

\item We call the triplet $(A,C_{\mo},\nu)$ the \emph{modified characteristics} of $X$.

\item We call $V$ the \emph{purely discontinuous second characteristic} of $X$.
\end{enumerate}
\end{definition}

\begin{remark}
According to \cite[II.2.12]{Jacod-Shiryaev} we may assume that for all $0 \leq s \leq t$ we have
\begin{align*}
C_t - C_s, V_t - V_s, C_{\mo,t} - C_{\mo,s} \in \bbs_+^{d \times d}.
\end{align*}
\end{remark}

\begin{lemma}\label{lemma-sum-char}
We have $C_{\mo} = C + V$.
\end{lemma}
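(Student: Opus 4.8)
The plan is to prove the matrix identity entrywise and reduce it, by bilinearity of the predictable quadratic covariation, to the orthogonality of the continuous and purely discontinuous parts of a locally square-integrable martingale. Fix $i,j \in \{ 1,\ldots,d \}$. Since $X$ is locally square-integrable, its martingale part satisfies $M \in \calh_{\loc}^2$, so each component $M^i \in \calh_{\loc}^2$, and consequently all of the predictable quadratic covariations $\la M^i,M^j \ra$, $\la M^{i,c},M^{j,c} \ra$ and $\la M^{i,d},M^{j,d} \ra$ entering the definitions of $C_{\mo}^{ij}$, $C^{ij}$ and $V^{ij}$ are well defined. I would also recall the decomposition $M^i = M^{i,c} + M^{i,d}$ into its continuous local martingale part and its purely discontinuous part, see \cite[Thm. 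I.4.18]{Jacod-Shiryaev}, and likewise for $M^j$.

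Next I would expand $\la M^i,M^j \ra$ by the bilinearity of the predictable quadratic covariation, which follows from the bilinearity of $[\cdot,\cdot]$ together with the uniqueness of the predictable compensator:
\begin{align*}
\la M^i,M^j \ra = \la M^{i,c},M^{j,c} \ra + \la M^{i,c},M^{j,d} \ra + \la M^{i,d},M^{j,c} \ra + \la M^{i,d},M^{j,d} \ra.
\end{align*}
The two mixed terms vanish. Indeed, since $M^{i,c}$ is continuous we have $\Delta M^{i,c} = 0$, and since $M^{j,d}$ is purely discontinuous we have $(M^{j,d})^c = 0$, so by \cite[Thm. I.4.52]{Jacod-Shiryaev} the quadratic covariation
\begin{align*}
[M^{i,c},M^{j,d}] = \la (M^{i,c})^c,(M^{j,d})^c \ra + \sum_{s \leq \bullet} \Delta M_s^{i,c} \Delta M_s^{j,d}
\end{align*}
vanishes identically, whence $\la M^{i,c},M^{j,d} \ra = 0$; the term $\la M^{i,d},M^{j,c} \ra$ vanishes by the same argument. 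Therefore $C_{\mo}^{ij} = C^{ij} + V^{ij}$, and since $i,j$ were arbitrary we obtain $C_{\mo} = C + V$.

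There is essentially no hard step here: the only points requiring care are that the angle brackets exist, which is guaranteed by local square-integrability, and that the continuous part is orthogonal to the purely discontinuous part, which is exactly the defining property of the decomposition $M = M^c + M^d$. I expect the whole argument to fit in a few lines once bilinearity and this orthogonality are invoked.
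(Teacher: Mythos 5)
Your proof is correct and follows essentially the same route as the paper: the paper's one-line argument invokes exactly the orthogonality $\la M,N \ra = 0$ for $M,N \in \calh_{\loc}^2$ with $M$ continuous and $N$ purely discontinuous, which is what your expansion via bilinearity and \cite[Thm. I.4.52]{Jacod-Shiryaev} establishes in detail. Your version merely makes explicit why the mixed terms vanish, which is a fine (if slightly more verbose) rendering of the same idea.
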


\begin{proof}
This is clear, because $\la M,N \ra = 0$ for two local martingales $M,N \in \calh_{\loc}^2$ such that $M$ is continuous and $N$ is purely discontinuous.
\end{proof}

\begin{lemma}\label{lemma-pd-char}
We have $V^{ij} = (x^i x^j) * \nu$ for all $i,j=1,\ldots,d$.
\end{lemma}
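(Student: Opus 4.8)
The plan is to identify $V^{ij} = \la M^{i,d},M^{j,d}\ra$ with the predictable compensator of the process that sums the products of jumps, and then to read off the claim from Lemma \ref{lemma-jumps-compensator}. First I would exploit the standing assumption that $X$ is quasi-left-continuous: by Lemma \ref{lemma-semi-qls} the finite variation part $A$ is continuous, so $\Delta X = \Delta M$, and since the continuous martingale part $M^c$ carries no jumps we obtain $\Delta X^i = \Delta M^i = \Delta M^{i,d}$ for each $i = 1,\ldots,d$.

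Next, because $M^{i,d}$ and $M^{j,d}$ are purely discontinuous local martingales, their continuous martingale parts vanish, so by \cite[Thm. I.4.52]{Jacod-Shiryaev} the square bracket reduces to the sum of jump products:
\begin{align*}
[M^{i,d},M^{j,d}] = \sum_{s \leq \bullet} \Delta M^{i,d}_s \Delta M^{j,d}_s = \sum_{s \leq \bullet} \Delta X^i_s \Delta X^j_s = (x^i x^j) * \mu^X.
\end{align*}
Before passing to the predictable compensator I must ensure local integrability. Local square-integrability gives $M^i \in \calh_{\loc}^2$, hence $\sum_{s \leq \bullet} (\Delta M^{i,d}_s)^2 \in \cala_{\loc}^+$; using the elementary bound $|\Delta M^{i,d}\Delta M^{j,d}| \leq \tfrac12\big((\Delta M^{i,d})^2 + (\Delta M^{j,d})^2\big)$ (the pointwise analogue of Lemma \ref{lemma-matrix-non-diag-elem}) it follows that $(x^i x^j) * \mu^X \in \cala_{\loc}$. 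Consequently the predictable quadratic covariation exists and equals the compensator, $\la M^{i,d},M^{j,d}\ra = [M^{i,d},M^{j,d}]^p$.

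Finally I would apply Lemma \ref{lemma-jumps-compensator} with $\varphi(x) = x^i x^j$, which satisfies $\varphi(0)=0$; part (1) of that lemma yields $\big((x^i x^j) * \mu^X\big)^p = (x^i x^j) * \nu$. Chaining the displays gives $V^{ij} = \la M^{i,d},M^{j,d}\ra = \big((x^i x^j)*\mu^X\big)^p = (x^i x^j)*\nu$, as claimed. I do not anticipate a genuine obstacle here; the only step that needs care is the local-integrability bookkeeping, which is precisely what legitimizes both the existence of the angle bracket and the passage to the predictable compensator via Lemma \ref{lemma-jumps-compensator}.
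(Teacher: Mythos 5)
Your proof is correct, but it takes a genuinely different route from the paper's. The paper disposes of the lemma in one line, combining Lemma \ref{lemma-sum-char} ($C_{\mo} = C + V$) with \cite[Prop. II.2.17]{Jacod-Shiryaev}, which for a locally square-integrable semimartingale expresses the modified second characteristic as $C_{\mo}^{ij} = C^{ij} + (x^i x^j)*\nu$ minus a correction term $\sum_{s \leq \bullet} \Delta A_s^i \Delta A_s^j$; quasi-left-continuity makes $A$ continuous, so the correction vanishes and subtracting $C$ gives the claim. You instead work directly with $V^{ij} = \la M^{i,d},M^{j,d} \ra$: you identify $\Delta X = \Delta M = \Delta M^d$ via Lemma \ref{lemma-semi-qls}, reduce the square bracket of the purely discontinuous parts to $(x^i x^j) * \mu^X$ by \cite[Thm. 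I.4.52]{Jacod-Shiryaev}, secure local integrability through the polarization bound $|ab| \leq \tfrac{1}{2}(a^2 + b^2)$ together with $M^i \in \calh_{\loc}^2$, and then compensate using Lemma \ref{lemma-jumps-compensator}. In effect you reprove the relevant special case of Prop. II.2.17 from scratch. What your route buys is self-containment: every step rests on lemmas already in the paper's appendices plus elementary bracket identities, and the integrability bookkeeping — which legitimizes both the identity $\la M^{i,d},M^{j,d} \ra = [M^{i,d},M^{j,d}]^p$ and the application of Lemma \ref{lemma-jumps-compensator} — is made explicit rather than left implicit in a textbook citation. What the paper's route buys is brevity, and the cited proposition also covers the general (not quasi-left-continuous) case, where the $\Delta A$ correction term survives; this makes visible exactly where quasi-left-continuity enters. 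Both arguments are sound.
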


\begin{proof}
Since $X$ is quasi-left-continuous, this is a consequence of Lemma \ref{lemma-sum-char} and \cite[Prop. II.2.17]{Jacod-Shiryaev}.
\end{proof}

\begin{lemma}\label{lemma-obtain-S-valued-proc}
Let $C$ be an optional $\bbs_+^{d \times d}$-valued process. Furthermore, let $\Gamma \in \calv^+$ be such that $d C^{ii} \ll d \Gamma$ for all $i=1,\ldots,d$. Then there exists a $\bbs_+^{d \times d}$-valued optional process $c$ such that $C = c \bdot \Gamma$.
\end{lemma}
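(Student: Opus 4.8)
The plan is to construct $c$ entrywise as a Radon--Nikodym density with respect to $\Gamma$ and then verify that the resulting matrix-valued process takes values in the cone $\bbs_+^{d\times d}$. First I would record that, as for the characteristics (cf. the Remark above and \cite[II.2.12]{Jacod-Shiryaev}), we may assume $C_t - C_s \in \bbs_+^{d\times d}$ for all $0 \le s \le t$; in particular every diagonal process $C^{ii}$ is nondecreasing, so $C \in \calv^{d\times d}$. Applying Lemma \ref{lemma-matrix-non-diag-elem}(2) to the increment $C_t - C_s \in \bbs_+^{d\times d}$ yields
\[
| C^{ij}_t - C^{ij}_s | \le \tfrac{1}{2}\big( (C^{ii}_t - C^{ii}_s) + (C^{jj}_t - C^{jj}_s) \big), \qquad 0 \le s \le t,
\]
so that the variation measure of each off-diagonal entry is dominated by $\tfrac{1}{2}(dC^{ii} + dC^{jj})$. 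Combined with the hypothesis $dC^{ii} \ll d\Gamma$, this upgrades the diagonal assumption to $dC^{ij} \ll d\Gamma$ for all $i,j = 1,\ldots,d$.

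Next I would produce the optional densities. Passing to the Doléans-type measures on $(\Omega \times \bbr_+, \calo)$ associated with $\Gamma$ and with each $C^{ij}$ (after localizing along $\tau_n := \inf\{ t : \Gamma_t \ge n \} \wedge n$, so that the relevant total variations become integrable), the pathwise absolute continuity $dC^{ij} \ll d\Gamma$ promotes to absolute continuity of the corresponding measures on $\calo$. A measurable version of the Radon--Nikodym theorem then furnishes optional processes $c^{ij}$ with $C^{ij} = c^{ij} \bdot \Gamma$; patching over $n$ and symmetrizing (using $C^{ij} = C^{ji}$) lets us take $c = (c^{ij})$ optional and symmetric. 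I expect the step requiring most care to be precisely this one: guaranteeing that the densities can be chosen jointly measurable (optional) rather than only $\omega$-by-$\omega$, together with the localization needed because $\Gamma$ and the $C^{ij}$ are only of locally finite variation.

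Finally I would check the positive semidefiniteness of $c$. For fixed $x \in \bbr^d$ the scalar process $\la Cx,x \ra = \sum_{i,j} x_i x_j C^{ij}$ has increments $\la (C_t - C_s)x,x \ra \ge 0$, hence is nondecreasing, while $\la Cx,x \ra = \big( \sum_{i,j} x_i x_j c^{ij} \big) \bdot \Gamma = \la cx,x \ra \bdot \Gamma$ with $\Gamma$ nondecreasing; consequently $\la cx,x \ra \ge 0$ off a $\Gamma$-null optional set $N_x$. Letting $x$ range over a countable dense subset of $\bbr^d$ and setting $N := \bigcup_x N_x$, which is again $\Gamma$-null, we obtain $\la c_t(\omega) x,x \ra \ge 0$ for all rational $x$ off $N$, and hence, by continuity of $x \mapsto \la c_t(\omega)x,x \ra$, that $c_t(\omega) \in \bbs_+^{d\times d}$ there. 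Redefining $c := 0$ on $N$ (which leaves $c \bdot \Gamma$ unchanged, as $N$ is $\Gamma$-null) then delivers an optional, everywhere $\bbs_+^{d\times d}$-valued process $c$ with $C = c \bdot \Gamma$, as required.
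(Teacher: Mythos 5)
Your proposal is correct and follows essentially the same route as the paper: the same use of Lemma \ref{lemma-matrix-non-diag-elem} on increments $C_t - C_s \in \bbs_+^{d \times d}$ to upgrade $dC^{ii} \ll d\Gamma$ to $dC^{ij} \ll d\Gamma$, then an optional Radon--Nikodym density, then positive semidefiniteness after modification on a null set. The only difference is that where the paper simply cites \cite[Prop.\ I.3.13]{Jacod-Shiryaev} for the optional density and step (c) of the proof of \cite[Prop.\ II.2.9]{Jacod-Shiryaev} for the $\bbs_+^{d \times d}$-valued modification, you spell out both arguments (Dol\'{e}ans measures with localization, and the countable dense set of directions $x$ with a $\Gamma$-null exceptional set), which is exactly what those references contain.
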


\begin{proof}
We have $d C^{ij} \ll d \Gamma$ for all $i,j = 1,\ldots,d$. Indeed, let $0 \leq s \leq t$ be such that $\Gamma_t - \Gamma_s = 0$. Then we have $C_t^{ii} - C_s^{ii} = 0$ for all $i=1,\ldots,d$. Since $C_t - C_s \in \bbs_+^{d \times d}$, by Lemma \ref{lemma-matrix-non-diag-elem} it follows that $C_t^{ij} - C_s^{ij} = 0$ for all $i,j = 1,\ldots,d$. Therefore, by \cite[Prop. I.3.13]{Jacod-Shiryaev} there exists an $\bbr^{d \times d}$-valued, optional process $c \in L_{\loc}^1(\Gamma)$ such that $C = c \bdot \Gamma$. Proceeding as in step (c) of the proof of \cite[Prop. II.2.9]{Jacod-Shiryaev} we obtain, after changing $c$ on an evanescent set if necessary, that $c$ is $\bbs_+^{d \times d}$-valued.
\end{proof}

\begin{definition}
Let $\Gamma \in \cala_{\loc}^+$ be a continuous process. We call a triplet $(a,c,K)$ \emph{integral characteristics} of $X$ with respect to $\Gamma$ if the following conditions are fulfilled:
\begin{enumerate}
\item $a$ is an optional $\bbr^d$-valued process such that $a^i \in L_{\loc}^1(\Gamma)$ for all $i=1,\ldots,d$.

\item $c$ is an optional $\bbs_+^{d \times d}$-valued process such that $c^{ii} \in L_{\loc}^1(\Gamma)$ for all $i=1,\ldots,d$.

\item $K$ is a transition kernel from $(\Omega \times \bbr_+, \calo)$ into $(\bbr^d,\calb(\bbr^d))$ such that on $\Omega \times \bbr_+$ we have
\begin{align*}
K(\{ 0 \}) = 0 \quad \text{and} \quad \int_{\bbr^d} |x|^2 K(dx) < \infty.
\end{align*}

\item We have $B = b \bdot \Gamma$, $C = c \bdot \Gamma$ and $\nu = K \otimes \Gamma$.

\end{enumerate}
\end{definition}

\begin{remark}
Since $\Gamma$ is continuous, it suffices that $a$, $c$ and $K$ are optional rather than predictable.
\end{remark}

\begin{remark}
Let $c$ be an optional $\bbs_+^{d \times d}$-valued process such that $c^{ii} \in L_{\loc}^1(\Gamma)$ for all $i=1,\ldots,d$. Then, by Lemma \ref{lemma-matrix-non-diag-elem} we also have $c^{ij} \in L_{\loc}^1(\Gamma)$ for all $i,j=1,\ldots,d$.
\end{remark}

\begin{definition}
Let $\Gamma \in \cala_{\loc}^+$ be a continuous process. We call a process $c_{\mo}$ \emph{modified second integral characteristic} of $X$ with respect to $\Gamma$ if the following conditions are fulfilled:
\begin{enumerate}
\item $c_{\mo}$ is an optional $\bbs_+^{d \times d}$-valued process such that $c_{\mo}^{ii} \in L_{\loc}^1(\Gamma)$ for all $i=1,\ldots,d$.

\item We have $C_{\mo} = c_{\mo} \bdot \Gamma$.
\end{enumerate}
In this case, we call the triplet $(a,c_{\mo},K)$ \emph{modified integral characteristics} of $X$ with respect to $\Gamma$.
\end{definition}

\begin{definition}
Let $\Gamma \in \cala_{\loc}^+$ be a continuous process. We call a process $v$ \emph{purely discontinuous second integral characteristic} of $X$ with respect to $\Gamma$ if the following conditions are fulfilled:
\begin{enumerate}
\item $v$ is an optional $\bbs_+^{d \times d}$-valued process such that $v^{ii} \in L_{\loc}^1(\Gamma)$ for all $i=1,\ldots,d$.

\item We have $V = v \bdot \Gamma$.
\end{enumerate}
\end{definition}

\begin{proposition}\label{prop-ex-int-char}
There exist a continuous process $\Gamma \in \cala_{\loc}^+$ and modified integral characteristics $(a,c_{\mo},K)$ of $X$ with respect to $\Gamma$.
\end{proposition}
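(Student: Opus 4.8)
The plan is to exhibit a single continuous, locally integrable increasing process $\Gamma$ that simultaneously dominates the variation of $A$, the modified second characteristic $C_{\mo}$, and the second-moment measure of $\nu$, and then to recover the three densities $a$, $c_{\mo}$ and $K$ by Radon--Nikodym type arguments. This mirrors the construction of good versions of the characteristics in \cite[Prop. II.2.9]{Jacod-Shiryaev}, the only structural difference being that here the modified second characteristic $C_{\mo}$ takes the role of $C$.

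First I would record that, because $X$ is locally square-integrable and quasi-left-continuous, every relevant process is continuous and lies in $\cala_{\loc}^+$. Indeed, $A \in \cala_{\loc}$, and by Lemma \ref{lemma-semi-qls} the process $A$ is continuous, so its total variation $\Var(A^i)$ belongs to $\cala_{\loc}^+$ and is continuous for each $i$. Since $M^i \in \calh_{\loc}^2$ is quasi-left-continuous (again Lemma \ref{lemma-semi-qls}), the predictable quadratic variations $C_{\mo}^{ii} = \la M^i,M^i \ra$ are continuous and lie in $\cala_{\loc}^+$; likewise $V^{ii} = \la M^{i,d},M^{i,d} \ra \in \cala_{\loc}^+$ is continuous, and by Lemma \ref{lemma-pd-char} we have $|x|^2 * \nu = \sum_{i=1}^d V^{ii}$. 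I then set
\begin{align*}
\Gamma := \sum_{i=1}^d \Var(A^i) + \sum_{i=1}^d C_{\mo}^{ii} + |x|^2 * \nu,
\end{align*}
which is a continuous process in $\cala_{\loc}^+$.

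Next I would read off the first two densities. Since $d \Var(A^i) \ll d \Gamma$ by construction, the Radon--Nikodym theorem for optional processes \cite[Prop. I.3.13]{Jacod-Shiryaev} yields an optional $a^i \in L_{\loc}^1(\Gamma)$ with $A^i = a^i \bdot \Gamma$; collecting these gives the optional $\bbr^d$-valued process $a$. For the second characteristic, $C_{\mo}$ is a continuous (hence optional) $\bbs_+^{d \times d}$-valued process with $d C_{\mo}^{ii} \ll d \Gamma$ for every $i$, so Lemma \ref{lemma-obtain-S-valued-proc} provides an optional $\bbs_+^{d \times d}$-valued process $c_{\mo}$ with $c_{\mo}^{ii} \in L_{\loc}^1(\Gamma)$ and $C_{\mo} = c_{\mo} \bdot \Gamma$.

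The main obstacle is the construction of the kernel $K$, that is, the disintegration $\nu = K \otimes \Gamma$. Here I would rely on $|x|^2 * \nu \leq \Gamma$, so that $\nu$ is absolutely continuous with respect to $\Gamma$ in the required sense, together with the fact that the mark space $\bbr^d$ is a standard Borel space, which permits a measurable disintegration exactly as in \cite[Prop. II.2.9]{Jacod-Shiryaev}. This produces a transition kernel $K$ from $(\Omega \times \bbr_+, \calo)$ into $(\bbr^d, \calb(\bbr^d))$ with $\nu = K \otimes \Gamma$; modifying $K$ on a $\Gamma$-null set if necessary, one arranges $K(\{ 0 \}) = 0$, since $\mu^X$ and hence $\nu$ charge no jump of size zero, and $\int_{\bbr^d} |x|^2 K(dx) < \infty$ on $\Omega \times \bbr_+$, which follows from $|x|^2 * \nu \leq \Gamma$. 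The triplet $(a,c_{\mo},K)$ is then the desired modified integral characteristics of $X$ with respect to $\Gamma$.
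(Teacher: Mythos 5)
Your proof is correct and follows essentially the same route as the paper, whose proof consists of the single remark that the argument is analogous to \cite[Prop. II.2.9]{Jacod-Shiryaev}; you have simply carried out that construction explicitly (dominating process $\Gamma$, Radon--Nikodym densities via \cite[Prop. I.3.13]{Jacod-Shiryaev} and Lemma \ref{lemma-obtain-S-valued-proc}, disintegration of $\nu$ into a kernel), with the only structural change the paper intends, namely that $C_{\mo}$ replaces $C$ and local square-integrability lets you use $|x|^2 * \nu$ in place of $(|x|^2 \wedge 1) * \nu$. The only cosmetic point is that the summand $|x|^2 * \nu = \sum_{i=1}^d V^{ii}$ in your $\Gamma$ is redundant, since $C_{\mo} = C + V$ already dominates it, but this is harmless.
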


\begin{proof}
The proof is analogous to that of \cite[Prop. II.2.9]{Jacod-Shiryaev}.
\end{proof}

\begin{proposition}\label{prop-int-char-eq}
Let $\Gamma \in \cala_{\loc}^+$ be a continuous process. Then the following statements are equivalent:
\begin{enumerate}
\item[(i)] There exist modified integral characteristics $(a,c_{\mo},K)$ of $X$ with respect to $\Gamma$.

\item[(ii)] There exist integral characteristics $(a,c,K)$ and a purely discontinuous second integral characteristic $v$ of $X$ with respect to $\Gamma$.
\end{enumerate}
If the previous conditions are fulfilled, then we have
\begin{align}
c_{\mo} = c + v \quad \text{$\Gamma$-a.e.} \quad \text{$\bbp$-a.e.}
\end{align}
and for all $i,j = 1,\ldots,d$ we have
\begin{align}\label{v-repr-K}
v^{ij} = \int_{\bbr^d} x^i x^j K(dx) \quad \text{$\Gamma$-a.e.} \quad \text{$\bbp$-a.e.}
\end{align}
\end{proposition}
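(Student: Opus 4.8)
The plan is to prove the two implications separately and to read off the two displayed identities along the way; both directions hinge on Lemma \ref{lemma-sum-char}, which provides the decomposition $C_{\mo} = C + V$.

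First I would dispatch the routine direction (ii) $\Rightarrow$ (i). Given integral characteristics $(a,c,K)$ and a purely discontinuous second integral characteristic $v$, I set $c_{\mo} := c + v$. As a sum of two optional $\bbs_+^{d \times d}$-valued processes it is again optional and $\bbs_+^{d \times d}$-valued, and $c_{\mo}^{ii} = c^{ii} + v^{ii} \in L_{\loc}^1(\Gamma)$. Using $C_{\mo} = C + V$ together with $C = c \bdot \Gamma$ and $V = v \bdot \Gamma$ gives $C_{\mo} = (c+v) \bdot \Gamma = c_{\mo} \bdot \Gamma$, so $(a,c_{\mo},K)$ are modified integral characteristics of $X$ with respect to $\Gamma$.

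For the converse (i) $\Rightarrow$ (ii) I would first construct $v$. From $\nu = K \otimes \Gamma$, Lemma \ref{lemma-pd-char} gives $V^{ij} = (x^i x^j) * \nu$, and Fubini for transition kernels turns this into $V^{ij} = \big( \int_{\bbr^d} x^i x^j K(dx) \big) \bdot \Gamma$. I therefore define $v^{ij} := \int_{\bbr^d} x^i x^j K(dx)$; this is optional, and since $\la v \xi, \xi \ra_{\bbr^d} = \int_{\bbr^d} \la x,\xi \ra_{\bbr^d}^2 K(dx) \geq 0$ it is $\bbs_+^{d \times d}$-valued. Because $V^{ii}$ is the predictable quadratic variation of $M^{i,d} \in \calh_{\loc}^2$, it lies in $\cala_{\loc}^+$, so $v^{ii} \in L_{\loc}^1(\Gamma)$ and $V = v \bdot \Gamma$; this simultaneously establishes formula (\ref{v-repr-K}). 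Next I recover $c$: by Lemma \ref{lemma-sum-char} we have $C = C_{\mo} - V$, and writing $C^{ii} = (c_{\mo}^{ii} - v^{ii}) \bdot \Gamma$ shows $dC^{ii} \ll d\Gamma$ for each $i$. Since $C$ is continuous with increments in $\bbs_+^{d \times d}$, Lemma \ref{lemma-obtain-S-valued-proc} supplies an optional $\bbs_+^{d \times d}$-valued process $c$ with $C = c \bdot \Gamma$, and $c^{ii} \in L_{\loc}^1(\Gamma)$ follows from $C^{ii} \in \cala_{\loc}^+$. Thus $(a,c,K)$ are integral characteristics and $v$ a purely discontinuous second integral characteristic of $X$ with respect to $\Gamma$.

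Finally, the identity $c_{\mo} = c + v$ holds $\Gamma$-a.e.\ $\bbp$-a.e.\ in either case: one has both $C_{\mo} = c_{\mo} \bdot \Gamma$ and, via $C_{\mo} = C + V = (c+v) \bdot \Gamma$, a second representation with the same driving process $\Gamma$, so the two integrands must agree $\Gamma$-almost everywhere. The step I expect to require the most care is obtaining a genuinely $\bbs_+^{d \times d}$-valued version of $c$ in the direction (i) $\Rightarrow$ (ii): the naive candidate $c_{\mo} - v$ is $\bbs_+^{d \times d}$-valued only $\Gamma$-a.e., and Lemma \ref{lemma-obtain-S-valued-proc} is precisely what upgrades this to a bona fide $\bbs_+^{d \times d}$-valued integrand. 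Justifying the disintegration $V^{ij} = \big( \int_{\bbr^d} x^i x^j K(dx) \big) \bdot \Gamma$ from $\nu = K \otimes \Gamma$ is the other point that needs attention, but it is a direct application of Fubini's theorem for transition kernels.
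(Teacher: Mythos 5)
Your proposal is correct, and it follows the paper's skeleton -- the decomposition $C_{\mo} = C + V$ from Lemma \ref{lemma-sum-char} plus the measurable Radon--Nikodym selection of Lemma \ref{lemma-obtain-S-valued-proc} -- but it deviates genuinely in how $v$ is produced in the direction (i) $\Rightarrow$ (ii). The paper first gets $d C_{\mo}^{ii} \ll d\Gamma$ from $C_{\mo} = c_{\mo} \bdot \Gamma$, then uses monotonicity of the nonnegative summands $C^{ii}$ and $V^{ii}$ to deduce $dC^{ii} \ll d\Gamma$ and $dV^{ii} \ll d\Gamma$, applies Lemma \ref{lemma-obtain-S-valued-proc} \emph{twice} to obtain both $c$ and $v$ as abstract densities, and only afterwards reads off (\ref{v-repr-K}) from Lemma \ref{lemma-pd-char}. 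You instead build $v$ explicitly as $v^{ij} = \int_{\bbr^d} x^i x^j K(dx)$, using $\nu = K \otimes \Gamma$ (part of the data in (i)) together with Lemma \ref{lemma-pd-char} and Fubini for transition kernels; this makes (\ref{v-repr-K}) hold identically for your version rather than merely $\Gamma$-a.e., yields positivity and optionality of $v$ by inspection (finiteness coming from $\int_{\bbr^d} |x|^2 K(dx) < \infty$ in the kernel condition), and reduces the use of Lemma \ref{lemma-obtain-S-valued-proc} to the single application needed for $c$ -- whose necessity you correctly diagnose, since the naive candidate $c_{\mo} - v$ is positive semidefinite only $\Gamma$-a.e. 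Your route also replaces the paper's monotonicity argument for $dC^{ii} \ll d\Gamma$ by the explicit representation $C^{ii} = (c_{\mo}^{ii} - v^{ii}) \bdot \Gamma$, which works because both integrands lie in $L_{\loc}^1(\Gamma)$. One small point worth making explicit: the proposition asserts (\ref{v-repr-K}) in either case, so for an \emph{arbitrary} purely discontinuous second integral characteristic $v$ given in (ii) you should add that $v \bdot \Gamma = V = \big( \int_{\bbr^d} x^i x^j K(dx) \big) \bdot \Gamma$ forces the integrands to agree $\Gamma$-a.e. $\bbp$-a.e. -- the same comparison-of-integrands argument you already use for $c_{\mo} = c + v$, and no more terse than the paper's own treatment of this point.
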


\begin{proof}
(i) $\Rightarrow$ (ii): By \cite[Prop. I.3.5]{Jacod-Shiryaev} we have $d C_{\mo}^{ii} \ll d \Gamma$ for all $i = 1,\ldots,d$. By Lemma \ref{lemma-sum-char} we have $C_{\mo} = C + V$, and hence it follows that $d C^{ii} \ll d \Gamma$ and $V^{ii} \ll d \Gamma$ for all $i=1,\ldots,d$. Hence, by Lemma \ref{lemma-obtain-S-valued-proc} there exist optional $\bbs_+^{d \times d}$-valued processes $c,v \in L_{\loc}^1(\Gamma)$ such that $C = c \bdot \Gamma$ and $V = v \bdot \Gamma$.

\noindent (ii) $\Rightarrow$ (i): We define the optional process $c_{\mo} := c + v$. Then we have $c_{\mo} \in L_{\loc}^1(\Gamma)$, and by Lemma \ref{lemma-sum-char} we obtain
\begin{align*}
C_{\mo} = C + V = c \bdot \Gamma + v \bdot \Gamma = (c+v) \bdot \Gamma = c_{\mo} \bdot \Gamma,
\end{align*}
completing the proof of this implication. The additional statement (\ref{v-repr-K}) follows from Lemma \ref{lemma-pd-char}.
\end{proof}

\section{Matrices and linear operators}\label{sec-matrices}

In this appendix we provide the required results about matrices and linear operators. We denote by $\bbs^{d \times d} \subset \bbr^{d \times d}$ the subspace of all symmetric, real-valued matrices. Furthermore, we denote by $\bbs_+^{d \times d} \subset \bbs^{d \times d}$ the convex cone of all symmetric, positive semidefinite matrices, and we denote by $\bbs_{++}^{d \times d} \subset \bbs_+^{d \times d}$ the subset of all symmetric, positive definite matrices. For a matrix $A \in \bbr^{d \times m}$ we denote by $A^{\dagger} \in \bbr^{m \times d}$ the Moore-Penrose inverse; see \cite[page 649]{Boyd}.

\begin{lemma}\label{lemma-MP-measurable}
The Moore-Penrose inverse $\bbr^{d \times m} \to \bbr^{m \times d}$, $A \mapsto A^{\dagger}$ is measurable.
\end{lemma}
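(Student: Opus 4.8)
The plan is to realize $A \mapsto A^{\dagger}$ as a \emph{pointwise limit of continuous maps}, and then invoke the elementary fact that a pointwise limit of Borel measurable maps between Euclidean spaces is Borel measurable. Since the Moore-Penrose inverse is genuinely discontinuous (precisely at the matrices where the rank drops), one cannot argue by continuity directly, but a Tikhonov-type regularization circumvents this.

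First I would recall the regularization formula: for every $A \in \bbr^{d \times m}$ one has
\[
A^{\dagger} = \lim_{n \to \infty} \big( A^{\top} A + \tfrac{1}{n} I_m \big)^{-1} A^{\top},
\]
where $I_m$ denotes the $m \times m$ identity matrix. To prove this, I would fix a singular value decomposition $A = U \Sigma V^{\top}$ with orthogonal $U \in \bbr^{d \times d}$, $V \in \bbr^{m \times m}$ and $\Sigma \in \bbr^{d \times m}$ carrying the singular values $\sigma_1, \ldots, \sigma_{\min(d,m)} \geq 0$ on its diagonal. Since $A^{\top} A + \delta I_m = V (\Sigma^{\top} \Sigma + \delta I_m) V^{\top}$ is positive definite for $\delta > 0$, it is invertible and
\[
\big( A^{\top} A + \delta I_m \big)^{-1} A^{\top} = V \big( \Sigma^{\top} \Sigma + \delta I_m \big)^{-1} \Sigma^{\top} U^{\top}.
\]
The middle factor is the $m \times d$ diagonal matrix with entries $\sigma_i / (\sigma_i^2 + \delta)$, which converges as $\delta \downarrow 0$ to $1/\sigma_i$ when $\sigma_i > 0$ and to $0$ when $\sigma_i = 0$, that is, to $\Sigma^{\dagger}$. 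Hence the right-hand side converges to $V \Sigma^{\dagger} U^{\top} = A^{\dagger}$.

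It then remains to assemble the measurability argument. For each $n \in \bbn$ define $f_n : \bbr^{d \times m} \to \bbr^{m \times d}$ by $f_n(A) := ( A^{\top} A + \tfrac{1}{n} I_m )^{-1} A^{\top}$. Each $f_n$ is continuous: the map $A \mapsto A^{\top} A + \tfrac{1}{n} I_m$ is continuous with values in the invertible matrices, inversion is continuous on $\mathrm{GL}_m(\bbr)$, and matrix multiplication and transposition are continuous; continuous maps are Borel measurable. By the formula above, $A^{\dagger} = \lim_{n \to \infty} f_n(A)$ for every $A$, so $A \mapsto A^{\dagger}$ is a pointwise limit of Borel measurable maps into $\bbr^{m \times d} \cong \bbr^{md}$, hence Borel measurable. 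I expect the only nontrivial step to be the justification of the regularization limit via the singular value decomposition; once that is in hand, the continuity of the algebraic operations and the measurability of pointwise limits are routine.
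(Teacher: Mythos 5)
Your proposal is correct and follows essentially the same route as the paper, which establishes measurability from the very same Tikhonov regularization $A^{\dagger} = \lim_{\epsilon \to 0} (A^{\top} A + \epsilon \, \Id)^{-1} A^{\top}$ (cited there from Boyd) together with the fact that a pointwise limit of continuous maps is Borel measurable. Your only addition is that you prove the regularization formula yourself via the singular value decomposition rather than citing it, which is a correct and welcome self-contained verification but not a different method.
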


\begin{proof}
This follows from the representation
\begin{align*}
A^{\dagger} = \lim_{\epsilon \to 0} ( A^{\top} A + \epsilon \Id )^{-1} A^{\top},
\end{align*}
see \cite[page 649]{Boyd}.
\end{proof}

\begin{lemma}\label{lemma-matrix-pre}
Let $\hat{A} \in \bbs^{(d+1) \times (d+1)}$ be a symmetric matrix of the form
\begin{align}\label{matrix-A}
\hat{A} = \left(
\begin{array}{cc}
A & b
\\ b^{\top} & c
\end{array}
\right)
\end{align}
with a symmetric, positive semidefinite matrix $A \in \bbs_+^{d \times d}$, a vector $b \in \bbr^d$ and a real number $c \in \bbr$. Then the following statements are equivalent:
\begin{enumerate}
\item[(i)] $\hat{A}$ is positive semidefinite; that is $\hat{A} \in \bbs_+^{(d+1) \times (d+1)}$.

\item[(ii)] We have $(\Id - A A^{\dagger})b = 0$ and the Schur complement satisfies 
\begin{align*}
c - \la A^{\dagger} b,b \ra_{\bbr^d} \geq 0.
\end{align*}

\end{enumerate}
\end{lemma}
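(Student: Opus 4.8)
The plan is to reduce positive semidefiniteness of $\hat{A}$ to the nonnegativity of the associated quadratic form and then analyze that form via a completion of the square adapted to the possibly singular matrix $A$. Concretely, $\hat{A} \in \bbs_+^{(d+1) \times (d+1)}$ if and only if
\[
Q(x,t) := \la Ax,x \ra_{\bbr^d} + 2t \la b,x \ra_{\bbr^d} + c t^2 \geq 0 \quad \text{for all } (x,t) \in \bbr^d \times \bbr.
\]
The two facts about the Moore-Penrose inverse that I would invoke are that, since $A$ is symmetric and positive semidefinite, $A^{\dagger}$ is symmetric as well and $A A^{\dagger}$ is the orthogonal projection of $\bbr^d$ onto $\ran(A)$; in particular $(\Id - A A^{\dagger}) b = 0$ holds precisely when $b \in \ran(A)$, and in that case $u := A^{\dagger} b$ satisfies $A u = b$.

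The analytic heart of the argument is the identity obtained by completing the square. Assuming $b \in \ran(A)$, so that $A u = b$ and $\la A u, u \ra_{\bbr^d} = \la A^{\dagger} b, b \ra_{\bbr^d}$, a direct expansion of $\la A(x + tu), x + tu \ra_{\bbr^d}$ yields
\[
Q(x,t) = \la A(x + tu), x + tu \ra_{\bbr^d} + t^2 \big( c - \la A^{\dagger} b, b \ra_{\bbr^d} \big).
\]
For the implication (ii) $\Rightarrow$ (i) this finishes the proof at once: under (ii) the first summand is nonnegative because $A \in \bbs_+^{d \times d}$, and the second is nonnegative because the Schur complement is, so $Q \geq 0$ everywhere.

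For (i) $\Rightarrow$ (ii) I would proceed in two steps. First I would show $b \in \ran(A)$: decomposing $b = b_1 + b_2$ with $b_1 \in \ran(A)$ and $b_2 \in \ker(A) = \ran(A)^{\perp}$, the choice $x = s b_2$ together with $t = 1$ gives $Q(s b_2, 1) = 2 s \| b_2 \|_{\bbr^d}^2 + c$, using $A b_2 = 0$ and $b_1 \perp b_2$; letting $s \to -\infty$ forces $b_2 = 0$ unless positive semidefiniteness fails, whence $(\Id - A A^{\dagger}) b = 0$. With $b \in \ran(A)$ established, the completed-square identity applies, and evaluating it at $t = 1$, $x = -u$ gives $Q(-u, 1) = c - \la A^{\dagger} b, b \ra_{\bbr^d} \geq 0$, which is exactly the Schur-complement condition. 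The main obstacle, and the only place where care is genuinely needed, is the singularity of $A$: one must justify both that $b$ is forced into $\ran(A)$ and that $A^{\dagger} b$ is the correct centering vector, which is why the projection property of $A A^{\dagger}$ and the symmetry of $A^{\dagger}$ are the pivotal inputs.
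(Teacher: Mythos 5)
Your proof is correct and complete, but it differs from the paper in an essential way: the paper offers no argument at all for this lemma, simply citing \cite[page 651]{Boyd}, whereas you supply the full, self-contained argument that lies behind that citation. Your route is the canonical one: reduce positive semidefiniteness of $\hat{A}$ to nonnegativity of the quadratic form $Q(x,t) = \la Ax,x \ra_{\bbr^d} + 2t \la b,x \ra_{\bbr^d} + c t^2$, and complete the square around $u = A^{\dagger} b$ once $b \in \ran(A)$ is secured. All the delicate points are handled properly: you correctly isolate the two Moore--Penrose facts that carry the singular case, namely that $A A^{\dagger}$ is the orthogonal projection onto $\ran(A)$ (so that $(\Id - A A^{\dagger})b = 0$ is exactly $b \in \ran(A)$, and $u$ genuinely satisfies $Au = b$) and that $A^{\dagger}$ is symmetric for symmetric $A$ (so that $\la Au,u \ra_{\bbr^d} = \la A^{\dagger} b,b \ra_{\bbr^d}$); your use of $\ker(A) = \ran(A)^{\perp}$ in the decomposition $b = b_1 + b_2$ is likewise justified by the symmetry of $A$, and the choice $x = s b_2$, $t = 1$ with $s \to -\infty$ correctly forces $b_2 = 0$ under (i). What each approach buys: the paper's citation keeps Appendix~\ref{sec-matrices} short and defers to a standard reference, while your argument makes the lemma independent of \cite{Boyd}, exhibits exactly which properties of the generalized inverse are used, and would fit naturally alongside the paper's own elementary proofs of Lemmas~\ref{lemma-LGS} and \ref{lemma-matrix-non-diag-elem}.
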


\begin{proof}
See \cite[page 651]{Boyd}.
\end{proof}

\begin{lemma}\label{lemma-LGS}
For a matrix $A \in \bbr^{d \times m}$ and a vector $b \in \bbr^d$ the following statements are equivalent:
\begin{enumerate}
\item[(i)] The system of linear equations
\begin{align}\label{LGS}
Ax = b, \quad x \in \bbr^m 
\end{align}
has a solution.

\item[(ii)] We have $A A^{\dagger} b = b$; that is $(\Id - A A^{\dagger})b = 0$.
\end{enumerate}
In either case, a solution to (\ref{LGS}) is given by $x = A^{\dagger} b$.
\end{lemma}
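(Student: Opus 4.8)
The plan is to reduce both implications to the first Penrose identity $A A^{\dagger} A = A$, which is one of the defining properties of the Moore-Penrose inverse recalled from \cite{Boyd}; equivalently, one uses that $A A^{\dagger}$ acts as the orthogonal projection onto the range $\ran(A)$, so that solvability of (\ref{LGS}) amounts to invariance of $b$ under this projection.

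First I would dispose of the implication (ii) $\Rightarrow$ (i), which simultaneously yields the final ``in either case'' assertion. Assuming $A A^{\dagger} b = b$, I set $x := A^{\dagger} b \in \bbr^m$ and compute directly that $Ax = A A^{\dagger} b = b$. Hence $x = A^{\dagger} b$ solves (\ref{LGS}), proving both the existence claim and the explicit formula for a solution.

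Next I would treat (i) $\Rightarrow$ (ii). Suppose there is some $x_0 \in \bbr^m$ with $A x_0 = b$. Applying the Penrose identity $A A^{\dagger} A = A$, I obtain
\begin{align*}
A A^{\dagger} b = A A^{\dagger} A x_0 = A x_0 = b,
\end{align*}
which is exactly condition (ii), and rewriting $A A^{\dagger} b = b$ as $(\Id - A A^{\dagger})b = 0$ gives the stated equivalent form.

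I expect no genuine obstacle here: each direction is a one-line computation once the identity $A A^{\dagger} A = A$ is available, and no further structure of $A^{\dagger}$ (such as the remaining three Penrose conditions or the symmetry of $A A^{\dagger}$) is needed for the equivalence. The only conceptual ingredient is the interpretation, recalled from \cite{Boyd}, that $A A^{\dagger}$ restricts to the identity on $\ran(A)$, which is precisely what makes $b \in \ran(A)$ equivalent to $A A^{\dagger} b = b$.
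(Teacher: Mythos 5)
Your proof is correct and takes essentially the same approach as the paper: the implication (ii) $\Rightarrow$ (i), together with the ``in either case'' claim, is the identical substitution $x = A^{\dagger} b$, and (i) $\Rightarrow$ (ii) rests on the same standard property of the Moore--Penrose inverse. The only cosmetic difference is that the paper cites $A A^{\dagger} = \Pi_{\ran(A)}$ and projects $b \in \ran(A)$ onto the range, whereas you invoke the first Penrose identity $A A^{\dagger} A = A$ and compute $A A^{\dagger} b = A A^{\dagger} A x_0 = A x_0 = b$ --- the same fact expressed algebraically rather than geometrically.
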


\begin{proof}
(i) $\Rightarrow$ (ii): According to \cite[page 649]{Boyd} have $A A^{\dagger} = \Pi_{\ran(A)}$, the orthogonal projection on the range of $A$. Since $b \in \ran(A)$, we obtain
\begin{align*}
A A^{\dagger} b = \Pi_{\ran(A)} b = b.
\end{align*}

\noindent(ii) $\Rightarrow$ (i): By hypothesis, a solution to (\ref{LGS}) is given by $x = A^{\dagger} b$.
\end{proof}

\begin{proposition}\label{prop-matrix}
Let $A \in \bbs_+^{d \times d}$ be a symmetric, positive semidefinite matrix, and let $b \in \bbr^d$ be a vector. Then the following statements are equivalent:
\begin{enumerate}
\item[(i)] There exists a symmetric, positive semidefinite matrix $\hat{A} \in \bbs_+^{(d+1) \times (d+1)}$ such that $\hat{A}^{ij} = A^{ij}$ for all $i,j = 1,\ldots,d$ and $\hat{A}^{i,d+1} = b^i$ for all $i=1,\ldots,d$.

\item[(ii)] The system of linear equations
\begin{align*}
Ax = b, \quad x \in \bbr^d 
\end{align*}
has a solution.
\end{enumerate}
If the previous conditions are fulfilled, then for every $c \geq \la A^{\dagger} b,b \ra_{\bbr^d}$ the symmetric matrix (\ref{matrix-A}) is positive semidefinite.
\end{proposition}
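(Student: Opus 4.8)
The plan is to derive both implications directly from the two preceding lemmas, whose common hinge is the range condition $(\Id - A A^{\dagger})b = 0$. First I would record that Lemma \ref{lemma-matrix-pre} characterizes positive semidefiniteness of a bordered matrix of the form (\ref{matrix-A}) by exactly two requirements: the range condition $(\Id - A A^{\dagger})b = 0$ together with the Schur-complement inequality $c - \la A^{\dagger} b, b \ra_{\bbr^d} \geq 0$. Lemma \ref{lemma-LGS}, on the other hand, identifies that very same range condition $(\Id - A A^{\dagger})b = 0$ with the solvability of the system $Ax = b$. So the whole proposition amounts to bridging statements (i) and (ii) through this shared condition, with the free parameter $c$ absorbing the Schur-complement inequality.

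For the implication (i) $\Rightarrow$ (ii), I would take a positive semidefinite extension $\hat{A} \in \bbs_+^{(d+1) \times (d+1)}$ with the prescribed blocks $A$ and $b$; Lemma \ref{lemma-matrix-pre} then forces $(\Id - A A^{\dagger})b = 0$, and Lemma \ref{lemma-LGS} produces a solution of $Ax = b$ (in fact $x = A^{\dagger} b$). For the reverse implication (ii) $\Rightarrow$ (i), solvability of $Ax = b$ yields, again via Lemma \ref{lemma-LGS}, the range condition $(\Id - A A^{\dagger})b = 0$; choosing $c := \la A^{\dagger} b, b \ra_{\bbr^d}$ turns the Schur-complement inequality into an equality, so Lemma \ref{lemma-matrix-pre} certifies that the matrix $\hat{A}$ of the form (\ref{matrix-A}) built from this $c$ lies in $\bbs_+^{(d+1) \times (d+1)}$, giving the required extension.

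The additional statement then follows at once: once the equivalent conditions hold we have $(\Id - A A^{\dagger})b = 0$, and for any $c \geq \la A^{\dagger} b, b \ra_{\bbr^d}$ both hypotheses of Lemma \ref{lemma-matrix-pre} are met, so (\ref{matrix-A}) is positive semidefinite. I do not expect any genuine obstacle here, since the entire content has been isolated into the two lemmas; the only care needed is to align the range condition in both directions and to note that \emph{every} $c$ above the threshold $\la A^{\dagger} b, b \ra_{\bbr^d}$ works, not merely the minimal choice used to establish existence.
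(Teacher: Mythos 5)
Your proposal is correct and is precisely the argument the paper intends: its proof of Proposition \ref{prop-matrix} consists of the single line ``This is a consequence of Lemmas \ref{lemma-matrix-pre} and \ref{lemma-LGS},'' and you have filled in exactly that reasoning, bridging (i) and (ii) through the shared range condition $(\Id - A A^{\dagger})b = 0$ and letting the free parameter $c$ absorb the Schur-complement inequality. Nothing is missing; in particular your observation that every $c \geq \la A^{\dagger} b,b \ra_{\bbr^d}$ works is just the additional statement of the proposition, correctly obtained from Lemma \ref{lemma-matrix-pre}.
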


\begin{proof}
This is a consequence of Lemmas \ref{lemma-matrix-pre} and \ref{lemma-LGS}.
\end{proof}

\begin{lemma}\label{lemma-LGS-2}
Let $A \in \bbr^{d \times m}$ and $B \in \bbr^{d \times n}$ be matrices, and let $b \in \bbr^d$ be a vector. Then the following statements are equivalent:
\begin{enumerate}
\item[(i)] The system of linear equations
\begin{align}\label{LGS-2}
Ax + By = b, \quad \text{$x \in \bbr^m$ and $y \in \bbr^n$} 
\end{align}
has a solution.

\item[(ii)] There exists $c \in \bbr^d$ such that the system of linear equations
\begin{align*}
\left\{
\begin{array}{ll}
Ax = c, & x \in \bbr^m
\\ By = b-c, & y \in \bbr^n
\end{array}
\right.
\end{align*}
has a solution.

\item[(iii)] There exists $c \in \bbr^d$ such that
\begin{align*}
(\Id - A A^{\dagger}) c = 0 \quad \text{and} \quad (\Id - B B^{\dagger}) (b-c) = 0.
\end{align*}

\end{enumerate}
\end{lemma}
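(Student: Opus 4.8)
The plan is to prove the two equivalences (i) $\Leftrightarrow$ (ii) and (ii) $\Leftrightarrow$ (iii) separately, the first by an elementary splitting argument and the second by applying Lemma \ref{lemma-LGS} to each of the two decoupled subsystems.

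For (i) $\Rightarrow$ (ii), I would start from a solution $(x_0,y_0)$ of $Ax + By = b$ and simply set $c := Ax_0$. Then $x_0$ solves $Ax = c$ by construction, and since $By_0 = b - Ax_0 = b - c$, the vector $y_0$ solves $By = b - c$. Conversely, for (ii) $\Rightarrow$ (i), given such a $c$ together with solutions $x_0$ of $Ax = c$ and $y_0$ of $By = b - c$, adding the two equations yields $Ax_0 + By_0 = c + (b - c) = b$, so $(x_0,y_0)$ solves the original system. This step is purely formal and uses no properties of the Moore-Penrose inverse.

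For (ii) $\Leftrightarrow$ (iii), I would fix $c \in \bbr^d$ and apply Lemma \ref{lemma-LGS} twice. By that lemma, $Ax = c$ is solvable if and only if $(\Id - A A^{\dagger}) c = 0$, and $By = b - c$ is solvable if and only if $(\Id - B B^{\dagger})(b - c) = 0$. Hence, for a given $c$, both subsystems appearing in (ii) are solvable exactly when both conditions in (iii) hold, and the existence of a suitable $c$ transfers directly between the two statements.

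There is no genuine obstacle here: the result is an elementary reduction of a coupled linear system to two independent ones, combined with the solvability criterion already established in Lemma \ref{lemma-LGS}. The only point worth making explicit is that the intermediate vector $c$ plays the same role in (ii) and (iii), so that the quantifier \emph{there exists} $c$ matches across the equivalence and no additional bookkeeping is required.
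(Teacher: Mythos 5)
Your proposal is correct and follows exactly the paper's route: the paper also proves (i) $\Leftrightarrow$ (ii) by the elementary splitting via $c := Ax_0$ (dismissed there as obvious) and (ii) $\Leftrightarrow$ (iii) by two applications of Lemma \ref{lemma-LGS}. Your version merely spells out the details the paper leaves implicit.
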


\begin{proof}
(i) $\Leftrightarrow$ (ii): This equivalence is obvious.

\noindent(ii) $\Leftrightarrow$ (iii): This equivalence follows from Lemma \ref{lemma-LGS}.
\end{proof}

\begin{proposition}\label{prop-matrix-2}
Let $A,B \in \bbs_+^{d \times d}$ be symmetric, positive semidefinite matrices, and let $b \in \bbr^d$ be a vector. Then the following statements are equivalent:
\begin{enumerate}
\item[(i)] There exist symmetric, positive semidefinite matrices $\hat{A},\hat{B} \in \bbs_+^{(d+1) \times (d+1)}$ such that $\hat{A}^{ij} = A^{ij}$ and $\hat{B}^{ij} = B^{ij}$ for all $i,j = 1,\ldots,d$ as well as $\hat{A}^{i,d+1} + \hat{B}^{i,d+1} = b^i$ for all $i=1,\ldots,d$.

\item[(ii)] The system of linear equations
\begin{align}\label{LGS-2-Rd}
Ax + By = b, \quad x,y \in \bbr^d 
\end{align}
has a solution.
\end{enumerate}
\end{proposition}

\begin{proof}
The first statement is satisfied if and only if there exist $c \in \bbr^d$ and symmetric, positive semidefinite matrices $\hat{A},\hat{B} \in \bbs_+^{(d+1) \times (d+1)}$ such that $\hat{A}^{ij} = A^{ij}$ and $\hat{B}^{ij} = B^{ij}$ for all $i,j = 1,\ldots,d$ as well as $\hat{A}^{i,d+1} = c^i$ and $\hat{B}^{i,d+1} = b^i - c^i$ for all $i=1,\ldots,d$. By Proposition \ref{prop-matrix} this is the case if and only if there exists $c \in \bbr^d$ such that the system of linear equations
\begin{align*}
\left\{
\begin{array}{ll}
Ax = c, & x \in \bbr^d
\\ By = b-c, & y \in \bbr^d
\end{array}
\right.
\end{align*}
has a solution. According to Lemma \ref{lemma-LGS-2} this is equivalent to the existence of a solution to the system of linear equations (\ref{LGS-2-Rd}).
\end{proof}

\begin{proposition}\label{prop-matrices-main}
Let $A,B \in \bbs_+^{d \times d}$ be arbitrary, and set $C := A+B \in \bbs_+^{d \times d}$. Furthermore, let $b \in \bbr^d$ be arbitrary. Then the following statements are equivalent:
\begin{enumerate}
\item[(i)] There exists a symmetric, positive semidefinite matrix $\hat{C} \in \bbs_+^{(d+1) \times (d+1)}$ such that $\hat{C}^{ij} = C^{ij}$ for all $i,j = 1,\ldots,d$ and $\hat{C}^{i,d+1} = b^i$ for all $i=1,\ldots,d$.

\item[(ii)] There exist symmetric, positive semidefinite matrices $\hat{A}, \hat{B} \in \bbs_+^{(d+1) \times (d+1)}$ such that $\hat{A}^{ij} = A^{ij}$ and $\hat{B}^{ij} = B^{ij}$ for all $i,j = 1,\ldots,d$ as well as $\hat{A}^{i,d+1} + \hat{B}^{i,d+1} = b^i$ for all $i=1,\ldots,d$.

\item[(iii)] The system of linear equations
\begin{align*}
Cx = b, \quad x \in \bbr^d
\end{align*}
has a solution.

\item[(iv)] The system of linear equations
\begin{align*}
Ax + By = b, \quad x,y \in \bbr^d
\end{align*}
has a solution.

\end{enumerate}
\end{proposition}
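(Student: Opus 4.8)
The plan is to prove the four-way equivalence by leaning on the two matrix-extension results already in hand and then closing the cycle with an elementary fact about ranges of positive semidefinite matrices. The two ``profile'' equivalences come for free: applying Proposition \ref{prop-matrix} with $C = A+B \in \bbs_+^{d \times d}$ in the role of the matrix there yields (i) $\Leftrightarrow$ (iii), and Proposition \ref{prop-matrix-2} states (ii) $\Leftrightarrow$ (iv) verbatim. Thus the whole statement reduces to linking the two pairs, and the natural bridge is (iii) $\Leftrightarrow$ (iv).

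The implication (iii) $\Rightarrow$ (iv) is immediate: if $Cx = b$ for some $x \in \bbr^d$, then $Ax + Bx = (A+B)x = Cx = b$, so $(x,x)$ solves the second system. The substance lies in (iv) $\Rightarrow$ (iii). I would reduce it to the range identity
\begin{align*}
\ran(A+B) = \ran(A) + \ran(B) \quad \text{for all } A,B \in \bbs_+^{d \times d}.
\end{align*}
Granting this, note that (iv) holds precisely when $b \in \ran(A) + \ran(B)$, whereas (iii) holds precisely when $b \in \ran(A+B)$, and these two subspaces coincide. (Equivalently, one may read off the same conclusion from Lemmas \ref{lemma-LGS} and \ref{lemma-LGS-2}, since $CC^\dagger b = b$ says $b \in \ran(C)$, while the condition in Lemma \ref{lemma-LGS-2} says $b \in \ran(A) + \ran(B)$.)

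The only step requiring genuine work is therefore the range identity, which I expect to be the main obstacle, though a mild one. I would prove it by passing to kernels: since $A$, $B$, and $A+B$ are all symmetric, we have $\ran(A) = (\ker A)^\perp$ and likewise for the others, so it suffices to establish $\ker(A+B) = \ker(A) \cap \ker(B)$. The inclusion $\supseteq$ is trivial. For $\subseteq$, suppose $(A+B)x = 0$; then $\la Ax,x \ra_{\bbr^d} + \la Bx,x \ra_{\bbr^d} = \la (A+B)x,x \ra_{\bbr^d} = 0$, and since both summands are nonnegative by positive semidefiniteness they must vanish individually. Factoring $A = R^\top R$ gives $\la Ax,x \ra_{\bbr^d} = \|Rx\|_{\bbr^d}^2 = 0$, hence $Rx = 0$ and $Ax = R^\top R x = 0$; similarly $Bx = 0$, so $x \in \ker(A) \cap \ker(B)$. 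Taking orthogonal complements of $\ker(A+B) = \ker(A) \cap \ker(B) = (\ran A + \ran B)^\perp$ then yields the desired identity in $\bbr^d$, completing the argument.
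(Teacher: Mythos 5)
Your proof is correct, but you close the equivalence cycle along a genuinely different edge than the paper does. Both arguments get (i) $\Leftrightarrow$ (iii) from Proposition \ref{prop-matrix} and (ii) $\Leftrightarrow$ (iv) from Proposition \ref{prop-matrix-2}, and both observe that (iii) $\Rightarrow$ (iv) is trivial by taking $x = y$. The difference is in the remaining link: you prove (iv) $\Rightarrow$ (iii) directly via the range identity $\ran(A+B) = \ran(A) + \ran(B)$ for positive semidefinite matrices, which you establish correctly by showing $\ker(A+B) = \ker(A) \cap \ker(B)$ through the quadratic-form argument and the factorization $A = R^{\top}R$, then passing to orthogonal complements. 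The paper instead closes the cycle with the one-line implication (ii) $\Rightarrow$ (i): if $\hat{A}, \hat{B} \in \bbs_+^{(d+1) \times (d+1)}$ are the two extensions from (ii), then $\hat{C} := \hat{A} + \hat{B}$ is a positive semidefinite extension of $C = A + B$ with last column summing to $b$, since $\bbs_+^{(d+1) \times (d+1)}$ is a convex cone; the implication (iv) $\Rightarrow$ (iii) then follows only indirectly, by chasing (iv) $\Rightarrow$ (ii) $\Rightarrow$ (i) $\Rightarrow$ (iii) through the extension machinery. The paper's route costs nothing beyond the two earlier propositions, whereas yours isolates a clean standalone fact that explains \emph{why} the systems $Cx = b$ and $Ax + By = b$ have the same solvability, independently of any extension results; your argument also makes transparent that positive semidefiniteness is genuinely used here, since the range identity fails for general symmetric matrices (take $B = -A$ with $A \neq 0$). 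Both proofs are complete; yours is more self-contained on this edge, the paper's is shorter.
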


\begin{proof}
(i) $\Leftrightarrow$ (iii): This equivalence follows from Proposition \ref{prop-matrix}.

\noindent(ii) $\Leftrightarrow$ (iv): This equivalence follows from Proposition \ref{prop-matrix-2}.

\noindent(iii) $\Rightarrow$ (iv): Since $C = A+B$, this implication is obvious.

\noindent(ii) $\Rightarrow$ (i): The matrix $\hat{C} := \hat{A} + \hat{B}$ has the desired properties.
\end{proof}

\begin{lemma}\label{lemma-measure}
Let $K$ be a measure on $(\bbr^d,\calb(\bbr^d))$ such that 
\begin{align*}
K(\{ 0 \}) = 0 \quad \text{and} \quad \int_{\bbr^d} |x|^2 K(dx) < \infty. 
\end{align*}
Furthermore, let $\hat{A} \in \bbs_+^{(d+1) \times (d+1)}$ be a symmetric, positive semidefinite matrix of the form (\ref{matrix-A}) such that
\begin{align*}
A^{ij} = \int_{\bbr^d} x^i x^j K(dx) \quad \text{for all $i,j = 1,\ldots,d$.}
\end{align*}
Then there exists a measure $\hat{K}$ on $(\bbr^{d+1},\calb(\bbr^{d+1}))$ with 
\begin{align}\label{measure-matrix-1}
\hat{K}(\{ 0 \}) = 0 \quad \text{and} \quad \int_{\bbr^{d+1}} |\hat{x}|^2 \hat{K}(d \hat{x}) < \infty
\end{align}
such that for every nonnegative, measurable function $\hat{f} : \bbr^{d+1} \to \bbr_+$ we have
\begin{align}\label{measure-matrix-2}
\int_{\bbr^{d+1}} \hat{f}(\hat{x}) \hat{K}(d \hat{x}) = \int_{\bbr^d} \hat{f}(x,\la A^{\dagger} b,x \ra_{\bbr^d}) K(dx),
\end{align}
for every nonnegative, measurable function $f : \bbr^d \to \bbr_+$ we have
\begin{align}\label{measure-matrix-3}
\int_{\bbr^d} f(x) K(dx) = \int_{\bbr^{d+1}} f(x) \hat{K}(d \hat{x}),
\end{align}
and for all $i,j = 1,\ldots,d+1$ such that $i \leq d$ or $j \leq d$ we have
\begin{align}\label{measure-matrix-4}
\hat{A}^{ij} = \int_{\bbr^{d+1}} \hat{x}^i \hat{x}^j \hat{K}(d\hat{x}).
\end{align}
\end{lemma}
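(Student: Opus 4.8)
The plan is to realize $\hat{K}$ as the image measure of $K$ under the linear map that appends to $x$ the extra coordinate $\la A^{\dagger} b, x \ra_{\bbr^d}$ prescribed by the off-diagonal block of $\hat{A}$. Concretely, I would define $\Phi : \bbr^d \to \bbr^{d+1}$ by
\begin{align*}
\Phi(x) = \big( x, \la A^{\dagger} b, x \ra_{\bbr^d} \big), \quad x \in \bbr^d,
\end{align*}
and set $\hat{K} := K \circ \Phi^{-1}$, the push-forward of $K$ along $\Phi$. Since $\Phi$ is linear, it is continuous and hence Borel measurable, so $\hat{K}$ is a well-defined Borel measure on $\bbr^{d+1}$. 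The transport formula for image measures then gives, for every nonnegative measurable $\hat{f} : \bbr^{d+1} \to \bbr_+$,
\begin{align*}
\int_{\bbr^{d+1}} \hat{f}(\hat{x}) \, \hat{K}(d\hat{x}) = \int_{\bbr^d} \hat{f}(\Phi(x)) \, K(dx) = \int_{\bbr^d} \hat{f}\big( x, \la A^{\dagger} b, x \ra_{\bbr^d} \big) \, K(dx),
\end{align*}
which is exactly (\ref{measure-matrix-2}); everything else will follow by specializing $\hat{f}$.

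For (\ref{measure-matrix-1}): since $\Phi$ is linear and injective on its first $d$ coordinates we have $\Phi^{-1}(\{0\}) = \{0\}$, whence $\hat{K}(\{0\}) = K(\{0\}) = 0$. Taking $\hat{f}(\hat{x}) = |\hat{x}|^2$ in the transport formula and using Cauchy--Schwarz in the form $\la A^{\dagger} b, x \ra_{\bbr^d}^2 \leq |A^{\dagger} b|^2 |x|^2$, I obtain $\int_{\bbr^{d+1}} |\hat{x}|^2 \hat{K}(d\hat{x}) \leq (1 + |A^{\dagger} b|^2) \int_{\bbr^d} |x|^2 K(dx) < \infty$. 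Property (\ref{measure-matrix-3}) follows by applying the transport formula to $\hat{f} = f \circ \pi$, where $\pi : \bbr^{d+1} \to \bbr^d$ is the projection onto the first $d$ coordinates, and noting that $\pi \circ \Phi = \Id$ on $\bbr^d$.

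For (\ref{measure-matrix-4}): the finiteness of $\int |\hat{x}|^2 \hat{K}(d\hat{x})$ established above makes every coordinate product $\hat{x}^i \hat{x}^j$ integrable, since $|\hat{x}^i \hat{x}^j| \leq |\hat{x}|^2$, so the transport formula extends to these functions. Writing $\Phi(x)^i = x^i$ for $i \leq d$ and $\Phi(x)^{d+1} = \sum_{k=1}^d (A^{\dagger} b)^k x^k$, I compute for $i,j \leq d$ the value $\int_{\bbr^d} x^i x^j K(dx) = A^{ij} = \hat{A}^{ij}$, and for $i \leq d$, $j = d+1$ (and symmetrically) the value $\sum_k (A^{\dagger} b)^k A^{ik} = (A A^{\dagger} b)^i$. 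The main point of the proof is here: to identify this with $\hat{A}^{i,d+1} = b^i$ I need $A A^{\dagger} b = b$, which is precisely the condition $(\Id - A A^{\dagger}) b = 0$ supplied by the positive semidefiniteness of $\hat{A}$ through Lemma \ref{lemma-matrix-pre}. This is the only place the hypothesis $\hat{A} \in \bbs_+^{(d+1) \times (d+1)}$ enters, and it also explains the exclusion of $i = j = d+1$: there $\int (\hat{x}^{d+1})^2 \hat{K}(d\hat{x}) = \la A^{\dagger} b, b \ra_{\bbr^d}$, which matches $\hat{A}^{d+1,d+1} = c$ only when the Schur complement $c - \la A^{\dagger} b, b \ra_{\bbr^d}$ vanishes. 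Everything outside this one algebraic identity is routine measure theory, so I expect no further obstacles.
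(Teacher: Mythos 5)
Your proposal is correct and follows essentially the same route as the paper's proof: both construct $\hat{K}$ as the image measure of $K$ under the linear map $x \mapsto (x, \la A^{\dagger} b, x \ra_{\bbr^d})$, verify (\ref{measure-matrix-1})--(\ref{measure-matrix-3}) by the transport formula, and reduce (\ref{measure-matrix-4}) to the identity $A A^{\dagger} b = b$. The only cosmetic difference is that you invoke Lemma \ref{lemma-matrix-pre} directly to obtain $(\Id - A A^{\dagger})b = 0$, whereas the paper routes the same fact through Proposition \ref{prop-matrix} and Lemma \ref{lemma-LGS}.
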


\begin{proof}
By Proposition \ref{prop-matrix} there exists a solution $y \in \bbr^d$ to the system of linear equations
\begin{align*}
Ay = b, \quad y \in \bbr^d,
\end{align*}
and according to Lemma \ref{lemma-LGS} one such solution is given by $y = A^{\dagger} b$. We define the linear mapping $\ell : \bbr^d \to \bbr^{d+1}$ as
\begin{align*}
\ell(x) := (x,\la y,x \ra_{\bbr^d}), \quad x \in \bbr^d
\end{align*}
and the image measure $\hat{K} := K \circ \ell$. Since $\ell$ is one-to-one, we obtain
\begin{align*}
\hat{K}(\{ 0 \}) = K(\ell^{-1}(\{ 0 \})) = K(\{ 0 \}) = 0.
\end{align*}
Moreover, we have
\begin{align*}
\int_{\bbr^{d+1}} |\hat{x}|^2 \hat{K}(d \hat{x}) = \int_{\bbr^d} |\ell(x)|^2 K(dx) = \int_{\bbr^d} \big( |x|^2 + |\la y,x \ra_{\bbr^d}|^2 \big) K(dx) < \infty,
\end{align*}
showing (\ref{measure-matrix-1}). Furthermore, we have (\ref{measure-matrix-2}). Let $f : \bbr^d \to \bbr_+$ be a nonnegative, measurable function, and let $\hat{f} : \bbr^{d+1} \to \bbr_+$ be the extension given by
\begin{align*}
\hat{f}(\hat{x}) = f(x) \quad \text{for each $\hat{x} = (x,y) \in \bbr^{d+1}$.}
\end{align*}
Then by (\ref{measure-matrix-2}) we have
\begin{align*}
\int_{\bbr^d} f(x) K(dx) = \int_{\bbr^d} \hat{f}(x,\la y,x \ra_{\bbr^d}) K(dx) = \int_{\bbr^{d+1}} \hat{f}(\hat{x}) \hat{K}(d \hat{x}) = \int_{\bbr^{d+1}} f(x) \hat{K}(d \hat{x}),
\end{align*}
showing (\ref{measure-matrix-3}). Furthermore, since $Ay = b$, for each $i=1,\ldots,d$ we have
\begin{align*}
\int_{\bbr^{d+1}} \hat{x}^i \hat{x}^{d+1} \hat{K}(d\hat{x}) &= \int_{\bbr^d} x^i \la y,x \ra_{\bbr^d} K(dx) = \sum_{j=1}^d y^j \int_{\bbr^d} x^i x^j K(dx) 
\\ &= \sum_{j=1}^d A^{ij} y^j = b^i = \hat{A}^{i,d+1},
\end{align*}
proving (\ref{measure-matrix-4}).
\end{proof}

\begin{lemma}\label{lemma-solution-kernel}
Let $X$ be a Hilbert space. Then for every continuous linear operator $T \in L(X,\bbr^d)$ and every $y \in \bbr^d$ following statements are equivalent:
\begin{enumerate}
\item[(i)] There exists $x \in X$ such that
\begin{align*}
Tx = y.
\end{align*}
\item[(ii)] There exists $\eta \in \bbr^d$ such that
\begin{align*}
T T^* \eta = y.
\end{align*}
\end{enumerate}
\end{lemma}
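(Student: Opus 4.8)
The plan is to reduce the stated equivalence to the subspace identity $\ran(T) = \ran(TT^*)$ inside $\bbr^d$, since (i) says $y \in \ran(T)$ and (ii) says $y \in \ran(TT^*)$. The implication (ii) $\Rightarrow$ (i) is then immediate: if $TT^*\eta = y$, then $x := T^* \eta \in X$ satisfies $Tx = y$. The substance lies in (i) $\Rightarrow$ (ii), i.e.\ in the inclusion $\ran(T) \subseteq \ran(TT^*)$, and the key tool will be the self-adjointness of the operator $TT^*$ acting on the \emph{finite-dimensional} space $\bbr^d$.

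First I would identify the kernel of $TT^*$. For $\eta \in \bbr^d$ one has $\la TT^*\eta,\eta \ra_{\bbr^d} = \la T^*\eta, T^*\eta \ra_X = \| T^*\eta \|_X^2$, so $TT^*\eta = 0$ forces $T^*\eta = 0$, and the converse inclusion is trivial; hence $\ker(TT^*) = \ker(T^*)$. Since $TT^*$ is a self-adjoint (indeed symmetric, positive semidefinite) linear map on the finite-dimensional Euclidean space $\bbr^d$, the spectral theorem gives the orthogonal decomposition $\bbr^d = \ran(TT^*) \oplus \ker(TT^*)$, so that $\ran(TT^*) = (\ker(TT^*))^{\perp} = (\ker(T^*))^{\perp}$.

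It then remains to show that any $y = Tx$ lies in $(\ker(T^*))^{\perp}$. For every $\zeta \in \ker(T^*)$ we compute $\la y,\zeta \ra_{\bbr^d} = \la Tx,\zeta \ra_{\bbr^d} = \la x, T^*\zeta \ra_X = 0$, using the definition of the adjoint and $T^*\zeta = 0$; hence $y \perp \ker(T^*)$, so $y \in (\ker(T^*))^{\perp} = \ran(TT^*)$, which yields the desired $\eta$. The only point that requires care is that I need an \emph{exact} range equality rather than merely equality of closures; this is precisely where finite-dimensionality of $\bbr^d$ enters, since it makes $\ran(TT^*)$ a closed subspace and legitimizes the orthogonal decomposition above without any appeal to closedness of $\ran(T^*)$ in the possibly infinite-dimensional space $X$.
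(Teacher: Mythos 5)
Your proof is correct, but it takes a genuinely different route from the paper's. The paper argues in the \emph{domain}: since $\ran(T^*)$ is a finite-dimensional, hence closed, subspace of $X$, one has the orthogonal decomposition $X = \ran(T^*) \oplus \ker(T)$; writing $x = x_1 + x_2$ accordingly gives $T x_1 = T x = y$ with $x_1 = T^* \eta$ for some $\eta \in \bbr^d$, whence $T T^* \eta = y$. You instead argue in the \emph{codomain} $\bbr^d$, establishing the exact range identity $\ran(T T^*) = (\ker(T^*))^{\perp} \supseteq \ran(T)$ via the computation $\la T T^* \eta, \eta \ra_{\bbr^d} = \| T^* \eta \|_X^2$ (which gives $\ker(T T^*) = \ker(T^*)$) combined with the orthogonal decomposition of $\bbr^d$ relative to the self-adjoint map $T T^*$. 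Both arguments hinge on finite-dimensionality, but in different places: the paper uses that $\ran(T^*)$ is closed in the possibly infinite-dimensional $X$, while you use that subspaces of $\bbr^d$ are automatically closed, so $\ran(T T^*) = (\ker(T T^*))^{\perp}$ holds exactly and not merely up to closure -- a point you correctly flag. Your route yields slightly more, namely the identification $\ran(T) = \ran(T T^*) = (\ker(T^*))^{\perp}$ of precisely which right-hand sides $y$ are attainable, whereas the paper's argument is shorter and more elementary. One small economy you could make: the appeal to the spectral theorem is heavier than needed, since for a self-adjoint map $S$ on a finite-dimensional inner product space the identity $\ran(S) = (\ker(S))^{\perp}$ follows directly from $\ran(S)^{\perp} = \ker(S^*) = \ker(S)$.
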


\begin{proof}
(i) $\Rightarrow$ (ii): The range $\ran(T^*)$ is a finite dimensional subspace of $X$, and hence it is closed. Therefore, we have the direct sum decomposition
\begin{align*}
X = \ran(T^*) \oplus \ker(T).
\end{align*}
Let $x = x_1 + x_2$ be the corresponding decomposition of $x$. Then we have $T x_1 = y$. Since $x_1 \in \ran(T^*)$, there exists $\eta \in \bbr^d$ such that $T^* \eta = x_1$, and hence $T T^* \eta = y$.

\noindent(ii) $\Rightarrow$ (i): Taking $x = T^* \eta$ we have $Tx = y$.
\end{proof}

\begin{lemma}
Let $X$ and $Y$ be Hilbert spaces, and let $T \in L(X,\bbr^d)$ and $S \in L(Y,\bbr^d)$ be continuous linear operators. We define $R \in L(X \oplus_2 Y,\bbr^d)$ as
\begin{align*}
R(x,y) := Tx + Sy, \quad x \in X \text{ and } y \in Y.
\end{align*}
Then the following statements are true:
\begin{enumerate}
\item We have $R^* = (T^*,S^*)$.

\item We have $R R^* = T T^* + S S^*$.
\end{enumerate}
\end{lemma}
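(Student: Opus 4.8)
The plan is to verify both assertions directly from the defining property of the Hilbert space adjoint, keeping careful track of the inner product on the orthogonal direct sum $X \oplus_2 Y$, which is given by $\la (x_1,y_1),(x_2,y_2) \ra_{X \oplus_2 Y} = \la x_1,x_2 \ra_X + \la y_1,y_2 \ra_Y$. Neither statement requires any structure theory; both are short computations, and the second follows from the first by composition.

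For the first statement, I would fix an arbitrary pair $(x,y) \in X \oplus_2 Y$ and an arbitrary vector $\eta \in \bbr^d$, and expand the pairing on the target side using the definition of $R$ together with bilinearity of the inner product:
\[
\la R(x,y),\eta \ra_{\bbr^d} = \la Tx + Sy,\eta \ra_{\bbr^d} = \la Tx,\eta \ra_{\bbr^d} + \la Sy,\eta \ra_{\bbr^d}.
\]
Applying the defining relation of each adjoint, namely $\la Tx,\eta \ra_{\bbr^d} = \la x,T^*\eta \ra_X$ and $\la Sy,\eta \ra_{\bbr^d} = \la y,S^*\eta \ra_Y$, and then recombining via the direct-sum inner product, gives
\[
\la R(x,y),\eta \ra_{\bbr^d} = \la x,T^*\eta \ra_X + \la y,S^*\eta \ra_Y = \la (x,y),(T^*\eta,S^*\eta) \ra_{X \oplus_2 Y}.
\]
Since this identity holds for every $(x,y) \in X \oplus_2 Y$, the uniqueness of the adjoint forces $R^*\eta = (T^*\eta,S^*\eta)$, and as $\eta$ was arbitrary this is exactly $R^* = (T^*,S^*)$.

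The second statement then drops out by composing $R$ with the expression for $R^*$ just obtained: for every $\eta \in \bbr^d$,
\[
R R^*\eta = R(T^*\eta,S^*\eta) = T(T^*\eta) + S(S^*\eta) = (T T^* + S S^*)\eta,
\]
which is the claimed identity $R R^* = T T^* + S S^*$. I do not anticipate any genuine obstacle; the only points demanding care are writing down the inner product on $X \oplus_2 Y$ correctly and tracking which Hilbert space each adjoint maps back into, so that the final recombination into a single direct-sum pairing is legitimate.
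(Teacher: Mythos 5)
Your proposal is correct and follows essentially the same route as the paper: both verify $R^* = (T^*,S^*)$ by expanding $\la R(x,y),\eta \ra_{\bbr^d}$ via the defining property of the adjoint and the inner product on $X \oplus_2 Y$, and then obtain $R R^* = T T^* + S S^*$ as an immediate consequence by composition. Your write-up merely makes explicit the uniqueness-of-the-adjoint step and the final composition, which the paper leaves implicit.
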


\begin{proof}
For all $x \in X$, $y \in Y$ and $z \in \bbr^d$ we have
\begin{align*}
\la R(x,y),z \ra_{\bbr^d} &= \la Tx + Sy, z \ra_{\bbr^d} = \la x,T^* z \ra_X + \la y,S^* z \ra_Y 
\\ &= \la (x,y), (T^*z,S^*z) \ra_{X \oplus_2 Y}, 
\end{align*}
proving the first statement. Now, the second statement is an immediate consequence.
\end{proof}

\begin{lemma}\label{lemma-adjungiert}
Let $X$ be a Hilbert space, and let $T \in L(X,\bbr^d)$ be a continuous linear operator. Then we have
\begin{align*}
T T^* y = A \cdot y, \quad y \in \bbr^d,
\end{align*}
where $A \in \bbr^{d \times d}$ is the matrix given by
\begin{align*}
A_{ij} = \la x_i,x_j \ra, \quad i,j = 1,\ldots,d,
\end{align*}
and where $x_1,\ldots,x_d \in X$ are the unique elements such that
\begin{align*}
T = ( \la x_1,\cdot \ra, \ldots, \la x_d,\cdot \ra ).
\end{align*}
\end{lemma}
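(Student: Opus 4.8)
The plan is to first produce the elements $x_1,\ldots,x_d$ by the Riesz representation theorem, then identify the adjoint $T^*$ explicitly, and finally compose $T$ with $T^*$ and read off the matrix entries. Since everything reduces to a short computation, the ``plan'' is essentially the proof outline.

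First I would decompose $T$ into its coordinate functionals. Writing $e_1,\ldots,e_d$ for the standard basis of $\bbr^d$, each map $T_i := \la e_i, T\,\bdot\,\ra_{\bbr^d}$ is a bounded linear functional on the Hilbert space $X$. By the Riesz representation theorem there is for each $i$ a unique $x_i \in X$ with $T_i = \la x_i,\cdot \ra$, and this gives the asserted representation $T = (\la x_1,\cdot\ra,\ldots,\la x_d,\cdot\ra)$; uniqueness of the $x_i$ is precisely the uniqueness clause in the Riesz theorem.

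Next I would compute $T^*$. For $x \in X$ and $y = (y_1,\ldots,y_d) \in \bbr^d$ the definition of the adjoint together with the representation above yields
\[
\la Tx, y \ra_{\bbr^d} = \sum_{i=1}^d y_i \la x_i, x \ra = \la x, \sum_{i=1}^d y_i x_i \ra,
\]
so that $T^* y = \sum_{i=1}^d y_i x_i$. Composing, for each $i = 1,\ldots,d$ I obtain
\[
(T T^* y)_i = \la x_i, \sum_{j=1}^d y_j x_j \ra = \sum_{j=1}^d \la x_i, x_j \ra\, y_j = (Ay)_i,
\]
which is exactly $T T^* y = A\cdot y$ with $A_{ij} = \la x_i, x_j \ra$.

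There is no genuine obstacle here: the argument is entirely routine once the Riesz representation is invoked. The only point that requires care is the bookkeeping of the two inner products (on $\bbr^d$ and on $X$) when reading off $T^*$, so that the entries $\la x_i, x_j \ra$ land in the correct positions of $A$; this is why I would keep the index $i$ fixed and sum over $j$ in the final display.
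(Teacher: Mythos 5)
Your proof is correct and follows essentially the same route as the paper: the paper's proof is the single chain $\la T T^* y, e_i \ra_{\bbr^d} = \la x_i, T^* y \ra = \la T x_i, y \ra_{\bbr^d} = \sum_{j=1}^d \la x_i, x_j \ra y_j$, which is your computation with the explicit identification $T^* y = \sum_j y_j x_j$ folded into one step via the adjoint relation. Your additional Riesz-representation argument for the existence and uniqueness of $x_1,\ldots,x_d$ is a harmless (and slightly more complete) supplement, since the paper simply presupposes this representation in the statement of the lemma.
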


\begin{proof}
For each $i=1,\ldots,d$ we have
\begin{align*}
\la T T^* y,e_i \ra_{\bbr^d} = \la x_i,T^* y \ra = \la Tx_i,y \ra_{\bbr^d} = \sum_{j=1}^d \la x_i,x_j \ra y_j = \la A \cdot y,e_i \ra_{\bbr^d},
\end{align*}
completing the proof.
\end{proof}

\end{appendix}

\bibliographystyle{plain}

\bibliography{Finance}

\end{document}